\documentclass[letterpaper]{article} %
\newif\iflong
\longtrue
\iflong
\usepackage{aaai2027_hacked}  %
\else
\usepackage[submission]{aaai2027}  %
\fi
\usepackage[hyphens]{url}  %
\usepackage{graphicx} %
\usepackage{natbib}  %
\usepackage{caption} %
\usepackage{algorithm}
\usepackage{algorithmic}

\usepackage{newfloat}
\usepackage{listings}
\DeclareCaptionStyle{ruled}{labelfont=normalfont,labelsep=colon,strut=off} %
\floatstyle{ruled}
\newfloat{listing}{tb}{lst}{}
\floatname{listing}{Listing}

\usepackage{booktabs}

\title{Nearly Group-Separable Elections}
\author {
    Piotr Faliszewski\textsuperscript{\rm 1}, 
    Jan Jabrocki\textsuperscript{\rm 1}, 
    Stanisław Kaźmierowski\textsuperscript{\rm 1,2},
    Kristýna Pekárková\textsuperscript{\rm 3},\\
    Šimon Schierreich\textsuperscript{\rm 1,4}, 
    Ildikó Schlotter\textsuperscript{\rm 5,6}
}
\affiliations {
    \textsuperscript{\rm 1} AGH University of Krakow, Poland\\
    \textsuperscript{\rm 2} University of Warsaw, Poland\\
    \textsuperscript{\rm 3} Charles University, Czechia\\
    \textsuperscript{\rm 4} Czech Technical University in Prague, Czechia\\
    \textsuperscript{\rm 5}
    ELTE Centre for Economic and Regional Studies, Hungary\\
    \textsuperscript{\rm 6} Budapest University of Technology and Economics, Hungary\\
    faliszew@agh.edu.pl, jjabrocki@student.agh.edu.pl, s.kazmierowski@uw.edu.pl,
    kristyna.pekarkova@mail.muni.cz,\\ schiesim@fit.cvut.cz, schlotter.ildiko@krtk.elte.hu
}

\usepackage{subcaption}

\usepackage{amsmath}
\usepackage{amsfonts}
\usepackage{amssymb}
\usepackage{amsthm}
\usepackage{nicefrac}
\usepackage{bm}

\iflong
\usepackage{hyperref}
\hypersetup{
    pdfencoding=auto, 
    psdextra,
    colorlinks=true,
    citecolor=green!40!black,
    linkcolor=red!50!black,
    urlcolor=blue!80!black
}
\usepackage{stfloats}
\usepackage{cuted}
\input{insbox}
\fi
\usepackage{cleveref}

\usepackage{enumitem}
\setlist[description]{%
    font={\normalfont\itshape},
}
\usepackage{xspace}
\usepackage{comment}

\usepackage{tabularx}
\usepackage{booktabs}
\usepackage{multirow}
\newcolumntype{Y}{>{\centering\arraybackslash}X}

\usepackage{todonotes}

\usepackage{tikz}
\usetikzlibrary{decorations.pathreplacing,calc,arrows.meta}
\newcommand{\mysetminus}{\hbox{\tikz{\draw[line width=0.6pt,line cap=round] (3pt,0) -- (0,6pt);}}}

\usepackage{thm-restate}
\newcommand{\linkproof}[1]{%
    \iflong
    \hyperref[app:proof:#1]{$\star$}%
    \else
    {$\star$}%
    \fi 
}
\newcommand{\toappendix}[1]{%
  \gappto{\appendixtext}{
    {#1}
   }
}
\newcommand{\prooftoappendix}[3]{%
  \gappto{\appendixtext}{
    \subsection{Proof of \Cref{#1}}\label{app:proof:#1}
    #2
    \begin{proof}
    #3\end{proof}
  }
}

\newcommand{\prooftoappendixdivided}[3]{%
  \gappto{\appendixtext}{
    \subsection{Proof of \Cref{#1}}\label{app:proof:#1}
    #2
    
    #3
  }
}

\newcommand{\appendixsection}[1]{%
  \gappto{\appendixtext}{
    \section{Additional material for \Cref{#1}}
    \label{app:#1}
  }
}

\newcommand{\appendixsubsection}[1]{%
  \gappto{\appendixtext}{
    \subsection{Additional material for \Cref{#1}}
    \label{app:#1}
  }
}

\usepackage{pict2e}
\newcommand{\opentriangle}{%
  \raisebox{0.2pt}{\makebox[0.77778em]{%
    \setlength{\unitlength}{0.6em}%
    \linethickness{0.4pt}\roundjoin
    \begin{picture}(1,1)
    \polygon(0,0)(1,0)(1,1)
    \end{picture}%
  }}%
}\newenvironment{proofsketch}{\par
  \pushQED{\hfill \opentriangle}%
  \normalfont \topsep 12pt\relax
  \trivlist
  \item[\hskip\labelsep
        {\itshape Proof sketch.}]\ignorespaces
}{%
  \popQED\endtrivlist
}

\usepackage{algorithm}
\usepackage{algorithmic}
\usepackage{newfloat}
\usepackage{listings}
\DeclareCaptionStyle{ruled}{labelfont=normalfont,labelsep=colon,strut=off} %
\floatstyle{ruled}
\newfloat{listing}{tb}{lst}{}
\floatname{listing}{Listing}

\newcommand{\probName}[1]{\textsc{#1}\xspace}

\newcommand{\swapNE}[1]{\probName{Nearly \allowbreak $#1$ \allowbreak Election}}
\newcommand{\KemenyOA}{\probName{Kemeny \allowbreak Aggregation}}

\newcommand{\candidates}{C}
\newcommand{\numCandidates}{m}
\newcommand{\voters}{V}
\newcommand{\vote}{v}
\newcommand{\numVoters}{n}
\newcommand{\budget}{d}
\newcommand{\mappinglist}[1]{(#1)}
\def\mapping{\varphi}
\def\arcset{A}
\newcommand{\dist}{\operatorname{dist}}

\newcommand{\axis}{\mathrel{\triangleright}}

\newcommand{\N}{\mathbb{N}}

\newcommand{\rank}{{\mathrm{pos}}}
\newcommand{\cost}{{\mathrm{cost}}}

\newcommand{\swap}{{\mathrm{swap}}}

\newcommand{\calL}{{\mathcal{L}}}
\newcommand{\calD}{{\mathcal{D}}}
\newcommand{\calX}{{\mathcal{X}}}

\newcommand{\ora}[1]{\overrightarrow{#1}}
\newcommand{\ola}[1]{\overleftarrow{#1}}

\DeclareMathOperator{\topc}{top}

\newcommand{\restr}[2]{#1_{\restriction_{#2}}}
\newcommand{\Inversions}{\ensuremath{\mathrm{Inv}}}

\newcommand{\opt}{\ensuremath{\mathrm{opt}}}
\newcommand{\splitcost}{\ensuremath{\mathrm{sc}}}
\newcommand{\treeheight}{\ensuremath{t}}
\newcommand{\Tcat}{T^{\mathrm{cat}}}

\def\typePQF{\ensuremath{\mathrm{PQF}}}
\def\typeP{\ensuremath{\mathrm{P}}}
\def\typeQ{\ensuremath{\mathrm{Q}}}
\def\typeF{\ensuremath{\mathrm{F}}}

\def\domain{\mathcal{D}}
\newcommand{\SPeak}{\ensuremath{\mathrm{Single}\text{-\allowbreak}\mathrm{Peaked}}}
\newcommand{\SCros}{\ensuremath{\mathrm{Single}\text{-}\allowbreak\mathrm{Crossing}}}
\newcommand{\GS}{\ensuremath{\mathrm{GS}}}
\newcommand{\GSbal}{\ensuremath{\mathrm{GS}\text{/}\allowbreak\mathrm{bal}}}
\newcommand{\GScat}{\ensuremath{\mathrm{GS}\text{/}\allowbreak\mathrm{cat}}}
\newcommand{\GSbin}{\GS}
\newcommand{\Strat}[1][k]{\ensuremath{#1\text{-}\allowbreak\mathrm{Stratification}}}
\newcommand{\Ident}{\mathrm{Identity}}
\newcommand{\Antag}{\mathrm{Antagonism}}
\newcommand{\Separ}{\mathrm{Separation}}

\newcommand{\Oh}[1]{{\mathcal{O}\mathopen{}\left(#1\right)}}
\newcommand{\Ohstar}[1]{{\mathcal{O}^*\mathopen{}\left(#1\right)}}

\NewDocumentCommand{\cc}{ O{} O{} m }{\mbox{%
    \expandafter\ifx\expandafter\relax\detokenize{#2}\relax\else{#2-}\fi%
    \ensuremath{{\mathrm{#3}}}%
    \expandafter\ifx\expandafter\relax\detokenize{#1}\relax\else{-#1}\fi%
    }\xspace}
\newcommand{\DeclareComplexityLowerBound}[1]{%
  \expandafter\newcommand\csname #1\endcsname{\cc{#1}}%
  \expandafter\newcommand\csname #1h\endcsname{\cc[hard]{#1}}%
  \expandafter\newcommand\csname #1hness\endcsname{\cc[hardness]{#1}}%
  \expandafter\newcommand\csname #1c\endcsname{\cc[complete]{#1}}%
  \expandafter\newcommand\csname #1cness\endcsname{\cc[completeness]{#1}}%
}
\newcommand{\DeclareComplexityLowerBoundParam}[2]{%
  \expandafter\newcommand\csname #1\endcsname[1][#2]{\cc{#1[##1]}}%
  \expandafter\newcommand\csname #1h\endcsname[1][#2]{\cc[hard]{#1[##1]}}%
  \expandafter\newcommand\csname #1hness\endcsname[1][#2]{\cc[hardness]{#1[##1]}}%
  \expandafter\newcommand\csname #1c\endcsname[1][#2]{\cc[complete]{#1[##1]}}%
  \expandafter\newcommand\csname #1cness\endcsname[1][#2]{\cc[completeness]{#1[##1]}}%
}
\newcommand{\DeclareComplexityUpperBound}[1]{
  \expandafter\newcommand\csname #1\endcsname{\cc{#1}}%
}

\DeclareComplexityLowerBound{NP}
\DeclareComplexityLowerBound{coNP}
\DeclareComplexityLowerBound{paraNP}
\DeclareComplexityLowerBoundParam{W}{1}
\DeclareComplexityLowerBoundParam{coW}{1}
\DeclareComplexityUpperBound{FPT}
\DeclareComplexityUpperBound{XP}
\DeclareMathOperator{\poly}{poly}

\newtheorem{theorem}{Theorem}[section]
\newtheorem{definition}{Definition}[section]
\newtheorem{lemma}[theorem]{Lemma}
\newtheorem{corollary}[theorem]{Corollary}
\newtheorem{proposition}[theorem]{Proposition}

\newtheorem{claim}[theorem]{Claim}
\Crefname{claim}{Claim}{Claims}
\newenvironment{claimproof}{\begin{proof}}{\end{proof}}

\newcommand{\proofsubparagraph}[1]{\smallskip\emph{#1}\hspace{0.15cm}}

\newcolumntype{M}[1]{>{\centering\arraybackslash}m{#1}}

\newcounter{taskcntr}
\stepcounter{taskcntr}

\newcommand{\lca}{\operatorname{lca}}
\newcommand{\Tbal}{T^{\mathrm{bal}}}

\begin{document}

\maketitle

\begin{abstract}
    We study the problem of computing how close a given election is to being group-separable, measuring proximity by swaps of adjacent candidates in the votes. We also consider several other domains, including caterpillar group-separable, balanced group-separable, single-peaked, and single-crossing ones. Our problem is generally intractable, but we find practical \FPT algorithms parameterized by the number of candidates or swaps. For the latter case, our algorithm applies to all domains characterized by finite forbidden subelections, resolving a well-established open problem. We supplement our theoretical findings with experimental analysis. 
\end{abstract}

\section{Introduction}

An ordinal election consists of a set of candidates and a collection of voters ranking these candidates from the most to the least appealing one: We are interested in computing how close such elections are to being group-separable ($\GS$), balanced group-separable ($\GSbal$), or caterpillar group-separable ($\GScat$), measuring proximity in terms of the number of swaps of adjacent candidates in the votes. So far, this line of work on election proximity to having structure mostly focused on the single-peaked domain~\citep{fal-hem-hem:j:nearly-sp,cor-gal-spa:c:sp-width,erd-lac-pfa:j:nearly-sp}, the single-crossing one~\citep{lak-pet-elk:c:nearly-single-crossing,jae-pet-elk:c:nearly-single-crossing}, or both of them jointly~\citep{elk-lac:c:nearly,cor-gal-spa:c:spsc-width}. Group-separability also received some attention, but for different notions of distance~\citep{bre-che-woe:j:distance,kra-elk:c:explaining-prefs}.

Group-separability was introduced by \citet{ina:j:group-separable,ina:j:simple-majority}, and is currently receiving increased attention; see, e.g., the works of \citet{kar:j:group-separable,fal-kar-obr:c:group-separable,jan-lan-lis-szu:c:consistent-subelections,szu-boe-bre-fal-nie-sko-sli-tal:j:map,boe-bre-elk-fal-szu:c:map-pref-learning}. Inada's original definition says that an election is group-separable if every subset of its candidates (of size at least two) can be split into two, such that each voter prefers members of one of them to members of the other.  Recently, \citet{kar:j:group-separable} provided an alternative definition based on PQ-trees and, using this approach, \citet{fal-kar-obr:c:group-separable} identified its balanced and caterpillar subdomains. The main difference between $\GSbal$ and $\GScat$ is that the former uses minimum-height trees, whereas the latter uses maximum-height ones. Based on this, \citet{fal-kar-obr:c:group-separable} showed that many problems that are \NPh in general, become polynomial-time solvable for $\GSbal$ elections but remain intractable for $\GScat$ ones. As tractability under structured domains is the more common behavior, often observed, e.g., for single-peaked and single-crossing domains, this is our first bit of evidence that $\GScat$ is quite a special domain; for examples of tractability under structured domains, see the survey of \citet{elk-lac-pet:t:restricted-domains-survey}.

The second bit of evidence was provided by \citet{szu-boe-bre-fal-nie-sko-sli-tal:j:map}, who have shown that computing Dodgson scores for $\GScat$ elections using an ILP solver is notably slower than for the other distributions they considered (including those over $\GSbal$, single-peaked, and single-crossing domains).

The third bit was recently provided by \citet{fal-sor-szu-was:c:kemeny-inner-diversity,fal-sor-szu-was:c:outer-diversity}, who demonstrated that $\GScat$ is a
surprisingly diverse domain. In particular, they showed that if we draw a vote uniformly at random, then its expected swap distance to the closest $\GScat$ domain is significantly smaller than to $\GSbal$, single-peaked, or single-crossing domains.
This result strongly relies on the assumption that random votes are completely independent, but since votes in elections are correlated, a priori, it is not clear whether various elections would be closer to being $\GScat$, $\GSbal$, single-peaked, or single-crossing. We want to answer this question, and to do so, we need efficient algorithms to %
answer such questions.
\begin{table*}[bt]
    \centering
    \renewcommand*{\arraystretch}{1.2}
    \scalebox{1}{
    \begin{tabularx}{\linewidth}{r|Y|c|Y|Y}
        \toprule
         & --- & $\numVoters$ & $\numCandidates$ & $\budget$ \\
         \midrule
         $\GS$ 
            & ? & ? & $\mathcal{O}^*(3^\numCandidates)$ \, {\small [T\ref{thm:NE:swap:GSbin:FPT:numCandidates}]}& 
            $\mathcal{O}^*(6^\budget)$ \, {\small [C\ref{thm:NE:swap:GS:FPT:budget}]}\\
        $\GScat$ 
            & \NP-c \, {\small{[T\ref{thm:NE:swap:GScat:NPc}]}} 
            & para-\NP-c \, {\small{[T\ref{thm:NE:swap:GScat:NPc}]}}
            & $\mathcal{O}^*(2^\numCandidates)$ \, {\small [T\ref{thm:NE:swap:GScat:FPT:numCandidates}]}
            & %
            $\mathcal{O}^*(8^{\budget})$ \, {\small [C\ref{thm:NE:swap:GS:FPT:budget}]} \\
        $\GSbal$ 
            & \NP-c \, {\small{[T\ref{thm:NE:swap:GSbal:NPc}]}} 
            & ? 
            & $\mathcal{O}^*((2\sqrt{2})^\numCandidates)$ \, {\small [T\ref{thm:NE:swap:GSbal:FPT:numCandidates}]} 
            & $\mathcal{O}^*(2^{2\budget^2})$ \, {\small [T\ref{thm:NE:swap:GSbal:GScat:FPT:budget}]} \\
         \midrule
         $\Ident$ 
            & \NP-c \, [\small BTT] 
            & para-\NP-c \, {\small [DKNS]} 
            & $\mathcal{O}^*(2^{\numCandidates})$ \, {\small [BFGNR]}
            & $\mathcal{O}^*(1.53^{\budget})$ \, {\small [BFGNR]} \\
         $\Antag$ 
            & \NP-c \, {\small [T\ref{thm:NE:swap:antag:NPc}]}
            & para-\NP-c  \, {\small [T\ref{thm:NE:swap:antag:NPc}]}
            & $\mathcal{O}^*(2^\Oh{\numCandidates\log\numCandidates})$  \, {\small [T\ref{thm:NE:swap:any:FPT:numCandidates}]} 
            & $\mathcal{O}^*(3.06^{\budget})$  \, {\small [T\ref{thm:NE:swap:FPT:budget:collapsed}]} \\
         $\Separ$ 
            & \NP-c \, {\small [T\ref{thm:NE:swap:threeStrat:NPc}]} 
            & $\mathcal{O}^*(2^{\numVoters})$ \, {\small [T\ref{thm:NE:swap:twoStrat:poly+FPT:voters}]}
            & $\mathcal{O}^*(2^{\numCandidates})$ \, {\small [T\ref{thm:NE:swap:GScat:FPT:numCandidates}]}
            & $\mathcal{O}^*(5^\budget)$ \, {\small [T\ref{thm:NE:swap:FPT:budget:collapsed}]}
             \\
        \multirow{2}{*}{
          $\Strat$} 
            & \cc{P} {\small($k\leq 2$) \, [T\ref{thm:NE:swap:twoStrat:poly+FPT:voters}]}
             & \multirow{2}{*}{para-\NP-c {\small [T\ref{thm:NE:swap:mhalfStrat:NPc}]}} & 
             \multirow{2}{*}{$\mathcal{O}^*(3^{\numCandidates})$ \, {\small [T\ref{thm:NE:swap:GScat:FPT:numCandidates}]}}
             & \multirow{2}{*}{$\mathcal{O}^*(3^\budget)$ \, {\small [T\ref{thm:NE:swap:FPT:budget:collapsed}]}} 
             \\
           & \NP-c {\small($k\geq 3$) \, [T\ref{thm:NE:swap:threeStrat:NPc}]}
           & & &
           \\
        \Strat[\nicefrac{\numCandidates}{2}]
            & \NP-c \, {\small [T\ref{thm:NE:swap:mhalfStrat:NPc}]} & para-\NP-c {\small [T\ref{thm:NE:swap:mhalfStrat:NPc}]} & $\mathcal{O}^*(2^{\numCandidates})$ \, {\small [T\ref{thm:NE:swap:GScat:FPT:numCandidates}]} &
            $\mathcal{O}^*(2.84^\budget)$ \, {\small [C\ref{thm:NE:swap:FPT:budget:collapsed}]}
            \\
         \midrule 
         \SPeak 
            & \NP-c \, {\small [ELP]}& para-\NP-c \, {\small [ELP]} & $\mathcal{O}^*(2^{\numCandidates\log\numCandidates})$ \, {\small [O\ref{thm:NE:swap:singlePeak:FPT:numCandidates}]} & $\mathcal{O}^*(6^\budget)$ \, {\small [C\ref{thm:NE:swap:singlpeaked:FPT:budget}]}
            \\
         $\SCros$ 
            & \NP-c {\small \iflong
            [LPE,\,P\ref{thm:NE:swap:singlecrossing:NPh}]
            \else
            [LPE]
            \fi 
            } & ? & $\mathcal{O}^*(2^{\numCandidates^2\log\numCandidates})$ \, {\small [T\ref{thm:NE:swap:singleCrossing:FPT:numCandidates}]} & $\mathcal{O}^*(9^\budget)$ \, {\small [C\ref{thm:NE:swap:singlcrossing:FPT:budget}]}
            \\
\bottomrule
    \end{tabularx}
    }
    \caption{An overview of our complexity and algorithmic results.
    Citations for known results are abbreviated as follows: 
    {\small BFGNR} refers to~\cite{bet-fel-guo-nie-ros:j:fpt-kemeny-aaim},
    {\small BTT} to~\cite{bar-tov-tri:j:manipulating},
    {\small DKNS} to~\cite{dwo-kum-nao-siv:c:rank-aggregation}, 
    {\small ELP} to~\cite{erd-lac-pfa:j:nearly-sp},
    and 
    {\small LPE} to~\cite{lak-pet-elk:c:nearly-single-crossing}.
    }
    \label{tab:summary}
\end{table*}

\paragraph{Contributions.}
We show that computing the distance from an election to the closest $\GScat$ or $\GSbal$ one is $\NP$-complete, for $\GScat$ even for the case of four voters. For $\GSbal$, such a strong hardness result is %
elusive, but we give an intuitive explanation as to  why this is the case by considering %
several related domains for which  %
the complexity %
varies significantly. We %
circumvent these hardness results by  (a)~showing reductions to ILP, and giving  (b)~$\FPT$ algorithms parameterized by the number of candidates or %
the distance from respective domains (see Table~\ref{tab:summary}): %
\begin{enumerate}
    \item We evaluate running times of our algorithms experimentally and, somewhat surprisingly, we find the ILP formulations to be completely impractical. On the positive side, combinatorial $\FPT$ algorithms perform very well. %
    \item One of our algorithms, parameterized by the distance from a domain, applies to all domains characterized by forbidden subelections (such as $\GS$, $\GScat$, single-peaked and single-crossing ones), resolving the ``global swaps'' variant of Open Problem~5 of \citet{elk-lac-pet:t:restricted-domains-survey}, dating back to \citet{erd-lac-pfa:j:nearly-sp} %
    and %
    \citet{lak-pet-elk:c:nearly-single-crossing}. %
\end{enumerate}

Experiments show that various elections, including those from the map of elections~\citep{szu-boe-bre-fal-nie-sko-sli-tal:j:map,fal-kac-sor-szu-was:c:div-agr-pol-map}, are quite close to being $\GScat$. %
Yet, transforming an election to being $\GScat$ is often more likely to change its winners than transforming it to being $\GSbal$.
Apparently, the former conversions introduce fewer distortions overall, but
distort in
those parts %
that matter for winner determination.

\section{Preliminaries}
\label{sec:prelim}

For $k \in \N$, we let $[k]=\{1,\ldots,k\}$.
For a finite set~$X$, we write $\calL(X)$ to
denote the set of all linear orders over~$X$. For a linear order
${\succ}\in\calL(X)$ and a subset $Y\subseteq X$, the
\emph{restriction} of $\succ$ to $Y$, denoted $\restr{\succ}{Y}$, is
the linear order in $\calL(Y)$ in which $a$ precedes $b$ if and only if
$a\succ b$.

\paragraph{Candidates, Votes, and Elections.}

An \emph{election} is a pair $E=(\candidates,\voters)$, where
$\candidates$ is a set of $\numCandidates$ \emph{candidates} and
$\voters$ is a set of $\numVoters$ \emph{voters}, where each voter
$v \in \voters$ has a \emph{vote} $\succ_v \in \calL(\candidates)$.
To simplify notation, we often write $v$ to refer both to the voter and
his or her vote, but the meaning will always be clear.
For a vote $v \in\calL(\candidates)$ and a
candidate~$c\in\candidates$, we write $\rank_v(c)$ for the position of
$c$ in $v$; the top-ranked candidate has position~$1$ and the
bottom-ranked one has position~$\numCandidates$. By
$\topc(v)=\rank_v^{-1}(1)$ we mean the top-ranked candidate
in~$v$. %

For a set~$V$ of voters and a subset
$\candidates'\subseteq\candidates$ of candidates,
by~$\restr{V}{\candidates'}$ we mean
restricting the votes in $V$
to candidates in~$\candidates'$.  A
\emph{subelection} of~$E=(\candidates,\voters)$ \emph{induced}  \emph{by}~$\candidates' \subseteq \candidates$ and a subset~$\voters'$ of voters
is the election $(\candidates',\restr{\voters'}{\candidates'})$,
whereas the \emph{restriction} of~$E$ to~$C'$, written
as~$\restr{E}{\candidates'}$, is the subelection induced
by~$\candidates'$ and~$\voters$.
Elections~$E=(\candidates,\voters)$ and $\hat{E}=(\hat\candidates,\hat\voters)$
are \emph{isomorphic} if they can be made identical by renaming their candidates and voters.
An election $\hat{E}$ is \emph{present} in~$E$ if
there exists a subelection~$E'$ of~$E$ that is isomorphic
with~$\hat{E}$.

A \emph{voting rule} is a function $f$ that given an election
outputs a nonempty subset of winning candidates. For example,
\emph{Plurality} outputs the candidates with the highest number of top
positions and
\emph{Borda} outputs those with the highest average position.
Given an election $E = (\candidates, \voters)$, its \emph{majority relation}
is $M(E) = \{ (a,b) \in \candidates \times \candidates \colon$ strict
majority of voters prefer~$a$ to~$b\}$.
\emph{Copeland rule} assigns each candidate $c$
one point for each candidate $d$ such that $(c,d) \in M(E)$ and
half a point for each candidate $d$ such that exactly half of
the voters prefer~$c$ to~$d$; then it outputs the candidates with the highest score.
Copeland is \emph{Condorcet consistent}, i.e., whenever there is a
candidate $c$  preferred to every other one by a (possibly
different) strict majority of the voters, $c$ is the unique winner.

\paragraph{Preference Domains.}

A \emph{(preference) domain} $\domain$ over candidate set
$\candidates$ is a non-empty subset of $\calL(C)$. %
The single-peaked domain is among the best-known
ones~\citep{bla:j:rationale-of-group-decision-making}.
\begin{definition}
  Let axis $\axis$ be a linear order over candidate set~$\candidates$. A vote $v$ is
  \emph{single-peaked on~$\axis$}
  if for every $t \in [|\candidates|]$, its $t$ top-ranked candidates
  form an interval within $\axis$.
  A domain~$\domain$ is \emph{single-peaked} if there exists an axis
  ${\axis}\in\calL(\candidates)$ %
  such that $\domain$ consists of all the votes that are single-peaked on
  $\axis$.
\end{definition}

Note that %
each set of
candidates and each axis yields %
a different single-peaked
domain. Consequently, we will often consider families of
domains, %
typically also referring to them  as
domains.
For a (family of) domain(s) $\calX$, $E$ is an $\calX$ election
if all its votes belong to (a domain from)~$\calX$.
By $\calX(\candidates)$ we mean the
restriction of $\calX$ to domains over candidate set~$\candidates$.

We primarily consider domains defined using extensions of
PQ-trees~\citep{boo-lue:j:consecutive-ones-property}, following the
way in which \citet{kar:j:group-separable} characterized
\emph{group-separability}.  To this end, a $\typePQF$-tree
over a candidate set $\candidates = \{c_1, \ldots, c_m\}$ is an
ordered rooted tree~$T$ with exactly~$\numCandidates$ leaves
$\ell_1,\ldots,\ell_\numCandidates$, listed from left to right. Each
internal node has at least two children and a type that constrains how
its children may be permuted: A node of type~$\typeP$ allows arbitrary
permutations, a node of type~$\typeQ$ only allows reversals, and a
node of type~$\typeF$ fixes the order of the children as is.
A \emph{mapping}~$\mapping$ is a bijection between~$\candidates$ and
the leaves of~$T$; $\mapping^{-1}(\ell_i)$ is the
candidate placed at leaf~$\ell_i$. We also write
$\mapping=(\mapping_1,\ldots,\mapping_{\numCandidates})$ for the
sequence of candidates that~$\mapping$ places on the leaves from left
to right, so that~$\mapping_i=\mapping^{-1}(\ell_i)$. For a
subtree~$T'$ of~$T$ with leaf set~$L'$, let $\candidates'$ be the set
of candidates that $\mapping$ maps into~$L'$. The \emph{restricted
  mapping} $\restr{\mapping}{\candidates'}$ is the bijection between
$\candidates'$ and the leaves of~$T'$ obtained by
restricting~$\mapping$ to~$\candidates'$; read as a sequence, it is
the subsequence of $(\mapping_1,\ldots,\mapping_{\numCandidates})$
retaining exactly the candidates in $\candidates'$.

A vote $v$ is consistent with tree $T$ if it is possible
to permute the internal nodes' children---as allowed by their parents'
types---so that reading the leaves from left to right and mapping them
according to $\mapping$ gives exactly this vote.
The \emph{domain represented by~$T$
  and~$\mapping$}, denoted
$\domain(T,\mapping)$, is the set of all votes consistent with~$T$.
A $\typeQ$-tree is a $\typePQF$-tree whose all nodes are of type
$\typeQ$. 

Let $\numCandidates$ be the number of candidates. We say that $\domain(T,\mapping)$~is:
\newcommand{\myemph}[1]{{\bf\em #1}}
\begin{itemize}
\item \myemph{group-separable (GS)} if $T$ is a binary $\typeQ$-tree;
  this notion is due to
  \citet{ina:j:group-separable,ina:j:simple-majority} but we use the definition of \citet{kar:j:group-separable}, see also  \citep{kra-elk:c:explaining-prefs};
\item \myemph{balanced GS ($\boldsymbol\GSbal$)} if $T$ is a perfect binary $\typeQ$-tree, i.e.,
  each leaf is at the same distance from the root;\footnote{Some
    authors allow
    leaves' distances from the root to differ by up to $1$; this does
    not affect our experiments as we focus on 8 candidates, but
    simplifies presentation of our complexity results.}
\item \myemph{caterpillar GS ($\boldsymbol\GScat$)} if $T$ is a binary caterpillar
  $\typeQ$-tree, i.e., every internal node has at least one leaf child;
\item \myemph{identity} if $T$ has a single internal node of
  type~$\typeF$;
\item \myemph{antagonism} if $T$ has a single internal node of
  type~$\typeQ$;
\item \myemph{separation} if $\numCandidates$ is even and $T$'s root is
  of type~$\typeQ$ with exactly two children, each of type~$\typeP$
  and with $\numCandidates/2$ leaves.
\item \myemph{$\boldsymbol{k}$-stratification}, for $k\in[\numCandidates]$ with
  $k\mid\numCandidates$, if $T$'s root is of type~$\typeF$ and has $k$
  type-$\typeP$ children, each with $\numCandidates/k$~leaves.
\end{itemize}
Note that the identity domain contains a single preference order and the antagonism one contains two (which are reversals of each other).
We focus on group-separable domains, including its balanced and
caterpillar variants. We consider the identity, antagonism, and
$2$-stratification ones because they are naturally featured on maps of
elections~\citep{szu-boe-bre-fal-nie-sko-sli-tal:j:map}. Finally, we
consider $k$-stratification domains for larger values of $k$ (in
particular, $k = \nicefrac{m}{2}$), as well as the separation domain,
because they can be viewed as simplified variants of the balanced GS
domain (intuitively, $\Strat[\nicefrac{m}{2}]$ conflates all but the
final two levels of the balanced tree, whereas $\Separ$ conflates all but the first two).
Additionally, we are also interested in single-crossing domains.
\begin{definition}[\citet{mir:j:single-crossing,rob:j:tax}]
  A domain %
  over candidate set $\candidates$ is \emph{single-crossing}
  if there is an ordering of its votes so that for each two candidates
  $a, b \in \candidates$, votes that prefer $a$ to $b$ either form a
  prefix or a suffix of this ordering.
\end{definition}

We let $\GS$, %
$\GSbal$, $\GScat$, $\Ident$, $\Antag$,  $\Strat$, $\Separ$,  $\SPeak$,
and $\SCros$ denote families of  domains of given types.
$\calD$ is a Condorcet domain if %
every $\calD$ election with an odd number of
voters has a transitive majority relation. All our domains are Condorcet, except for 
$\Strat[k]$ with $k \neq \nicefrac{m}{2}$ and $\Separ$.

\paragraph{Election Distance From a Domain.}
Let $u$ and $v$ be two votes. Their \emph{swap distance},
denoted $\dist(u,v)$,
is %
the minimum number of swaps of adjacent candidates needed to transform
one into the other (equivalently, it is the number of inversions
between them).
Let $E=(\candidates,\voters)$ be an election and let $\domain$ be a
domain over~$\candidates$. Their distance is:
\[
    \dist(E,\domain)
    =
    \textstyle \sum_{v\in\voters}\, \min_{{u}\in\domain} \dist(u,v),
\]
In other words, it is the number of swaps %
needed to ensure that all votes from $E$ belong to~$\calD$.  The
distance of an election $E$ to a family of domains $\calX$, denoted
$\dist(E,\mathcal{X})$, is its smallest distance to a
member~of~$\calX$.  Computing this value for various domains is the
central problem of this paper.

\begin{center}%
    \begin{tabularx}{\linewidth}{@{\hspace{0.2em}}p{1.3cm}X}
        \toprule
        \multicolumn{2}{c}{$\swapNE{\mathcal{X}}$}\\
        \midrule
        \textbf{Given:}
        & An election $E=(\candidates,\voters)$ and a budget $\budget\in\N_0$.\\[4pt]
        \textbf{Question:}
        & Is $\dist(E,\mathcal{X})\leq \budget$?
        \\
        \bottomrule
    \end{tabularx}
\end{center}
\noindent{}For families of domains defined via $\typePQF$-trees, the
problem asks whether there is a tree~$T$ and a mapping~$\mapping$ such
that $\domain(T,\mapping) \in \mathcal{X}$ and
$\dist(E,\domain(T,\mapping)) \leq \budget$; for $\mathcal{X}=\SPeak$,
the analogous reformulation quantifies over axes rather than trees.
In case of $\SCros$, such a reinterpretation is less useful.

For the $\Ident$ domain, our problem asks for the smallest number of swaps
of adjacent candidates that ensures that all votes are equal. This is
exactly the well-known \KemenyOA problem, i.e., the problem of computing a ranking that minimizes the total swap distance (Kendall-tau distance) to the input rankings~\cite{kem:j:no-numbers}.

\begin{center}
	\begin{tabularx}{\linewidth}{@{}p{1.2cm}X}
		\toprule
		\multicolumn{2}{c}{\KemenyOA}\\
		\midrule
		\textbf{Given:}
			& An election $(\candidates,\voters)$ and a budget $\budget\in\N_0$.\\[2pt]
		\textbf{Question:}
			& Does there exist a vote $u^*$ over~$\candidates$ such that $\sum_{v\in\voters}\dist(v,u^*)\leq \budget$?\\
		\bottomrule
	\end{tabularx}
\end{center}

\citet{bar-tov-tri:j:manipulating} showed that the \KemenyOA problem is NP-hard, while~\citet{hem-spa-vog:j:kemeny} established its exact complexity. This intractability has motivated approximation algorithms~\cite{ail-cha-new:j:kemeny-approx}, parameterized algorithms~\cite{bet-dor:j:possible-winner-dichotomy, bet-bre-nie:j:kemeny}, and heuristic approaches~\cite{con-dav-kal:c:kemeny}. Moreover, even verifying whether a given ranking is Kemeny-optimal is \coNPh~\cite{fit-hem:c:kemeny-consensus-complexity}.

\toappendix{
\section{Additional Notation}
Here we provide additional notation that we subsequently use in our proofs. 

When defining votes in an election, we may write a vote~$v=(c_1,\dots,c_m)$ in the form
\[v:c_1,\dots,c_m.\]
Such preference lists can also include sets; the interpretation of this is that the candidates in these sets are ordered in an arbitrarily fixed way.
Often, we implicitly assume some arbitrary, fixed ordering~$\succ$ over the candidate set; then, for a set~$S$ of candidates, we denote by $\ora{S}$ the sequence of the candidates in~$S$ ordered according to~$\succ$, and we denote by~$\ola{S}$ the reverse of this sequence. 

We say that an election $E=(C,V)$ is \emph{caterpillar-consistent} if there exists a caterpillar tree $T$ and a mapping~$\mapping$ such that every vote $v \in V$ belongs to $\domain(T, \mapping)$.
If the number~$\numCandidates$ of candidates is clear from the context, then we will denote by~$\Tcat$ the caterpillar tree that has $\numCandidates$ leaves, each a left child.

Some of our algorithms will use a branching method where we \emph{guess} certain values or some property of the solution; the interpretation of such a guessing step is that the algorithm  exhaustively tries all possibilities.

Finally, we also adopt the conventions $a+\infty=\infty$ and ${\min\emptyset=\infty}$.

}

\section{Hardness of Finding Distances to Domains}
\label{sec:classiccomplexity}
\appendixsection{sec:classiccomplexity}
\toappendix{In this section, we present in detail our \NPcness proofs for the problems at hand. 

We start with the argument showing containment in \NP.}

Our goal in this section is to establish the hardness of
$\swapNE{\calX}$ for our domains. Note that it %
is in $\NP$ for all of the above-defined domains (see 
\iflong
\Cref{thm:NPmembership}).
\else
the supplementary material for this and for all omitted proofs, marked by~($\star$)).
\fi

\toappendix{
We observe that given a \typePQF-tree
$T$ with a mapping $\mapping$ from a candidate set~$C$ to the leaves of~$T$, we can verify in polynomial
time if a given vote belongs to the domain~$\domain(T,\mapping)$.

\begin{proposition}
  \label{prop:check-compatibility}
  There is a polynomial-time algorithm that given a vote $v$ and a domain
  $\domain(T,\mapping)$, defined via a \typePQF-tree~$T$ and a mapping $\varphi$,
  decides whether $v$ belongs to~$\domain(T,\mapping)$.  
\end{proposition}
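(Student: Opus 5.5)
We decide membership of $v$ in $\domain(T,\mapping)$ with a single bottom-up pass over $T$. The key structural fact, immediate from the definition of consistency, is this: for every internal node $x$ of $T$, let $C_x$ be the set of candidates that $\mapping$ maps to leaves of the subtree rooted at $x$. Whatever permutations are applied, the candidates of $C_x$ occupy a contiguous block of positions in the resulting vote. Moreover, the order inside that block is obtained by concatenating the blocks of $x$'s children, in an order permitted by $x$'s type.

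Hence $v\in\domain(T,\mapping)$ if and only if, for every internal node $x$ with children $y_1,\dots,y_k$ (left to right in $T$), two conditions hold:
\begin{enumerate}
\item[(i)] the positions $\{\rank_v(c) : c\in C_x\}$ form an interval;
\item[(ii)] the children's blocks appear inside that interval in an order allowed by the type of $x$.
\end{enumerate}
For type $\typeP$, condition (ii) allows any order. For type $\typeQ$, the children must appear in the order $y_1,\dots,y_k$ or its reverse. For type $\typeF$, they must appear exactly in the order $y_1,\dots,y_k$.

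Both directions of this equivalence follow by induction on the height of the subtree, with the subtree's own block as the inductive object.
\begin{itemize}
\item For the \emph{only if} direction: any permutation realizing $v$ yields contiguous blocks, and the children's blocks are arranged according to the permutation chosen at $x$, which the type of $x$ allows.
\item For the \emph{if} direction: at each node we choose exactly the permutation of its children that reproduces the observed order of their blocks in $v$. Since every leaf block is a single candidate, reading the leaves from left to right then reproduces $v$.
\end{itemize}

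For the algorithm, we first compute $\rank_v(c)$ for all $c$. Then, processing nodes in post-order, we compute $\min$ and $\max$ of the positions in $C_x$ together with $|C_x|$, and test condition (i) via $\max-\min+1=|C_x|$. For condition (ii), we sort the children by the $\min$ position of their blocks and compare the resulting sequence against the allowed orders. Contiguity of the children's blocks, already verified at the children, together with contiguity at $x$ guarantees that this sorted sequence is the true arrangement. Each node costs time polynomial in its number of children, so the total running time is polynomial in $m$.

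The only subtle step is the equivalence above, specifically the \emph{if} direction: we must argue that locally chosen child orders glue into a global permutation of $T$ that yields precisely $v$. This is handled by the induction stated there, since the permutation choices at different nodes are independent.
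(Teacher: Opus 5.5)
Your proposal is correct and is essentially the paper's argument. The paper also characterizes membership by requiring that each child's candidate set occupy a contiguous block and that the blocks appear in an order permitted by the parent's type. It checks this top-down by recursing on the restricted vote $\restr{v}{\candidates_i}$ and mapping $\restr{\mapping}{\candidates_i}$, whereas you check the same conditions bottom-up through min/max positions in the full vote, and you spell out both directions of the equivalence that the paper leaves implicit.
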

\begin{proof}
    Let $T_1,\ldots,T_k$ be the subtrees rooted at the children of the root
    of~$T$. Given $T$ and~$\mapping$, we can compute in polynomial time the
    sets $\candidates_1,\ldots,\candidates_k$ such that, for each $i\in[k]$,
    the set $\candidates_i$ consists exactly of those candidates that are
    mapped to leaves of~$T_i$.

    If vote~$v$ belongs to $\domain(T,\mapping)$, every set $\candidates_i$ must occupy a contiguous block of vote~$v$. Hence, we first check that each~$\candidates_i$ is contiguous
    in $v$; if some $\candidates_i$ is not, then
    ${v}\notin\domain(T,\mapping)$. The contiguous blocks then appear in
    $v$ in a unique order, which we accept if and only if it is permitted
    by the type of the root: any order for type~$\typeP$, the left-to-right
    order or its reversal for type~$\typeQ$, and exactly the left-to-right
    order for type~$\typeF$. 

    Upon acceptance, for each $i\in[k]$ we proceed recursively on the subtree~$T_i$ with the restricted vote $\restr{v}{\candidates_i}$ and the restricted mapping~$\restr{\mapping}{\candidates_i}$. Vote $v$ belongs to $\domain(T,\mapping)$ if and only if all these checks succeed. Each node of $T$ is processed once, with polynomial work per node, and $T$ has $\Oh{\numCandidates}$ nodes; hence, the verification runs in polynomial time.
\end{proof}

This immediately implies that $\swapNE{\calX}$ is in $\NP$ for all our tree-defined domains. The same holds easily for the $\SPeak$ and $\SCros$ domains; hence, in the following proofs we omit $\NP$-membership arguments.

\begin{proposition}
\label{thm:NPmembership}
\swapNE{\mathcal{X}} is in~\NP for any
$\mathcal{X} \in \{\GS,\allowbreak\GSbal,\allowbreak\GScat,\allowbreak\Ident,\allowbreak\Antag,\allowbreak\Strat,\allowbreak\Separ\}$.
\end{proposition}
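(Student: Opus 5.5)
We will give a standard guess-and-verify argument, using \Cref{prop:check-compatibility} as the verification engine. The certificate for a yes-instance $(E,\budget)$ with $E=(\candidates,\voters)$ consists of three parts: a $\typePQF$-tree $T$ with a mapping $\mapping$ such that $\domain(T,\mapping)\in\mathcal{X}$; for every voter $v\in\voters$, a target vote $u_v\in\calL(\candidates)$; and nothing else. The verifier will check four things. First, $T$ has the structural shape required by $\mathcal{X}$. Second, $\mapping$ is a bijection between $\candidates$ and the leaves of~$T$. Third, each $u_v$ belongs to $\domain(T,\mapping)$. Fourth, $\sum_{v\in\voters}\dist(u_v,v)\le\budget$.

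The certificate has polynomial size. Since every internal node of $T$ has at least two children and $T$ has exactly $\numCandidates$ leaves, $T$ has $\Oh{\numCandidates}$ nodes, and $\mapping$ is a list of $\numCandidates$ candidates. The $\numVoters$ target votes take $\Oh{\numVoters\numCandidates\log\numCandidates}$ bits. We do not need the budget $\budget$ to be small, because we certify the target votes rather than a sequence of swaps.

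The verification runs in polynomial time, step by step.
\begin{itemize}
\item \emph{Shape of $T$.} For $\GS$ we check that every node has type $\typeQ$ and exactly two children. For $\GSbal$ we additionally check that all leaves are at equal depth, and for $\GScat$ that every internal node has a leaf child. For $\Ident$ we check a single $\typeF$ root, and for $\Antag$ a single $\typeQ$ root. For $\Separ$ we check that $\numCandidates$ is even and that the root has type $\typeQ$ with two $\typeP$ children of $\numCandidates/2$ leaves each. For $\Strat$ we check that the root has type $\typeF$ and $k$ $\typeP$ children with $\numCandidates/k$ leaves each. All of these are linear-time traversals.
\item \emph{Domain membership.} Membership $u_v\in\domain(T,\mapping)$ is decided in polynomial time by \Cref{prop:check-compatibility}.
\item \emph{Swap distance.} Each $\dist(u_v,v)$ is the number of inversions between $u_v$ and $v$, computable in $\Oh{\numCandidates^2}$ time. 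We then sum these and compare with $\budget$.
\end{itemize}

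Correctness follows from the definition of $\dist(E,\domain)$. If $\dist(E,\mathcal{X})\le\budget$, fix a domain $\domain(T,\mapping)\in\mathcal{X}$ attaining the minimum. For each $v$, take $u_v$ to be a minimizer of $\dist(u,v)$ over $u\in\domain(T,\mapping)$; this yields an accepted certificate. Conversely, suppose a certificate is accepted. Then $\dist(E,\domain(T,\mapping))\le\sum_v\dist(u_v,v)\le\budget$, and $\domain(T,\mapping)\in\mathcal{X}$ by the shape check.

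There is no real obstacle. The only point needing care is that the distance is certified through the target votes, so that the certificate stays polynomial even when $\budget$ is encoded in binary.
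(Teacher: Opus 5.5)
Your proposal is correct and matches the paper's proof: the certificate is a tree, a mapping, and one target vote per voter; the verifier checks the tree's shape, uses \Cref{prop:check-compatibility} for domain membership, and compares the summed inversion counts with~$\budget$. Your remark that certifying target votes rather than swap sequences keeps the certificate polynomial even when $\budget$ is encoded in binary is the same point that makes the paper's argument work.
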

\begin{proof}
    Let $E = (C, V)$ be the input election of the considered problem.
    The certificate is a tree~$T$, a mapping~$\mapping$, and an election $E' = (C, V')$, with the same number of voters as the input election $E$. All of those objects are of size $\poly(\numCandidates,\numVoters)$. We check that $T$ has the form required by~$\mathcal{X}$ by inspecting its shape and node types in $\Oh{\numCandidates}$ time; if so, then $\domain(T, \mapping)\in\mathcal{X}$ by definition. By \Cref{prop:check-compatibility}, we verify that each vote $v'$ in $V'$ satisfies ${v'}\in\domain(T,\mapping)$, and finally that $\sum_{i\in[n]}\dist(v_i,v_i')\leq\budget$ where $V=(v_1,\dots,v_n)$ and~$V'=(v'_1,\dots,v'_n)$.
\end{proof}

}

\paragraph{Intractability for GS/cat and GS/bal.}
We first show that $\swapNE{\GScat}$ is $\NP$-complete, via a
reduction from \KemenyOA.  Our proof relies on 
the observation that if two votes $u$ and $v$ belong to the same
$\GScat$ domain and rank the same two candidates at the bottom, but in
opposite order, then they must rank all the preceding candidates
identically. Together with another technical trick, this
ensures that finding the closest $\GScat$ election
boils down to making (large parts of) all the votes identical,
as in \KemenyOA.
As our reduction preserves the number of voters, we inherit 
hardness for the case of four voters.
\begin{restatable}[\linkproof{thm:NE:swap:GScat:NPc}]{theorem}{thmNPGSCat}
	\label{thm:NE:swap:GScat:NPc}
	$\swapNE{\GScat}$ is \NPc, even for elections with $4$ voters. %
\end{restatable}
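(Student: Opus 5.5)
We reduce from \KemenyOA, which is \NPh already for four voters~\citep{dwo-kum-nao-siv:c:rank-aggregation}. Membership in \NP is \Cref{thm:NPmembership}. Let $(C,V)$ with $V=(v_1,\dots,v_4)$ and budget $d$ be a \KemenyOA instance. We may assume $d<|C|^2\cdot 4$, since otherwise the instance is trivially yes. Set $K=d+1$ and add two fresh blocks of candidates $A=\{a_1,\dots,a_K\}$ and $B=\{b_1,\dots,b_K\}$. The new election $E'$ has four voters:
\[
v'_1: v_1,\ \ora{A},\ \ora{B}, \qquad v'_2: v_2,\ \ora{A},\ \ora{B}, \qquad v'_3: v_3,\ \ola{B},\ \ola{A}, \qquad v'_4: v_4,\ \ola{B},\ \ola{A}.
\]
Thus $v'_3,v'_4$ end with the reversal of the $A\cup B$ suffix of $v'_1,v'_2$. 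The budget stays~$d$, so the number of voters is preserved.

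\emph{Completeness.} Suppose $u^*$ is a Kemeny ranking of cost at most $d$. Take the caterpillar $\typeQ$-tree whose leaf children, read from the root down, carry the candidates of $u^*$ in order. Below them hang the $2K$ candidates of $\ora{A},\ora{B}$, arranged as a caterpillar so that reversing all lower levels yields $\ola{B},\ola{A}$. In this domain, both $(u^*,\ora{A},\ora{B})$ and $(u^*,\ola{B},\ola{A})$ are consistent: the second flips only the node directly above $A\cup B$, while all nodes above keep their orientation. Replacing each $v_i$-prefix by $u^*$ costs exactly $\sum_i \dist(v_i,u^*)\le d$.

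\emph{Soundness.} This is the main obstacle. Fix a $\GScat$ domain $\domain(T,\mapping)$ and target votes $u_1,\dots,u_4$ in it with total distance at most $d<K$ from $E'$. Each $A$--$B$ pair, and each pair of a $C$-candidate with an $A\cup B$-candidate, costs a swap if inverted. Hence fewer than $K$ such pairs are inverted overall. In particular, $u_1,u_2$ still rank the majority of $A$ above the majority of $B$, $u_3,u_4$ do the opposite, and some candidates of $A$ and of $B$ appear below all of $C$ in every $u_i$.

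We then prove the key structural claim, which is the observation from the text generalized to blocks. Let $x$ be the deepest internal node of the caterpillar $T$ whose subtree contains all of $C$. Consider the leaf children peeled off above the subtree containing the $A\cup B$-candidates that $u_1$ and $u_3$ order oppositely. Because $u_1$ and $u_3$ reverse the relative order of these bottom candidates, the orientation of every ancestor node is forced. Each ancestor's leaf child must therefore sit at the top in both votes, since a node peeling its leaf to the bottom would contradict that some $A\cup B$ candidates are bottom-ranked in both. Consequently, every candidate of $C$ is a leaf child of an ancestor oriented identically in $u_1$ and $u_3$, and $\restr{u_1}{C}=\restr{u_3}{C}$; by the same argument all four restrictions coincide, say with $u^*$. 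Finally, restriction to $C$ does not increase swap distance, so $\sum_i\dist(v_i,u^*)\le\sum_i\dist(v'_i,u_i)\le d$. Getting this case analysis on the caterpillar right, in particular ruling out a cheap scenario in which a few $C$-candidates migrate into the $A\cup B$ part of the tree, is where I expect most of the work. The slack $K>d$ is exactly what should exclude those scenarios.
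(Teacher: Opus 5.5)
\textbf{Verdict: there is a genuine gap in soundness. Your reduction is correct, but the step that carries the proof is missing.}

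Your completeness argument is fine, with one small slip. To obtain $(u^*,\ola{B},\ola{A})$ you must flip every node of the sub-caterpillar over $A\cup B$, not only its topmost node. This is harmless, since orientations are chosen independently per vote.

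\textbf{The missing step.} The peeling argument needs a single pair $x,y\in A\cup B$ that is (i) below all of $C$ in both $u_s$ and $u_t$, and (ii) ordered oppositely by them. You never produce one. Your two observations, the ``majority'' orientation and ``some'' candidates below $C$, concern different candidates.

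In your construction no fixed pair has both properties. $C$ is directly adjacent to $A$ in $v'_1,v'_2$ and to $B$ in $v'_3,v'_4$.
\begin{itemize}
\item The only candidates certain to stay below all of $C$ in every vote are $a_K$ and $b_1$. These are adjacent in every vote, so one swap reorders them.
\item The pairs whose opposite orientation is forced are exactly the $(a_i,b_j)$ with $i\le j$. Each contains $a_i$ with $i<K$ or $b_j$ with $j>1$, and that candidate can be lifted above a $C$-candidate within budget.
\end{itemize}
Running the peeling on the full tree is also not rigorous: $C$-leaves and $(A\cup B)$-leaves may interleave in $\mapping$. That is the migration scenario you flag but do not exclude.

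\textbf{How the paper avoids this.} It uses two ingredients.
\begin{itemize}
\item \emph{Heredity plus a two-vote lemma.} Since $\GScat$ is hereditary, you may restrict $u_s,u_t$ to $C\cup\{x,y\}$. Suppose two caterpillar-consistent votes have the same bottom pair in opposite orders. Then $\mapping_1\notin\{x,y\}$, because each of $x,y$ is second-to-last in one vote. Also $\mapping_1$ is not last, so it is first in both votes. Induction gives $\restr{{u_s}}{C}=\restr{{u_t}}{C}$.
\item \emph{A common buffer.} The construction inserts $f_1,\dots,f_{d+1}$ between $C$ and the reversed block in every vote. After restricting to $C\cup\{g_1,g_{d+2}\}$, these two candidates are forced into the bottom two positions, in opposite orders between $v'_1$ and the other votes.
\end{itemize}
Adding such a buffer to your construction would let you take $x=a_1$, $y=b_K$.

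\textbf{Keeping your construction instead.} You can choose the pair depending on the solution.
\begin{itemize}
\item Call $a_i$ good if it lies below all of $C$ in $u_s$, and $b_j$ good if it lies below all of $C$ in $u_t$. Each $a_i$ is automatically below $C$ in $u_t$, and each $b_j$ in $u_s$.
\item A bad $a_i$ costs at least $i$ in $u_s$, and a bad $b_j$ costs at least $K+1-j$ in $u_t$.
\item Every $(a_i,b_j)$ with $i\le j$ is at least $K$ positions apart in both $v'_s$ and $v'_t$, hence oppositely ordered.
\item Let $i_0$ be the least good index in $A$ and $j_0$ the largest good index in $B$. Both exist, since a bad $a_K$ or a bad $b_1$ costs $K>d$.
\item If $i_0>j_0$, then $a_{i_0-1}$ and $b_{j_0+1}$ are bad, costing at least $(i_0-1)+(K-j_0)\ge K>d$. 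So $i_0\le j_0$, and $(a_{i_0},b_{j_0})$ is a valid pair.
\end{itemize}
Doing this for $(s,t)\in\{(1,3),(1,4),(2,3)\}$ makes all four restrictions to $C$ equal. Your final distance computation then completes the proof.
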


\prooftoappendixdivided{thm:NE:swap:GScat:NPc}{\thmNPGSCat*}{

To prove \Cref{thm:NE:swap:GScat:NPc}, we use the fact that the $\GScat$ domain is hereditary, i.e., caterpillar-consistency is closed under candidate deletion~\cite{kra-elk:c:explaining-prefs}.
We also use the following auxiliary claim.

\begin{lemma}
	\label{lem:GS-cat_fixed}
	Let $\succ_1,\succ_2$ be votes over~$\candidates$ whose two bottom positions are occupied by the same pair $\{x,y\}$ in opposite orders. %
    Then $(\succ_1,\succ_2)$ is caterpillar-consistent if and only if $\succ_1$ and $\succ_2$ agree in every position except the last two.
\end{lemma}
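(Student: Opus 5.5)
The equivalence splits into its two directions, and the ``if'' direction is a direct construction. Suppose $\succ_1$ and $\succ_2$ agree on the first $m-2$ positions, say both read $c_1,\dots,c_{m-2}$, followed by $x,y$ in one vote and $y,x$ in the other. Take the caterpillar $\Tcat$ whose leaves, read from left to right, are $c_1,\dots,c_{m-2}$ and then the deepest internal node with leaf children $x,y$. Under the mapping $\mapping=(c_1,\dots,c_{m-2},x,y)$, no Q-node needs reversing to produce $\succ_1$. For $\succ_2$, only the bottom Q-node is reversed. Hence both votes lie in $\domain(\Tcat,\mapping)$, so the pair is caterpillar-consistent.

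For the ``only if'' direction, let $T$ be a binary caterpillar $\typeQ$-tree with mapping $\mapping$ such that both votes lie in $\domain(T,\mapping)$. I would argue by induction on $m$, peeling candidates from the top of the caterpillar. In a binary caterpillar, the root has a leaf child $a$ and a subtree $T'$ containing $C\setminus\{a\}$ (when $m=2$, both children are leaves). Any vote consistent with $T$ therefore places $a$ either first or last and orders the remaining candidates according to $T'$.

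The key step is to pin down where $a$ sits. If $m\ge 3$, then $a\notin\{x,y\}$: if $a$ were $x$, say, then $x$ would be at the top or bottom of each vote, but $x$ occupies position $m-1$ in one of the two votes, which is impossible when $m\ge3$. Moreover, $a$ cannot be last in either vote, since both votes have $x$ or $y$ last. So both votes place $a$ first. Deleting $a$ yields two votes over $C\setminus\{a\}$ that are consistent with $T'$ (using heredity, or directly from the tree structure). Their bottom two positions are still $\{x,y\}$ in opposite orders, so by induction they agree everywhere except on the last two positions. Adding back $a$ at the top of both votes gives the claim. The base case $m=2$ is trivial, since then there are no positions other than the last two.

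The main subtlety is the inductive step's reasoning about $a$'s position. It uses the fact that a root of a caterpillar with a leaf child forces that leaf to an extreme position. Here I take a caterpillar, as in the definition, to be a binary tree in which every internal node has a leaf child, so such an $a$ always exists at the root. Everything else is routine bookkeeping.
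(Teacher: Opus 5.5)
Your proposal is correct and follows essentially the same route as the paper. The ``if'' direction uses the same mapping $(c_1,\dots,c_{m-2},x,y)$ on the caterpillar tree. The ``only if'' direction is the same induction: you peel off the root's leaf candidate, exclude it from $\{x,y\}$ because one of them sits at position $m-1$, and conclude it is first in both votes because $x$ or $y$ is last in each.
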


\begin{proof}
	For the forward direction, suppose that $(\succ_1,\succ_2)$ is caterpillar-consistent w.r.t.\ a mapping $\mapping=(\mapping_1,\dots,\mapping_\numCandidates)$. We show by induction on $|\candidates|$ that $\succ_1$ and $\succ_2$ agree in every position but the last two. For $|\candidates|=2$ there is nothing to prove, so let $|\candidates|>2$. By the recursive characterization, $\mapping_1$ is first or last in both votes. As both $x$ and~$y$ are second-to-last in exactly one of the votes $\succ_1$ and $\succ_2$, neither is at an extreme in both votes, so $\mapping_1\notin\{x,y\}$. Since the last position of each vote is occupied by $x$ or $y$, the candidate $\mapping_1$ is not last in either, hence is first in both. Deleting $\mapping_1$ from the top of both votes leaves two votes that still have $x,y$ in their bottom two positions in opposite orders; by the induction hypothesis they agree in every position but their last two. Together with the shared first candidate $\mapping_1$, the votes $\succ_1$ and $\succ_2$ agree in every position but the last two.

	For the converse, let $(p_1,\ldots,p_{\numCandidates-2})$ be the common prefix of $\succ_1$ and~$\succ_2$, and set $\mapping=(p_1,\ldots,p_{\numCandidates-2},x,y)$. After deleting $p_1,\ldots,p_{i-1}$, the candidate $p_i$ is first in both remaining votes, so the recursive condition holds for the first $\numCandidates-2$ steps. The remaining votes are then $(x,y)$ and $(y,x)$, both caterpillar with respect to $(x,y)$. Hence, both votes lie in~$\domain(\Tcat,\mapping)$, so $(\succ_1,\succ_2)$ is caterpillar-consistent.
\end{proof}

		Now, we present an \NPhness reduction from \KemenyOA. Given a \KemenyOA instance with candidates $\candidates=\{c_1,\ldots,c_{\numCandidates}\}$, voters ${\voters}=(\vote_1,\ldots,\vote_{\numVoters})$, and budget~$\budget$, we construct an election $E' = (\candidates',\voters')$ with
	\[
		\candidates' = \candidates \cup\{f_1,\ldots,f_{\budget+1}\} \cup\{g_1,\ldots,g_{\budget+2}\}.
	\]
    Next, we define the votes $\voters'$ as follows.
	\begin{itemize}
		\item %
        voter~$v_1'$ with vote 
        \[v'_1 =(v_1, f_1, \dots, f_{\budget+1}, g_1,\dots,g_{\budget+2});\]
		\item voter $\vote_i'$ for $i\in \{2,\dots,\numVoters\}$, with vote
        \[v'_i =(v_i, f_1, \dots, f_{\budget+1}, g_{\budget+2},\dots,g_1).\]
	\end{itemize}
	We claim $\dist(E',\GScat)\leq \budget$ if and only if the \KemenyOA instance is a yes-instance.

	\proofsubparagraph{Direction ``$\Rightarrow$''.}
	Suppose $\dist(E',\GScat)\leq \budget$. Then 
    there exists a  caterpillar-consistent election $E''=(\candidates', \voters'')$ whose votes $\voters''=(\vote_1'',\ldots,\vote_{\numVoters}'')$ are obtained from $\voters'$ by at most $\budget$ swaps in total. Because at most $\budget$ swaps were used, we know that
	\begin{enumerate}
		\item $c\succ_{v_i''} g_1$ and $c\succ_{v_i''} g_{\budget+2}$ for each $i \in [\numVoters]$ and $c\in\candidates$;
        
		\item $g_{\budget+2}\succ_{v_i''} g_1$ for each $i\in[2,\numVoters]$;
		\item $g_1\succ_{v_1''} g_{\budget+2}$.
	\end{enumerate}
	Indeed, violating~(1) would require moving $g_1$ or $g_{\budget+2}$ across the block $f_1,\ldots,f_{\budget+1}$, and violating~(2) or~(3) would require reversing two candidates with $\budget$ candidates between them; each needs more than $\budget$ swaps.

	Since $\GScat$ is hereditary, restricting $E''$ to ${\candidates''=\candidates\cup\{g_1,g_{\budget+2}\}}$ leaves a caterpillar-consistent election. By properties (1)--(3) above, in every vote of $\restr{\voters''}{\candidates''}$ the candidates $g_1$ and $g_{\budget+2}$ occupy the bottom two positions, ordered $g_1,g_{\budget+2}$ in $v_1''$ and $g_{\budget+2},g_1$ in every other vote. Applying \Cref{lem:GS-cat_fixed} to $\vote_1''$ and each $\vote_i''$ with $i\geq2$, all votes of~$\restr{\voters''}{\candidates''}$ agree on their top $\numCandidates$ positions, which are exactly the candidates of $\candidates$; let $\vote^*$ be this common order over~$\candidates$.

	Since restricting the candidate set $\candidates$ does not increase swap distance, and votes $\vote_i'$ and $\vote_i''$ restricted to~$\candidates$ yield votes $\vote_i$ and $\vote^*$, respectively, we get
	\[
		\sum_{i=1}^{\numVoters}\dist(\vote_i,\vote^*) \;\leq\; \sum_{i=1}^{\numVoters}\dist(\vote_i',\vote_i'') \;\leq\; \budget,
	\]
	so the \KemenyOA instance is a yes-instance.

	\proofsubparagraph{Direction ``$\Leftarrow$''.}
	Let $\vote^*=(q_1,\ldots,q_{\numCandidates})$ be an ordering over $\candidates$ with $\sum_{i}\dist(\vote_i,\vote^*)\leq \budget$. Take
	\[
		\mapping=(q_1,\ldots,q_{\numCandidates}, f_1,\ldots,f_{\budget+1},g_1,\ldots,g_{\budget+2}).
	\]
	Using a total of at most $\budget$ swaps among the candidates of~$\candidates$ in all votes of $\voters'$, we bring them into the order $\vote^*$ while leaving all other candidates in place. The resulting votes have one of two forms:
	\[
		(q_1, \dots,  q_{\numCandidates},  f_1, \dots,  f_{\budget+1},  g_1, \dots,  g_{\budget+2}); 
	\]
	\[
		(q_1, \dots,  q_{\numCandidates},  f_1, \dots,  f_{\budget+1},  g_{\budget+2}, \dots,  g_1). %
	\]
	The first type of vote ranks the candidates exactly in as~$\mapping$, so it lies in $\domain(\Tcat,\mapping)$. For the second form, its prefix through $f_{\budget+1}$ coincides with that of $\mapping$, so the recursive characterization removes $q_1,\ldots,q_{\numCandidates},f_1,\ldots,f_{\budget+1}$ from the top; the remaining vote $(g_{\budget+2},\dots, g_1)$ and remaining mapping $(g_1,\ldots,g_{\budget+2})$ are then matched by removing $g_1,\ldots,g_{\budget+2}$ from the bottom. Hence the second type of vote also lies in~$\domain(\Tcat,\mapping)$, and $\dist(E',\GScat)\leq \budget$.

	Since \KemenyOA is already \NPh{} for four voters~\cite{dwo-kum-nao-siv:c:rank-aggregation} and the reduction preserves the number of voters, the hardness holds for $\numVoters=4$.
}

For the case of $\GSbal$, we also obtain $\NP$-completeness, but in
this case the proof is far more involved, and follows by a reduction
from the \probName{Maximum Vertex Coverage} problem on bipartite
graphs, shown to be \NPh independently by
\citet{apo-sim:j:max-vertex-coverage-bipartite} and
\citet{jor-vet:j:rank-of-matroid}.

\begin{restatable}[\linkproof{thm:NE:swap:GSbal:NPc}]{theorem}{thmNEswapGSbalNPc}   
\label{thm:NE:swap:GSbal:NPc}
  \swapNE{\GSbal} is \NPc.
\end{restatable}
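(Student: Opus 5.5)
Membership in \NP follows from \Cref{thm:NPmembership}, so we only need \NPhness. We plan a reduction from \probName{Maximum Vertex Coverage} on bipartite graphs: given a bipartite graph $G=(A\cup B,\mathcal{E})$ and integers $k,t$, decide whether some $k$ vertices cover at least $t$ edges.

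The guiding observation concerns the top split of a perfect binary $\typeQ$-tree. At the root, the candidates are partitioned into two halves of size $\numCandidates/2$. Every $\GSbal$ vote ranks one half entirely above the other. Deeper levels can be made irrelevant by padding, so the instance essentially asks for a balanced bipartition $(X,\bar X)$ of the candidates. Each voter pays the cheapest cost of making its vote ``$X$ above $\bar X$'' or ``$\bar X$ above $X$''. For a vote $v$, the cost of putting $X$ on top is the number of pairs $(y,x)$ with $y\in\bar X$, $x\in X$ and $y\succ_v x$.

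The construction will proceed as follows.
\begin{enumerate}
\item Introduce one candidate per vertex of $A\cup B$, plus large anchor blocks $L$ and $R$ of dummy candidates.
\item Fix the root split by many identical ``anchor'' voters ranking $L$ first and $R$ last. Any solution that separates $L$ or $R$ across the root would then exceed the budget, which forces $L\subseteq X$ and $R\subseteq\bar X$. The sizes are chosen so that exactly $k$ vertex candidates must join $L$'s side.
\item For each edge $\{a,b\}$, add an edge voter of the form $a,b,\ldots$ (with the remaining candidates arranged suitably), whose cheapest completion costs less by a fixed amount when $a$ or $b$ lies in $X$.
\end{enumerate}
Bipartiteness should help to orient the edge gadgets consistently: $A$-vertices can be placed near $L$ and $B$-vertices near $R$ in the base orders. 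Choosing a vertex then costs a uniform amount in the anchor votes, and gains a saving for every incident edge, counted once per edge.

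After the root split is fixed, the subtrees on $X$ and $\bar X$ still need to be handled. For this we will arrange that, within each half, all votes restricted to that half can be made $\GSbal$ at a cost independent of which $k$ vertices were chosen. One option is to make the vertex candidates sit in fixed positions relative to padding candidates, so that the recursive cost is a constant $c_0$ that we can compute explicitly. The budget will then be set to $d=c_0+(\text{anchor cost})+(\text{per-edge base cost})\cdot|\mathcal{E}|-t$, with the savings per covered edge normalized to $1$.

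\emph{Correctness.} The forward direction takes the cover $S$, sets $X=L\cup S$ plus padding, and counts swaps. The backward direction has three steps:
\begin{enumerate}
\item Show that the budget forces the root split to respect the anchors.
\item Extract the $k$ chosen vertices.
\item Lower-bound the edge voters' cost by the number of uncovered edges.
\end{enumerate}

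\emph{Main obstacle.} I expect the hardest part to be the exact accounting. Each edge voter must save precisely one unit when at least one endpoint is chosen, not two when both are chosen. This requires the gadget to reward coverage rather than to be additive; the bipartite orientation, with $a$ above $b$, is likely what makes ``$a$ or $b$ in $X$'' cost uniformly. In addition, the lower levels of the balanced tree must contribute a cost that does not depend on the choice. I would first try to verify the single-edge gadget cost table for the four cases of whether $a$ and $b$ lie in $X$, before scaling up.
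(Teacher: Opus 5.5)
\textbf{Gap: the root-split-only design makes the objective additive.} The edge gadget, which you flag as the main obstacle, is missing, and the framework you build around it cannot host one. Once the anchors pin $L\subseteq X$ and $R\subseteq\bar X$, every vote that keeps $L$ above $R$ has its orientation forced, because the reverse orientation costs at least $|L|\cdot|R|$. Your anchor and edge votes are of this kind. The root-split cost of such a vote $v$ is the number of pairs $(y,x)\in\bar X\times X$ with $y\succ_v x$. This equals $\sum_{x\in X}\rank_v(x)-\binom{|X|+1}{2}$; compare \Cref{claim:separ:rank-sum}.

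That expression is linear in the indicator vector of the chosen set $S$. Choosing both $a$ and $b$ saves exactly the sum of what choosing each alone saves, however you order $a$ and $b$ in the vote. If the lower levels contribute a constant independent of $S$, as you intend, the total cost is $\sum_{s\in S}w(s)$ plus a constant. Taking the $k$ vertex candidates of smallest weight minimizes it in polynomial time. This is exactly the Borda argument that puts $\swapNE{\Strat[2]}$ in polynomial time (\Cref{thm:NE:swap:twoStrat:poly}). Such a reduction would therefore imply P${}={}$NP.

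So the ``$a$ or $b$'' behaviour cannot come from the root split under forced orientations with inert lower levels. It must come from one of two places:
\begin{itemize}
\item the per-vote choice of orientation, i.e., the $K-|\Delta|$ term exploited in the $\Separ$ reduction, which needs votes far from the anchors; or
\item a deeper node of the tree.
\end{itemize}

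\textbf{Secondary, fixable problem: identical anchors.} Suppose you use $\alpha$ identical anchor votes $a$. The anchor cost of a choice $S$ is then $\alpha\sum_{s\in S}\rank_a(s)$ plus a constant. For large $\alpha$ this pins $S$ to the first $k$ vertex candidates of the anchor order. You need two anchor groups with mutually reversed vertex orders, so that every choice costs the same.

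\textbf{How the paper does it.} The paper takes the second route, placing the coverage interaction one level below the root split. It reduces from the balanced variant: choose $k$ vertices in $U$ and $k$ in $W$. Anchor groups $A$ and $A'$ rank $\ora{U},\ora{W}$ and $\ola{U},\ola{W}$, respectively, between dummy blocks $B_1$ and $B_2$. The block sizes ensure that any solution within budget maps $B_1\cup U^\star$, $(U\setminus U^\star)\cup(W\setminus W^\star)$, and $W^\star\cup B_2$ to the subtrees rooted at $t_{11}$, $t_{12}$, and $t_2$. Here $t_{11},t_{12}$ are the children of the root's child $t_1$, and $t_2$ is the root's other child.

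Chosen $U$-vertices therefore move up and chosen $W$-vertices move down, so a covered pair $(u,w)$ must be ordered with $u$ above $w$. For an uncovered pair, $u$ and $w$ lie on the two sides of $t_{12}$. Its $\typeQ$-node lets the incidence votes, which rank essentially all of $W$ above $U$, keep $w$ above $u$.

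The incidence vote $v_{u,w}$ is created for every pair in $U\times W$, not only for edges. It ranks $u$ directly above $w$ exactly when $\{u,w\}\in F$. Hence a covered edge saves one swap and an uncovered edge costs one extra swap, which gives the $|F|-2t$ term in the budget. Finally, the paper shows that the cost at deeper levels is independent of $U^\star$ and $W^\star$, using the elections $E_k$ and the exact value $g(k)=k(k/2-1)$ from \Cref{clm:downshift-elections}.
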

\prooftoappendix{thm:NE:swap:GSbal:NPc}{\thmNEswapGSbalNPc*}{
    Before presenting our reduction, we introduce some notation. 
    For an integer~$k \in \N$ that is a power of~$2$, let 
    $T_k$ denote the binary, balanced \typeQ-tree with~$k$ leaves.
    For a given tree~$T_k$ and a list~$\ora{\candidates}$ of candidates, we let $\mappinglist{\ora{\candidates}}$ denote the bijection from~$\candidates$ to the leaves of~$T_k$ for which the left-to-right traversing of the leaves yields the ordering~$\ora{\candidates}$ of the candidates.
    
    Let~$E_k=(\candidates_k,\voters_k)$ be an election with $\candidates_k=\{c_1,\ldots,c_k\}$
    and $\voters_k=(\vote_1,\ldots,\vote_k)$ where vote~$\vote_i$ is obtained from the  ordering $c_1,\ldots,c_k$ of candidates by moving $c_i$ to the top position, i.e., $\vote_i=(c_i,c_1,\ldots,c_{i-1},c_{i+1},\ldots,c_k)$. 
    We use the following property of such elections.

    \begin{claim}
    \label{clm:downshift-elections}
        Let $k$ be a power of 2, and define the function $g(k)=k(\nicefrac{k}{2}-1)$.
        Then $(E_k,\budget)$ is a yes-instance of \swapNE{\GSbal} if and only if $\budget \geq g(k)$.
        Moreover, $E_k$ has swap distance~$g(k)$ from an election in~$\domain(T_k,\mappinglist{c_1,\dots,c_k})$. 
    \end{claim}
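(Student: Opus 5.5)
The plan is to exploit the recursive structure of balanced trees. The function satisfies $g(2)=0$ and
\[g(k)=2\,g(k/2)+(k/2)^2,\]
since $2\cdot\tfrac{k}{2}(\tfrac{k}{4}-1)+\tfrac{k^2}{4}=\tfrac{k^2}{2}-k$. I would prove both directions of the claim, and the ``moreover'' part, by induction on $k$ along this recurrence. The base case $k=2$ is trivial: every election over two candidates is $\GSbal$.

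\emph{Upper bound (and the ``moreover'' part).} Fix $T_k$ with mapping $\mappinglist{c_1,\dots,c_k}$. The root splits the candidates into $A=\{c_1,\dots,c_{k/2}\}$ and $B=\{c_{k/2+1},\dots,c_k\}$, and we replace each vote by a vote of the domain that puts $A$ before $B$.
\begin{itemize}
\item For a vote $v_i$ with $c_i\in A$, no cross pair $(a,b)$ with $a\in A$, $b\in B$ is inverted. Its restriction to $B$ is the natural order, and its restriction to $A$ is the vote $v_i$ of a copy of $E_{k/2}$ on $A$.
\item For a vote $v_i$ with $c_i\in B$, we move $c_i$ back below $A$, which costs exactly $k/2$ swaps. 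The remaining order on $B$ is then the vote of a copy of $E_{k/2}$ on $B$, and the order on $A$ is natural.
\end{itemize}
The cross costs sum to $\tfrac{k}{2}\cdot\tfrac{k}{2}=k^2/4$. By induction, the $k/2$ votes of $E_{k/2}$ on $A$ can be made consistent with $T_{k/2}$ and $\mappinglist{c_1,\dots,c_{k/2}}$ using $g(k/2)$ swaps, and the natural-order votes there cost $0$; the same holds on $B$. Gluing the two halves under the root with $A$ first gives a vote of $\domain(T_k,\mappinglist{c_1,\dots,c_k})$ for each voter, at total cost $g(k)$. This proves ``$\Leftarrow$'' and the ``moreover'' statement.

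\emph{Lower bound.} Take an arbitrary balanced tree and mapping, and let $A,B$ (each of size $k/2$) be the candidate sets below the two children of the root. For each vote, the swap distance to a domain vote splits into three parts: the inversions among cross pairs between $A$ and $B$, the inversions inside $A$, and the inversions inside $B$. Inside $A$, the restricted election contains a copy of $E_{|A|}$ (the votes $v_i$ with $c_i\in A$, relabelled along the natural order) together with extra natural-order votes. Since adding votes never decreases the distance, the induction hypothesis gives cost at least $g(k/2)$ inside $A$, and likewise inside $B$.

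It remains to show that the cross part is at least $k^2/4$. Each domain vote orders the two blocks either as $A$ then $B$ or as $B$ then $A$, so vote $v_i$ pays at least $\min(x_i,\,k^2/4-x_i)$, where $x_i$ is the number of pairs with a $B$-candidate above an $A$-candidate in $v_i$. I expect this step to be the main obstacle: showing $\sum_i \min(x_i,k^2/4-x_i)\ge k^2/4$ for every balanced split.

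My plan for it is as follows. Write $y$ for the value of $x_i$ in the natural order. Moving $c_i$ to the top changes $x_i$ only through pairs involving $c_i$: if $c_i\in B$, it adds the number of $A$-candidates preceding $c_i$; if $c_i\in A$, it subtracts the number of $B$-candidates preceding $c_i$. One then argues by cases on the majority orientation. Suppose the orientation used by most votes is ``$A$ first''. Each of the $k/2$ candidates of $B$ can be charged to a distinct set of pairs, so that summing over $i$ these corrections are at least $k/2$ per $B$-candidate after accounting for the natural-order term $y$. The extremal case is $A=\{c_1,\dots,c_{k/2}\}$, which is exactly the tight construction above. If this direct counting proves messy, I would instead swap an $A$-candidate with a $B$-candidate that is adjacent in the natural order and show the cross cost never increases, reducing to the prefix split.
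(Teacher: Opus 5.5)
Your upper-bound construction matches the paper's proof. So do the decomposition of each vote's distance into cross pairs and pairs inside $A$ and inside $B$, and the use of the induction hypothesis on the embedded copy of $E_{k/2}$. The gap is the step you flag yourself: you never establish $\sum_i \min(x_i, k^2/4 - x_i)\ge k^2/4$ for an arbitrary balanced split. Without it, the lower bound, and hence the ``only if'' direction, is unproved.

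Neither of your sketches closes this step. A ``majority orientation'' does not control a sum in which each vote independently takes its cheaper orientation, and the charging of $B$-candidates to ``distinct sets of pairs'' is never specified. The exchange argument is not automatic either. Exchanging an adjacent pair $c_j\in B$, $c_{j+1}\in A$ lowers $x_i$ by exactly one in every vote $v_i$ with $i\notin\{j,j+1\}$. That \emph{increases} $\min(x_i,k^2/4-x_i)$ for each such vote with $x_i>k^2/8$, while $x_j$ and $x_{j+1}$ change by $-j$ and $+j$. So monotonicity toward the prefix split would need a genuine argument.

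The missing idea is the paper's pairing argument. Exactly $k/2$ votes have their top candidate in $A$ and $k/2$ in $B$. Match each $v_a$ with $c_a\in A$ to a distinct $v_b$ with $c_b\in B$, and show that each matched pair pays at least $k/2$ cross inversions.
\begin{itemize}
\item If the two target votes use the same block orientation, say $A$ before $B$, then $c_b$, which is first in $v_b$, must move below all of $A$, costing $k/2$. The case $B$ before $A$ is symmetric, with $c_a$ in $v_a$.
\item If the targets use opposite orientations, every pair in $(A\setminus\{c_a\})\times(B\setminus\{c_b\})$ is ordered identically in $v_a$ and $v_b$, since both follow the natural order there. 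These pairs are ordered oppositely in the two targets, so one of the two votes pays for each, giving at least $(k/2-1)^2$.
\end{itemize}
Summing over the $k/2$ matched pairs gives $k^2/4$, provided $(k/2-1)^2\ge k/2$. This holds for $k\ge 8$ but fails for $k=4$. So with this argument you must add $k=4$ as a second base case, as the paper does. A direct check shows that each of the three balanced splits of $E_4$ has cross cost exactly $4=g(4)$.
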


    \begin{claimproof}
        We use induction on $\log k$. The claim is easily seen to hold for $k=2$ with $g(2)=0$, and it is also straightforward to check that for $k=4$ we obtain $g(4)=4$.\footnote{In fact, for any mapping~$\mapping$ from $\{c_i:i \in [4]\}$ to the leaves of the balanced binary tree~$T_4$, the minimal swap distance of $E_4$ from some election in $\domain(T_4,\mapping)$ is exactly~$4$.} Assume that it holds for $\nicefrac{k}{2}$.
        
        Suppose now that there is a  GS-balanced election $(\candidates_k,\voters'_k)$ with $\voters'_k=(\vote'_1,\ldots,\vote'_k)$
        and a  mapping~$\mapping$ from~$\candidates_k$ to the leaves of~$T_k$
        such that each $\vote'_i$ is in $ \domain(T_k,\mapping)$. Let us define $s^\star= \sum_{i \in [k]}\swap(\vote_i,\vote'_i)$. We aim to show $s^\star \geq g(k)$.
        
        Let $T^1$ and~$T^2$ be the two subtrees rooted at the two children of the root of~$T$.
        Let $S_i$ be the set of candidates mapped by~$\mapping$ to leaves in~$T^i$ for $i \in [2]$.
        Observe that we can obtain $\voters'_k$ with at most~$s^\star$ swaps by first performing only those swaps that are between candidates of~$S_1$ and~$S_2$, obtaining some election~$(\candidates_k,\voters''_k)$, and only then 
        performing the swaps between candidates belonging to the same set~$S_1$ or~$S_2$.
        Note that the ordering of the candidates in~$S_i$ for some $i \in [2]$ as appearing in the votes of~$\voters''_k$ is as follows: in~$\nicefrac{k}{2}$ votes from~$\voters''_k$, they appear in their ``default'' order (in increasing order of their subscripts), and the remaining $\nicefrac{k}{2}$ votes in~$\voters''_k$ can be obtained by taking each candidate~$c$ in~$S_i$ and moving it to the top of this default ordering of~$S_i$.
        Hence, these $\nicefrac{k}{2}$ votes form an election that is isomorphic to~$E_{\nicefrac{k}{2}}$. 
        By our induction hypothesis, this shows that the minimum number of swaps necessary to reach $\voters'_k$ from~$\voters''_k$ is at least twice the swap distance of~$E_{\nicefrac{k}{2}}$ from a GS-balanced election, i.e.,  $2g(\nicefrac{k}{2})$.
        
        We will now show that the swap distance of~$\voters$ from $\voters''$ is at least $\nicefrac{k^2}{4}$. Let $\voters''=(\vote''_1,\dots,\vote''_k)$ where $\vote''_i$ is the vote in~$V''$ with minimum swap distance to~$v_i$, for each $i \in [k]$.
        Consider any two votes~$\vote_a$ and~$\vote_b$ such that $c_a \in S_1$ and $c_b \in S_2$. We will show that $\swap(\vote_a,\vote''_a)+\swap(\vote_b,\vote''_b)\geq \nicefrac{k}{2}$.
        First, suppose that all candidates in~$S_1$ precede all candidates in~$S_2$ in both~$\vote''_a$ and~$\vote''_b$. 
        In this case, 
        candidate~$c_b$, ranked at the first position in~$\vote_b$, needs to be swapped with all candidates in~$S_1$ to reach its position in~$\vote''_b$, which means $\nicefrac{k}{2}$ swaps. 
        Similarly, if all candidates in~$S_2$ precede all candidates in~$S_1$ in both~$\vote''_a$ and~$\vote''_b$, then $c_a$ needs $\nicefrac{k}{2}$ swaps in~$\vote_a$ to reach its position in~$\vote''_a$. 
        Assume now that candidates in~$S_1$ precede those in~$S_2$ in exactly one of~$\vote''_a$ and~$\vote''_b$. Then every pair of candidates from $(S_1 \setminus  \{c_a\}) \times (S_2 \setminus \{c_b\})$ needs to be swapped either when turning $\vote_a$ into~$\vote''_a$, or when turning $\vote_b$ into~$\vote''_b$. This yields $\swap(\vote_a,\vote''_a)+\swap(\vote_b,\vote''_b)\geq (\nicefrac{k}{2}-1)^2>\nicefrac{k}{2}$ which holds for each $k \geq 8$. 

        Pairing up the votes in~$\voters$ so that for each pair, the candidates on the top positions are in different sets~$S_1$ or~$S_2$, we obtain that $\sum_{i} \swap (\vote_i,\vote''_i)\geq (\nicefrac{k}{2})^2$, as promised. It follows that $s^\star \geq 2g(\nicefrac{k}{2})+\nicefrac{k^2}{4}=\nicefrac{k^2}{4}-k+\nicefrac{k^2}{4}=k(\nicefrac{k}{2}-1)$, proving the first direction of the claim. 
        
        It remains to observe that we can obtain a GS-balanced election~$E'_k \in \domain(T_k,\mappinglist{c_1,\dots,c_k})$ from~$E_k$ as follows. %
        First, we move all candidates in~$\{c_1,\dots,c_{\nicefrac{k}{2}}\}$ to the first $\nicefrac{k}{2}$ positions in each vote; this requires $0$ swaps in each vote $\vote_i$, $i \in [\nicefrac{k}{2}]$, and exactly $\nicefrac{k}{2}$ swaps in all remaining votes. Hence, this yields $\nicefrac{k^2}{4}$ swaps. Note that the subelection restricted to~$C_{\nicefrac{k}{2}}=\{c_1,\dots,c_{\nicefrac{k}{2}}\}$ contains the vote $(c_1,\dots,c_{\nicefrac{k}{2}})$ exactly~$\nicefrac{k}{2}$ times, and additionally, a copy of $E_{\nicefrac{k}{2}}$. Thus, we can turn this subelection into a GS-balanced election using $g(\nicefrac{k}{2})$ swaps. Dealing with the subelection restricted to~$C_k \setminus C_{\nicefrac{k}{2}}$ in the same way, we obtain that the swap distance of $E_k$ from~$E'_k$ is exactly $2g(\nicefrac{k}{2})+\nicefrac{k^2}{4}=g(k)$. 
        This finishes our proof.
        \end{claimproof}

    \def\vertexbudget{k}
    \def\threshold{t}
    \def\graphsize{n}

    \proofsubparagraph{The Input.}
    We now present a reduction from a variant of the \probName{Balanced Maximum Vertex Coverage (BMVC)} problem on bipartite graphs. %
    The input of this problem is a bipartite graph~$G=(U,W;F)$ with two integers~$\vertexbudget$ and~$\threshold$, and the question is %
    whether $G$ contains $\vertexbudget$ vertices from~$U$ and $\vertexbudget$ vertices from~$W$ such that together they cover at least~$\threshold$ edges.
    \begin{claim}
    \label{clm:BMVC:NPc}
        BMVC is \NPc even if $\vertexbudget$ is a power of~$2$, and $|U|=|W|=\graphsize+\vertexbudget$ for some integer~$\graphsize> 2 \vertexbudget$ that is a power of~$2$.
    \end{claim}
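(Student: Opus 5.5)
We plan to reduce from \probName{Maximum Vertex Coverage} on bipartite graphs, which is \NPh by \citet{apo-sim:j:max-vertex-coverage-bipartite} and \citet{jor-vet:j:rank-of-matroid}. In that problem we are given a bipartite graph $G=(U,W;F)$ and integers $k,t$, and we ask for $k$ vertices (anywhere) covering at least $t$ edges. Membership in \NP is immediate by guessing the $2k$ vertices. For hardness we proceed in two stages. First we make the problem \emph{balanced}, so that exactly $k$ vertices are chosen on each side. Then we pad the instance so that $k$ and the side sizes have the required power-of-two form.

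For balancing, the first idea is to take two disjoint copies $G_1=(U_1,W_1)$ and $G_2=(U_2,W_2)$ of $G$. We then form the bipartite graph with sides $U'=U_1\cup W_2$ and $W'=W_1\cup U_2$, ask for $k$ vertices on each side, and set the threshold to $2t$. A selection of $k$ vertices in $G$ covering $t$ edges, used in both copies, puts exactly $k$ vertices on each side, because the $G_1$-part of the selection on $U_1$ together with the $G_2$-part on $W_2$ together account for $k$ vertices. Conversely, $2k$ vertices covering $2t$ edges in the disjoint union must contain, in one of the copies, at most $k$ vertices covering at least $t$ edges, by averaging. A set of fewer than $k$ vertices can be padded, since coverage is monotone. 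So balanced BMVC with parameters $(k,2t)$ is equivalent to the original problem.

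For the power-of-two form, we raise $k$ to the next power of two $2^a$. To do this we add $2^a-k$ ``forced'' vertices on each side, each being the center of a star with $M$ private leaves on the opposite side, where $M$ exceeds $|F|$. We raise $t$ by $2(2^a-k)M$. Any solution that skips some star center loses at least $M$ coverage, more than all original edges combined, and this loss cannot be compensated. Hence optimal solutions pick all the forced centers, and the remaining $k$ picks per side solve the original instance. A choice that picks star leaves instead is never better, since a leaf covers only one edge.

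Finally we add isolated vertices to both sides until $|U|=|W|=n+k$ with $n$ a power of two and $n>2k$. This is possible by taking $n$ to be the smallest power of two exceeding both $2k$ and the current side size minus $k$. Isolated vertices change nothing.

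The main obstacle is checking that the star padding interacts cleanly with balancedness. The leaves of the forced stars also lie on the sides, so they count toward $|U|,|W|$. We must ensure that no solution prefers to select leaves over centers, which the bound $M>|F|+(2^a-k)$ handles. We must also ensure that the final padding to $n+k$ still respects $n>2k$ after leaves are added. Both points are bookkeeping; the conceptual steps are the doubling trick and the forced-star gadget.
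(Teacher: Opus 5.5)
Your balancing step fails, and the problem is not just a missing detail: the doubled instance is \emph{not} equivalent to the original one. In the backward direction, averaging only shows that one copy covers at least $t$ edges. It does not bound how many selected vertices that copy contains. Write $P_i,Q_i$ for the selected vertices in $U_i,W_i$. The side constraints give $|P_1|+|Q_2|=k=|Q_1|+|P_2|$, but $|P_1|+|Q_1|$ is unconstrained. So the copy covering at least $t$ edges may use more than $k$ vertices, and trimming it to $k$ vertices can drop it below $t$, because optimal coverage is not concave in the number of chosen vertices.

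Here is a concrete counterexample. Let $G$ contain $a_1,a_2,b_1,b_2,b_3\in U$ and $c_1,c_2,c_3,d_1,\dots,d_4\in W$, with all edges $a_id_j$ and all edges $b_ic_j$. Add pendant leaves so that every $a_i$, $b_i$, $c_j$ has degree $10$ and every $d_j$ has degree $9$. A $6$-set containing $i$ of the $a$'s, $j$ of the $d$'s, $l$ of the $b$'s, $m$ of the $c$'s and $z$ leaves covers at most $10(i+l+m)+9j-ij-lm+z$ edges. Maximizing this gives $57$, attained e.g.\ by $\{b_1,b_2,b_3,d_1,d_2,d_3\}$, so $(G,6,58)$ is a no-instance. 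Now take your doubled instance. Choose $a_1,a_2,c_1,c_2,c_3$ in the first copy and $b_1,b_2,b_3,d_1,\dots,d_4$ in the second. This puts $2+4=6$ vertices on $U_1\cup W_2$ and $3+3=6$ on $W_1\cup U_2$. Both selections are independent sets, so they cover $50+66=116=2\cdot 58$ edges. Your later star gadget and isolated-vertex padding preserve this, so the whole chain maps a no-instance to a yes-instance.

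Converting a single total budget into per-side budgets is exactly what cannot be done this way. The paper avoids the issue by starting from a problem whose budgets are already split by side: Constraint Bipartite Vertex Cover, with budgets $k_A,k_B$ and the requirement that all edges be covered. It then lets $k$ be a power of two at least $\max\{k_A,k_B\}$. It adds $k-k_A$ and $k-k_B$ forced vertices, each with $2k+1$ pendant neighbours, which is essentially your forced-star gadget. It sets the threshold to the total number of edges and finally adds isolated vertices. With that source problem, no balancing step is needed.
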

    \begin{claimproof}
        Containment in \NP is trivial. We give a reduction from the \probName{Constraint Bipartite Vertex Cover (CBVC)} problem which, given a bipartite graph~$H$ with partite sets~$A$ and~$B$ together with two integers~$k_A$ and~$k_B$, asks whether $H$ contains a vertex cover of the form~$A' \cup B'$ such that $A' \subseteq A$ has size at most $k_A$ and $B' \subseteq B$ has size at most~$k_B$. 
        This problem is \NPc \cite{kuo-fuc:c:constrained-vertex-cover}.
        Let $\vertexbudget$ be power of~$2$ at least $\max\{k_A,k_B\}$. 
        We create a graph~$H'$ as follows: we first add~$(\vertexbudget-k_A)$ new vertices to~$A$ and $(\vertexbudget-k_B)$ new vertices to~$B$, and then for each of the newly added $2\vertexbudget-k_A-k_B$ vertices, we create $2\vertexbudget+1$ neighbors on the other side, each of them having degree~1.  Notice that the newly added vertices must be contained in every vertex cover of~$H'$ of size at most~$2\vertexbudget$. Thus, $(H,k_A,k_B)$ is a yes-instance of \probName{CBVC} if and only if 
        there are $\vertexbudget$ vertices on each side of~$H'$ such that together they cover all edges of~$H'$, or in other words,  
        if $(H',\vertexbudget,m_{H'})$ is a yes-instance of \probName{Balanced Maximum Vertex Coverage} where $m_{H'}$ is the number of edges in~$H'$. This shows the \NPhness of the BMVC problem with the desired requirement on~$\vertexbudget$.

        As for the requirement on~$|U|$ and~$|W|$, we can pick any integer~$\graphsize> 2 \vertexbudget$ that is a power of~$2$, and then ensure the condition        $|U|=|W|=\graphsize+\vertexbudget$ by adding the necessary number of isolated vertices to the input graph.        
    \end{claimproof}
    
    \proofsubparagraph{The Construction.}
    Given our input graph $G=(U,W;F)$ with $|U|=|W|=\graphsize+\vertexbudget$ and integers~$\vertexbudget$ and~$\threshold$ for BMVC with the properties as in~\Cref{clm:BMVC:NPc}, we construct an election $(\candidates,\voters)$ as follows.
    Let $\alpha=6|U|^4$. 
    The set of candidates consists of the vertices in~$U \cup W$ as well as two sets $B_1$ and~$B_2$ of dummies with $|B_1|=2\graphsize-\vertexbudget$ and $|B_2|=4\graphsize-\vertexbudget$.
    We create two sets of  \emph{anchor voters}, $A$ and~$A'$ with $|A|=|A'|=\alpha$. Their preferences are as follows: 
    \begin{align*}
        a &: \ora{B_1}, \ora{U}, \ora{W}, \ora{B_2} & \qquad \qquad \forall a \in A; \\
        a' &: \ora{B_1}, \ola{U}, \ola{W}, \ora{B_2} & \qquad \qquad \forall a' \in A'.
    \end{align*}
    We further create \emph{incidence voters} $\vote_{u,w}$ and~$\vote'_{u,w}$ for each $u \in U$ and $w \in W$; %
    their preferences are as below:
    \begin{align*}
        \vote_{u,w} &: \ora{B_1}, \ora{W \setminus \{w\}}, u, w, \ora{U \setminus \{u\}}, \ora{B_2} 
        & \text{ if $\{u,w\} \in F$}; \\
        \vote_{u,w} &: \ora{B_1}, \ora{W \setminus \{w\}}, w, u, \ora{U \setminus \{u\}}, \ora{B_2} 
        & \text{ if $\{u,w\} \notin F$}; \\
        \vote'_{u,w} &: \ora{B_1}, w, \ola{W \setminus \{w\}}, \ola{U \setminus \{u\}}, u, \ora{B_2}.
    \end{align*}

    We finish the construction by setting our budget as
    \begin{align*}
    \budget = \alpha \cdot 2 \graphsize  \vertexbudget 
    & + 2 
    \vertexbudget(3\graphsize+\vertexbudget)(\graphsize+\vertexbudget)^2 \\
    & + 4(\graphsize+\vertexbudget) (g(\graphsize)+g(\vertexbudget))
    + |F| - 2\threshold.
    \end{align*}

    We are going to show that $(G,\vertexbudget,\threshold)$ is a yes-instance of \probName{Balanced Maximum Vertex Coverage} if and only if the constructed instance $((\candidates,\voters),\budget)$ of \swapNE{\GSbal} is a yes-instance.
    
    \proofsubparagraph{Direction ``$\Rightarrow$''.}
    First assume that vertex sets $U^\star \subseteq U$ and $W^\star \subseteq W$ form a solution to our instance of \probName{Balanced Maximum Vertex Coverage}. Let $T_{8\graphsize}$ be the balanced binary tree with~$8\graphsize$ leaves, and consider the mapping~$\mapping^\star=\mappinglist{\ora{B_1}, \ora{U^\star},\ora{U \setminus U^\star}, \ora{W \setminus W^\star}, \ora{W^\star}, \ora{B_2}}$. To turn the constructed election~$(\candidates,\voters)$ into an election in~$\domain(T_{8\graphsize},\mapping^\star)$, we perform the following swaps:
    
    First, we move all candidates in~$U^\star$ to the $\vertexbudget$ positions right after the last dummy in~$B_1$
    and all candidates in~$W^\star$ to the $\vertexbudget$ positions right before the first dummy in~$B_2$
    in all votes.
    In a pair of anchor votes from $A \times A'$, this step requires exactly~$2\vertexbudget \cdot \graphsize$ swaps in total. 
    In a pair $\vote_{u,w}$, $\vote'_{u,w}$ of incidence  voters, 
    we need to swap 
    \begin{itemize}
        \item all pairs in 
        \[(U \times W) \setminus \left((U \setminus U^\star) \times (W \setminus W^\star)\right)
        \]
        in both votes,
        except for $(u,w)$ in~$\vote_{u,w}$ if $\{u,w\} \in F$;
        \item all pairs in
        \[((U \setminus U^\star) \times U^\star) \cup ((W \setminus W^\star) \times W^\star)\] exactly once in total in the two votes.  
    \end{itemize} 
    This requires
    $2(3\graphsize+\vertexbudget)\vertexbudget$ swaps in these two votes, unless (i) $u \in U^\star$ or $w \in W^\star$, and (ii) $\{u,w\} \in F$, in which case we need one swap less. Let $F^\star$ denote the set of edges in~$G$ incident to some vertex in~$U^\star \cup W^\star$.
    Then in this step, we perform  
    $2\vertexbudget (3\graphsize+\vertexbudget) (\graphsize+\vertexbudget)^2-|F^\star|$ swaps in all incidence votes.
    Taking into account the swaps in anchor votes, we obtain a total of 
    \begin{equation}
    \label{eqn:swaps-in-first-step}
    \alpha \cdot 2kn + 2\vertexbudget (3\graphsize+\vertexbudget) (\graphsize+\vertexbudget)^2-|F^\star|
    \end{equation}
    swaps in this step.
    
    Second, we also swap $u$ with~$w$ in each incidence vote~$\vote_{u,w}$ where $u \in U \setminus U^\star$, $w \in W \setminus W^\star$, and $\{u,w\} \in F$. 
    This means 
    \begin{equation}
    \label{eqn:swaps-in-second-step}
    |F \setminus F^\star|
    \end{equation}
    swaps. 
    Let $E'$ denote the election obtained at this point. Notice that for each anchor voter~$a$ in~$E'$ we know 
    \[B_1 \succ_{a} U^\star \succ_{a} U \setminus U^\star \succ_{a} W \setminus W^\star \succ_{a}  W^\star \succ_{a} B_2,\] whereas for each incidence voter~$\vote$ we know 
    \[B_1 \succ_{\vote} U^\star \succ_{\vote} W \setminus W^\star \succ_{\vote} U \setminus U^\star \succ_{\vote}  W^\star \succ_{\vote} B_2.\] 
    Recall also that %
    each of the sizes $|U^\star|=|W^\star|$, $|B_1 \cup U^\star|$, $|U \setminus U^\star|=|W \setminus W^\star|$, and $|B_2 \cup W^\star|$ are 
    powers of~$2$.

\begin{figure}
    \centering

\begin{tikzpicture}[
  every node/.style={font=\small},
  intnode/.style={circle,fill=black,inner sep=0pt,minimum size=3pt},
  treeline/.style={black,thin}
]

\def\u{0.47cm} %
\def\wBone{3.2}     %
\def\wUst{1.2}    %
\def\wUmU{2.2}      %
\def\wWmW{2.2}      %
\def\wWst{1.2}    %
\def\wBtwo{7.6}   %

\pgfmathsetmacro{\total}{\wBone+\wUst+\wUmU+\wWmW+\wWst+\wBtwo}

\pgfmathsetmacro{\xa}{0}
\pgfmathsetmacro{\xb}{\xa+\wBone}
\pgfmathsetmacro{\xc}{\xb+\wUst}
\pgfmathsetmacro{\xd}{\xc+\wUmU}
\pgfmathsetmacro{\xe}{\xd+\wWmW}
\pgfmathsetmacro{\xf}{\xe+\wWst}
\pgfmathsetmacro{\xg}{\xf+\wBtwo}

\definecolor{cB1}{RGB}{225,225,225}
\definecolor{cUst}{RGB}{255,214,153}
\definecolor{cUmU}{RGB}{255,236,204}
\definecolor{cWmW}{RGB}{204,229,255}
\definecolor{cWst}{RGB}{153,204,255}
\definecolor{cB2}{RGB}{225,225,225}

\def\hbar{0.55cm}

\draw[fill=cB1]  ({\xa*\u},0) rectangle ({\xb*\u},\hbar);
\draw[fill=cUst] ({\xb*\u},0) rectangle ({\xc*\u},\hbar);
\draw[fill=cUmU] ({\xc*\u},0) rectangle ({\xd*\u},\hbar);
\draw[fill=cWmW] ({\xd*\u},0) rectangle ({\xe*\u},\hbar);
\draw[fill=cWst] ({\xe*\u},0) rectangle ({\xf*\u},\hbar);
\draw[fill=cB2]  ({\xf*\u},0) rectangle ({\xg*\u},\hbar);

\node at ({(\xa+\xb)/2*\u},\hbar/2) {$B_1$};
\node at ({(\xb+\xc)/2*\u},\hbar/2) {$U^\star$};
\node at ({(\xc+\xd)/2*\u},\hbar/2) {$U \mysetminus U^\star$};
\node at ({(\xd+\xe)/2*\u},\hbar/2) {$W\mysetminus W^\star$};
\node at ({(\xe+\xf)/2*\u},\hbar/2) {$W^\star$};
\node at ({(\xf+\xg)/2*\u},\hbar/2) {$B_2$};

\def\braceY{-0.15cm}
\def\gap{0.06cm}

\draw[decorate,decoration={brace,amplitude=4pt,mirror}]
  ({\xa*\u+\gap},\braceY) -- ({\xb*\u-\gap},\braceY)
  node[midway,yshift=-0.55cm] {$2n-k$};
\draw[decorate,decoration={brace,amplitude=4pt,mirror}]
  ({\xb*\u+\gap},\braceY) -- ({\xc*\u-\gap},\braceY)
  node[midway,yshift=-0.55cm] {$k$};
\draw[decorate,decoration={brace,amplitude=4pt,mirror}]
  ({\xc*\u+\gap},\braceY) -- ({\xd*\u-\gap},\braceY)
  node[midway,yshift=-0.55cm] {$n$};
\draw[decorate,decoration={brace,amplitude=4pt,mirror}]
  ({\xd*\u+\gap},\braceY) -- ({\xe*\u-\gap},\braceY)
  node[midway,yshift=-0.55cm] {$n$};
\draw[decorate,decoration={brace,amplitude=4pt,mirror}]
  ({\xe*\u+\gap},\braceY) -- ({\xf*\u-\gap},\braceY)
  node[midway,yshift=-0.55cm] {$k$};
\draw[decorate,decoration={brace,amplitude=4pt,mirror}]
  ({\xf*\u+\gap},\braceY) -- ({\xg*\u-\gap},\braceY)
  node[midway,yshift=-0.55cm] {$4n-k$};

\coordinate (baseA) at ({\xa*\u},\hbar);
\coordinate (baseB) at ({\xb*\u},\hbar);
\coordinate (baseC) at ({\xc*\u},\hbar);
\coordinate (baseD) at ({\xd*\u},\hbar);
\coordinate (baseE) at ({\xe*\u},\hbar);
\coordinate (baseF) at ({\xf*\u},\hbar);
\coordinate (baseG) at ({\xg*\u},\hbar);

\coordinate (root)  at ({(\xa+\xg)/2*\u},{\hbar+4.45cm});
\coordinate (nodeL) at ({(\xc)*\u},{\hbar+3.1cm});
\coordinate (apexUW) at ({(\xd)*\u},{\hbar+2.0cm});

\coordinate (apexBU)  at ({(\xa+\xc)/2*\u},{\hbar+2.0cm});   %
\coordinate (apexUmU) at ({(\xc+\xd)/2*\u},{\hbar+1.0cm});   %
\coordinate (apexWmW) at ({(\xd+\xe)/2*\u},{\hbar+1.0cm});   %
\coordinate (apexB2W) at ({(\xe+\xg)/2*\u},{\hbar+3.1cm});   %

\coordinate (apexU) at ($(baseC)!{(1.0)/(4)}!(apexBU)$);
\coordinate (apexW) at ($(baseE)!{(1.0)/(6.5)}!(apexB2W)$);

\draw[treeline] (root)  -- (nodeL);
\draw[treeline] (apexUmU)  -- (apexUW);
\draw[treeline] (nodeL) -- (apexBU);
\draw[treeline] (nodeL) -- (apexUW);
\draw[treeline] (apexUW) -- (apexWmW);
\draw[treeline] (root) -- (apexB2W);

\draw[thick,black,fill=black!8,fill opacity=0.7]
  (apexBU) -- (baseA) -- (baseC) -- cycle;
\draw[thick,black,fill=cUst,fill opacity=0.9]
  (apexU) -- (baseB) -- (baseC) -- cycle;

\draw[thick,black,fill=cUmU,fill opacity=0.85]
  (apexUmU) -- (baseC) -- (baseD) -- cycle;

\draw[thick,black,fill=cWmW,fill opacity=0.85]
  (apexWmW) -- (baseD) -- (baseE) -- cycle;

\draw[thick,black,fill=black!8,fill opacity=0.7]
  (apexB2W) -- (baseE) -- (baseG) -- cycle;
\draw[thick,black,fill=cWst,fill opacity=0.9]
  (apexW) -- (baseE) -- (baseF) -- cycle;

\node[above=1pt,font=\scriptsize] at (nodeL) {$t_1$};
\node[above=1pt,font=\scriptsize] at (apexB2W) {$t_2$};
\node[above left,font=\scriptsize] at (apexBU) {$t_{11}$};
\node[above right,font=\scriptsize] at (apexUW) {$t_{12}$};
\node[above left, xshift=2pt,font=\scriptsize] at (apexUmU) {$t_{121}$};
\node[above right,font=\scriptsize] at (apexWmW) {$t_{122}$};

\foreach \p in {root,nodeL,apexUW,apexBU,apexUmU,apexWmW,apexB2W,apexU,apexW}{
  \node[intnode] at (\p) {};
}

\end{tikzpicture}
    \caption{Illustration for the proof of \Cref{thm:NE:swap:GSbal:NPc}. The figure depicts the prefect binary tree~$T_{8\graphsize}$. 
    The candidate set displayed in the rectangle at the bottom of a given subtree contains those candidates that are mapped by~$\mapping^\star$ to the leaves of the given subtree, with the size of this set written underneath.}
    \label{fig:GSbal-tree}
\end{figure}

    Third, let us now perform all swaps necessary to turn election~$E'$ into an election in the domain~$\domain(T_{8\graphsize},\mapping^\star)$.
    Let us now calculate the swaps necessary for this. 

    \begin{description}
        \item[Swaps involving candidates in~$U^\star$.] 
        Notice that $E'$ restricted to candidates in~$U^\star$ contains
        \begin{itemize}
            \item  $|W|$ copies of an election~$E'_{U^\star}$ that is isomorphic to~$E_{\vertexbudget}$ with the default ordering of the candidates being~$\ora{U^\star}$,
            \item $|W|$ copies of the reverse of~$E'_{U^\star}$ (obtained from~$E'_{U^\star}$ by reversing all votes), and
            \item  additional votes of the form $\ora{U^\star}$ or $\ola{U^\star}$.
        \end{itemize}  
        By \Cref{clm:downshift-elections}, we know that we need $g(\vertexbudget)$ swaps to turn one copy of~$E'_{U^\star}$ into a GS-balanced election in~$\domain(T_{\vertexbudget},\mappinglist{\ora{U^\star}})$.
        Since reversing all votes does not change the distance of an election from the GS-balanced domain, and moreover, $\domain(T_{\vertexbudget},\mappinglist{\ora{U^\star}})=\domain(T_{\vertexbudget},\mappinglist{\ola{U^\star}})$, we get that the distance of the restriction of~$E'$ to~$U^\star$ from the GS-balanced domain is exactly $2(\graphsize+\vertexbudget)\cdot g(\vertexbudget)$. 
        \item[Swaps involving candidates in~$U \setminus U^\star$.] 
        Using the same arguments as in the previous case, we get that the restriction of~$E'$ to candidates in~$U \setminus U^\star$ has swap distance $2(\graphsize+\vertexbudget)\cdot g(\graphsize)$ from  $\domain(T_{\graphsize},\mappinglist{\ora{U \setminus U^\star}})$.
        \item[Swaps involving candidates of $W$.]
        By symmetry, the total number of swaps necessary to turn the restriction of~$E'$ to candidates in~$W^\star$, as well the restriction of~$E'$ to candidates in~$W \setminus W^\star$ into a GS-balanced election in~$\domain(T_{\vertexbudget},\mappinglist{\ora{W^\star}})$ and~$\domain(T_{\graphsize},\mappinglist{\ora{W \setminus W^\star}})$, respectively, is 
        $2(\graphsize+\vertexbudget)(g(\vertexbudget)+g(\graphsize))$. 
    \end{description}
    These swaps are sufficient, because they turn the election into one in~$\domain(T_{8\graphsize},\mapping^\star)$. Indeed, notice that $T_{8\graphsize}$ contains a \typeQ-node whose two children are the roots of the subtrees, each of size~$\graphsize$, that contain exactly those leaves to which candidates in~$U \setminus U^\star$ and in~$W \setminus W^\star$, respectively, are mapped---enabling the reversed ordering of these two subtrees; see \Cref{fig:GSbal-tree} for an illustration.
    
    Altogether, the total number of swaps performed in this step is 
    \begin{equation}
        \label{eqn:swaps-in-third-step}
        4(\graphsize+\vertexbudget)(g(\vertexbudget)+g(\graphsize)).
    \end{equation}
    
    Summing up \eqref{eqn:swaps-in-first-step}--\eqref{eqn:swaps-in-third-step}, we obtain that the total number of swaps performed is
    $\budget+2\threshold-2|F^\star|$. Since $U^\star \cup W^\star$ is a solution to our input instance, the set~$F^\star$ of edges covered by the vertices in~$U^\star \cup W^\star$ must satisfy $|F^\star|\geq \threshold$. Thus, we get that the swap-distance of~$(\candidates,\voters)$ from  the GS-balanced domain is at most~$\budget+2\threshold-2|F^\star|\leq \budget$, which proves the first direction of our claim.

    \proofsubparagraph{Direction ``$\Leftarrow$''.}
    Suppose now that $(\candidates,\voters)$ has swap distance at most~$\budget$ from some GS-balanced election~$(\candidates,\voters')$.
    For each vote~$\vote$ in~$\voters$, let $\pi(\vote)$ denote the corresponding vote in~$\voters'$ that belongs to the domain~$\domain(T_{8\graphsize},\mapping)$ for some~$\mapping$.
    
    Since all votes in~$A$ are the same, we can assume that $\pi(a)$ is the same vote for each $a \in A$; the analogous claim holds for votes in~$A'$. 
    Consider now an anchor vote~$a \in A$, and let $C_a^1$ and $C_a^2$ denote the candidates in the first~$2\graphsize$ and the last~$4\graphsize$ positions in~$\pi(a)$. Consider now a pair of anchor votes~$(a,a') \in A \times A'$.
    By the definition of the allowed permutations in a \typeQ-tree, we know that either (i) $C_a^i=C_{a'}^i$ or (ii) $C_a^i \cap C_{a'}^i=\emptyset$ for both~$i \in [2]$. 
    We claim that only the former case can happen, for both $i \in [2]$. 
    To see this, notice that in case (ii), 
    each dummy $b \in B_i$ has to be moved away from the outermost $|B_i|+\vertexbudget$ positions (i.e., the first $2\graphsize$ positions for $i=1$ and the last $4\graphsize$ positions for $i=2$) either in~$a$ or in~$a'$, which yields $2\graphsize \cdot |B_i|>2\graphsize$ swaps in total for~$(a,a')$. Hence, this would yield at least $\alpha \cdot 2\graphsize \cdot (2\graphsize -\vertexbudget)\geq 
    \alpha \cdot (2\graphsize\vertexbudget + \graphsize)>\budget$ where the last inequality holds due to  %
    our choice of~$\alpha$.

    Thus, we have that (i) holds for both $i \in [2]$. Then candidates in $S_i=C_a^i \setminus B_i$ need to be moved into the outermost~$|B_i|+\vertexbudget$ positions; by the ordering of candidates in~$a$ and in~$a'$, the number of swaps necessary for this can be calculated as follows:
    \begin{itemize}
        \item Moving a candidate~$u \in U \cap S_1$ to within the first $2\graphsize$ positions requires $\graphsize$ swaps within a pair of votes in~$(a,a') \in A \times A'$, since $u$ needs to be swapped with every candidate in~$U \setminus S_1$ exactly once. 
        \item Moving a candidate~$c$ in~$S_1 \setminus U$ to the first~$\vertexbudget$ position following candidates in~$B_1$ requires at least $2\graphsize$ swaps within~$a$ and~$a'$, because~$c$ must be swapped with all candidates in~$U \cup B_1 \setminus C^a_1$ in both votes~$a$ and~$a'$.
        \item Similarly, moving a candidate~$c$ to within the last~$4\graphsize$ position requires $\graphsize$ swaps if $c \in S_2 \cap W$ and $2\graphsize$ swaps if $c \in S_2 \setminus W$.
    \end{itemize}
    This implies that $S_1 \subseteq U$ with $|S_1| \leq \vertexbudget$ and $S_2 \subseteq W$ with $|S_2| \leq \vertexbudget$, because otherwise we would have to perform at least $\alpha(2\graphsize \vertexbudget+\graphsize)>\budget$ swaps in the anchor votes. 
    In particular, it follows that there exist subsets $U^\star\subseteq U$ and $W^\star\subseteq W$ of candidates with $|U^\star|=|W^\star|=\vertexbudget$ such that $C_a^1=B_1 \cup U^\star$ and $C_a^2=B_2 \cup W^\star$ for all anchor votes $a \in A \cup A'$.
    The number of swaps necessary to move these candidates to their positions in~$\pi(a)$ in each anchor vote~$a \in A \cup A'$ is $\alpha \cdot 2\graphsize \vertexbudget$ in total.
    
    Let us consider the number of swaps necessary to move these candidates to their positions in the incidence votes. 
    It is easy to see that moving $C_a^1$ to the first $2\graphsize$ positions in some incidence vote requires fewer swaps than moving them to the second set of $2\graphsize$ positions (or even further), because the former requires at most $\vertexbudget (2\graphsize+\vertexbudget)$ swaps, while the latter requires at least $|B_1| \cdot 2\graphsize=(2\graphsize-\vertexbudget)\cdot 2\graphsize>\vertexbudget (2\graphsize+\vertexbudget)$ swaps; here we used the assumption $\graphsize>2\vertexbudget$.
    Hence, in all votes of~$\voters'$, the first $2\graphsize$ positions are occupied by candidates in~$B_1 \cup U^\star$. Similarly, in all votes of~$\voters'$, the last $4\graphsize$ positions are occupied by candidates in~$B_2 \cup W^\star$.
    
    This implies that the mapping~$\mapping$ from~$\candidates$ to the leaves of~$T_{8\graphsize}$ has the following properties: Let us denote the children of the root by~$t_1$ and~$t_2$, and let $t_{11}$ and~$t_{12}$ denote the children of~$t_1$. Then the candidates mapped to the leaves in the subtree rooted at~$t_{11}$, $t_{12}$, and $t_2$, are $B_1 \cup U^\star$, $(U \setminus U^\star) \cup (W \setminus W^\star)$, and $B_2 \cup W^\star$, respectively.
    Let us denote the two children of~$t_{12}$ as $t_{121}$ and $t_{122}$. See again \Cref{fig:GSbal-tree}.

    Let us call the set of swaps required to move the candidates in $B_1 \cup U^\star$ into the first~$2\graphsize$ positions and the candidates in~$B_2 \cup W^\star$ into the last~$4\graphsize$ positions, without swapping any two candidates belonging to the same set $B_1 \cup U^\star$ or $B_2 \cup W^\star$, \emph{phase-1 swaps}. The number of such swaps is exactly the value in~\eqref{eqn:swaps-in-first-step} where $F^\star$ denotes the set of edges incident to at least one vertex in~$U^\star \cup W^\star$; the same arguments presented in the paragraph preceding~\eqref{eqn:swaps-in-first-step} are valid here as well.
    
    After performing all phase-1 swaps, the $\graphsize$ positions after candidates in~$B_1 \cup U^\star$ contain the candidates in~$U \setminus U^\star$ in all anchor votes. 
    Hence, the candidates mapped by~$\mapping$ to the leaves in the subtree rooted at $t_{121}$ and at $t_{122}$ must be those is~$U \setminus U^\star$ and in~$W \setminus W^\star$, respectively, since otherwise we would need at least one additional swap in each anchor vote; however adding~$\alpha$ to the value in~\eqref{eqn:swaps-in-first-step} already exceeds the budget~$\budget$, due to $\alpha \cdot 2\graphsize\vertexbudget + \alpha>\budget$.
    Thus, in order to obtain $\pi(\vote_{u,w})$ from some incidence vote~$\vote_{u,w}$ where $u \in U \setminus U^\star$, $w \in W \setminus W^\star$, and $\{u,w\} \in F$, we also need to swap $u$ and~$w$. Let us call such swaps \emph{phase-2 swaps}; their number is exactly $|F \setminus F^\star|$.

    After performing all phase-1 and phase-2 swaps, we obtain an election that is isomorphic with the election~$E'$ described in the first direction of this proof; note that this holds irrespective of the choice of~$U^\star$ and~$W^\star$. We have already proved that turning~$E'$ into a GS-balanced election---specifically, to an election in $\domain(T_{8\graphsize},\mapping^\star)$ for $\mapping^\star=\mappinglist{\ora{B_1}, \ora{U^\star},\ora{U \setminus U^\star}, \ora{W \setminus W^\star}, \ora{W^\star}, \ora{B_2}}$ ---requires ($4(\graphsize+\vertexbudget)(g(\graphsize)+g(\vertexbudget))$) swaps; see the calculations leading to~\eqref{eqn:swaps-in-third-step}. Hence, the total number of swaps necessary to turn the original election into a GS-balanced election is at least the sum of the values~\eqref{eqn:swaps-in-first-step}--\eqref{eqn:swaps-in-third-step}, that is, at least $\budget + 2\threshold-2|F^\star|$. Recall that we assumed this value not to exceed the budget~$\budget$, which is only possible if $|F^\star| \geq \threshold$, i.e., the vertices in~$U^\star \cup W^\star$ together cover at least~$\threshold$ edges in the input graph.
    This shows the correctness of our reduction.
}

Unfortunately, in this case we do not get hardness for a constant number of voters and seeking
such a result, or an XP algorithm, 
is an interesting challenge.
We justify this view by considering %
several
domains that capture different aspects of $\GSbal$ and yield quite
different complexity results.

\paragraph{Stratification and Separation.}

$\Strat[2]$ consists of votes over even-sized candidate sets, where half of the candidates always precede those in the other half, but without
any other restrictions. This captures the basic feature of $\GSbal$
that candidates are split into two groups, but (a)~it only includes
the top-level split, and (b)~$\GSbal$ allows reversing the order of
the two groups whereas $\Strat[2]$ does not. If we do allow this
reversing, then we obtain the $\Separ$ domain. While these domains are
rather basic, they do have some interesting algorithmic features.

\begin{restatable}[\linkproof{thm:NE:swap:twoStrat:poly+FPT:voters}]{theorem}{thmNEswaptwoStratpolyFPTvoters}
\label{thm:NE:swap:twoStrat:poly+FPT:voters}    
  There is a polynomial-time algorithm for $\swapNE{\Strat[2]}$ and an
  $\FPT$ algorithm for $\swapNE{\Separ}$, parameterized by $\numVoters$.%
\end{restatable}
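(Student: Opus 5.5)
Both parts reduce to one cost decomposition. Fix a $\Strat[2]$ domain, given by a set $S\subseteq\candidates$ with $|S|=\numCandidates/2$ that must precede $\overline{S}=\candidates\setminus S$. The closest vote in this domain to a vote $v$ is obtained by moving the candidates of $S$ to the top while keeping the relative orders inside $S$ and inside $\overline{S}$ as in $v$. Its distance from $v$ is exactly the number of pairs $(b,a)\in\overline{S}\times S$ with $b\succ_v a$. This holds because every such pair must be swapped, and swapping only these pairs suffices.

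Summing over voters, the cost of $S$ is $\sum_{a\in S,\,b\notin S} w(b,a)$, where $w(b,a)$ is the number of voters preferring $b$ to $a$. Since $w(a,b)+w(b,a)=\numVoters$, we can rewrite the cost as $\sum_{a\in S,b\notin S} w(b,a)=\numVoters\cdot(\numCandidates/2)^2-\sum_{a\in S,b\notin S}w(a,b)$. This form is not obviously easy to minimize. A better view is the linear form $\sum_{a\in S,b\notin S} w(b,a)=\sum_{a\in S}\sum_{b\in\candidates} w(b,a)-\sum_{a,b\in S} w(b,a)$. The last term equals $\numVoters\binom{\numCandidates/2}{2}$ for every $S$, because each unordered pair inside $S$ contributes $\numVoters$. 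Moreover, $\sum_b w(b,a)=\sum_v (\rank_v(a)-1)$. Hence the cost is $\sum_{a\in S}\sum_v \rank_v(a)$ up to an additive constant independent of $S$. The optimal $S$ is therefore the $\numCandidates/2$ candidates with the smallest total position, i.e., the top half by Borda score. This gives the polynomial-time algorithm for $\swapNE{\Strat[2]}$: sort the candidates by summed position, take the top half, and compare the resulting cost with $\budget$.

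For $\Separ$, each voter may additionally choose which block comes first, so the domain is $\domain(S)\cup\domain(\overline{S})$. For a fixed $S$, voter $v$ pays $\min(\cost_v(S),\cost_v(\overline{S}))$. We guess the orientation vector $\sigma\in\{0,1\}^{\numVoters}$ ($2^{\numVoters}$ choices). Let $V_0$ be the voters placing $S$ first and $V_1$ the others. The total cost is then $\sum_{v\in V_0}\cost_v(S)+\sum_{v\in V_1}\cost_v(\overline{S})$. Applying the position identity to each voter, $\cost_v(S)=\sum_{a\in S}\rank_v(a)-\mathrm{const}$, and $\cost_v(\overline{S})=\sum_{a\notin S}\rank_v(a)-\mathrm{const}=\mathrm{const}'-\sum_{a\in S}\rank_v(a)$. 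The total is thus again a constant plus $\sum_{a\in S}\lambda(a)$, with weight $\lambda(a)=\sum_{v\in V_0}\rank_v(a)-\sum_{v\in V_1}\rank_v(a)$. We pick the $\numCandidates/2$ candidates of smallest $\lambda$. Taking the minimum over all $\sigma$ is correct: the true optimum equals the minimum over $\sigma$ and $S$ of the cost under fixed orientations, since fixing the orientations can only overestimate the per-voter minimum, and the optimal orientation is among those tried. The running time is $\Ohstar{2^{\numVoters}}$, and we may halve it by symmetry ($S\leftrightarrow\overline{S}$ with all orientations flipped).

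The main point needing care is the exact per-vote cost formula, in particular that keeping the internal orders is optimal and that the within-$S$ correction term is independent of $S$. Both follow from counting inversions pairwise, since $\dist$ equals the number of inversions. The rest is bookkeeping of constants.
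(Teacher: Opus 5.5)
Your proposal is correct and matches the paper's proof. For $\Strat[2]$ you use the same Borda top-half algorithm, and for $\Separ$ the same $2^{\numVoters}$ guess of orientations; your signed weights $\lambda(a)$ are, up to a constant, just the position sums after reversing the votes in $V_1$, which is how the paper phrases it. The only difference is how optimality is certified: you use the closed form $\cost(S)=\sum_{a\in S}\sum_{v}\rank_v(a)-\mathrm{const}$, while the paper uses a pairwise exchange argument computing the marginal change $\sum_v(\rank_v(a_2)-\rank_v(a_1))$; your global identity is slightly cleaner and handles Borda ties without further comment.
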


\prooftoappendixdivided{thm:NE:swap:twoStrat:poly+FPT:voters}{\thmNEswaptwoStratpolyFPTvoters*}{

We prove the two statements of the theorem separately. We start with the polynomial-time solvability of $\swapNE{\Strat[2]}$.

\begin{theorem}
\label{thm:NE:swap:twoStrat:poly}
    \swapNE{\Strat[2]} is in~\cc{P}.
\end{theorem}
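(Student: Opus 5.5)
The plan is to show that, for $\Strat[2]$, the optimal distance has a closed form that depends only on \emph{which} half of the candidates is placed on top, and that this form is separable over candidates. A domain in $\Strat[2](\candidates)$ is given by a root of type~$\typeF$ with two type-$\typeP$ children of $\nicefrac{\numCandidates}{2}$ leaves each. It is therefore determined by the set $S\subseteq\candidates$ with $|S|=\nicefrac{\numCandidates}{2}$ of candidates mapped under the left child. A vote is in the domain if and only if every member of $S$ precedes every member of $\candidates\setminus S$, with arbitrary order inside each group.

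First, I fix $S$ and compute $\min_{u\in\domain}\dist(u,v)$ for a single vote $v$. Every pair $(a,b)$ with $a\notin S$, $b\in S$ and $a\succ_v b$ is an inversion with respect to every $u$ in the domain. Conversely, the vote $u$ that lists $S$ in the order of $\restr{v}{S}$ followed by $\candidates\setminus S$ in the order of $\restr{v}{\candidates\setminus S}$ has exactly these inversions and no others. Hence the per-vote cost is
\[
\bigl|\{(a,b)\colon a\notin S,\ b\in S,\ a\succ_v b\}\bigr|.
\]

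Second, I sum over voters and rewrite the total. For $b\in S$, the number of candidates ranked above $b$ in $v$ is $\rank_v(b)-1$. These candidates split into those in $\candidates\setminus S$, which are counted in the cost, and those in $S$. Summing over $b\in S$, the candidates in $S$ that precede $b$ contribute exactly $\binom{\numCandidates/2}{2}$ per vote, because each unordered pair inside $S$ is ordered one way or the other. Therefore
\[
\dist(E,\domain_S)=\sum_{b\in S}\sum_{v\in\voters}\bigl(\rank_v(b)-1\bigr)-\numVoters\binom{\numCandidates/2}{2}.
\]

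Third, the subtracted term is independent of $S$, so the optimum is attained by letting $S$ be the $\nicefrac{\numCandidates}{2}$ candidates with the smallest total position $\sum_{v}\rank_v(b)$, breaking ties arbitrarily. These are essentially the top half under Borda. Computing these sums, sorting, and comparing the resulting value with $\budget$ takes polynomial time.

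The only step needing care is the first one: showing that within-group order incurs no forced cost. I expect this to be the main, though mild, obstacle, and it is settled by the explicit choice of $u$ above. If $\numCandidates$ is odd the family is empty, and the algorithm rejects.
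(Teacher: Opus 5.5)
Your proof is correct. It leads to the same algorithm as the paper: place the $\nicefrac{\numCandidates}{2}$ candidates with the smallest total position (i.e., the highest Borda scores) under the first $\typeP$-node. Your correctness argument, however, takes a different route.

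The paper uses an exchange argument. It fixes an optimal split $(S_1,S_2)$ that differs from the Borda split $C_1$, swaps some $a_1\in S_1$ with some $a_2\in S_2$, and shows that in every vote $v$ the cost changes by exactly $\rank_v(a_2)-\rank_v(a_1)$, so the total strictly decreases.

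You instead compute the cost globally as $\sum_{b\in S}\sum_{v\in\voters}(\rank_v(b)-1)-\numVoters\binom{\numCandidates/2}{2}$. This is additive over the members of $S$, so optimality of the Borda top half is immediate. The paper's per-swap change is exactly the finite difference of your formula.

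Your version buys two things:
\begin{itemize}
\item It gives the optimal value explicitly.
\item It handles ties automatically. The paper's step ``if $S_1\neq C_1$ then some $a_2\in S_2$ has strictly higher Borda score than some $a_1\in S_1$'' tacitly assumes no ties at the boundary. Otherwise it needs a small patch, such as choosing an optimal split with maximum overlap with $C_1$.
\end{itemize}
Your identity is also precisely the rank-sum formula the paper later proves in its hardness proof for $\Separ$, where the minimum is taken over the two possible orientations of the split. The exchange argument, for its part, is purely local and avoids the bookkeeping of pairs inside $S$.
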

\begin{proof}
    Let election~$E=(\candidates,\voters)$ with $\voters=(\vote_1,\dots,\vote_\numVoters)$ and budget~$\budget$ be our input instance, with $|\candidates|=\numCandidates$. 
    Let $T$ be a \typePQF-tree whose root is of type~\typeF\  and has two children, $p_1$ and~$p_2$, both of type~\typeP\ and with $\numCandidates/2$ leaves.

    Our algorithm computes an optimal mapping of the candidates to the leaves of~$T$ in a surprisingly simple way: it computes the Borda score of each candidate, and maps the candidate with the $i$-th highest Borda score to the $i$-th leaf of~$T$, resolving ties arbitrarily. Let~$\candidates_1$ and~$\candidates_2$ denote the set of candidates mapped to children of~$p_1$ and~$p_2$, respectively.  
    We claim that $E$ has swap distance at most~$\budget$ to the \Strat[2] domain if and only if it has swap distance at most~$\budget$ from $\domain(T,\mapping)$. 
    Note that the latter can be checked in polynomial time: we just need to sum up for each vote~$\vote$ the number of candidate pairs $(c_1,c_2)$ for which $c_1 \in \candidates_1$ and $c_2 \in \candidates_2$ but $c_2 \succ_\vote c_1$. Since computing the Borda scores can also be done in polynomial time, we obtain a polynomial-time algorithm. It remains to prove its correctness.  

    Let $S_1$ and~$S_2$ be a partitioning of~$C$ with $|S_1|=|S_2|$ such that the swap distance of~$E$ to $\domain(T,\mapping_S)$ is minimal where $\mapping_S$ maps candidates in~$S_i$ to the children of~$p_i$ for $i \in [2]$.
    Assume for the sake of contradiction that $S_1 \neq C_1$. Then there are candidates~$a_1 \in S_1$ and~$a_2 \in S_2$ such that $a_2$ has higher Borda score than~$a_1$, that is,
    \begin{equation}
        \label{eq:rank_diff}
        \sum_{\vote \in \voters} \rank_\vote(a_2)<\sum_{\vote \in \voters} \rank_\vote(a_1).
    \end{equation}
    Create the partitioning~$S'_1$ and~$S'_2$ by switching~$a_1$ and~$a_2$, 
    so that 
    $S'_1=S_1\setminus \{a_1\} \cup \{a_2\}$
    and~$S'_2=S_2 \setminus \{a_2\} \cup \{a_1\}$.
    Let $\budget_S$ and~$\budget_{S'}$
    denote the swap distance of~$E$ from~$\domain(T,\mapping_S)$ and from~$\domain(T,\mapping_{S'})$, respectively,
    where $\mapping_{S'}$ maps candidates in~$S'_i$ to the children of~$p_i$, $i \in [2]$.
    Note that for each vote in~$\voters$, the number of swaps between candidates in $\candidates \setminus \{a_1,a_2\}$ is the same  in both cases. The number of swaps involving either~$a_1$ or~$a_2$ in some vote~$\vote \in \voters$, denoted by~$\budget_S^{(\vote)}$ and $\budget_{S'}^{(\vote)}$ to obtain a vote in $\domain(T,\mapping_S)$ and in~$\domain(T,\mapping_{S'})$, respectively, can be computed as follows.
    If $a_1 \succ_{\vote} a_2$, then we get
        \begin{align*}
        \budget_S^{(\vote)} \!
        &=
        |\{b \colon b \in S_2, b \succ_\vote a_1\}|
        +
        |\{b \colon b \in S_1, a_2 \succ_\vote b\}|,
        \\
        \budget_{S'}^{(\vote)} \!
        &=
        |\{b \colon b \in S_1, a_1 \succ_\vote b\}|
        +
        |\{b \colon b \in S_2, b \succ_\vote a_2\}|
        + 1
        \end{align*}
    where the last term corresponds to swapping~$a_1$ and~$a_2$. Thus, we get
    \begin{align*}
    \budget_{S'}^{(v)}-\budget_S^{(v)}
    &=
    |\{b \colon b \in S_1, a_1 \succ_\vote b \succ_\vote a_2\}|
    \\
    & \qquad +   
    |\{b \colon b \in S_2, a_1 \succ_\vote b \succ_\vote a_2\}|+1
    \\
    &=\rank_\vote(a_2)-\rank_\vote(a_1).
    \end{align*}
    By symmetry, for the case $a_2 \succ_\vote a_1$ we get $\budget_{S}^{(\vote)}-\budget_{S'}^{(\vote)}=\rank_\vote(a_1)-\rank_\vote(a_2)$. Summing up over all votes, we get
    \[
    d_{S'}-d_S
    =\!  \sum_{\vote \in \voters}
    \budget_{S'}^{(v)}-\budget_S^{(v)}
    =\! \sum_{\vote \in \voters} \rank_\vote(a_2)-\rank_\vote(a_1) \! < \! 0
    \]
    where the last inequality follows from~\Cref{eq:rank_diff}. Thus, we get $\budget_{S'}<\budget_S$ which contradicts to our choice of~$(S_1,S_2)$. This proves that $S_1=\candidates_1$, i.e., our algorithm correctly finds an optimal partitioning of~$\candidates$.
\end{proof}

Next, we show that \swapNE{\Separ} can be solved in FPT time when parameterized by the number of voters. We present a simple approach that relies on our polynomial-time algorithm for \swapNE{\Strat[2]}. 

\begin{proposition}
    \label{thm:NE:swap:separation:FPT:numVoters}
    The \swapNE{\Separ} problem can be solved in $\mathcal{O}^*(2^{\numVoters})$ time.
\end{proposition}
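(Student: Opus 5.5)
We reduce to the polynomial-time algorithm for \swapNE{\Strat[2]} from \Cref{thm:NE:swap:twoStrat:poly}. A Separation domain is given by a root of type~$\typeQ$ with two type-$\typeP$ children holding candidate sets $S_1$ and $S_2$, each of size $\numCandidates/2$. Every vote in it either places all of $S_1$ above all of $S_2$ or all of $S_2$ above all of $S_1$. Within each group the order is arbitrary, so a vote $v$ is best projected onto such a vote by only fixing the cross pairs. Its cost is therefore $\min\{\mathrm{inv}_v(S_1,S_2),\mathrm{inv}_v(S_2,S_1)\}$, where $\mathrm{inv}_v(X,Y)$ counts pairs $(x,y)\in X\times Y$ with $y\succ_v x$. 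The point is that, once we fix which voters end up with $S_1$ on top and which with $S_2$ on top, the problem is a stratification problem.

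The algorithm guesses a partition $\voters=\voters^{\uparrow}\cup\voters^{\downarrow}$, which gives $2^{\numVoters}$ branches. In each branch we form the auxiliary election $E_{\voters^{\uparrow}}$ by keeping every vote in $\voters^{\uparrow}$ and replacing every vote in $\voters^{\downarrow}$ by its reversal. Let $S_1$ be the top group for $\voters^{\uparrow}$. For $v\in\voters^{\downarrow}$ we have $\mathrm{inv}_v(S_2,S_1)=\mathrm{inv}_{\overline{v}}(S_1,S_2)$, where $\overline{v}$ is the reversal of $v$, since reversal flips every pair. So the total cost of the fixed orientation assignment under split $(S_1,S_2)$ equals the $\Strat[2]$ cost of $E_{\voters^{\uparrow}}$ with $S_1$ mapped to the first $\typeP$-child.

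We then run the Borda-based algorithm of \Cref{thm:NE:swap:twoStrat:poly} on $E_{\voters^{\uparrow}}$. By that theorem it returns the minimum over balanced splits in polynomial time. The algorithm outputs the minimum over all branches and compares it with $\budget$.

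Correctness follows by exchanging two minima. We have
\[
\dist(E,\Separ)=\min_{(S_1,S_2)}\sum_{v}\min\{\mathrm{inv}_v(S_1,S_2),\mathrm{inv}_v(S_2,S_1)\}=\min_{(S_1,S_2)}\min_{\voters^{\uparrow}}\mathrm{cost},
\]
where the second equality makes the per-vote choice into a global guess. Swapping the two outer minima gives exactly what the algorithm computes. We also need that the restriction to cross pairs is without loss of generality: given a target orientation, the optimal target vote keeps the within-group relative order of $v$, so exactly the cross inversions are paid. The running time is $2^{\numVoters}$ times a polynomial. Optionally, complementary partitions could be identified via the $\typeQ$ symmetry to halve the branches, but this is not needed.

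The only delicate step is the cost identity for reversed votes. It also needs the observation that the optimal vote within a fixed orientation keeps the internal orders, which makes the per-vote cost exactly a cross-inversion count; the $\Strat[2]$ algorithm charges exactly this count. Beyond that, the argument is routine.
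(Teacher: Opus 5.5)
Your proposal is correct and follows the paper's proof: you guess which of the $2^{\numVoters}$ subsets of voters to reverse and run the polynomial-time \Strat[2] algorithm on each resulting election. Your cost identity for reversed votes and your exchange of the two minima supply a step the paper states only as an equivalence of domain membership, and that step is what the distance version actually needs.
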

\begin{proof}
    By the definition of the $\Separ$ domain, we know that an election $(\candidates,\voters)$ is in a $\Separ$ domain if and only if we can reverse a set~$\voters'$ of votes in~$E$ such that the obtained election is in a \Strat[2] domain. Hence, it suffices to guess this set~$\voters'$ of voters and run the polynomial-time algorithm in \Cref{thm:NE:swap:twoStrat:poly} for \swapNE{\Strat[2]}. 
    Since there are 
    $2^{\numVoters}$ possibilities to choose~$\voters'$, we obtain the claimed running time. 
\end{proof}

}

The $\FPT$ algorithm for $\Separ$ guesses which votes to reverse and
runs the algorithm for $\Strat[2]$, which itself partitions the
candidates based on their average positions (i.e., Borda scores).
Unfortunately, this approach does not
generalize and already for $\Strat[3]$ we obtain $\NP$-completeness
(but not para-$\NP$-completeness). We also show $\NP$-completeness for the $\Separ$ domain.
\begin{restatable}[\linkproof{thm:NE:swap:threeStrat:NPc}]{theorem}{thmNEswapthreeStratNPc}
\label{thm:NE:swap:threeStrat:NPc}
    \swapNE{\Strat[3]} and \swapNE{\Separ} are \NPc.
\end{restatable}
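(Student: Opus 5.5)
Both statements need only \NPhness, since membership in \NP follows from \Cref{thm:NPmembership}. The plan starts from the observation behind \Cref{thm:NE:swap:twoStrat:poly}. For a fixed ordered partition $(S_1,\dots,S_k)$ of the candidates into blocks of size $\numCandidates/k$, the cost of a $\Strat$ solution decomposes over candidate pairs:
\[
\cost(S_1,\dots,S_k)=\sum_{i<j}\sum_{a\in S_i,\,b\in S_j} w(b,a),
\]
where $w(b,a)$ is the number of voters ranking $b$ above $a$. Write $w(b,a)=\tfrac{\numVoters}{2}-\tfrac{1}{2}M(a,b)$, with $M$ the (antisymmetric) majority margin. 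The number of cross pairs is fixed, so minimising the cost amounts to maximising $\sum_{i<j}M(S_i,S_j)$.

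For $k=2$ the Borda-based algorithm works because $M(S_1,S_2)$ is linear in the membership of each candidate. For $k=3$ the middle block couples the two splits, and the antisymmetric non-potential (cyclic) part of $M$ no longer cancels. By McGarvey's construction, using an even number of voters, every even antisymmetric margin matrix is realisable in polynomial size. So for $\Strat[3]$ it suffices to show that a purely combinatorial ``3-bucket ordering of a weighted tournament with equal bucket sizes'' problem is \NPh, and then realise its weights by votes.

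For $\Strat[3]$ I will reduce from \probName{Minimum Bisection} (or \probName{Max Cut} on balanced partitions). Given $G=(X,F)$, introduce a block $D_1$ of top dummies and a block $D_3$ of bottom dummies. Their margins against everything are huge, so any solution within budget has $D_1\subseteq S_1$ and $D_3\subseteq S_3$. The sizes are chosen so that exactly half of $X$ must join $D_1$ or $D_3$ and half must go to $S_2$; more precisely, I fix sizes so that each vertex lands in exactly one of two admissible blocks, say $S_1$ or $S_2$, with a balanced count. For each edge $\{u,v\}\in F$ I add a weak directed cyclic gadget between $u$, $v$ and a dedicated middle dummy. Its contribution to $\sum_{i<j}M(S_i,S_j)$ is larger by exactly one unit when $u$ and $v$ are separated than when they are together. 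The potential (Borda-like) part of the margins among vertices is kept zero, so the objective counts exactly the cut edges. I then set the budget as the forced dummy cost plus a term depending on the target cut value.

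For $\Separ$ the orientation of each vote is free, so each vote costs $\min$ over its two orientations. I will reduce from \probName{Max Cut}/\probName{Minimum Bisection} via per-edge votes. For an edge $\{u,v\}$ I use votes of the form $(u,\ora{R},v)$ together with many anchor votes. The anchors are $\ora{D}\,\ora{X}\,\ora{D'}$ and its variants, with heavy dummy blocks $D,D'$ that force the two halves to be $D\cup X_1$ and $D'\cup X_2$ for a balanced $X=X_1\cup X_2$. The edge vote then costs one more when $u,v$ lie on the same side, or on different sides, depending on the gadget. The key calculation is that, with $R$ containing the dummy blocks in a fixed order, the better of the two orientations differs between the ``separated'' and ``together'' cases by a fixed, known amount that is independent of the rest of the partition.

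I expect the main obstacle to be exactly these gadget calculations: ensuring that the edge gadgets' costs are independent of where the remaining vertices go, so that the total is a constant plus (minus) the number of cut edges. I would first try making the anchor multiplicity $\alpha$ polynomially large, as in the proof of \Cref{thm:NE:swap:GSbal:NPc}, so that any deviation from the intended block structure exceeds the budget. I would then check the per-edge contributions case by case: both endpoints in the upper block, both in the lower block, or split, and for $\Separ$ also the two orientations.
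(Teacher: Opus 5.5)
Both halves of your plan have a genuine gap, and it is the same one: your gadgets fall into a regime where the objective is affine in the partition, which is exactly why $\Strat[2]$ is polynomial. NP membership, the pairwise decomposition of the cost and the McGarvey step are fine.

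\emph{$\Strat[3]$.} Suppose the dummy blocks are forced and every vertex may only lie in $S_1$ or $S_2$; write $x_u=[u\in S_1]$. A vertex pair then contributes $M(u,v)x_u(1-x_v)+M(v,u)x_v(1-x_u)=M(u,v)(x_u-x_v)$. A vertex paired with any fixed candidate contributes an affine function of $x_u$. So the whole objective is affine in $(x_u)_{u\in X}$, and under your cardinality constraint it is optimized by sorting. Any common pair of admissible blocks behaves the same way. For your triangle $u\to v\to z_e\to u$ with $z_e$ in a fixed block, the contribution $g(x_u,x_v)$ satisfies $g(1,0)+g(0,1)=g(0,0)+g(1,1)$ for every choice of weights. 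With equal weights and $z_e\in S_2$, all four values are $0$. Hence ``separated'' can never be uniformly one unit better than ``together''.

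Non-affine terms arise only for pairs of \emph{movable} candidates whose admissible blocks overlap only in the middle one. If $u\in\{S_1,S_2\}$ and $f\in\{S_2,S_3\}$, the pair is charged unless both lie in $S_2$, giving the quadratic term $1-[u\in S_2][f\in S_2]$. This is the missing idea, and it is the paper's construction. Vertex candidates go to the top or middle block, edge candidates to the middle or bottom block, and the incidence votes charge a vertex--edge pair unless both are in the middle. The natural source problem is then \probName{Clique}: put $k$ vertices and $\binom{k}{2}$ edges into the middle block so that all $2\binom{k}{2}$ incidences fall inside it.

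\emph{$\Separ$.} Your outer framework matches the paper's: a reduction from \probName{Minimum Bisection}, with anchors forcing the halves to be $D\cup X_1$ and $D'\cup X_2$ (the paper's aligned partitions). The edge gadget, however, fails for the same reason. If an edge vote contains the heavy dummy blocks in a fixed order, as in your $(u,\ora{R},v)$, its cheaper orientation is the same for every aligned partition. After re-orienting such votes, the total cost is the $\Strat[2]$ cost $\sum_v(\bar\rho-\rho_v(H))$ for the lower half $H$. This is linear in the membership vector, so no cut can be encoded.

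Concretely, take $(u,\ora{D},\ora{X'},\ora{D'},v)$, $(v,\ora{D},\ora{X'},\ora{D'},u)$ and their $\ola{X'}$ variants, with $X'=X\setminus\{u,v\}$ and dummy blocks large enough to fix the orientation. Their ``together'' and ``separated'' totals coincide exactly.

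Nonlinearity can only come from the minimum over the two orientations genuinely switching with the partition. So the edge votes must be nearly balanced between the two orientations. The paper achieves this by placing dummy templates on mirrored position pairs, so that each partner pair costs $K-\max\{Q,|\Delta|\}$. The dependence on where the remaining vertices sit must also cancel. The paper handles this by pairing $\ora{W\setminus\{u,w\}}$ with $\ola{W\setminus\{u,w\}}$ on the critical positions, so the $\pm\varsigma$ terms cancel, using $\varsigma_{\max}<Q$ and $|c_i|-\varsigma_{\max}\ge Q$ whenever $c_i\neq 0$. You flag these calculations as the main obstacle, but the concrete form you propose is exactly the case in which they cannot succeed.
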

\prooftoappendixdivided{thm:NE:swap:threeStrat:NPc}{\thmNEswapthreeStratNPc*}{

We prove the two results separately. We start with the \NPcness of  \swapNE{\Strat[3]}.
\begin{theorem}
    \swapNE{\Strat[3]} is \NPc.
\end{theorem}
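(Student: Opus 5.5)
Membership in \NP\ follows from \Cref{thm:NPmembership}, so the plan is a reduction establishing hardness. I first rewrite the objective combinatorially. Fix a $3$-stratification mapping given by an ordered partition $(X_1,X_2,X_3)$ of $\candidates$ into thirds. A vote is brought into $\domain(T,\mapping)$ optimally by undoing exactly the inversions between groups. So the cost is $\sum_{i<j}\sum_{a\in X_i,b\in X_j} w(b,a)$, where $w(b,a)$ is the number of voters preferring $b$ to $a$. Writing $w(b,a)=\numVoters/2+M(b,a)/2$ with the skew-symmetric margin matrix $M$, the first term is a constant, since the group sizes are fixed.

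The proof of \Cref{thm:NE:swap:twoStrat:poly} shows that for two groups the margin term collapses to a linear function of Borda scores, because pairs inside $X_1$ cancel. For three groups I will split the cost as (split $X_1 \mid X_2\cup X_3$) plus (split $X_2\mid X_3$). The first summand is again Borda-linear. The second equals a Borda-linear term minus the ``cut'' term $\sum_{a\in X_3,b\in X_1}M(b,a)$. Hence hardness must be encoded entirely in this interaction between the top and bottom groups, and the reduction should make the Borda-linear parts constant so that only the cut term matters.

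Concretely, I would reduce from \probName{Max Bisection} (or \probName{Min Bisection}) on regular graphs, which is \NPh. For a target skew-symmetric matrix with even entries, I realize it by McGarvey-style pairs of votes: $(a,b,\ora{R})$ together with $(\ola{R},a,b)$ gives margin $2$ on $(a,b)$ and $0$ on every other pair. I will introduce one candidate per vertex, oriented so that edge margins reward (or penalize) placing the two endpoints in $X_1$ and $X_3$ respectively. I will also add a block $D$ of $\numCandidates/3$ dummies with large margins relative to everything, so that any solution within budget puts $D$ exactly in $X_2$. The vertex candidates are then split evenly between $X_1$ and $X_3$, which is a bisection.

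Regularity of the graph, possibly together with a balancing vote per vertex, should make all Borda scores of vertex candidates equal, so the linear terms become constants. The budget is then a constant minus twice the desired cut value. Both directions follow by reading off the partition: a bisection yields an assignment whose cost is given by the formula above, and conversely a cheap solution must place $D$ in $X_2$ and thus induces a bisection of the required quality.

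The main obstacle is the orientation issue. The cut term $\sum_{a\in X_3,b\in X_1}M(b,a)$ is oriented, whereas bisection edges are symmetric, and a skew-symmetric $M$ cannot reward $u\in X_1,w\in X_3$ and $w\in X_1,u\in X_3$ simultaneously. I would first try to handle this with two copies $u^+,u^-$ per vertex, tied together by heavy margins so they must lie on opposite sides. Each edge $\{u,w\}$ then gets the margins $M(u^+,w^-)$ and $M(w^+,u^-)$, which makes the contribution symmetric. If this fails, I would fall back to a source problem with inherently directed structure, such as a balanced directed-cut variant.
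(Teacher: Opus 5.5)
Your reduction has a fatal flaw, and it is not the orientation issue you single out. Once the dummy block $D$ is forced to be exactly $X_2$, every other candidate lies in $W = X_1 \cup X_3$, and the cost becomes a \emph{linear} function of the choice of $X_1 \subseteq W$.

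The pairs involving $D$ contribute a fixed weight per candidate, depending only on whether it lands in $X_1$ or in $X_3$. For the pairs inside $W$, the cancellation you used for your first summand applies again:
\[
\sum_{a\in X_1}\sum_{b\in W\setminus X_1} M(b,a) \;=\; \sum_{a\in X_1}\sum_{b\in W} M(b,a),
\]
since the pairs inside $X_1$ cancel by skew-symmetry.

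So with $X_2$ fixed, your ``cut term'' is itself Borda-linear. The instance is just a \Strat[2] instance on $W$ plus per-candidate terms, and it is solved by sorting $W$ by a single score, exactly as in the polynomial algorithm for \Strat[2]. Hence no reduction of this shape can encode a bisection problem unless $\mathrm{P}=\mathrm{NP}$.

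Concretely, in your design with equal Borda scores, \emph{every} bisection has the same cost, so the cut value never enters the budget. The $u^+/u^-$ gadget cannot help: a linear objective cannot even force $u^+$ and $u^-$ onto opposite sides. A directed-cut source problem fails for the same reason, because the net margin across any partition of $W$ is determined by per-vertex scores. (There is also a small sign slip: your second summand equals the Borda-linear term \emph{plus} $\sum_{a\in X_3,b\in X_1}M(b,a)$.)

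What is missing is that the middle group must itself contain candidates whose placement is decided by the solution; only then is the $X_1$--$X_3$ interaction genuinely quadratic. This is what the paper does, reducing from \probName{Clique}:
\begin{itemize}
\item $B_1$ is forced into the top group, $B_2$ into the bottom group, and the $2\beta$ dummies into the middle. This is enforced by $\alpha$ copies of two anchor votes that list $U$, the dummies, and $F$ in opposite orders.
\item Each vertex candidate then chooses between top and middle, and each edge candidate between middle and bottom, with exactly $k$ vertices and $\binom{k}{2}$ edges staying in the middle.
\item In the incidence vote $w'_{f,u}$, the pair $(f,u)$ needs no swap only if both $u$ and $f$ remain in the middle.
\end{itemize}
Meeting the budget therefore requires $2\binom{k}{2}$ such free incidences. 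This forces all $\binom{k}{2}$ middle edges to have both endpoints among the $k$ middle vertices, i.e., a clique.
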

\begin{proof}
    We reduce from the \probName{Clique} problem. Let $(G,k)$ be our instance for \probName{Clique} with input graph $G=(U,F)$; we may assume that $k \geq 2$, $|U|>k$, and $|F| \geq \binom{k}{2}$, since otherwise the instance can be solved trivially. Let us construct an election $(\candidates,\voters)$ as follows.
    We define integers~$\alpha$ and~$\beta$ as \begin{align*}
        \alpha&=4|F|\left(|F|+|U|-k-\binom{k}{2}\right), \\
        \beta&=2\left( k(|U|-k)+\binom{k}{2}\left(|F|-\binom{k}{2}\right) \right).
    \end{align*}

    First, for each vertex in~$U$ and each edge in~$F$, we create a candidate with the same name.
    Next, we create $2\beta$ dummy candidates~$d_1,\dots,d_{2\beta}$, as well as additional sets~$B_1$ and~$B_2$ of dummies whose sizes satisfy \[\gamma=|B_1|+|U|-k=k+\binom{k}{2}+2\beta=|B_2|+|F|-\binom{k}{2}\]
    for some integer~$\gamma$. This finishes the definition of our candidate set, with $3\gamma$ candidates.

    We create two sets of  \emph{anchor voters}, $A$ and~$A'$ with $|A|=|A'|=\alpha$.
    Their preferences, provided as a list, are as follows: %
    \begin{align*}
        a &:
        B_1, \ora{U}, d_1,\dots, d_{2\beta}, \ora{F}, B_2
        \qquad \qquad \forall a \in A; \\
        a' &:
        B_1, \ola{U}, d_{2\beta},\dots, d_1, \ola{F}, B_2
        \qquad \qquad \forall a' \in A'.
    \end{align*}
    We further create \emph{incidence voters} $w_{f,u}$ and~$w'_{f,u}$ for each edge~$f \in F$ and each endpoint~$u \in f$; their preferences are as below.
    \begin{align*}
        w_{f,u} &:
        B_1, d_1,\dots, d_\beta,
        u, \ora{U \setminus \{u\}},
        \ora{F \setminus \{f\}}, f, \\
        & \qquad \qquad\qquad\qquad \qquad \qquad\qquad
        d_{\beta+1},\dots,d_{2\beta}, B_2;
        \\[4pt]
        w'_{f,u} &:
        B_1, d_1,\dots, d_\beta,
        \ola{U \setminus \{u\}}, f,
        u, \ola{F \setminus \{f\}}, \\
        & \qquad \qquad \qquad\qquad\qquad \qquad \qquad
        d_{\beta+1},\dots,d_{2\beta}, B_2.
    \end{align*}
    Hence, we have $2\alpha+4|F|$ voters in total.
    We finish the construction by setting our budget as
    \[\budget= \frac{3}{2}\alpha \cdot\beta + |F|\cdot \beta+2|F|-2\binom{k}{2}.\]

    First, assume that $G$ contains a clique of size~$k$ whose vertex and edge set are $U_K$ and~$F_K$, respectively. We are going to modify the votes in our election as follows. In each vote, we \emph{upshift} $U \setminus U_K$, meaning that we move all candidates in~$U \setminus U_K$ to the $|U|-k$ positions right after~$B_1$,
    and we
    \emph{downshift} $F \setminus F_K$, meaning that we move all candidates  in~$F \setminus F_K$ to the $|F|-\binom{k}{2}$ positions right before~$B_2$. The number of swaps necessary for this can be calculated as follows:
    \begin{itemize}
    \item Consider a pair of anchor votes $a \in A$ and~$a' \in A'$. Since the vertex candidates appear in reverse order in~$a$ and~$a'$, for each candidate~$x \in U \setminus U_K$ and each~$y \in U_K$ exactly one of these two votes ranks~$y$ above~$x$; hence, upshifting $U \setminus U_K$ costs exactly $k(|U|-k)$ swaps in the pair~$\{a,a'\}$, and thus $\alpha \cdot k(|U|-k)$ swaps in total for all anchor votes. Note that no swaps between vertex candidates and dummies or edge candidates are necessary in anchor votes.
    \item By the same arguments, downshifting~$F \setminus F_K$ in all anchor votes requires $\alpha \cdot \binom{k}{2}(|F|-\binom{k}{2})$ swaps in total.
    \item In the incidence vote pair $w_{f,u}$ and~$w'_{f,u}$ for some~$f$ and~$u$, upshifting $U \setminus U_K$ requires each candidate in~$U \setminus U_K$ to be swapped with the $\beta$ dummies $d_1,\dots,d_\beta$ in both of these votes, and---using again that the vertex candidates appear in reverse order in~$w_{f,u}$ and~$w'_{f,u}$---with each candidate of~$U_K$ exactly once, yielding $(|U|-k)(k+2\beta)$ swaps in the pair, not counting the possible swap between $f$ and~$u$ in the vote~$w'_{f,u}$.
    \item Similarly, downshifting $F \setminus F_K$
     in each pair of incidence votes requires $(|F|-\binom{k}{2})(\binom{k}{2}+2\beta)$ swaps, again not counting the possible swap between $f$ and~$u$ in the vote~$w'_{f,u}$.
    \item
    Finally, notice that we do \emph{not} need to swap~$f$ and~$u$ in~$w'_{f,u}$ if and only if $u \in U_K$  and $f \in F_K$ both hold.
    \end{itemize}
    Hence, we need $\alpha \cdot \nicefrac{\beta}{2}$ swaps in total in the anchor votes, and
    \begin{multline*}
    2|F|\cdot \left( \nicefrac{\beta}{2}+2\beta\left(|U|-k+|F|-\binom{k}{2}\right) \right)+s^\star = \\
    = |F|\cdot \beta+\alpha \cdot \beta + s^\star
    \end{multline*}
    swaps in incidence votes where $s^\star$ denotes the number of swaps between vertex and edge candidates in all incidence votes. Since $K$ is a clique, we know that $s^\star=2|F|-2\binom{k}{2}$.
    Therefore, the total number of swaps  is
    \[
     \frac{3}{2}\alpha \cdot\beta + |F|\cdot \beta+2|F|-2\binom{k}{2}=\budget.
    \]
    Observe that all of the obtained votes contain $B_1 \cup U \setminus U_K$ on the first~$\gamma$ positions,
    and contain $B_2 \cup F \setminus F_K$ on the last~$\gamma$ positions. Thus, all of these votes belong to the \Strat[3] domain determined by the tree whose first P-node has its leaves labeled with~$B_1 \cup U \setminus U_K$, whose third P-node has its leaves labeled with~$B_2 \cup F \setminus F_K$, leaving the dummies $d_1,\dots,d_{2\beta}$ and the candidates in~$U_K \cup F_K$ for the second P-node. Hence, the constructed election with budget~$\budget$ is a yes-instance of \swapNE{\Strat[3]}.

    \smallskip
    For the other direction, assume that by applying at most $\budget$ swaps, we can obtain an election~$(\candidates,\voters')$ that belongs to the \Strat[3] domain, i.e., we can partition the candidates set into sets~$\candidates_1,\candidates_2,\candidates_3$, each of size~$\gamma$, such that in each vote, the first $\gamma$ candidates are from~$\candidates_1$ and the last~$\gamma$ candidates are from~$\candidates_3$. Note first that we may assume that $B_1 \subseteq \candidates_1$ and $B_2 \subseteq \candidates_3$: indeed, each candidate of~$B_1$ is ranked above every candidate of~$\candidates \setminus B_1$ in each vote, so if some $b \in B_1$ were contained in~$\candidates_2 \cup \candidates_3$ while some candidate~$c \notin B_1$ is contained in~$\candidates_1$, then exchanging the roles of~$b$ and~$c$ would not increase the number of necessary swaps in any vote; the argument for~$B_2$ is symmetric.

    We claim that $\candidates_1 \subseteq B_1 \cup U$ and $\candidates_3 \subseteq B_2 \cup F$.
    Indeed, assuming otherwise, we either have to move some dummy~$d_i$, $i \in [2\beta]$, or some candidate in~$F$ to the first~$\gamma$ positions,
    or we have to move some dummy~$d_i$, $i \in [2\beta]$, or some candidate in~$U$ to the last~$\gamma$ positions. In either case, this requires at least $\alpha \cdot 2\beta$ swaps in the anchor votes. To see this for the first case, recall that the dummies appear in reverse order in~$A$ and in~$A'$: moving~$d_i$ to the first~$\gamma$ positions requires at least $i+k$ swaps in each vote of~$A$, and at least $2\beta+1-i+k$ swaps in each vote of~$A'$, amounting to more than $\alpha \cdot 2\beta$ swaps; the remaining cases are analogous.
    However, it is not hard to see that $\budget<2\alpha \cdot \beta$, a contradiction.
    Hence, there exists a set~$U'$ of~$k$ vertices and a set~$F'$ of $\binom{k}{2}$ edges such that
    $\candidates_1=B_1 \cup U \setminus U'$,
    $\candidates_2=U' \cup \{d_1,\dots,d_{2\beta}\} \cup F'$, and~$\candidates_3=B_2 \cup F \setminus F'$.

    We claim that $U'$ forms a clique in~$G$ with edge set~$F'$. To see this, notice that the number of swaps necessary can be computed along the same lines as showed earlier; in fact, the choice of~$U'$ and~$F'$ only matters for computing the number of swaps necessary between vertex candidates and edge candidates in incidence votes. Indeed, the number of swaps required to upshift~$U \setminus U'$ and downshift~$F \setminus F'$ in all votes is exactly $\frac{3}{2}\alpha \cdot \beta + |F| \beta + (2|F|- s')$ where $s'$ denotes the number of those pairs~$(f,u)$ for which $f \in F'$, $u \in U'$, and $u$ is an endpoint of~$f$. Since the number of swaps applied must be at most~$\budget$, we get that $s' \geq 2 \binom{k}{2}$. However, this is only possible if for each $f \in F'$, both endpoints of~$f$ are vertices in~$U'$. By $|F'|=\binom{k}{2}$ and~$|U'|=k$, this implies that $U'$ induces a clique of size~$k$ in~$G$, as required.
    This proves the correctness of our reduction.
\end{proof}

Next, we show that \swapNE{\Separ} is also intractable.
\begin{theorem}
    \label{thm:NE:swap:separation:NPh}
    The \swapNE{\Separ} problem is \NPc.
\end{theorem}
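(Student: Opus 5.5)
Membership in \NP follows from \Cref{thm:NPmembership}, so the work is \NPhness. The plan is to reduce from \swapNE{\Strat[2]}-like structure but with the reversal freedom killed off. Since \Strat[2] alone is polynomial (\Cref{thm:NE:swap:twoStrat:poly}), the hardness must come from the per-voter choice of orientation: each voter independently decides whether the first or the second half goes on top. This is the $2^{\numVoters}$ factor in \Cref{thm:NE:swap:separation:FPT:numVoters}. I would therefore reduce from a problem of the form ``choose a balanced bipartition of a vertex set to optimize a cut-like objective.'' Natural candidates are \probName{Minimum Bisection} or the \probName{Max Bisection} problem (both \NPh). Each voter then encodes an edge, and the orientation choice is forced by which side each endpoint is on.

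\textbf{Construction.} Take a graph $G=(U,F)$ with $|U|$ even. The candidates are the vertices $U$ plus two dummy blocks $D_1,D_2$ of equal size $N$, with $N$ polynomially large.
\begin{itemize}
\item Add $\alpha$ anchor voters $D_1, \ora{U}, D_2$ and $\alpha$ anchor voters $D_1,\ola{U},D_2$, with $\alpha$ huge. Any cheap solution must then put $D_1$ on one side and $D_2$ on the other. It must also split $U$ into two equal halves $U_1\subseteq C_1$ and $U_2\subseteq C_2$, and every anchor pair costs exactly $(|U|/2)^2$ regardless of which balanced split is chosen, because $\ora{U}$ and $\ola{U}$ disagree on every pair. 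I will also need the anchors to fix the orientation as $D_1$ on top; this is automatic since reversing a vote costs about $N^2$.
\item For each edge $\{u,w\}\in F$, add gadget voters of the form $D_1, u, w, \ora{U\setminus\{u,w\}}, D_2$ paired with a vote reversing $U\setminus\{u,w\}$ but keeping $u,w$ near the top.
\end{itemize}
In each gadget pair, the cost of sending $u$ and $w$ into the lower half then depends, beyond a fixed baseline, only on how many of $u,w$ lie in $U_2$. Because the pair reverses the remaining vertices, the baseline contributions of the other vertices cancel to a constant, exactly as in the anchor computation. The residual term is roughly $|U|/2$ per endpoint placed in $U_2$, which is linear rather than cut-like. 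To get a genuinely quadratic (cut) term, I would instead put $u$ at the very top and $w$ at the very bottom in one vote, and swap their roles in the partner vote. The extra cost of the pair is then small exactly when $u,w$ are on the same side and larger when they are split, or vice versa depending on the tuning. That is the bisection objective. Setting the budget to the baseline plus $c\cdot k$ makes yes-instances correspond to bisections cutting at most (or at least) $k$ edges.

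\textbf{Main obstacle.} The difficulty is that voters may reverse orientation, which is the whole point of \Separ. Each gadget vote could flip, putting $D_2$ on top and paying roughly $N^2$, or it could keep the orientation; a single gadget vote therefore cannot cheat. However, the gadget vote's optimal choice of orientation is also what could turn the pair cost into something linear in the placement rather than a cut function. I would handle this by making $N$ large, which forces a fixed orientation for all voters. At that point, though, the problem reduces to \Strat[2] restricted to $U$, which is polynomial by \Cref{thm:NE:swap:twoStrat:poly}, so the hardness would vanish.

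The real plan must therefore use the dummies only in the anchors, and let the gadget votes over $U$ (padded with small dummy blocks) genuinely benefit from choosing orientation. Concretely, a gadget vote is $u$, $\ora{\text{rest}}$, $w$ with no big dummy blocks. If $u\in U_1$ and $w\in U_2$, it orients normally at low cost. If $u\in U_2$ and $w\in U_1$, it reverses at equal cost. If $u$ and $w$ are on the same side, both orientations are expensive. Symmetrizing with the reversed rest in a partner vote, the pair's cost becomes ``constant minus $\Theta(|U|)$ if the edge is cut.'' This reduces \probName{Max Bisection} to the problem. Verifying this cost calculation exactly, and in particular that the reversal of $\text{rest}$ neutralizes every term except those involving $u$ and $w$, is the step I expect to require the most care.
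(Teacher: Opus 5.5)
Your diagnosis of why end-block dummies collapse the problem to $\Strat[2]$ is right. The final gadget, however, has a genuine gap exactly at the step you flagged: in the $\Separ$ setting the partner reversal does not neutralize $\text{rest}$.

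Fix a balanced partition $(L,R)$. A vote $v$ costs $\min\{x_v,K-x_v\}$, where $K=(\numCandidates/2)^2$ and $x_v$ counts the pairs of $L\times R$ that $v$ ranks with the $R$-candidate first. The two votes of your pair choose their orientations independently. Reversing $\text{rest}$ only makes the two same-orientation totals $x_v+x_{v'}$ and $2K-x_v-x_{v'}$ independent of how $\text{rest}$ is split. The mixed choice costs $K-|x_v-x_{v'}|$, and
$|x_v-x_{v'}|=\bigl|\sum_{c\in R\cap\text{rest}}(2\rank_v(c)-\numCandidates-1)\bigr|$
is a position-weighted sum over all of $\text{rest}$ that can be of order $|U|^2$.

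Concretely, even with candidate set $U$ alone, $|U|=2h$ and $v=(u,\ora{\text{rest}},w)$, consider two cases. If $u,w\in L$ and $R$ fills the last $h$ slots of $\ora{\text{rest}}$, the pair costs only $2h$: $v$ moves $w$ above $R$, and $v'$ moves $u$ below $R$. If instead $u\in L$, $w\in R$, and $R\setminus\{w\}$ is placed symmetrically around the middle of $\text{rest}$, the pair costs $(h-1)^2$. So the pair cost is not ``constant minus $\Theta(|U|)$ if cut.'' It depends on the whole partition, the cost ranges for cut and uncut edges overlap, and the total is not a function of the number of cut edges, so no budget encodes \probName{Max Bisection}.

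Independently, the dummies cannot be ``used only in the anchors.'' Every gadget vote must also rank the $2N$ candidates of $D_1\cup D_2$, and wherever you put them they dominate the rank sums. As end blocks they force the orientation, which returns you to the $\Strat[2]$ situation you yourself ruled out.

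What is missing is a device that controls this mixed-orientation term; it is the core of the paper's reduction from \probName{Minimum Bisection}. Write $\rho_v(R)$ for the sum of positions of $R$ in $v$, and set $|W|=2N$ and $p=4N^2+1$. The device has three ingredients:
\begin{enumerate}
\item Each gadget vote contains $2p$ dummies placed in mirror-symmetric position pairs by a fixed template. As a result, for every aligned partition the pair sum $\rho_v(R)+\rho_{v'}(R)$ is the same, and the two same-orientation options are exactly $K+Q$ and $K-Q$, with $Q=2N+p$.
\item The vertices other than $u,w$ occupy the central ``critical'' positions. Their contribution $\varsigma$ to $\Delta=\rho_v(R)-\rho_{v'}(R)$ therefore satisfies $|\varsigma|\le 2(N-1)^2\le p-3<Q$. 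Meanwhile $u$, $w$, and a guard pair from $D_1$ or $D_2$ sit at the four extreme positions, and their contribution $c_i$ to $\Delta$ has magnitude about $2N+2p$ whenever it is nonzero.
\item Each edge gets two templates $i\in\{1,2\}$, each with two partner pairs carrying $\Delta=c_i+\varsigma$ and $\Delta=c_i-\varsigma$. Hence $\varsigma$ is either clipped (when $c_i=0$) or cancels exactly.
\end{enumerate}
Only with all three does an uncut edge cost exactly $4K-2B-2Q$ and a cut edge exactly $4K-2B$ (with $B=4N+4p-4$), independently of the rest of the bisection. Your plan needs an analogue of these three ingredients before the budget argument can go through.
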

\begin{proof}
    Before we present the reduction, we introduce some auxiliary notation. A partition $(L,R)$ of~$\candidates$ is \emph{balanced} if $|L|=|R|=\nicefrac{\numCandidates}{2}$. We set $\domain_{\{L,R\}}$ to be a domain containing all votes where candidates of $L$ are above candidates of $R$ and vice versa. Clearly, $\domain_{\{L,R\}}$ belongs to the $\Separ$ family, and we say that $\domain_{\{L,R\}}$ \emph{is specified} by $(L,R)$. Additionally, we define the \emph{rank sum} for each voter $v$ and subsets of candidates $S\subseteq \candidates$ as
    \[
        \rho_v(S) = \sum_{c\in S} \rank_v(c)\,.
    \]
    
    Clearly, the rank sum of a set of $\nicefrac{\numCandidates}{2}$ candidates is at least $\underline{\rho} = 1 + \cdots + \nicefrac{\numCandidates}{2}$ and at most $\bar{\rho} = (\nicefrac{\numCandidates}{2}+1) + \cdots + \numCandidates$. Throughout, we write $K = (\nicefrac{\numCandidates}{2})^2$. Observe that $\bar{\rho} - \underline{\rho} = K$ and that $\rho_v(L) + \rho_v(R) = \underline{\rho} + \bar{\rho}$ for every voter $v\in \voters$ and every balanced partition $(L,R)$.

    Now, we show that the distance of a vote $v$ to $\domain_{\{L,R\}}$ for any balanced partition $(L,R)$ can be fully expressed using rank sum.

    \begin{claim}\label{claim:separ:rank-sum}
        Let $(L,R)$ be a balanced partition of $\candidates$, and $v\in\voters$ be a voter. Then
        \[
            \min_{\succ\in \domain_{\{L,R\}}} \dist(v,\succ) = \min\{ \bar{\rho} - \rho_v(R), \rho_v(R) - \underline{\rho} \}\,.
        \]
    \end{claim}
    \begin{claimproof}
        First, let $\succ$ be a vote such that all candidates of $L$ are ranked above all candidates of $R$. We say that a pair of candidates $\{c,d\}$ with $c\in L$ and $d\in R$ is \emph{dirty} if $d \succ_v c$ (recall that $c \succ d$ by our assumption). Then, the distance between $v$ and $\succ$ is at least
        \[
            x = |\{(c,d) \in L \times R \colon d \succ_v c\}|\,.
        \]
        Let $q_1 < \cdots < q_{\nicefrac{\numCandidates}{2}}$ be the positions of the candidates of $R$ in vote $v$. Below position $q_t$, there are $\numCandidates - q_t$ candidates, out of which $\nicefrac{\numCandidates}{2} - t$ belong to $R$. Therefore, we have
        \begin{multline*}
            x = \sum_{t=1}^{\nicefrac{\numCandidates}{2}} (\numCandidates - q_t - \frac{\numCandidates}{2} + t) = K + \underline{\rho} - \rho_v(R) = \bar{\rho} - \rho_v(R)\,,
        \end{multline*}
        where the last equality uses $\bar{\rho} - \underline{\rho} = K$. Moreover, this bound can be achieved by a vote $\succ$ which agrees on the internal ordering of $L$ and $R$ with $v$. Hence, the minimum distance to the votes ranking $L$ above $R$ is exactly $\bar{\rho} - \rho_v(R)$. Swapping the roles of $R$ and $L$, the minimum distance to the votes ranking $R$ above $L$ is $\bar{\rho} - \rho_v(L) = \rho_v(R) - \underline{\rho}$, where we used $\rho_v(L) + \rho_v(R) = \underline{\rho} + \bar{\rho}$.
    \end{claimproof}

    Using the previous claim, we can express the distance of an election $E$ to the $\Separ$ domain as
    \begin{multline}
    \label{eq:separ:dist}
        \dist(E,\Separ) = \\ = \min_{(L,R)} \sum_{v\in \voters} \min\{ \bar{\rho} - \rho_v(R), \rho_v(R) - \underline{\rho} \}\,.
    \end{multline}
    This finishes our preparation, and now we prove \NPhness{} of the problem.

    We reduce from the \probName{Minimum Bisection} problem, which is known to be \NPc. In this problem, we are given a graph $G=(W,F)$ with $|W| = 2N$ and an integer $k\in \N$, and the goal is to decide whether $W$ can be partitioned into two parts $P_1$ and $P_2$, each of size $N$, such that the number of edges with one endpoint in $P_1$ and the other in $P_2$ is at most $k$. The pair $(P_1,P_2)$ is called a \emph{bisection} of $G$. Without loss of generality, we assume $N \geq 2$ and $0 \leq k < |F|$, as otherwise the instance is trivial.

    \proofsubparagraph{Construction.} Let $\mathcal{I} = (G=(W,F),k)$ be an instance of \probName{Minimum Bisection} and $p$ be the \emph{odd} integer $p = 4N^2+1$.
    The candidate set $\candidates$ of the constructed instance $\mathcal{J}$ consists of~$W$, the \emph{vertex candidates}, and two sets of \emph{dummy candidates}~$D_1$ and $D_2$, each of size $p$. We fix an arbitrary ordering over~$\candidates$, and for any candidate set~$S \subseteq C$, we use $\ora{S}$ to denote the restriction of this ordering to~$S$, and $\ola{S}$ for its reverse.

    The set of voters consists of several different blocks of voters. First, we create $\alpha= |F|\cdot K$ copies of two \emph{anchor votes} $a$ and $a'$ with
    \[
        a\colon \ora{D_1} \succ_a \ora{W} \succ_a \ora{D_2}
    \]
    and
    \[
        a'\colon \ora{D_1} \succ_{a'} \ola{W} \succ_{a'} \ora{D_2}\,.
    \]

    Before we define additional voters, we introduce additional notation. We call position $2N+2p + 1 -j$ the \emph{mirror} of position $j$. Let $d_1\in D_1$ and $d_2\in D_2$ be a fixed pair of \emph{distinguished dummies} and, for each $i\in\{1,2\}$, let us fix a \emph{guard pair} $\delta_i = (\delta^1_i,\delta^2_i)$ such that $\{\delta^1_1,\delta^2_1\} \subseteq D_1 \setminus\{d_1\}$ and $\{\delta^1_2,\delta^2_2\} \subseteq D_2 \setminus\{d_2\}$. We call positions $p+2,\ldots,p+2N-1$ the \emph{critical positions}. For each $i\in\{1,2\}$, we define \emph{dummy template} $\pi_i$, which is a mapping between the dummies in~$(D_1 \cup D_2) \setminus \{\delta_i^1,\delta_i^2\}$ and the positions outside the critical positions and $\{1,2,2N+2p-1,2N+2p\}$. Specifically, $\pi_i$ puts $d_1$ on position $r = \nicefrac{(p+1)}{2}$ and $d_2$ on its mirror; note that $3 \leq r \leq p+1$ since $p \geq 17$. The remaining $2p-4$ dummies are distributed over the $p-2$ mirrored position pairs $\{\{q,2N+2p+1-q\} \colon q \in \{3,\ldots,p+1\}\setminus\{r\} \}$ so that each pair receives two dummies of the same set $D_1$ or $D_2$. The latter is possible: for template $\pi_i$, the numbers of dummies of $D_1$ and of $D_2$ still to be assigned are $p-3$ and $p-1$ (for $i=1$; vice versa for $i=2$), and both are even because $p$ is odd. Now, we are ready to introduce the \emph{edge votes}. In particular, for every edge $e=\{u,w\}\in F$ and every $i\in\{1,2\}$, we introduce the following votes:
    \begin{center}
        \begin{tabular}{@{\hspace{3pt}}c|c|c|c|c|c@{\hspace{3pt}}}
            \toprule
                & $1$ & $2$ & critical & $2N+2p-1$ & $2N+2p$ \\
            \midrule
            $v_e^{i,1}$ & $u$ & $w$ & $\ora{W\setminus\{u,w\}}$ & $\delta_i^1$ & $\delta_i^2$\\
            $v_e^{i,2}$ & $\delta_i^2$ & $\delta_i^1$ & $\ola{W\setminus\{u,w\}}$ & $w$ & $u$\\
            $v_e^{i,3}$ & $u$ & $w$ & $\ola{W\setminus\{u,w\}}$ & $\delta_i^1$ & $\delta_i^2$\\
            $v_e^{i,4}$ & $\delta_i^2$ & $\delta_i^1$ & $\ora{W\setminus\{u,w\}}$ & $w$ & $u$\\
            \bottomrule
        \end{tabular}
    \end{center}
    The rest of the votes are filled according to the dummy template $\pi_i$.

    The election has $\numVoters = 2\alpha + 8\cdot|F|$ voters and $|W| + 2p$ candidates. To finalize the construction, we set the budget to
    \[
        \budget = \alpha \cdot N^2 + |F|\cdot(4K-2B) - 2(2N+p)\cdot(|F|-k)\,,
    \]
    where $B = 4N+4p-4$. Next, we show that $G$ has a bisection cutting at most $k$ edges if and only if $\dist((\candidates,\voters),\Separ) \leq \budget$.

    \proofsubparagraph{Additional notation and basic facts.} Recall that we have $\numCandidates = 2N+2p$. Throughout the analysis, we additionally write $M = \numCandidates+1$, $h = \nicefrac{\numCandidates}{2} = N+p$, $Q = 2N+p$, and~$\varsigma_{\max} = 2(N-1)^2$; observe that this yields $K = h^2$ and $\budget = \alpha\cdot N^2 + |F|\cdot(4K-2B) - 2Q\cdot(|F|-k)$. Note that a candidate contributes $j$ to a rank sum when placed on position $j$, and $M-j$ when placed on its mirror. 
    Furthermore, we use the following two inequalities:
    \begin{equation}\label{eq:separ:ineq}
        \varsigma_{\max} \leq p-3
        \qquad\text{and}\qquad
        3N^2+2N+1 \leq K\,.
    \end{equation}
    The first inequality holds since $\varsigma_{\max} < 2N^2 \leq \nicefrac{p}{2} \leq p-3$, and the second one since $K = (N+p)^2 \geq p^2 \geq 16N^4$.

    We say that a balanced partition $(L,R)$ of $\candidates$ is \emph{aligned} if one of its two sides (i.e., either its top or its bottom half)  contains all of $D_2$ and no candidate of $D_1$. The two sides of an aligned partition are thus $D_1\cup W_1$ and $D_2\cup W_2$ for some partition $(W_1,W_2)$ of $W$ with $|W_1|=|W_2|=N$; that is, aligned partitions correspond exactly to bisections of~$G$.

    \proofsubparagraph{Vote pairs.} Our votes come in pairs that we analyze jointly. First, for any two votes $v,v'$ and any balanced partition $(L,R)$, we abbreviate $P = \rho_v(R)+\rho_{v'}(R)$ and $\Delta = \rho_v(R)-\rho_{v'}(R)$. Expanding the sum of the two minima from \Cref{claim:separ:rank-sum} into the minimum of four sums yields
    \begin{multline}\label{eq:separ:pair}
        \min_{\succ\in\domain_{\{L,R\}}}\dist(v,\succ)
        + \min_{\succ\in\domain_{\{L,R\}}}\dist(v',\succ)\\
        = \min\{ P-2\underline{\rho},\ 2\bar{\rho}-P,\ K-|\Delta| \}\,,
    \end{multline}
    where we used $(\rho_v(R)-\underline{\rho}) + (\bar{\rho}-\rho_{v'}(R)) = K+\Delta$ and $\bar{\rho}-\underline{\rho}=K$. Second, we call $(v,v')$ a \emph{partner pair with flip set $X$} if $v'$ arises from $v$ by moving every candidate of~$X$ from its position to the mirror position, while every other candidate keeps its position. In a partner pair, each candidate $c\in X\cap R$ contributes $\rank_v(c)$ to $\rho_v(R)$ and $M-\rank_v(c)$ to $\rho_{v'}(R)$, so
    \begin{multline}\label{eq:separ:partner}
        P = M\cdot|X\cap R| + 2\rho_v(R\setminus X)
        \qquad\text{and}\\
        \Delta = \sum_{c\in X\cap R}(2\rank_v(c)-M)\,.
    \end{multline}

    \proofsubparagraph{Contribution of the anchor votes.} Fix a balanced partition~$(L,R)$ and group the anchor votes into $\alpha$ pairs $(a,a')$; each of them is a partner pair with flip set $X=W$. By~\eqref{eq:separ:partner}, we have
    \begin{multline*}
        |\Delta| \leq \sum_{j=p+1}^{p+2N}|2j-M| = 2\cdot(1+3+\cdots+(2N-1)) = 2N^2
        \\\text{and}\qquad
        P = \sum_{c\in R} g(c)\,,
    \end{multline*}
    where we define the function~$g$ such that $g(c) = M$ if $c$ is a vertex candidate and ${g(c) = 2\rank_a(c)}$ if $c$ is a dummy. In the anchor votes, the dummies of $D_1$ have $g$-values $2,4,\ldots,2p$, all at most $M-(2N+1)$, and the dummies of $D_2$ have $g$-values $2(2N+p+1),2(2N+p+2),\ldots,2(2N+2p)$, all at least $M+(2N+1)$. Consequently, over all sets $S$ of $h$ candidates, the sum $\sum_{c\in S}g(c)$ is maximized precisely when $D_2\subseteq S$ and $S\cap D_1=\emptyset$ (the rest of $S$ consisting of vertex candidates, whose $g$-values coincide), and if $S$ misses a dummy of $D_2$ or contains a dummy of $D_1$, then $\sum_{c\in S}g(c)\leq P^\ast-(2N+1)$, where $P^\ast$ denotes the maximum value. Using $M = 2h+1$, $p = h-N$, and $2\bar{\rho} = 2\underline{\rho}+2K = h(h+1)+2h^2 = 3h^2+h$, a direct computation gives
    \begin{multline*}
        P^\ast = M\cdot N + 2\cdot\sum_{j=2N+p+1}^{2N+2p} j
        \\= (2h+1)\cdot N + (h-N)(3h+N+1)\\
        = 3h^2+h-N^2\,,
    \end{multline*}
    and therefore $2\bar{\rho}-P^\ast = N^2$ and $P^\ast-2\underline{\rho} = 2K-N^2$.

    Now, we apply~\eqref{eq:separ:pair}. Since $\rho_a(L)+\rho_a(R) = \underline{\rho}+\bar{\rho}$ holds for each anchor vote, we have $P-2\underline{\rho} = 2\bar{\rho}-\sum_{c\in L}g(c)$, so both of the first two terms in~\eqref{eq:separ:pair} are of the form $2\bar{\rho}-\sum_{c\in S}g(c)$ with $S\in\{L,R\}$. If $(L,R)$ is aligned, then one of its sides attains $P^\ast$, so the pair $(a,a')$ costs
    \[
        \min\{ 2K-N^2,\ N^2,\ K-|\Delta| \} = N^2
    \]
    \emph{exactly}---irrespective of how the vertex candidates are split---because $K-|\Delta| \geq K-2N^2 \geq N^2+2N+1$ by~\eqref{eq:separ:ineq}. 
    If $(L,R)$ is not aligned, then each of its two sides misses a dummy of $D_2$ or contains a dummy of $D_1$, so the first two terms in~\eqref{eq:separ:pair} are at least $2\bar{\rho}-P^\ast+(2N+1) = N^2+2N+1$; moreover $K-|\Delta| \geq K-2N^2 \geq N^2+2N+1$ by~\eqref{eq:separ:ineq}, as before. Hence, the pair costs at least $N^2+2N+1$.
    
    \proofsubparagraph{Contribution of the edge votes under aligned partitions.} Let $(L,R)$ be an aligned partition with $D_1\subseteq L$ and $D_2\subseteq R$ (the roles of the two sides being interchangeable), and let $W_1 = W\cap L$ and $W_2 = W\cap R$. Fix an edge $e=\{u,w\}\in F$ and a template $i\in\{1,2\}$. Both $(v_e^{i,1},v_e^{i,2})$ and $(v_e^{i,3},v_e^{i,4})$ are partner pairs with flip set $X = W\cup\{\delta_i^1,\delta_i^2\}$: positions $1$ and $2$ are the mirrors of positions $2N+2p$ and $2N+2p-1$, respectively, the critical positions are closed under mirroring, and the template dummies keep their positions.

    We first compute $P$, which is the same for both pairs. The fixed candidates of $R$, i.e., $R\setminus X$, are exactly the template dummies of $D_2$: the distinguished dummy $d_2$ sits on the mirror of position $r$ and contributes $M-r$ to ${\rho_{v_e^{i,1}}(R\setminus X)}$, and each same-set pair of $D_2$-dummies occupies two mirrored positions and contributes $M$; the number of such pairs is $\nicefrac{(p-1)}{2}$ for $i=1$ and $\nicefrac{(p-3)}{2}$ for $i=2$. The flipping candidates of $R$ number $|X\cap R| = |W_2| = N$ for $i=1$ and $|X\cap R| = N+2$ for $i=2$---in \emph{every} case, as the vertices $u$ and $w$ are already counted in $W_2$ if they lie in $R$. Hence, by~\eqref{eq:separ:partner}, in all cases
    \[
        P = M\cdot(N+p+1) - 2r = M\cdot(h+1)-(p+1)\,,
    \]
    where we used $2r = p+1$. Consequently, by $2\underline{\rho}=h(h+1)$ and $2h-p = Q$, we obtain
    \[
        P-2\underline{\rho} = (h+1)^2-(p+1) = K+Q
    \]
    and
    \[
        2\bar{\rho}-P = 2K-(K+Q) = K-Q\,.
    \]

    Next, we compute the differences $\Delta$. Let
    \[
        \varsigma = \sum_{c\in W_2\setminus\{u,w\}} (2\rank_{v_e^{i,1}}(c)-M)\,,
    \]
    so
    \[
        |\varsigma| \leq \sum_{j=p+2}^{p+2N-1} |2j-M| = 2\cdot(1+3+\cdots+(2N-3)) = \varsigma_{\max}\,,
    \]
    and the corresponding quantity for the pair $(v_e^{i,3},v_e^{i,4})$ is $-\varsigma$, since there the critical positions carry $\ola{W\setminus\{u,w\}}$. By~\eqref{eq:separ:partner}, the two pairs of template $i$ have $\Delta = c_i+\varsigma$ and $\Delta = c_i-\varsigma$, respectively, where $c_i$ collects the terms $2\rank_{v_e^{i,1}}(c)-M$ of~$u$, $w$, and the guard pair~$\delta_i$ insofar as they lie in $R$: vertex $u$ (of rank $1$) contributes $-(2N+2p-1)$ if $u\in R$,  vertex~$w$ (of rank $2$) contributes $-(2N+2p-3)$ if $w\in R$, and the guard pair $\delta_2\subseteq D_2\subseteq R$ (whose candidates rank $2N+2p-1$ and $2N+2p$, and are contained in $X$ only for $i=2$) contributes $(2N+2p-3)+(2N+2p-1) = B$. Explicitly:
    \begin{center}
        \begin{tabular}{@{\hspace{3pt}}c@{\hspace{7pt}}c@{\hspace{8pt}}c@{\hspace{7pt}}c@{\hspace{8pt}}c@{\hspace{3pt}}}
            \toprule
            & $u\in L$ & $u\in R$ & $u\in R$ & $u\in L$ \\
            & $w\in L$ & $w\in R$ & $w\in L$ & $w\in R$\\
            \midrule
            $c_1$ & $0$ & $-B$ & $-(2N+2p-1)$ & $-(2N+2p-3)$\\
            $c_2$ & $B$ & $0$ & $2N+2p-3$ & $2N+2p-1$\\
            \bottomrule
        \end{tabular}
    \end{center}
    By~\eqref{eq:separ:pair}, the two pairs of template $i$ cost $K-\max\{Q,|c_i+\varsigma|\}$ and $K-\max\{Q,|c_i-\varsigma|\}$, respectively. From $\varsigma_{\max}\leq p-3$ in~\eqref{eq:separ:ineq}, we get both \[\varsigma_{\max} < Q \quad \text{ and } \quad (2N+2p-3)-\varsigma_{\max} \geq 2N+p = Q\] (and a fortiori $(2N+2p-1)-\varsigma_{\max}\geq Q$ and $B-\varsigma_{\max}\geq Q$).

\smallskip
    If $e$ is \emph{not} cut by the bisection $(W_1,W_2)$, then $c_i=0$ for exactly one of the two templates; since $|{\pm\varsigma}|\leq \varsigma_{\max} < Q$, its two pairs cost $K-Q$ each. For the other template, we have $|c_i|=B$, and its two pairs cost $(K-(B-\varsigma)) + (K-(B+\varsigma)) = 2K-2B$ in total. Altogether, the eight votes of an uncut edge cost exactly
    \[
        (2K-2Q) + (2K-2B) = 4K-2B-2Q\,.
    \]
    
    If $e$ \emph{is} cut, then $\{|c_1|,|c_2|\} = \{2N+2p-1,\ 2N+2p-3\}$, and the four pairs cost
    \begin{multline*}
        \sum_{i\in\{1,2\}} \bigl( (K-(|c_i|-\varsigma)) + (K-(|c_i|+\varsigma)) \bigr)
        \\= 4K - 2(2N+2p-1) - 2(2N+2p-3)
        \\= 4K-2B
    \end{multline*}
    in total. Thus, a cut edge costs exactly $2Q$ more swaps than an uncut one, and for every aligned partition $(L,R)$, we obtain
    \begin{multline}\label{eq:separ:aligned-total}
        \sum_{v\in\voters} \min_{\succ\in\domain_{\{L,R\}}}\dist(v,\succ) = 
        \\
        = \alpha\cdot N^2 + |F|\cdot(4K-2B) - 2Q\cdot(|F|-t)
    \end{multline}
    where $t$ is the number of edges of $G$ cut by the bisection~$(W_1,W_2)$.

    \proofsubparagraph{$\mathcal{I}$ is a yes-instance $\Rightarrow$ $\mathcal{J}$ is a yes-instance.} Suppose $G$ admits a bisection $(P_1,P_2)$ cutting $t\leq k$ edges, and consider the aligned partition $(D_1\cup P_1,\ D_2\cup P_2)$; it is balanced because $|D_1|=|D_2|$ and $|P_1|=|P_2|$. By~\eqref{eq:separ:dist} and~\eqref{eq:separ:aligned-total}, we obtain
    \begin{multline*}
        \dist((\candidates,\voters),\Separ)
        \leq 
        \\
        \leq \alpha\cdot N^2 + |F|\cdot(4K-2B) - 2Q\cdot(|F|-t)
        \leq \budget\,.
    \end{multline*}

    \proofsubparagraph{$\mathcal{J}$ is a yes-instance $\Rightarrow$ $\mathcal{I}$ is a yes-instance.} Suppose $\dist((\candidates,\voters),\Separ)\leq\budget$, and let $(L,R)$ be a balanced partition attaining the minimum in~\eqref{eq:separ:dist}. First, assume for the sake of contradiction that $(L,R)$ is not aligned. Since every summand in~\eqref{eq:separ:dist} is non-negative and each of the $\alpha$ anchor pairs costs at least $N^2+2N+1$, we get
    \begin{align*}
        \dist((\candidates,\voters),\Separ)
        & \geq   
        \alpha\cdot(N^2+2N+1) \\
        & > \alpha\cdot N^2 + 4K\cdot|F|
        \geq \budget
    \end{align*}
    where the strict inequality holds because \[\alpha\cdot(2N+1) = |F|\cdot K\cdot(2N+1) \geq 5\cdot|F|\cdot K > 4K\cdot|F|\] by the choice of $\alpha$ and the assumption $N\geq2$, a contradiction. Therefore, $(L,R)$ is aligned, say with $D_1\subseteq L$, and $(W\cap L,\ W\cap R)$ is a bisection of $G$; let $t$ be the number of edges it cuts. By~\eqref{eq:separ:aligned-total}, we have
    \begin{multline*}
        \alpha\cdot N^2 + |F|\cdot(4K-2B) - 2Q\cdot(|F|-t)
        = 
        \\
        = \dist((\candidates,\voters),\Separ)
        \leq \budget
    \end{multline*}       
    which, by the definition of $\budget$, yields $t\leq k$. This means that  $(W\cap L, W\cap R)$ is a bisection of $G$ cutting at most $k$ edges, which completes the proof.
\end{proof}
}

Yet, for $\Strat[\nicefrac{\numCandidates}{2}]$, where
$\numCandidates$ is the number of candidates, we do obtain
para-$\NP$-completeness.
Here, as in  $\GSbal$,
there are pairs of adjacent candidates that we can freely
swap, but
we cannot make any other swaps. %

\begin{restatable}[\linkproof{thm:NE:swap:mhalfStrat:NPc}]{theorem}{thmNEswapmhalfStratNPc}
\label{thm:NE:swap:mhalfStrat:NPc}
  \swapNE{\Strat[\nicefrac{\numCandidates}{2}]} is \NPc, even for elections with
  $4$ voters. %
\end{restatable}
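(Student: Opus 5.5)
The plan is to reduce from \KemenyOA with $4$ voters, which is \NPh by \citet{dwo-kum-nao-siv:c:rank-aggregation}, and to preserve the number of voters. Membership in \NP is \Cref{thm:NPmembership}.

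First I would record what the target domain looks like. A vote lies in a $\Strat[\nicefrac{\numCandidates}{2}]$ domain exactly when its positions $2j-1,2j$ hold a fixed unordered pair $\{x_j,y_j\}$ for every $j$, in either internal order. So a target election is determined by two choices: a perfect matching of the candidates into pairs, and a linear order of these pairs common to all voters. Inside each pair, every voter keeps whichever order is cheaper for them.

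Construction: given a \KemenyOA instance $(\candidates,\voters,\budget)$ with $\voters=(v_1,\dots,v_4)$, replace each candidate $c$ by a block $c^1,\dots,c^{2q}$ of $2q$ consecutive clones. In every vote the block appears in the fixed order $c^1,\dots,c^{2q}$, placed where $c$ stood in $v_i$. Set the new budget to $(2q)^2\budget$ and choose $q$ polynomially large, say $q>\numCandidates^2$.

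Forward direction: given a Kemeny ranking $v^*$, pair $c^{2s-1}$ with $c^{2s}$ and order the blocks by $v^*$. Moving a whole block past another costs exactly $(2q)^2$ swaps, so the total is $(2q)^2\sum_i\dist(v_i,v^*)\le(2q)^2\budget$.

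Backward direction: take an optimal target, i.e. a common order of pairs. I would extract an order $v^*$ on $\candidates$, e.g. by the average pair index of each block, and show the following. Every pair $(a,b)$ of original candidates that the $\Strat$ target orders ``mostly'' against $v_i$ contributes close to $(2q)^2$ inversions in voter $i$, with an error of only $\Oh{q\,\numCandidates}$ per original pair. This error comes from pairs straddling two blocks and from interleaving of blocks. Summing, the cost is at least $(2q)^2\sum_i\dist(v_i,v^*)-\Oh{q\,\numCandidates^3}$. Since everything is integral, for $q$ large enough this forces $\sum_i\dist(v_i,v^*)\le\budget$.

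The main obstacle is making this rounding argument rigorous, because the target may interleave clones of different blocks or match clones across blocks. I would try an exchange argument first: show that some optimal target keeps every block contiguous and matched internally. Concretely, I would compare the total cost of each block $c$ when all of its pairs are moved to the median pair-position of $c$, in the spirit of the Borda-based swapping argument of \Cref{thm:NE:swap:twoStrat:poly}, where cost changes are linear in positions. If the exchange argument does not go through cleanly, I would instead fall back on the counting bound above, which needs only $q$ polynomial in $\numCandidates$.
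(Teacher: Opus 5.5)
\paragraph{Overall.} You take a different route from the paper. The paper reduces from Minimum Feedback Arc Set: it duplicates the digraph, realizes the subdivided duplicate as the majority digraph of a 4-voter election (Bachmeier et al.), and matches each vertex with its non-adjacent copy. The swap count is then exact, so no rounding is needed. Your clone blow-up of a 4-voter \KemenyOA instance with budget $(2q)^2\budget$ is fine as a construction. Your description of the domain and your forward direction are also correct. The gap is the backward direction, which carries all the weight; neither of your two sketches establishes it as stated.

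\paragraph{Why the sketches fail.} Your counting bound fails under interleaving. Suppose the target puts $q$ clones of $a$ before the whole block of $b$ and the other $q$ after it. Then half of the $(2q)^2$ cross pairs go each way, so every voter pays about $2q^2$ swaps on them, whatever its preference. Whichever way $v^*$ orders $a,b$, the per-pair error is therefore $\Theta(q^2)$, not $\Oh{q\numCandidates}$. So this fallback already assumes the near-contiguity of blocks that the exchange argument was meant to provide.

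The exchange argument has its own problems at the level of targets. Clones of $c$ may be matched with clones of other blocks, so moving ``the pairs of $c$'' is ill-defined and leaves their partners unmatched. Also, the median is the wrong destination. A clone's cost as a function of its gap among the other candidates is a non-unimodal walk, not the linear Borda-type cost used in the $\Strat[2]$ argument.

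\paragraph{The missing idea.} Drop the matching and round at the Kemeny level.
\begin{enumerate}
\item Fix a linear order $\sigma$ that refines the target's order of pairs. Each voter's closest vote agrees with the voter on matched pairs and with $\sigma$ on all other pairs. So the target's cost equals the number of (voter, unmatched pair) disagreements with $\sigma$.
\item This is at least the number of (voter, cross-block pair) disagreements, minus $4$ for each of the at most $q\numCandidates$ matched pairs.
\item Choose one representative $c^{s_c}$ from each block, independently and uniformly. Let $\tau$ be the order on $\candidates$ that these representatives inherit from $\sigma$.
\item By linearity of expectation, the cross-block disagreement count equals $(2q)^2$ times the expected value of $\sum_i\dist(v_i,\tau)$. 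This is at least $(2q)^2\kappa$, where $\kappa$ is the optimal Kemeny score of the original instance.
\item Hence a target of cost at most $(2q)^2\budget$ gives $(2q)^2\kappa-4q\numCandidates\le(2q)^2\budget$, i.e., $\kappa\le\budget+\numCandidates/q$. Taking $q>\numCandidates$ and using integrality yields $\kappa\le\budget$.
\end{enumerate}

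\paragraph{Comparison once patched.} Your reduction then becomes a valid alternative. It needs only the 4-voter hardness of \KemenyOA, which the paper already uses for $\GScat$ and $\Antag$, rather than the Bachmeier et al.\ realization. The cost is an $\Oh{\numCandidates^2}$-candidate blow-up and an approximate rather than exact accounting.
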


\prooftoappendix{thm:NE:swap:mhalfStrat:NPc}{\thmNEswapmhalfStratNPc*}{
    We reduce from the \probName{Minimum Feedback Arc Set (FAS)} problem, whose input is a directed graph~$D$ and an integer~$k \in \mathbb{N}$, and the task is to decide whether $D$ contains a set of~$k$ arcs whose deletion from~$D$ yields an acyclic directed graph. We first create the duplicate~$D'$, i.e, the disjoint union of two copies of~$D$. Next, using the method described by
    \citet{bac-bra-gei-har-kar-pet-see:j:weighted-majority-relation}, we reduce the instance $(D',2k)$ of \probName{FAS} to an equivalent instance $(D'',2k)$ in which the digraph~$D''$ is \emph{induced} by an election~$E$ with~$4$ voters over the vertices of~$D''$, meaning that the majority digraph of~$E$ is~$D''$; such an election~$E$ can be constructed in polynomial time. 
    Since $D''$ is obtained by subdividing each arc of~$D'$ exactly once, the vertices in~$D''$ can still be partitioned into two sets~$U=\{u_1,\ldots,u_\numCandidates\}$ and~$U'=\{u'_1,\ldots,u'_\numCandidates\}$ such that $D''$ is the disjoint union of the two isomorphic subgraphs of~$D''$ induced by~$U$ and~$U'$, with $u'_i$ being the copy of~$u_i$ for each $i \in [\numCandidates]$.  
    Moreover, the construction for~$E$ ensures the property that for each arc~$(v,w)$ present in~$D''$, exactly three voters prefer~$v$ to~$w$.
    We set the budget as $\budget = 2\binom{2\numCandidates}{2}-2\numCandidates-|\arcset|+4k$ where $\arcset$ is the arc set of~$D''$. 
    
    First, assume that $D''$ admits a minimal feedback arc set of size at most~$2k$. Clearly, such a solution must contain a set~$S$ of at most~$k$ arcs from~$D''[U]$ and a set~$S'$ of~$|S|$ arcs from~$D''[U']$, because these two digraphs are isomorphic---in fact, we may assume that each arc in~$S'$ is a copy of one in~$S$. Let $u_{i_1},\ldots,u_{i_\numCandidates}$ be a topological ordering of the graph obtained from~$D''[U]$ by deleting the arcs in~$S$,  
    and let $T_\numCandidates$ be a \typePQF-tree with~$\numCandidates$ leaves representing the \Strat[\numCandidates/2] domain.
    We are going to show that $E$ has distance at most~$\budget$ from $\domain(T_\numCandidates,\mappinglist{u_{i_1},u'_{i_1},\ldots,u_{i_\numCandidates},u'_{i_\numCandidates}})$.
    
    Note that for each $i \in [\numCandidates]$, candidates~$u_i$ and~$u'_i$ need not be swapped in any of the votes, because they correspond to the two children of a \typeP-node in~$T_\numCandidates$. 
    By contrast, for each arc~$f=(u_i,u_j) \in \arcset \cap (U \times U)$, we perform the following steps: 
    If $f \notin S$, then we know that $u_i$ precedes~$u_j$ in the topological ordering, so we need to swap $u_i$ and~$u_j$ in the unique vote that ranks~$u_j$ before~$u_i$.
    This amounts to $\frac{|\arcset|}{2} - |S|$ swaps.
    If $f \in S$, then by the minimality of~$S$, we know that $u_j$ must precede~$u_i$ in the topological ordering, so we need to swap these candidates in all three votes that rank~$u_i$ before~$u_j$, yielding $3\cdot |S|$ swaps.
    We need the same number of swaps for arcs in~$\arcset \cap (U' \times U')$.
    Finally, for each pair of vertices in~$D''$ that are not connected by an arc but are not formed by a vertex in~$U$ and its copy, we need to swap the corresponding candidates in exactly two votes.
    Thus, the total number of swaps used is 
    \begin{multline*}
        2 \left(\frac{|\arcset|}{2}-|S|  +3\cdot|S| \right) +
        2\left(\binom{2\numCandidates}{2} -\numCandidates -|\arcset|\right) \\
        = 2 \binom{2\numCandidates}{2}
        -2\numCandidates+4|S|-|\arcset|\leq \budget
    \end{multline*}
        \\
    due to $|S| \leq k$. 
    This proves that the constructed election~$E$ (whose majority digraph is~$D''$) has swap distance at most~$\budget$ from $\domain(T_\numCandidates,\mappinglist{u_{i_1},u'_{i_1},\dots,u_{i_\numCandidates},u'_{i_\numCandidates}})$.
    
    Assume now that $E$ has swap distance at most~$\budget$ from some election~$\widetilde{E}$ in the \Strat[\numCandidates/2] domain. Then $\widetilde{E}$ is in~$\domain(T_\numCandidates,\mappinglist{v_1,\dots,v_{2\numCandidates}})$ for some ordering $\sigma=v_1,\dots,v_{2\numCandidates}$ of the vertex set~$U \cup U'$ of~$D''$; note that for each vertex in~$U \cup U'$ we use an alias $v_j$ that encodes its position in~$\sigma$. 
    We may assume that for each $i \in [\numCandidates]$ we have $(v_{2i},v_{2i-1}) \notin \arcset$, as otherwise we can simply switch these two candidates in the ordering~$\sigma$ without changing the domain. Let $\arcset^\star$ denote the set of arcs in~$\arcset$ of the form~$(v_{2i-1},v_{2i})$ for some~$i \in [\numCandidates]$. Let $S$ contains those arcs~$(v_i,v_j)$ where $v_j$ precedes~$v_i$ in~$\sigma$ (i.e., the ``backward'' arcs for~$\sigma$). Note that $S \cap \arcset^\star=\emptyset$, and all arcs in~$\arcset \setminus S$ point from some~$v_j$ to~$v_{j'}$ with $j<j'$ and thus form an acyclic digraph. Thus, $S$ is a feedback arc set for~$D''$. We will show that $|S| \leq 2k$.  
    
    To see this, we compute the number of swaps necessary to turn~$E$ into an election in~$\domain(T_\numCandidates,\mappinglist{v_1,\dots,v_{2\numCandidates}})$ as follows.
    First, for each arc in~$S$, its endpoints must be swapped in three votes, yielding $3\cdot |S|$ swaps. For each arc in~$\arcset \setminus \arcset^\star \setminus S$, we need to swap the endpoint candidates in exactly one vote. 
    Finally, each pair of vertices in~$D''$ that are not connected by an arc and are not of the form~$\{v_{2i-1},v_{2i}\} $ for some~$i \in [\numCandidates]$
    must be swapped in exactly two votes; the number of such candidate pairs is $\binom{2m}{2}-|\arcset|-(\numCandidates-|\arcset^\star|)$, because there are $\binom{2m}{2}-|\arcset|$ non-arcs in~$D''$ and exactly $\numCandidates-|\arcset^\star|$ of them are of the form~$\{v_{2i-1},v_{2i}\}$. Thus, the number of swaps necessary is
    \begin{align*}
        3\cdot &  |S| +|\arcset \setminus \arcset^\star \setminus S| + 
        2 \left( \binom{2\numCandidates}{2}-|\arcset|-\numCandidates+|\arcset^\star| \right) \\
        &= 2\cdot |S| -|\arcset| +
        2\binom{2\numCandidates}{2}
        + |\arcset^\star|-2\numCandidates \\ 
        & = \budget -4k+2 \cdot |S|+|\arcset^\star| \leq \budget
    \end{align*}
    which implies $|S| \leq 2k - \frac{|\arcset^\star|}{2} \leq 2k$, as required. Hence, $S$ is a solution to our instance $(D'',2k)$ of \probName{FAS},
    proving the correctness of the reduction.
}

\paragraph{Antagonism.}
Each domain in the $\Antag$ family consists of two votes, each being a
reverse of the other one.  Hence, $\swapNE{\Antag}$ is very similar to
\KemenyOA and we easily inherit its hardness.

\begin{restatable}[\linkproof{thm:NE:swap:antag:NPc}]{theorem}{thmNEswapantagNPc}
\label{thm:NE:swap:antag:NPc}
    \swapNE{\Antag} is \NPc, even for elections with
  $4$ voters. %
\end{restatable}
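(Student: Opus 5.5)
Membership in \NP follows from \Cref{thm:NPmembership}, so the work is hardness. I will reduce from \KemenyOA with $4$ voters, which is \NPh by \citet{dwo-kum-nao-siv:c:rank-aggregation}. Given $(\candidates,\voters,\budget)$ with $\voters=(\vote_1,\dots,\vote_4)$, the idea is to force every vote in the closest antagonism election to be the \emph{same} order, so that the distance equals the Kemeny score. The only freedom in $\Antag$ is that each vote may choose between $u^*$ and its reverse. To kill that freedom I will pad with blocks of fresh candidates whose orientation is fixed, making reversal prohibitively expensive.

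\textbf{Construction.} Add a set $F=\{f_1,\dots,f_{\budget+1}\}$ of new candidates. Let each new vote be $\vote_i'=(\ora{F},\vote_i)$, with the same block $\ora{F}$ placed on top in all four votes. The budget stays $\budget$, and the number of voters stays $4$. (If some slack is needed, the block can be made longer.)

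\textbf{Direction ``$\Leftarrow$''.} If $u^*$ achieves $\sum_i\dist(\vote_i,u^*)\le\budget$, take the domain $\{(\ora{F},u^*),\ \text{its reverse}\}$. Map every vote to $(\ora{F},u^*)$. The cost is exactly $\sum_i \dist(\vote_i,u^*)\le\budget$, since no pair involving $F$ is inverted.

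\textbf{Direction ``$\Rightarrow$''.} Suppose there is an antagonism domain $\{w,\overline{w}\}$ with total distance at most $\budget$. Each $\vote_i'$ is matched to $w$ or to $\overline{w}$. The key step is to show that, without loss of generality, all four are matched to the same order, and that this order has $\ora{F}$ on top.

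First, $F$ contains $\binom{\budget+1}{2}\ge\budget+1$ internal pairs once $\budget\ge 1$; the case $\budget=0$ is trivial to handle separately. The four votes all agree on these pairs. If two votes were matched to opposite members $w$ and $\overline{w}$, then every one of these pairs would be inverted in one of the two votes, costing more than $\budget$. Hence all votes are matched to a single order $w$. Restricting $w$ to $\candidates$ and using that restriction does not increase swap distance gives $u^*=\restr{w}{\candidates}$ with $\sum_i\dist(\vote_i,u^*)\le\budget$.

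The main obstacle is this ``all votes pick the same orientation'' argument, including the degenerate small-budget case. If $\binom{\budget+1}{2}$ is ever insufficient, I would enlarge $F$ to $2\budget+2$ candidates, so that even a single opposite assignment forces at least $\binom{|F|}{2}>\budget$ inversions.
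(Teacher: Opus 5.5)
Your reduction is essentially the paper's: you pad every vote of the \KemenyOA instance with fresh candidates in a fixed order, show that this forces all modified votes onto the same member of the antagonism domain, and then restrict to $\candidates$. The paper appends $2\budget+2$ padding candidates at the bottom and argues that the last one stays in the bottom half of every modified vote; you instead count the $\binom{|F|}{2}$ padding pairs that mixed orientations would have to invert.

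Commit to your fallback $|F|=2\budget+2$. With $|F|=\budget+1$, the bound $\binom{\budget+1}{2}\ge\budget+1$ fails at $\budget=1$, where $\binom{2}{2}=1$ does not exceed $\budget=1$. With $|F|=2\budget+2$, $\binom{2\budget+2}{2}=(\budget+1)(2\budget+1)>\budget$ holds for every $\budget\ge 0$, which also removes the need to treat $\budget=0$ separately.
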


\prooftoappendix{thm:NE:swap:antag:NPc}{\thmNEswapantagNPc*}{
    We show \NPhness by reduction from the \KemenyOA problem. Let the input instance of \KemenyOA consist of an election~$E=(C,V)$  with $C = \{c_1, c_2,\ldots, c_m\}$ and $V = (\vote_1, \vote_2, \ldots, \vote_n)$, and a budget~$\budget$. We create a new election $E'=(C', V')$ over $C' = C \cup \{c_1', c_2',\ldots,c_{2\budget+2}'\}$ by simply appending the $2\budget+2$ new candidates in a fixed order at the end of each vote,  i.e., for each $\vote_i \in V$ we create a corresponding vote $\vote_i'$ by 
    \begin{equation*}
        \vote_i' = (\vote_i, c_1', c_2',\ldots,c_{2\budget+2}' ),
    \end{equation*}
    and set $V' = (\vote_1',\ldots, \vote_n')$.

    We claim that $E'$ has distance at most~$\budget$ from some $\Antag$ domain if and only if $(E,\budget)$ is a yes-instance of  \KemenyOA.

    \proofsubparagraph{Direction ``$\Rightarrow$''.} Assume $(E,\budget)$ is a yes-instance of the \KemenyOA problem. Then there is some common ordering $\vote^* \in \mathcal{L}(C)$ which has total swap distance at most~$\budget$ from all votes in $V$. In consequence, $E'$ is swap distance at most~$\budget$ from the $\Antag$ domain that contains the vote
    \begin{equation*}
        (\vote^*,c_1', c_2',\ldots,c_{2\budget+2}' )
    \end{equation*}
    and its reverse.
    
    \proofsubparagraph{Direction ``$\Leftarrow$''.} Suppose that $E'$ has swap distance at most~$\budget$ from some election~$E''$ in the $\Antag$ domain. Since each vote of~$E''$ can be obtained from some vote in~$V'$ by at most~$\budget$ swaps, candidate~$c'_{2\budget+2}$ appears in the bottom half of each vote of~$E''$. Hence, $E''$ cannot contain both some vote and its reverse, i.e., each vote in~$E''$ is a copy of the same vote~$u$. Considering only swaps among candidates in~$C$, this means that we can transform each vote in~$V$ into~$\restr{u}{C}$ using a total of at most~$\budget$ swaps. Thus, $(E,\budget)$ is a yes-instance of \KemenyOA.

    As \KemenyOA is \NPh for 4 voters, the result for $\numVoters=4$ follows.
}

\paragraph{GS.}
Unfortunately, we were not able to establish the exact complexity of $\swapNE{\GS}$. On the one hand, we face similar difficulties as for $\GSbal$ and parameterization by the number of voters, but even more importantly, all our previous domains were defined via trees of a single ``shape,'' whereas here we have many different possibilities.

\toappendix{
\subsection{\NPcness of \swapNE{\SCros}}

    In their paper, \citet{lak-pet-elk:c:nearly-single-crossing} claim that their reduction \cite[Theorem~1]{lak-pet-elk:c:nearly-single-crossing} proving the $\NP$-hardness of computing the swap distance to a single-crossing election \emph{with a fixed ordering of voters} ``can easily be extended to the setting where voters can be permuted.'' However, they left the claim without a formal proof, and we do not find the extension entirely trivial. Therefore, we formally prove that it is indeed the case by our own proof, building on the reduction of \citet[Theorem~1]{lak-pet-elk:c:nearly-single-crossing}.

\begin{proposition}\label{thm:NE:swap:singlecrossing:NPh}
    \swapNE{\SCros} is \NPc.
\end{proposition}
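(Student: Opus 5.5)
Membership in \NP is immediate. A certificate is the modified election $E'$ together with an ordering of its votes. We check that the total swap distance is at most $\budget$, and that for every pair $a,b$ the votes preferring $a$ to $b$ form a prefix or a suffix of the ordering. For hardness I would reduce from the fixed-order variant shown \NPh by \citet[Theorem~1]{lak-pet-elk:c:nearly-single-crossing}: given an election with a prescribed voter order $v_1,\dots,v_\numVoters$ and a budget $\budget$, decide whether at most $\budget$ swaps make it single-crossing with respect to that order. The plan is to augment the instance so that any cheap single-crossing modification is forced to order the original voters essentially as prescribed (up to reversal, which is harmless because single-crossingness is symmetric under reversing the order).

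To pin the order, I would add a block $G=\{g_1,\dots,g_{\numVoters+1}\}$ of fresh candidates placed below all original candidates in every vote. In vote $v_i$ I would arrange $G$ as $g_{i+1},\dots,g_{\numVoters+1},g_1,\dots,g_i$, or a similar staircase pattern. The pattern has two intended properties. First, it is itself single-crossing exactly along $v_1,\dots,v_\numVoters$: as $i$ increases, the pair $(g_i,g_j)$ flips only once. Second, consecutive voters differ on $G$ in a way that any other ordering would create a pair $(g_j,g_k)$ that crosses at least twice. To make these constraints robust against swaps, I would replicate the gadget, i.e.\ use $\budget+1$ disjoint copies of $G$, each a separate block, so that destroying the staircase would require more than $\budget$ swaps. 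For the same reason, placing all of $G$ below the original candidates costs nothing only if no swap crosses the boundary; any swap that does cross it is wasted and can be assumed not to occur.

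The correctness argument then runs in two directions. For ``$\Rightarrow$'': apply the solution of the fixed-order instance on the original candidates and leave the gadget untouched. The result is single-crossing in the given order, because every pair inside a gadget copy, every pair inside the original candidates, and every mixed pair (which is constant across voters) crosses at most once. For ``$\Leftarrow$'': at most $\budget$ swaps leave at least one gadget copy intact. By the second property, that intact copy forces the witnessing order to be $v_1,\dots,v_\numVoters$ or its reverse. Restricting to the original candidates, which cannot increase swap distance and preserves single-crossingness, gives a solution of the fixed-order instance.

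The main obstacle is designing the staircase so that the second property really holds: only the identity order and its reverse make all gadget pairs single-crossing. This is delicate when identical votes appear, or when the instance of \citet{lak-pet-elk:c:nearly-single-crossing} itself relies on repeated voters. My first attempt would be a cyclic-shift pattern. I would verify it by checking that, for each $i$, the pairs $(g_i,g_{i+1})$ flip exactly between voters $v_i$ and $v_{i+1}$, and then argue that a permutation of voters that is not monotone yields some pair flipping twice. If cyclic shifts fail at the wraparound, I would switch to a ``moving candidate'' gadget in which voter $v_i$ has a marker inserted at depth $i$ in an otherwise fixed order, which gives a single-crossing profile with a unique order up to reversal.
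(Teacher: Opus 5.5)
Your overall plan is the paper's. You reduce from the fixed-order problem of \citet{lak-pet-elk:c:nearly-single-crossing}, append dummies below $C$ that pin the voter ordering up to reversal, make them robust against $d$ swaps, and restrict back to $C$ in the backward direction. However, the gadget you name first does not work, and it fails not only at the wraparound. In $v_i$ the block reads $g_{i+1},\dots,g_{n+1},g_1,\dots,g_i$, so for $a<b$ the candidate $g_b$ precedes $g_a$ exactly when $a\le i<b$. The voters ranking $g_b$ above $g_a$ are therefore $v_a,\dots,v_{b-1}$, which is neither a prefix nor a suffix once $1<a$ and $b\le n$. In particular, $(g_i,g_{i+1})$ is inverted only in $v_i$, so it flips twice along $v_1,\dots,v_n$, not once as your check presumes. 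Hence your ``first property'' is false. Worse, for $n\ge 3$ the singletons $\{v_1\},\dots,\{v_n\}$ coming from the pairs $(g_i,g_{i+1})$ would all have to sit at an end of the ordering. So the gadget is single-crossing in no ordering, and with $d+1$ copies the constructed instance is always a no-instance.

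Your fallback is what you need, and with it the proof goes through. Take $h_1,\dots,h_{n-1}$ in a fixed order and put a marker $x$ at position $i$ in $v_i$. Then $x\succ h_j$ holds exactly in $v_1,\dots,v_j$, so every gadget pair crosses at most once along the prescribed order. The prefixes $\{v_1,\dots,v_j\}$ for $j\in[n-1]$ force any witnessing ordering to be the given one or its reverse. This also makes all augmented votes distinct, so your worry about repeated voters disappears.

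This is exactly the paper's mechanism. The paper appends blocks $A_1,\dots,A_{n-1}$, with $A_j$ in forward order in $\hat v_1,\dots,\hat v_j$ and reversed in $\hat v_{j+1},\dots,\hat v_n$, and uses only the extreme pair of each block. The only difference is how robustness is obtained:
\begin{itemize}
\item The paper gives each block $d+2$ candidates, so inverting its extreme pair alone costs more than $d$ swaps, and every pinning pair survives.
\item You use $d+1$ disjoint copies and the fact that a solution inverts at most $d$ (voter, pair) incidences. Some copy then has no internal pair inverted in any vote.
\end{itemize}
Both arguments are sound and of comparable size. Both need $d$ to be polynomially bounded, which can be assumed since $d\ge n\binom{m}{2}$ gives a trivial instance. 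Your remark that boundary-crossing swaps can be assumed away is unnecessary, because restricting to $C$ already handles them.
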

\begin{proof}
    We reduce from the \NPc problem called \probName{Global Swaps} in \cite{lak-pet-elk:c:nearly-single-crossing}. The input of this problem is an election~$E=(\candidates,\voters)$ with the votes in~$\voters$ ordered as $\vote_1,\dots,\vote_\numVoters$, and an integer~$\budget$. Its task is to decide whether there exists an election $(\candidates,(\vote_1',\dots,\vote_\numCandidates'))$ with $\sum_{i \in [\numVoters]} \dist(\vote_i,\vote'_i) \leq \budget$ that is single-crossing \emph{with respect to the given ordering of its votes}. We construct an equivalent instance of \swapNE{\SCros} as follows.

    For each $i \in [\numVoters-1]$, we add a set~$A_i$ of $\budget+2$ dummy candidates, and we fix an ordering over them. Next, for each $i \in [\numVoters]$, we create a vote $\hat\vote_i$ by appending these dummies at the end of the vote~$\vote_i$ as below:
    \begin{align*}
    \hat\vote_1 &: \vote_1, \ora{A_1},\dots,\ora{A_{\numVoters-1}};     
    \\
    \hat\vote_i &: \vote_i, \ola{A_1}, \dots,\ola{A_{i-1}}, \ora{A_i},\dots,\ora{A_{\numVoters-1}} \qquad \text{ for $1 <i<\numCandidates$};  
    \\
    \hat\vote_{\numVoters} &: \vote_\numVoters, \ola{A_1}, \dots,\ola{A_{\numVoters-1}}. 
    \end{align*}
    Let $\hat{E}=(\candidates \cup A_1 \cup \dots \cup A_{\numVoters-1}),\hat\voters)$ be the obtained election where $\hat\voters=(\hat\vote_1,\dots,\hat\vote_{\numVoters})$.
    We finish the construction by setting the budget as $\budget$.

    First, if $(E,\budget)$ is a yes-instance of \probName{Global Swaps}, then it is clear that $(\hat{E},\budget)$ is a yes-instance of \swapNE{\SCros}: applying the same swaps that turn $E$ into a single-crossing election with respect to the given ordering of the votes trivially turns $\hat{E}$ as well into a single-crossing election. 

    For the other direction, assume that there is an election~$E'$ within swap distance at most~$\budget$ from~$\hat{E}$ that is single-crossing with respect to \emph{some} ordering~$\sigma$ of the votes. For each $i \in [\numVoters]$, let $\hat\vote'_i$ denote the vote in~$E'$ obtained from~$\hat\vote_i$.
    Since $\hat\vote'_i$ can be obtained from  $\hat\vote_i \in \hat\voters$ using at most~$\budget$  swaps, the first and last dummies in~$\ora{A_j}$, say $a_j^1$ and $a_j^{\budget+2}$, for each $j \in [\numVoters-1]$ must retain their ranking as determined by~$\hat\vote_i$, because swapping them in~$\hat\vote_i$ would require at least~$\budget+1$ swaps.
    This means that for each $j \in [\numVoters-1]$, either all votes in $\{\hat\vote'_1,\dots,\hat\vote'_{j-1}\}$ precede all votes in $\{\hat\vote'_j,\dots,\hat\vote'_{\numVoters}\}$ in $\sigma$ or vice versa, as otherwise $a_j^1$ and $a_j^{\budget+2}$ would cross at least twice in the ordering~$\sigma$ of the votes. Therefore, $\sigma$ must order all votes of~$E'$ either in the order $\hat\vote'_1,\dots,\hat\vote'_\numVoters$ or 
    in the order $\hat\vote'_\numVoters, \dots, \hat\vote'_1$. Since the single-crossing domain with respect to these two orderings coincide, we obtain that $E'$ is, in fact, single-crossing with respect to the ordering $\hat\vote'_1,\dots,\hat\vote'_\numVoters$ of the voters. Thus, restricting~$E'$ to~$C$ yields an election within swap distance at most~$\budget$ from~$E$ that is single-crossing with respect to the prescribed ordering of the votes, showing that $(E,\budget)$ is a yes-instance of \probName{Global Swaps}.
\end{proof}
}

\section{Algorithmic Results}
\label{sec:algorithms}

One of the most practical ways of solving $\NP$-complete problems
is by reducing them to integer linear programming (ILP).  Doing so is,
indeed, possible in our case but, somewhat surprisingly, in
\Cref{sec:experiments} we will see that our ILPs are fairly slow to
solve by modern solvers.

\begin{theorem}
  Given an election with $\numCandidates$ candidates and~$\numVoters$
  voters, there is a polynomial-time reduction of \swapNE{\mathcal{X}}, $\mathcal{X}\in\{\GScat,\GSbal\}$, to \probName{Integer Linear Programming} with $\mathcal{O}(nm^2)$ variables and $\mathcal{O}(m^3 + nm^2)$ constraints if $\mathcal{X}=\GScat$, and $\mathcal{O}(nm^2 \log m)$ variables and $\mathcal{O}((n + \log m ) \cdot m^2 \log m)$ constraints if $\mathcal{X} = \GSbal$.%
\end{theorem}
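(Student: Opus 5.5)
Both formulations share one skeleton. For each voter $v$ and each ordered pair of candidates $(a,b)$ we introduce a binary variable $y^v_{ab}$, meaning that $a$ precedes $b$ in the modified vote $v'$. We impose $y^v_{ab}+y^v_{ba}=1$ and the transitivity constraints $y^v_{ab}+y^v_{bc}-y^v_{ac}\le 1$, so that $y^v$ encodes a linear order. The swap distance is then the linear expression $\sum_{a\succ_v b} y^v_{ba}$, and we require that its sum over all voters is at most $\budget$. We then add variables describing the tree and the mapping $\mapping$, and constraints forcing every $v'$ to be consistent with them. Transitivity costs $\Oh{nm^3}$ constraints, which exceeds the claimed bound. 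I would therefore avoid it: each vote will be derived from the shared structure, so that transitivity is enforced only once for a global order and inherited by the votes.

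\textbf{GS/cat.} I use the recursive characterization from the proof of \Cref{lem:GS-cat_fixed}: a vote lies in $\domain(\Tcat,\mapping)$ exactly when, for every $i$, candidate $\mapping_i$ sits at the top or the bottom of the vote restricted to $\{\mapping_i,\dots,\mapping_m\}$. I encode $\mapping$ by global binary variables $x_{ab}$ (``$a$ is placed before $b$ in $\mapping$''), with antisymmetry and $\Oh{m^3}$ transitivity constraints. For each voter and candidate I add a binary variable $t^v_a$ saying whether $a$ is placed top ($t^v_a=1$) or bottom ($t^v_a=0$) at its step. The consistent vote is then determined: for $x_{ab}=1$ we need $y^v_{ab}=t^v_a$. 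This is a product of binaries, which I linearize with $\Oh{1}$ constraints per triple $(v,a,b)$, for example $y^v_{ab}\ge x_{ab}+t^v_a-1$ and $y^v_{ab}\le 1-x_{ab}+t^v_a$, together with the symmetric constraints for $x_{ba}$. Since the order is forced by $\mapping$ and $t^v$, it is automatically linear, so no per-voter transitivity is needed. This gives $\Oh{nm^2}$ variables and $\Oh{m^3+nm^2}$ constraints.

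\textbf{GS/bal.} A perfect binary tree has $\log m$ levels. The mapping amounts to a global order of the leaves, encoded again by $x_{ab}$. For every pair $(a,b)$ I introduce indicators $z_{ab}^{\ell}$ for ``$a$ and $b$ lie in the same subtree at depth $\ell$''. Equivalently, the pair's lowest common ancestor lies at depth below $\ell$, which splits the leaf positions in blocks of size $m/2^\ell$. The $Q$-node choices become per-voter flip bits, and I attach them to pairs and levels, $f^v_{ab,\ell}$, with $\Oh{nm^2\log m}$ variables in total. The flip bits must agree for all pairs that share the same lowest common ancestor node. I enforce this with coupling constraints of the form $f^v_{ab,\ell}=f^v_{ac,\ell}$ whenever $a,b,c$ share a block at level $\ell$ and the pairs are split at $\ell$. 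Doing this only between pairs sharing one endpoint gives $\Oh{m^2\log m}$ constraints per voter. The order $y^v_{ab}$ equals $x_{ab}$ XOR the flip bit at their lowest common ancestor level. This XOR is linearizable and yields the $(n+\log m)\,m^2\log m$ constraint count, where the $\log m\cdot m^2\log m$ part comes from the level-consistency constraints on the $z$ variables.

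The main obstacle is to make the GS/bal structure genuinely tree-shaped. Three things have to be captured by linear constraints within the budget: that the $z^\ell$ variables induce a nested partition into equal blocks of size $m/2^\ell$ that are contiguous in $\mapping$, that each pair has its lowest common ancestor at exactly one level, and that flip bits are shared per node. I would try to derive the blocks from integer position variables $p_a=1+\sum_b x_{ba}$, which lie in $\{1,\dots,m\}$, and binary digit variables for the levels. I would then let $z_{ab}^\ell$ be an equality indicator of the top $\ell$ bits of $p_a$ and $p_b$, using $\Oh{\log m}$ linearized constraints per pair and level.
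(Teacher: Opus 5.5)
Your $\GScat$ formulation is essentially the paper's: order variables for $\mapping$ with $\mathcal{O}(m^3)$ transitivity constraints, a top/bottom bit per voter and candidate, and vote variables forced by a linearized conditional equality. It meets the stated bounds.

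The $\GSbal$ part has a genuine gap, and it sits exactly where the size bound is decided: the flip-consistency constraints.
\begin{itemize}
\item Your flip bits live on pairs. Consistency at a node $u$ therefore requires every pair in $L_u\times R_u$ to carry the same bit.
\item The tree is itself a variable, so each coupling $f^v_{ab,\ell}=f^v_{ac,\ell}$ must be written conditionally for every triple $a,b,c$ and every level. Pairs of pairs sharing an endpoint number $\Theta(m^3)$, not $\mathcal{O}(m^2)$. This gives $\Theta(nm^3\log m)$ constraints.
\item A cleverer fixed choice of couplings does not rescue this. Take the nodes with four leaves. Each of the $\Theta(m^4)$ choices of a leaf set $\{a,a',b,b'\}$ with a balanced bipartition needs at least three couplings to connect its four cross pairs. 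Any single coupling is relevant to only $\mathcal{O}(m)$ of these choices. Hence $\Omega(m^3)$ couplings are needed per voter.
\item A secondary issue: reusing $x_{ab}$ with its $\mathcal{O}(m^3)$ transitivity constraints already exceeds $\mathcal{O}((n+\log m)m^2\log m)$, for example when the number of voters is constant.
\end{itemize}

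The missing idea is to attach flips to candidates rather than pairs. The paper uses $F^\ell_{va}$, the flip of voter $v$ at the level-$\ell$ ancestor of $a$. It imposes $F^\ell_{va}=F^\ell_{vb}$ whenever $a$ and $b$ agree on their first $\ell$ address bits (your $z^\ell_{ab}$, the paper's $S^{<\ell}_{ab}$). This costs only $\mathcal{O}(m^2\log m)$ constraints per voter and makes per-node consistency automatic.

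The global order then becomes unnecessary:
\begin{itemize}
\item The mapping is given by address bits $T^\ell_a$, with distinctness $\sum_\ell |T^\ell_a-T^\ell_b|\ge 1$.
\item For a pair whose lowest common ancestor lies on level $\ell$ (indicator $A^\ell_{ab}$, with $\sum_\ell A^\ell_{ab}=1$), the vote ranks $a$ above $b$ iff $T^\ell_a=F^\ell_{va}$.
\item This is linearized via $\alpha^\ell_{va}=|T^\ell_a-F^\ell_{va}|$ and the products $A^\ell_{ab}(1-\alpha^\ell_{va})$.
\end{itemize}
With these changes your sketch becomes the paper's formulation and achieves the claimed $\mathcal{O}(nm^2\log m)$ variables and $\mathcal{O}((n+\log m)m^2\log m)$ constraints.
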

Consequently, we seek  direct, combinatorial $\FPT$ algorithms, 
parameterized either by the number of candidates or
the distance from the given domain.

\subsection{Parameter: Number of Candidates}
\label{sec:param_cands}
\appendixsection{sec:param_cands}
\toappendix{
In this section, we present all omitted proofs for our algorithmic results that consider the number of candidates as a parameter.  
}
In case of parameterization by the number $\numCandidates$ of
candidates, the baseline idea is to perform a brute-force search over
all trees and mappings. This yields $\Ohstar{2^{m\log m}}$ running
time.

\begin{restatable}[\linkproof{thm:NE:swap:any:FPT:numCandidates}]{theorem}{thmGenericCandidates}
  \label{thm:NE:swap:any:FPT:numCandidates}
  For each of our families of domains defined by $\typePQF$-trees, there is an $\FPT$ algorithm for \swapNE{\calX} when parameterized by the number of candidates.%
\end{restatable}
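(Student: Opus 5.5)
The plan is to enumerate every admissible pair $(T,\mapping)$, compute $\dist(E,\domain(T,\mapping))$ for each, and answer yes if and only if some value is at most $\budget$. It then suffices to show that (i) the number of candidate pairs is bounded by a function of $\numCandidates$ alone, and (ii) each distance can be computed in time $f(\numCandidates)\cdot\poly(\numCandidates,\numVoters)$.

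For (i), every family $\calX$ we consider ($\GS$, $\GSbal$, $\GScat$, $\Ident$, $\Antag$, $\Separ$, $\Strat$) is defined by $\typePQF$-trees with exactly $\numCandidates$ leaves in which every internal node has at least two children. Such a tree has at most $\numCandidates-1$ internal nodes. The number of ordered rooted trees with $\numCandidates$ leaves and no unary nodes is bounded by a Catalan-type number, which is at most $2^{\Oh{\numCandidates}}$, and there are at most $3^{\numCandidates-1}$ ways to assign node types. The number of mappings is $\numCandidates!=2^{\Oh{\numCandidates\log\numCandidates}}$. For each candidate tree we check in $\Oh{\numCandidates}$ time whether its shape and node types match the definition of $\calX$, exactly as in the proof of \Cref{thm:NPmembership}, and we discard it if they do not. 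In total there are $2^{\Oh{\numCandidates\log\numCandidates}}$ pairs.

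For (ii), the distance decomposes over voters: $\dist(E,\domain(T,\mapping))=\sum_{v\in\voters}\min_{u\in\domain(T,\mapping)}\dist(u,v)$. The domain $\domain(T,\mapping)$ has at most $\numCandidates!$ elements. We can list them by recursively choosing an allowed child permutation at each node: any permutation at a $\typeP$-node, the identity or its reversal at a $\typeQ$-node, and only the identity at an $\typeF$-node. For each vote $v$ we compute $\dist(u,v)$ by counting inversions, in $\poly(\numCandidates)$ time per $u$, and take the minimum. This costs $\numCandidates!\cdot\poly(\numCandidates)\cdot\numVoters$ per pair. Alternatively, it is cheaper to precompute the minimum once per distinct vote over all $\numCandidates!$ linear orders, since at most $\numCandidates!$ distinct votes can occur.

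Combining (i) and (ii) gives a running time of $2^{\Oh{\numCandidates\log\numCandidates}}\cdot\poly(\numVoters)$, which is $\FPT$. Correctness follows immediately, because the problem asks exactly whether some admissible $(T,\mapping)$ has distance at most $\budget$. I expect the only step needing care to be the bound in (i) on the number of ordered tree shapes. I would settle it by noting that each such tree is determined by a bracketing of the leaf sequence, of which there are at most $4^{\numCandidates}$. Alternatively, the count $m^{\Oh{m}}$ can be cited crudely, since it is dominated by $\numCandidates!$ anyway.
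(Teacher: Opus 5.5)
Your proposal is correct and follows essentially the same route as the paper's proof. Both enumerate the $2^{\mathcal{O}(m)}$ admissible trees and all $m!$ mappings, generate each domain explicitly, and pick the closest domain vote for every voter, for $2^{\mathcal{O}(m\log m)}\cdot n$ time in total. One minor slip: ordered trees with $m$ leaves and no unary nodes are counted by the little Schr\"oder numbers, which grow like $(3+2\sqrt{2})^m$, so your ``at most $4^m$'' bracketing bound is false for large $m$. This does not affect the result, since any $2^{\mathcal{O}(m)}$ bound suffices (e.g., $16^m$ from a balanced-parenthesis encoding of the at most $2m-1$ nodes), as does your fallback $m^{\mathcal{O}(m)}$.
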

\prooftoappendix{thm:NE:swap:any:FPT:numCandidates}{\thmGenericCandidates*}{
  Let $\calX$ be a family of domains over a set~$\candidates$ of~$\numCandidates$ candidates that can be represented by a family $\mathcal{T}$ of \typePQF-trees.  
  The number of \typePQF trees in such a family~$\mathcal{T}$ is at most $c^\Oh{\numCandidates}$ for some
  constant $c$. Therefore, we can enumerate all trees~$T\in\mathcal{T}$ and,
  for each such tree~$T$, try all possible bijections~$\mapping$ from~$\candidates$ to the leaves of~$T$. Then, in
  $\Oh{\numCandidates!}$ time, we compute all votes in~$\calD(T,\mapping)$, and for each vote in the input election~$E$ we find the closest vote in~$\calD(T,\mapping)$ in $|\calD(T,\mapping)| \cdot \Oh{\numCandidates^2  \numVoters}$
  time. This enables us to compute $\dist(E,\calD(T,\mapping))$. There are
  $\numCandidates! = 2^\Oh{\numCandidates\cdot\log\numCandidates}$
  possible bijections~$\mapping$ for every $T$, so the overall running
  time of the algorithm is
  \[
    c^\Oh{\numCandidates} \cdot \Oh{\numCandidates!} \cdot
    \Oh{\numCandidates!} \cdot \Oh{\numCandidates^2 \numVoters} =
    2^\Oh{\numCandidates\log\numCandidates}\cdot \numVoters,
  \]
  which is clearly in \FPT. The algorithm is correct, as we
  try all maximal elections in any domain over~$C$ belonging to~$\calX$.
}

Using dynamic programming, we obtain substantially faster
algorithms. In particular, for $\GS$, the underlying idea is to build
the $\typePQF$ tree of the closest $\calD(T,\mapping)$ domain
bottom-up, at each node counting only the swaps required to ensure
that the voters rank candidates from one subtree ahead (or behind)
those from the other subtree. Consequently, for each node and each
candidate, we need to know if this candidate appears in this node and,
if so, if it belongs to the left or the right subtree. This requires 
 $\Ohstar{3^m}$ time.

\begin{restatable}[\linkproof{thm:NE:swap:GSbin:FPT:numCandidates}]{theorem}{thmGSCandidates}
  \label{thm:NE:swap:GSbin:FPT:numCandidates}
    There is an algorithm solving \swapNE{\GS} in $\Oh{3^\numCandidates\cdot \numCandidates\numVoters}$ time.
\end{restatable}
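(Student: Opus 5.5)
The plan is a subset dynamic program over the binary Q-tree, built bottom-up. The key structural fact is the following. A vote $u$ lies in $\domain(T,\mapping)$ for a binary Q-tree $T$ if and only if, at every internal node with children's candidate sets $A$ and $B$, either all of $A$ precedes all of $B$ in $u$ or vice versa. (This is the recursive characterization used in \Cref{prop:check-compatibility}.)

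The first step is to show that the swap distance of a vote $v$ to $\domain(T,\mapping)$ decomposes as a sum over internal nodes. For a node with children sets $A,B$, define
\[
\cost_v(A,B)=\min\{|\{(a,b)\in A\times B : b\succ_v a\}|,\ |\{(a,b)\in A\times B : a\succ_v b\}|\}.
\]
For each unordered pair $\{x,y\}$ of candidates there is exactly one node, their lowest common ancestor, that separates them. Any $u\in\domain(T,\mapping)$ orders $\{x,y\}$ according to the orientation chosen at that node. Hence $\dist(v,u)$ equals the sum, over internal nodes, of the inversions between the two child sets under the chosen orientation. Since orientations at different nodes are independent (a Q-node may be flipped freely), minimizing over $u$ minimizes each node's term separately. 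Therefore
\[
\min_{u}\dist(v,u)=\sum_{\text{nodes}}\cost_v(A,B),
\]
and the election's distance is $\sum_{\text{nodes}}\sum_{v\in\voters}\cost_v(A,B)$. This node-wise independence across voters is the conceptual crux. I would state it as a lemma and prove it via the lca argument.

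The second step is the recurrence. For each nonempty $S\subseteq\candidates$, let $\opt(S)$ denote the minimum total cost of a binary Q-tree whose leaves are exactly $S$, using the restricted votes $\restr{\voters}{S}$. Set $\opt(\{c\})=0$, and
\[
\opt(S)=\min_{\{A,B\}\text{ partition of } S,\ A,B\neq\emptyset}\Big(\opt(A)+\opt(B)+\sum_{v\in\voters}\cost_v(A,B)\Big).
\]
The answer is $\opt(\candidates)$ compared with $\budget$. Correctness follows from the decomposition, because every binary Q-tree with a mapping is a root split plus two subtrees. Conversely, every choice of splits yields a valid tree and mapping.

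The third step, which I expect to be the main obstacle, is achieving $\Oh{3^\numCandidates\cdot\numCandidates\numVoters}$. The pairs $(S,A)$ with $A\subseteq S$ number $3^\numCandidates$, which is exactly the per-candidate trichotomy: not in $S$, in $A$, or in $S\setminus A$. So I need $\sum_v\cost_v(A,B)$ in $\Oh{\numCandidates\numVoters}$ time per pair, rather than $\Oh{\numCandidates^2\numVoters}$. For a fixed vote $v$, scan $v$ from top to bottom, maintaining the number of $A$-candidates seen so far. Each time a $B$-candidate is encountered, add that count to the number of pairs with $a\succ_v b$. The complementary count is $|A||B|$ minus this. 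This costs $\Oh{\numCandidates}$ per vote, giving $\Oh{\numCandidates\numVoters}$ per pair. Enumerating submasks $A$ of each $S$, each unordered split counted once or twice, gives the bound $\Oh{3^\numCandidates\numCandidates\numVoters}$. Processing $S$ in increasing order of size ensures that $\opt(A)$ and $\opt(B)$ are available when needed.
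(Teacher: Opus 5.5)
Your proposal is correct and follows essentially the same route as the paper: a dynamic program over candidate subsets $X$ that minimizes, over all nontrivial splits $(L, X\setminus L)$, the summed per-vote cost of making the split clean plus the values of the two parts. Your $\cost_v$ is the paper's $\delta_{v,L,X}$, and the count $\sum_X 2^{|X|}=3^{\numCandidates}$ gives the bound. The only difference is that you explicitly prove, via lowest common ancestors, that the distance to $\domain(T,\mapping)$ decomposes over internal nodes with independently chosen orientations. The paper's proof of this theorem uses that fact without proof; it is spelled out only in the analogous decomposition claim for $\GSbal$.
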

\prooftoappendix{thm:NE:swap:GSbin:FPT:numCandidates}{\thmGSCandidates*}{
  \newcommand{\DP}{\operatorname{DP}} The algorithm is based on
  dynamic programming over all subsets of candidates. Formally, we
  have a table $\DP[ X ]$, where $X\subseteq C$, which stores the
  distance to an optimal election $E'\in\GS$ for an election
  $(X,\voters)$, where $\voters$ is a list of the original votes
  restricted only to candidates in~$X$.

  Once the dynamic programming table is computed, we simply compare
  the value stored in $\DP[C]$ and compare it with the budget
  $\budget$. By definition, the stored value is the distance from the
  closest election $E'\in \GS$, so the result is correct, assuming that the table
  is filled correctly.

  Now, we formally describe how to compute each cell of the
  table. First, assume that $|X| \leq 2$. Then, there is exactly one
  possible $\typeQ$-tree $T$ and two possible bijections from $X$ to
  the leaves of~$T$, both with distance $0$ from
  $(X,\voters)$. Therefore, in this case, we set $ \DP[ X ] = 0.  $

  Now, let $|X| \geq 3$. In this case, we try all possible partitionings
  of $X$ between the left and right subtrees. Note that there are
  $2^{|X|}-2$ possibilities (putting all candidates to the left or
  right subtree is an invalid partitioning). We denote by~$L$ the
  subset of~$X$ in the left subtree and by~$R$ the subset of~$X$ for
  the right subtree, i.e., $R = X \setminus L$. Now, for every vote~$\vote\in\voters$, in~$\Oh{\numCandidates}$ time, we can compute the
  value~$\delta_{v,L,X}$: the minimum number of swaps required to have
  all candidates in~$L$ either as the top~$|L|$ candidates in~$\restr{\vote}{X}$ or the bottom~$|L|$ candidates in~$\restr{\vote}{X}$ (we
  take a smaller of these two values), and then use recursion on the subelections induced by~$L$ and~$R$. Formally, the computation is
  as follows.
  \[
    \DP[ X ] =  \min_{\emptyset \subset L \subset X}
        \left\{ \sum_{v\in\voters} \delta_{v,L,X} + \DP[L] +
      \DP[X\setminus L]\right\}.
  \]
    
  The total time required to fill the table is
  \begin{align*}
    \sum_{X \subseteq \candidates} (2^{|X|}-1)\cdot \Oh{\numCandidates\numVoters} 
    &= \Oh{\numCandidates\numVoters} \cdot \sum_{k=0}^\numCandidates \binom{\numCandidates}{k}\cdot 2^k\\
    & = \Oh{\numCandidates\numVoters} \cdot (1+2)^\numCandidates\\
    &= \Oh{\numCandidates\numVoters} \cdot 3^\numCandidates.
    \end{align*}
    After the table is filled, we can decide the instance in constant
    time.
}

\noindent
For other domains, variants of this algorithm are even faster.

\begin{restatable}[\linkproof{thm:NE:swap:GScat:FPT:numCandidates}]{theorem}{thmGSCandidatesBalCat}
    \label{thm:NE:swap:GScat:FPT:numCandidates}
    \label{thm:NE:swap:GSbal:FPT:numCandidates}
    \swapNE{\mathcal{X}} can be solved in
        $\Oh{2^\numCandidates\cdot \numCandidates\numVoters}$ time if $\mathcal{X} = \{\GScat,\Separ, \Strat[\nicefrac{\numCandidates}{2}]\}$,
        $\Oh{(2\sqrt{2})^\numCandidates\cdot\numCandidates\numVoters}$ time if $\mathcal{X} = \GSbal$, and
        $\Oh{3^\numCandidates\cdot\numCandidates\numVoters}$ time if
        $\mathcal{X} = \Strat[k]$, for every $k$.
\end{restatable}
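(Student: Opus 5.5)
All four bounds come from specializing the subset dynamic program of \Cref{thm:NE:swap:GSbin:FPT:numCandidates}. The key fact is already used there. For a node whose leaves carry candidate set $X$ and whose children carry a partition of $X$, the swaps needed to make each vote restricted to $X$ list the children's sets as contiguous blocks, in an order the node type allows, do not interact with the swaps inside the children. Hence the optimal cost splits as the sum of a cross-block term and the costs of the subproblems on the parts. The cross-block term is, per vote, the number of inverted pairs between parts; for a Q-node we take the minimum over the two orientations. I will therefore describe, for each family, which sets $X$ need a table entry and how many splits each entry tries. The running time is then the number of (entry, split) pairs times $\Oh{\numCandidates\numVoters}$.

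\emph{Caterpillar.} Every internal node has a leaf child, so each subtree's candidate set arises from $C$ by peeling off single candidates. Let $\mathrm{DP}[X]$ be the optimum for $\restr{E}{X}$. The recurrence is
\[
\mathrm{DP}[X]=\min_{c\in X}\Big(\sum_{v}\min\{\text{swaps to bring } c \text{ to the top of } \restr{v}{X},\ \text{to the bottom}\}+\mathrm{DP}[X\setminus\{c\}]\Big).
\]
This has $2^\numCandidates$ entries with $\numCandidates$ choices each, giving $\Oh{2^\numCandidates\numCandidates\numVoters}$ after precomputing positions.

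\emph{Separation.} There is only one level, and each P-child costs $0$. So we just enumerate the balanced sets $L$, which are at most $2^\numCandidates$. For each vote we compute $\min(\#\{(l,r):r\succ l\},\#\{(l,r):l\succ r\})$ in $\Oh{\numCandidates}$ time using prefix counts.

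\emph{$\Strat[k]$.} The F-root fixes the block order and each P-block is free. Define $\mathrm{DP}[X]$ for the sets $X$ whose size is a multiple of $\numCandidates/k$, meaning $X$ fills the top $|X|$ positions. The next block $B\subseteq C\setminus X$ has cost equal to the number of pairs $(b,y)$ with $b\in B$, $y\notin X\cup B$, $y\succ_v b$. Enumerating the pairs $(X,B)$ is at most $3^\numCandidates$ (each candidate is in $X$, in $B$, or later), giving $\Oh{3^\numCandidates\numCandidates\numVoters}$. For $k=\numCandidates/2$ the blocks have size $2$, so a step chooses only $\Oh{\numCandidates^2}$ pairs. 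To get the $2^\numCandidates$ bound I would instead add one candidate at a time and allow closing a block only after an even count. The first candidate $b$ of the block pays its inversions against $C\setminus X$ except its partner, and the second pays its inversions against the rest. This gives $2^\numCandidates\cdot\numCandidates$ transitions.

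\emph{Balanced.} A perfect binary tree splits $X$ into two halves of equal size, and subtrees exist only for sets of size $\numCandidates/2^j$. The work is $\sum_j\binom{\numCandidates}{\numCandidates/2^j}\binom{\numCandidates/2^j}{\numCandidates/2^{j+1}}$. It is dominated by $j=1$, where $\binom{\numCandidates}{\numCandidates/2}\binom{\numCandidates/2}{\numCandidates/4}\le 2^\numCandidates\cdot 2^{\numCandidates/2}=(2\sqrt2)^\numCandidates$. The $j=0$ term is only $\binom{\numCandidates}{\numCandidates/2}$, and the deeper terms decay geometrically.

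The main obstacle is justifying the decomposition, in particular for Q-nodes. The orientation chosen at a node is chosen per vote independently of the orientations below it. The cross-pair inversions between the two parts are unaffected by how the votes are later rearranged inside each part. I would argue, as in the GS proof, that any sequence of swaps reaching a domain vote can be reordered so that cross-part swaps come first; their number is at least the inversion count. Then the restrictions to each part are valid instances of the subproblem. Beyond that, the only delicate point is the $2^\numCandidates$ bookkeeping for $\Strat[\numCandidates/2]$.
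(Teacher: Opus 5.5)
Your treatment of $\GScat$, $\Separ$, $\GSbal$, and general $\Strat[k]$ matches the paper's. You use the same subset dynamic programs, the same number of splits per cell, and the same dominant term $\binom{m}{m/2}\binom{m/2}{m/4}$ in the balanced case.

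The step that fails is your $2^m$ bookkeeping for $\Strat[\nicefrac{m}{2}]$. In a table indexed by candidate sets, the cost of the move $X\to X\cup\{b\}$ may depend only on $X$ and $b$. You charge the first block member against $C\setminus X$ \emph{except its partner}, but the partner has not been chosen yet. The correction cannot be deferred to the move $X\cup\{b\}\to X\cup\{b,b'\}$ either, because the odd-size set $X\cup\{b\}$ does not record which of its elements is still unpaired. If you instead drop the exception and charge each new candidate against all unplaced ones, every block is overcounted. The table then returns the true block cost plus $\min\{\#\{v: b\succ_v b'\},\#\{v: b'\succ_v b\}\}$, which is generally nonzero, so you are solving a different problem.

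The repair is to enumerate the block $\{b,b'\}$ itself, or equivalently to keep the open candidate in the state. This means up to $\binom{m}{2}$ transitions per even-size set rather than $m$. If each transition is evaluated in $\Oh{mn}$ time, as the paper does, the total is $\Oh{2^m m^3 n}$. That is all the paper's own proof establishes for this case, so the literal $\Oh{2^m mn}$ in the statement is not supported there either.

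You can do somewhat better. Let $g_X(c)=\sum_v\#\{y\in C\setminus(X\cup\{c\}) : y\succ_v c\}$. The cost of placing block $\{b,b'\}$ right after $X$ is then $g_X(b)+g_X(b')-n$, since each vote contributes exactly one to the two within-block terms. After computing pairwise counts in $\Oh{m^2 n}$ time, all values $g_X(c)$ can be tabulated incrementally in $\Oh{2^m m}$ time, and each block transition costs $\Oh{1}$. The total is $\Oh{2^m m^2+m^2 n}$, which lies within $\Oh{2^m mn}$ whenever $n\ge m$. Either way you recover the $\mathcal{O}^*(2^m)$ entry of the summary table.
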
  
\prooftoappendix{thm:NE:swap:GScat:FPT:numCandidates}{\thmGSCandidatesBalCat*}{
    \newcommand{\DP}{\operatorname{DP}}
    We use the dynamic programming approach described in \Cref{thm:NE:swap:GSbin:FPT:numCandidates}, tailored to the specific domain family at hand; we re-use the notation described in the proof of \Cref{thm:NE:swap:GSbin:FPT:numCandidates}.
    
    First, let $\mathcal{X} = \GScat$. Here, the computation for any set~$X\subseteq \candidates$ of size at least $3$ reduces to the following.
    \[
        \DP[ X ] = \min_{c\in X}\left\{
            \sum_{v\in\voters} \delta_{v,L,X} + \DP[X\setminus \{c\}]\right\}.
    \]
    That is, each cell can be computed in $\Oh{\numCandidates\numVoters}$ time. Therefore, as there are $2^\numCandidates$ cells in total, we directly obtain the running time of $2^\numCandidates\cdot\Oh{\numCandidates\numVoters}$.

    If $\mathcal{X}=\GSbal$, then the base case is again the same. However, the computation for $X\subseteq \candidates$ is modified. First, we can optimize by fixing $r \in X$, which always goes to $L$. This is due to the fact that in perfect binary $\typeQ$-trees, the distinction between left and right child is not important. This saves half of the possible partitions. Formally, the computation is as follows.
    \[
        \DP[ X ] = \min_{\substack{L \subseteq X\\|L| = |X|/2\\r \in L}}\left\{ \sum_{v\in\voters} \delta_{v,L,X} + \DP[L] + \DP[X\setminus L] \right\}\,.
    \]
    Moreover, observe that we need to consider only sets $X$ such that $|X| = 2^k$ for some $k\in\N$. The DP table has entries for all $X \subseteq \candidates$ with $|X| = 2^j$ for $j \in \{0, 1, \ldots, \log_2 \numCandidates\}$. For $|X| \leq 2$, the cost is $\Oh{1}$. For $|X| = 2^j$ with $j \geq 2$, we enumerate $\binom{2^j - 1}{2^{j-1} - 1}$ partitions (fixing $r \in L$), and each partition costs $\Oh{2^j \cdot \numVoters}$ to evaluate $\sum_{v \in \voters} \delta_{v,L,X}$. The total running time is therefore
    \begin{align*}
        \sum_{j=2}^{\log_2 \numCandidates} & \binom{\numCandidates}{2^j} \cdot \binom{2^j - 1}{2^{j-1} - 1} \cdot \Oh{2^j \cdot \numVoters}\\
        &= \Oh{\numCandidates \cdot \numVoters} \cdot \sum_{j=2}^{\log_2 \numCandidates} \binom{\numCandidates}{2^j} \binom{2^j}{2^{j-1}} \\
        &= \Oh{(2\sqrt{2})^{\numCandidates} \cdot \numCandidates \cdot \numVoters}\,,
    \end{align*}
    where the last step uses that the sum $\sum_j \binom{\numCandidates}{2^j}\binom{2^j}{2^{j-1}}$ is dominated by its $j = \log_2 \numCandidates - 1$ term, which is $\Oh{2^{3\numCandidates/2}}$ by Stirling's approximation. Thus, the algorithm runs in $\Oh{(2\sqrt{2})^{\numCandidates} \cdot \numCandidates \cdot \numVoters}$ time.

    If $\mathcal{X} = \Separ$, we do the following. We guess the subset $L\subseteq \candidates$, $|L| = |C|/2$, of candidates in the left part, and for each such guess, we compute $\delta_{\vote,L}$ which is the swap distance of $\vote$ from the $\Separ$ domain corresponding to~$L$. Overall, the running time is $\Oh{2^\numCandidates \cdot \numCandidates \cdot \numVoters}$.

    Finally, let $\mathcal{X} = \Strat[k]$ for a given $k\ge 2$ with~$k \mid \numCandidates$, and let $b = \nicefrac{\numCandidates}{k}$ be the
    block size. Recall that each domain in $\Strat[k](\candidates)$ is given
    by an ordered partition of $\candidates$ into $k$ blocks of size~$b$: its
    votes rank all candidates of the first block on top, in an arbitrary
    order, then all candidates of the second block, and so on. The
    table now has a cell $\DP[X]$ for every $X\subseteq\candidates$ whose size is divisible by~$b$; it stores the distance of the election
    $(X,\restr{\voters}{X})$ to the family of domains given by ordered
    partitions of~$X$ into $\nicefrac{|X|}{b}$ blocks of size~$b$. Since
    within a single block the votes are unconstrained, we set $\DP[X]=0$
    for $|X|\le b$. For $|X|\ge 2b$, we try all possible topmost blocks,
    i.e., all sets $B\subseteq X$ with $|B|=b$. For every vote
    $\vote\in\voters$, in $\Oh{\numCandidates}$ time we can compute the value~$\delta^{\uparrow}_{\vote,B,X}$: the minimum number of swaps required to
    have all candidates in $B$ as the top $|B|$ candidates in
    $\restr{\vote}{X}$, which equals the number of pairs consisting of a
    candidate of~$B$ and a candidate of $X\setminus B$ that
    $\restr{\vote}{X}$ ranks above it. In contrast to the previous cases,
    the reversed placement is not available, as the root is of
    type~$\typeF$ and not~$\typeQ$. Formally, the computation is as
    follows.
    \[
      \DP[X] = \min_{\substack{B\subseteq X\\ |B| = b}}
      \left\{\sum_{\vote\in\voters}\delta^{\uparrow}_{\vote,B,X}
        + \DP[X\setminus B]\right\}.
    \]
    The recurrence is correct, as each pair of candidates from two different
    blocks is charged exactly once, namely in the cell in which the block of
    its higher-ranked member is chosen as~$B$, whereas pairs within a single
    block are never charged. Using
    $\binom{\numCandidates}{jb}\binom{jb}{b}
    =\binom{\numCandidates}{b}\binom{\numCandidates-b}{(j-1)b}$, the total
    time required to fill the table is
    \begin{multline*}
      \sum_{\substack{X\subseteq\candidates\\ b\,\mid\,|X|}}
      \binom{|X|}{b}\cdot\Oh{\numCandidates\numVoters}
      = \Oh{\numCandidates\numVoters}\cdot
        \binom{\numCandidates}{b}\sum_{j\ge1}\binom{\numCandidates-b}{(j-1)b}\\
        \le \Oh{\numCandidates\numVoters}\cdot
        \binom{\numCandidates}{b}\cdot 2^{\numCandidates-b}
        = \Oh{3^{\numCandidates}\cdot\numCandidates\numVoters}\,,
    \end{multline*}
    where we use
    $\binom{\numCandidates}{b}2^{\numCandidates-b}
    \le\sum_{s=0}^{\numCandidates}\binom{\numCandidates}{s}2^{\numCandidates-s}
    =3^{\numCandidates}$ in the last equality. For $\Strat[\numCandidates/2]$, that is, for~$b= 2$, we have $\binom{\numCandidates}{b}2^{\numCandidates-b}
    =\Oh{2^{\numCandidates}\numCandidates^{2}}$, and hence the running time
    becomes~$\Oh{2^{\numCandidates}\cdot\numCandidates^{3}\numVoters}$.
}

We conclude with our domains that cannot be defined using \typePQF-trees. For a single-peaked election, there is a trivial brute-force algorithm trying all possible axes $\axis$.

\begin{restatable}[\linkproof{thm:NE:swap:singlePeak:FPT:numCandidates}]{observation}{thmSPFPTCandidates}
    \label{thm:NE:swap:singlePeak:FPT:numCandidates}
    \swapNE{\SPeak} can be solved in $2^\Oh{\numCandidates \cdot \log\numCandidates} \cdot \numVoters$ time.
\end{restatable}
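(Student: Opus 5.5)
The plan is to enumerate all possible axes and, for each axis, compute exactly the distance of the election to the single-peaked domain it defines. There are $\numCandidates!$ axes; an axis and its reverse define the same domain, so $\numCandidates!/2$ suffice. This is $2^{\Oh{\numCandidates\log\numCandidates}}$. The answer is the minimum over all axes, compared with the budget~$\budget$. Correctness is immediate from the definition $\dist(E,\SPeak)=\min_{\axis}\dist(E,\domain_\axis)$, where $\domain_\axis$ denotes the set of votes single-peaked on~$\axis$. Moreover, $\dist(E,\domain_\axis)=\sum_{v\in\voters}\min_{u\in\domain_\axis}\dist(u,v)$, since the votes are handled independently.

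For a fixed axis we must compute, for every vote $v$, its swap distance to the nearest vote single-peaked on~$\axis$. The simplest option is to enumerate $\domain_\axis$ explicitly. A single-peaked vote is determined by repeatedly extending the current interval to the left or to the right, so $|\domain_\axis|=2^{\numCandidates-1}$, and all its members can be listed in $\Oh{2^{\numCandidates}\numCandidates}$ time. For each vote $v$ and each $u\in\domain_\axis$ we compute $\dist(u,v)$ in $\Oh{\numCandidates^2}$ time by counting inversions. This gives $\Oh{2^{\numCandidates}\numCandidates^2\numVoters}$ time per axis. Multiplying by the number of axes yields
\[
\numCandidates!\cdot 2^{\numCandidates}\cdot\Oh{\numCandidates^2\numVoters} = 2^{\Oh{\numCandidates\log\numCandidates}}\cdot\numVoters.
\]

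A cleaner alternative, if one wanted a better per-axis bound, is a DP over intervals $[i,j]$ of the axis. A single-peaked vote ranks last one of the two endpoints of the current interval. The cost of placing endpoint $c$ at the bottom among the candidates of $[i,j]$ equals the number of candidates of the interval that $v$ ranks below~$c$. We therefore minimize over the two choices and recurse, which gives $\Oh{\numCandidates^3}$ per vote. Inversions are charged exactly once, when the lower-placed candidate is removed, so this is exact. It is not needed for the stated bound, however.

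No real obstacle arises. The only point to verify is that the domain for a fixed axis is exactly the set of votes generated by peeling endpoints of the axis (equivalently, growing intervals from the top). This follows directly from the definition that every top-$t$ prefix forms an interval of~$\axis$.
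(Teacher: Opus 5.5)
Your proposal is correct and matches the paper's proof: try all $m!$ axes, list the $2^{m-1}$ votes single-peaked on each axis, find every voter's closest such vote by brute force, and compare the total with the budget. Your interval-peeling DP aside is also sound, since each inverted pair is charged exactly once, when its lower-ranked member is peeled, and it would make the per-axis cost polynomial; as you note, though, it is not needed for the stated bound.
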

\prooftoappendix{thm:NE:swap:singlePeak:FPT:numCandidates}{\thmSPFPTCandidates*}{
    As our first step, we guess the axis $\axis$ of a closest single-peaked election. Next, we generate the set of all votes~$\voters'$ compatible with the guessed axis $\axis$. Then, for each vote $v\in\voters$, we find the closest vote $v'\in\voters'$ by enumeration. Finally, we compare $\sum_{v\in \voters} \dist(v,v')$ with $\budget$. If the sum is smaller than $\budget$, we return \emph{yes}. Otherwise, we continue with another axis $\axis$. If we exhausted all axes without finding a solution, we return \emph{no}.

    There are $\numCandidates!\in \numCandidates^\numCandidates \in 2^\Oh{\numCandidates\log\numCandidates}$ different axes. There are $2^{\numCandidates-1}$ votes corresponding to a fixed axis; therefore, we can compute the closest votes in time $\numVoters \cdot 2^{\numCandidates-1} \cdot \numCandidates^\Oh{1}$. Hence, the algorithm runs in $2^\Oh{\numCandidates\log\numCandidates} \cdot \numVoters$ time.
}

For the single-crossing domain, the argument is not as straightforward; however, it follows a similar strategy. %
In particular, we prove that it is enough to focus only on the %
maximal single-crossing domains~\cite{pup-sli:j:single-crossing}. %

\begin{restatable}[\linkproof{thm:NE:swap:singleCrossing:FPT:numCandidates}]{theorem}{thmSCFPTCandidates}
    \label{thm:NE:swap:singleCrossing:FPT:numCandidates}
    The \swapNE{\SCros} problem can be solved in $2^\Oh{\numCandidates^2 \cdot \log\numCandidates} \cdot \numVoters$ time.
\end{restatable}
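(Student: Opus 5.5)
The plan is to reduce the problem to enumerating maximal single-crossing domains and, for each one, computing the distance by brute force.

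\emph{Step 1: domain reduction.} Suppose $E''$ is a closest single-crossing election, so that every vote of $E''$ lies in a single-crossing domain $\domain$. Every single-crossing domain is contained in some maximal single-crossing domain. By the results of \citet{pup-sli:j:single-crossing}, such a maximal domain is exactly the set of votes along a maximal chain: start at some linear order $u$, end at its reverse, and apply $\binom{\numCandidates}{2}$ adjacent swaps, each swapping a pair exactly once. These $\binom{\numCandidates}{2}+1$ votes, in chain order, witness the single-crossing property, because each pair of candidates flips exactly once along the chain. Any subset of a single-crossing domain is single-crossing with respect to the induced order. Therefore
\[
\dist(E,\SCros)=\min_{\domain \text{ maximal}} \sum_{v\in\voters}\min_{u\in\domain}\dist(v,u).
\]
The inequality ``$\le$'' is immediate, since a maximal domain is itself single-crossing. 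For ``$\ge$'', the votes of the optimal $E''$ lie in some maximal domain that contains $\domain$.

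\emph{Step 2: enumerating maximal domains.} A maximal domain is specified by a starting order $u$ (at most $\numCandidates!$ choices) together with a sequence of $\binom{\numCandidates}{2}$ adjacent transpositions. Each transposition is chosen among at most $\numCandidates-1$ positions. This gives at most
\[
\numCandidates!\cdot \numCandidates^{\binom{\numCandidates}{2}} = 2^{\Oh{\numCandidates^2\log\numCandidates}}
\]
candidate sequences. For each sequence, in polynomial time in $\numCandidates$ we check that every pair is swapped exactly once, which is the same as checking that we reach the reverse of $u$ after exactly $\binom{\numCandidates}{2}$ steps. We then list its $\binom{\numCandidates}{2}+1$ votes.

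\emph{Step 3: distance computation.} For each voter, we compute the minimum swap distance to the listed votes. This takes $\Oh{\numCandidates^4}$ time per voter, for example $\Oh{\numCandidates^2}$ votes times $\Oh{\numCandidates^2}$ per distance. We sum over the voters, take the overall minimum, and compare it with $\budget$. The total running time is $2^{\Oh{\numCandidates^2\log\numCandidates}}\cdot\numVoters$.

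The main obstacle is Step 1, specifically the claim that every single-crossing domain, of arbitrary size and ordering, embeds in a chain of this form. I would prove the embedding directly rather than rely on the citation. Take the single-crossing ordering $w_1,\dots,w_k$. Between consecutive votes $w_i$ and $w_{i+1}$, insert a shortest adjacent-swap path; it only swaps the pairs that are inverted between them. Before $w_1$, prepend a path from the reverse of $w_k$ to $w_1$; after $w_k$, append a path from $w_k$ to the reverse of $w_1$. The single-crossing condition guarantees that each pair is inverted in at most one gap. From this I expect a careful count to show that every pair is swapped exactly once overall. Truncating or adjusting the ends so that the chain runs from some order to its reverse needs care, but only the containment matters.
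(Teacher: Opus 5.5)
Your algorithm is the one in the paper: enumerate the maximal single-crossing domains, which are the maximal chains of the weak Bruhat order, and brute-force the distance to each. The only difference in enumeration is cosmetic. You allow arbitrary adjacent transpositions and then check that the reverse of $u$ is reached, whereas the paper only swaps pairs still ordered as in $u_0$. These produce the same chains with the same $2^{\Oh{m^2\log m}}$ count. If Step~1 rests on the Puppe--Slinko citation, it is fine. The paper instead uses the nested-inversion-set characterization of \citet{jae-pet-elk:c:nearly-single-crossing} and the fact that every chain of the graded weak Bruhat order extends to a maximal one.

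However, the direct embedding you sketch does not work as written, and the ``careful count'' would refute it rather than confirm it.
\begin{itemize}
\item A shortest path from the reverse of $w_k$ to $w_1$ swaps exactly the pairs on which $w_1$ and $w_k$ agree, and so does a shortest path from $w_k$ to the reverse of $w_1$.
\item Hence every pair on which all of $w_1,\dots,w_k$ agree is swapped twice, and the walk has $2\binom{m}{2}-\dist(w_1,w_k)$ steps.
\item Its endpoints, the reverses of $w_k$ and of $w_1$, are not reverses of each other unless $w_k$ is already the reverse of $w_1$.
\end{itemize}
So the walk is not one of the sequences you enumerate, and containment does not follow.

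The fix is to keep only one end segment. Start at $w_1$, connect consecutive $w_i,w_{i+1}$ by shortest paths, and append a shortest path from $w_k$ to the reverse of $w_1$. By the single-crossing property, each pair on which $w_1$ and $w_k$ disagree is swapped in exactly one gap. Every other pair is ordered the same by all $w_i$, so it is never swapped in a gap and is swapped exactly once in the final segment. The walk therefore has exactly $\binom{m}{2}$ steps, ends at the reverse of $w_1$, passes through every $w_i$, and is among the sequences you enumerate with $u=w_1$.
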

\prooftoappendix{thm:NE:swap:singleCrossing:FPT:numCandidates}{\thmSCFPTCandidates*}{
    As our first step, we guess the domain $\domain'$ of a closest
    single-crossing election, as follows. We guess a vote
    $u_0\in\calL(\candidates)$, initialize $\domain' := \{u_0\}$ and
    $u := u_0$, and repeat the following $\binom{\numCandidates}{2}$ times:
    We guess two candidates that are adjacent in~$u$ and that $u$ orders in
    the same way as~$u_0$ does, we swap them in~$u$, and we add the
    resulting vote to~$\domain'$. Next, for each vote $v\in\voters$, we
    find, a closest vote $v'\in\domain'$ by enumeration. Finally, we
    compare $\sum_{v\in\voters}\dist(v,v')$ with~$\budget$. If the sum is
    at most~$\budget$, we return \emph{yes}. Otherwise, we continue with
    the next guess. If we exhausted all guesses without finding a solution,
    we return \emph{no}.

    To analyze the procedure, we relabel the candidates so that $u_0$
    becomes the identity permutation; then the pairs of candidates that a
    vote orders differently than~$u_0$ become its inversions, and
    containment of inversion sets defines the \emph{weak Bruhat order}
    on~$\calL(\candidates)$. This order is graded by the number of
    inversions, its maximum element is the reverse of~$u_0$, and $u'$
    \emph{covers}~$u$ exactly if $u'$ arises from~$u$ by swapping two candidates
    that are adjacent in~$u$ and ordered as in~$u_0$; we refer
    to~\citet[Chapter~3]{bjo-bre:b:coxeter-groups} for these facts. Hence,
    each step of our procedure moves from~$u$ to a cover of~$u$, and  such a
    cover exists as long as $u$ is not the maximum element, and after
    $\binom{\numCandidates}{2}$ steps the maximum is reached; so the
    procedure is well-defined, and the guessed sets~$\domain'$ are exactly
    the maximal chains of the weak Bruhat order.

    Further, a domain is single-crossing if and only if, for one of its
    votes~$t_1$, the sets of pairs on which its votes disagree with~$t_1$
    are pairwise
    nested~\citep[Proposition~1]{jae-pet-elk:c:nearly-single-crossing}, that
    is, if and only if it is a chain of the weak Bruhat order relabeled so
    that $t_1$ is the identity. In particular, every guessed~$\domain'$ is a
    single-crossing domain (these maximal chains are known in social choice
    as the maximal single-crossing
    domains~\citep{pup-sli:j:single-crossing});
    consequently, whenever we return \emph{yes}, replacing each vote~$v$
    by~$v'$ turns $E$ into a single-crossing election using at most~$\budget$
    swaps.

    Conversely, suppose that $\dist(E,\SCros)\le\budget$, witnessed by a
    domain $\domain\in\SCros(\candidates)$ with first vote~$t_1$ as above.
    Then $\domain$ is a chain of the weak Bruhat order relabeled at~$t_1$,
    and every chain of a finite partial order is contained in a maximal
    one; so some~$\domain'$ guessed for $u_0=t_1$ satisfies
    $\domain\subseteq\domain'$. Then, for every vote~$v$, we have
    $\min_{w\in\domain'}\dist(v,w)\le\min_{w\in\domain}\dist(v,w)$, so the
    sum of computed distances from~$\domain'$ is at most $\dist(E,\domain)\le\budget$,
    and we return \emph{yes}.

    There are $\numCandidates!$ possibilities for the initial vote~$u_0$
    and at most $\numCandidates-1$ possibilities for the swapped pair in
    each of the $\binom{\numCandidates}{2}$ steps; therefore, there are
    $\numCandidates!\cdot(\numCandidates-1)^{\binom{\numCandidates}{2}}\in
    2^{\Oh{\numCandidates^{2}\log\numCandidates}}$ guesses in total. There
    are $\binom{\numCandidates}{2}+1$ votes in a fixed guessed domain;
    therefore, we can compute the closest votes in
    $\numVoters\cdot\numCandidates^{\Oh{1}}$ time per guess. Hence, the
    algorithm runs in
    $2^{\Oh{\numCandidates^{2}\log\numCandidates}}\cdot\numVoters$ time.
}

\subsection{Parameter: Distance to a Domain}

\label{sec:param_distance}
\appendixsection{sec:param_distance}

\toappendix{
In this section, we present all omitted proofs for our algorithmic results that consider the distance budget~$\budget$ as the
parameter. In \Cref{thm:NE:swap:binary:FPT:budget}, we also present an alternative algorithm to solve the \swapNE{\GS} problem---running in \FPT time with~$d$ as the parameter---that proved to be efficient in practice, even though 
its worst-case running time is dominated by the one showed in~\Cref{thm:NE:swap:GS:FPT:budget} which relies on the characterization of $\GS$ domains through forbidden partial subelections. In addition, we also present a proof showing that $\Ident$ and $\Antag$ domains are hereditary (\Cref{thm:ident:antag:hereditary}). 
}

\def\forbCandidates{\candidates^\star}
\def\forbVoters{\voters^\star}
\def\F{\mathcal{F}}
\def\R{\mathcal{R}}
\def\maxProfileSize{\sigma}
\def\maxForbCand{\gamma}

For the parameterization by the distance to a domain, we obtain a
general algorithm that applies to a wide class of domains, including
many of ours,
but we need some additional
terminology and notation before presenting it.

A domain~$\domain$ is \emph{hereditary}, if for every
election~$E \in \domain$, all subelections of~$E$ belong to~$\domain$
(restricted to a given set of candidates).  It is known that every
hereditary domain can be characterized by a (possibly infinite) set~$\F$ of
forbidden subelections~\cite{lac-lac:j:hereditary-characterization}.
In fact, %
a more concise characterization is
obtained through a family of \emph{forbidden partial subelections},
each of them a pair~$(\forbCandidates,\forbVoters)$ where
$\forbCandidates$ is a set of candidates and $\forbVoters$ is a
collection of \emph{partial votes}, i.e., partial orders
over~$\forbCandidates$.  We assume that each partial
vote~$v \in \forbVoters$ is given as a set~$P_v$ of pairs $(a,b)$
for which $a \succ_v b$ and there exists no
$c \in \forbCandidates$ with $a \succ_v c \succ_v
b$. 
The \emph{size} of a partial subelection~$(\forbCandidates,\forbVoters$) is
$\sum_{v \in \forbVoters} |P_v|$.
A domain~$\domain$ is \emph{characterized by a set~$\F$ of forbidden partial subelections} if  any election~$E$ belongs to~$\domain$ exactly if there is no $(\candidates^\star,\voters^\star) \in \F$ whose linear extension is present in~$E$.

\begin{restatable}[\linkproof{thm:NE:swap:domainByForbidden:FPT:budget}]{theorem}{thmNEswapdomainByForbiddenFPTbudget}
\label{thm:NE:swap:domainByForbidden:FPT:budget}
    Let $\domain$ be a domain characterized by a  family~$\F$ of forbidden partial subelections, each with at most~$\maxForbCand$ candidates and of size at most~$\maxProfileSize$. Then \swapNE{\domain} can be solved in $\Oh{\max(2,\maxProfileSize)^\budget \cdot |\F| \cdot \numCandidates^{\maxForbCand+1} \numVoters}$ time.
\end{restatable}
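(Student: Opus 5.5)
The natural approach is a bounded search tree: while the current election contains an occurrence of a forbidden pattern, branch on which swap to perform. Concretely, we maintain the current election $E$ and remaining budget $\budget$. If $\budget<0$ we reject. Otherwise, for each $(\forbCandidates,\forbVoters)\in\F$, we search for an occurrence in $E$. That is, we look for an injective map of $\forbCandidates$ into $\candidates$ (at most $\numCandidates^{\maxForbCand}$ choices) together with an injective assignment of partial votes to actual voters such that each actual vote extends the mapped partial order. For a fixed candidate map, this assignment is a bipartite matching between $\forbVoters$ and $\voters$; we test, for each pair, whether the vote satisfies all pairs in $P_v$, and $|\forbVoters|\le\maxProfileSize$. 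This gives roughly $|\F|\cdot\numCandidates^{\maxForbCand}\cdot\numVoters\cdot\poly(\maxProfileSize)$ per node, and the extra factor $\numCandidates$ in the claimed bound can absorb the consistency checks. If no occurrence exists, then by the characterization $E\in\domain$ and we accept.

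If an occurrence is found, with candidates mapped by $\psi$ and partial votes assigned to voters $w_v$, then some vote $w_v$ must be changed so that it no longer extends $\psi(P_v)$. The key lemma I would prove is the following. In any solution, there exist a partial vote $v$ and a covering pair $(a,b)\in P_v$ such that the final vote of $w_v$ ranks $\psi(b)$ above $\psi(a)$. Indeed, if every covering pair is preserved, then by transitivity the whole partial order is preserved (the pairs in $P_v$ form its Hasse diagram), so the occurrence survives, which is a contradiction.

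Since $w_v$ currently ranks $\psi(a)$ above $\psi(b)$, the inversion between them must be created. The branching step is to guess $(v,(a,b))$, which gives at most $\maxProfileSize$ branches. We cannot, however, simply swap $\psi(a)$ and $\psi(b)$, since they need not be adjacent. The main obstacle is making each branch consume budget while remaining correct. My plan is to take the swap distance as the number of inversions and to branch on an adjacent swap that makes progress. Among the candidates lying between $\psi(a)$ and $\psi(b)$ in $w_v$, any optimal target vote that inverts $(\psi(a),\psi(b))$ must invert some adjacent pair in the current $w_v$ located in this segment. This holds because an order that inverts the endpoints of a chain must invert some consecutive link of it. That observation alone would give too many branches, though.

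Instead, I would use the standard fact that the optimal final vote can be reached by a sequence of adjacent swaps, each of which fixes an inversion relative to the target. I would then argue via an exchange argument that one may assume the target has $\psi(b)$ immediately overtaking $\psi(a)$. Alternatively, one can charge the distance $\rank(\psi(b))-\rank(\psi(a))$ and branch into "move $\psi(b)$ up to just above $\psi(a)$" versus "move $\psi(a)$ down to just below $\psi(b)$". Both options reduce the budget by at least one, which yields the base $\max(2,\maxProfileSize)$: when $\maxProfileSize=1$, these two options are what produce the factor $2$. I expect justifying this normalization, namely that restricting to these two canonical moves loses no optimal solution, to be the delicate step.

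The runtime then follows from a recursion tree of depth at most $\budget$ with branching factor at most $\max(2,\maxProfileSize)$, multiplied by the per-node occurrence-search cost.
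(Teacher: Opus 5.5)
Your key lemma is right and matches the paper: an occurrence can be destroyed only by inverting one of its covering pairs $(a,b)\in P_v$ in the assigned vote. The gap is the branching step built on it. The normalization you flag as delicate is false. Take $w_v=(\psi(a),c_1,c_2,\psi(b))$ and target $t=(c_1,\psi(b),\psi(a),c_2)$, which are at distance $3$. Each of your two canonical moves costs $3$, but they produce $(\psi(b),\psi(a),c_1,c_2)$ and $(c_1,c_2,\psi(b),\psi(a))$, and each is still at distance $2$ from $t$. As a concrete failure, use the identity domain, whose forbidden pattern is two voters disagreeing on a pair. Let four voters cast $t$ and one cast $w_v$; the distance is exactly $3$. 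Suppose the occurrence you find is the pair $\{\psi(a),\psi(b)\}$ in $w_v$ and a copy of $t$. Then no branch finishes within budget $3$, so you wrongly reject. The reason is that a concrete move fixes at once where every candidate between $\psi(a)$ and $\psi(b)$ ends up. Each such $c_i$ may independently go above $\psi(b)$, between the two, or below $\psi(a)$, and a target need not match either extreme. Your fallback of branching over the at most $d$ adjacent pairs in the segment is complete, but it gives roughly $(\sigma d)^{d}$, not $\max(2,\sigma)^{d}$.

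The missing idea is to commit to inversions instead of performing swaps. The paper keeps, for each voter $v_i$, a set $R_i$ of pairs that must end up inverted relative to $v_i$. Since swap distance equals the number of inverted pairs, adding $\{\psi(a),\psi(b)\}$ to $R_i$ costs exactly one unit of budget and decides nothing about the intermediate candidates. The tentative vote is the tournament obtained from $v_i$ by reversing the pairs in $R_i$. If this tournament is cyclic, it contains a directed triangle, and any target respecting the commitments must also invert one of the triangle's at most two uncommitted arcs. This two-way branch, costing one unit per branch, settles the intermediate candidates lazily. It is the source of the $2$ in $\max(2,\sigma)$, not the case $\sigma=1$. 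Once every tournament is acyclic, its linear extension $\hat v_i$ differs from $v_i$ exactly on $R_i$, so every admissible target agrees with $\hat v_i$ on $R_i$. Hence an occurrence in $\hat E$ must be broken on an uncommitted covering pair, and branching over at most $\sigma$ such pairs is complete. Each branch increases $\sum_i |R_i|$ by one, so the depth is at most $d$.
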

\begin{proofsketch}
	The algorithm is a bounded-search-tree exploration of the search space. We start with an election $E=(\candidates,\voters)$ and we maintain for each voter $v_i\in\voters$ a set $R_i$ of \emph{committed inversions}, that is, pairs of candidates whose relative order in the sought election must be the opposite of their order in~$v_i$.
    We call an election \emph{admissible} for $\R$ if, for each $i \in [\numVoters]$, its~$i$-th vote inverts all pairs in~$R_i$ relative to~$v_i$. 
    We consider $\mathcal{R}=(R_1,\ldots,R_{\numVoters})$ the \emph{state} of the algorithm; initially, all sets $R_i$ are empty.
    We describe how to compute a value~$f(\R)$, for each state~$\mathcal{R}$, defined as the minimum swap distance between~$E$ and an election in~$\domain$ admissible for~$\mathcal{R}$ if this distance is at most $\budget$, and $\infty$ otherwise. 
    
    Since the swap distance of two votes equals the number of pairs of candidates on which they disagree, every election admissible for $\mathcal{R}$ is at swap distance at least $\sum_{i}|R_i|$ from~$E$. Thus, the algorithm sets $f(\R)=\infty$ whenever $\sum_{i}\!|R_i|>\budget$.

	Assume the algorithm is in a state $\mathcal{R}$ with $\sum_i|R_i|\leq\budget$. For each voter $v_i$, we build the tournament~$T_i$ obtained from~$v_i$ by reversing exactly the arcs corresponding to the pairs in~$R_i$. If some~$T_i$ contains a directed triangle, we set~$f(\R)=\infty$ in case all three arcs of the triangle belong to~$R_i$, and otherwise we branch by adding to~$R_i$ each of the (at most two) pairs of the triangle outside~$R_i$ and set $f(\R)=\min f(\R')$ where the minimum is taken over all  states~$\R'$ resulting from the branching. If every tournament~$T_i$ is acyclic, it is transitive and admits a unique linear extension $\hat{v}_i$. We let $\hat{E}=(\candidates,(\hat{v}_1,\ldots,\hat{v}_{\numVoters}))$. If $\hat{E}\in \domain$, we set $f(\R)=\sum_i|R_i|$. Otherwise, we find some $(\forbCandidates,\forbVoters) \in \F$ present in~$\hat{E}$;
    in particular, for every injection $\pi:\candidates^\star \to \candidates$ and each partial vote $v \in \forbVoters$ we search for a vote $\varphi(v) \in \{ \hat{v}_1, \dots,\hat{v}_{\numVoters}\}$ that extends the partial vote obtained from~$v$ by replacing each candidate~$c \in \candidates^\star$ with~$\pi(c)$. 
    Note that this can be done in $|\F| \cdot  \numCandidates^{\maxForbCand} \cdot \mathcal{O}(\numCandidates \numVoters)$ time. 
    We branch over all voters $v_i \in \{\varphi(v)\colon v \in \forbVoters\}$ and over all pairs $(a,b) \in P_{\varphi^{-1}(v_i)}$ for which $\{\pi(a),\pi(b)\}$ is not a committed inversion in~$R_i$; %
    in the branch corresponding to~$v_i$ and~$(a,b)$, we add $\{\pi(a),\pi(b)\}$ to $R_i$. We return~$f(\R)=\min f(\R')$ where the minimum is taken over all states~$\R'$ obtained in some branch. Every branch adds a new pair to some~$R_i$, so $\sum_i|R_i|$ strictly increases
    along every branch. Moreover, the recursion terminates when $\sum_i |R_i|$ exceeds~$\budget$.
    The algorithm starts in state $\R_0=(\emptyset,\ldots,\emptyset)$ and returns \emph{yes} if and only if $f(\R_0)$ is finite. 

    We prove the correctness of the algorithm in 
    \iflong \Cref{app:proof:thm:NE:swap:domainByForbidden:FPT:budget}.
    \else
    the supplementary material. %
    \fi %
    \toappendix{
    \subsection{Proof correctness for \Cref{thm:NE:swap:domainByForbidden:FPT:budget}}\label{app:proof:thm:NE:swap:domainByForbidden:FPT:budget}
	
    We prove the correctness of the algorithm, described in the main text, in a series of claims.
	\begin{claim}
		\label{clm:SPeak:swap:FPT:exactness}
		Let $\mathcal{R}$ be a state in which every tournament $T_i$ is acyclic. Then for every voter $v_i$, the vote $\hat{v}_i$ disagrees with~$v_i$ exactly on the pairs in~$R_i$. Consequently, $\hat{E}$ is admissible for~$\mathcal{R}$ and at swap distance exactly $\sum_i|R_i|$ from $E$, and every election admissible  for~$\R$ agrees with $\hat{E}$ on all pairs in~$R_i$, $i \in [\numVoters]$.
	\end{claim}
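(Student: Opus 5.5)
The plan is to argue directly from the construction of the tournaments $T_i$. Recall that $T_i$ is obtained from the tournament of the linear order $v_i$ by reversing exactly the arcs of the pairs in $R_i$. When all $T_i$ are acyclic, each $T_i$ is a transitive tournament. A transitive tournament on $\candidates$ determines a unique linear order $\hat v_i$, and $\hat v_i$ ranks $a$ above $b$ exactly when $T_i$ has the arc $(a,b)$.

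The first step is to compare $\hat v_i$ with $v_i$ pair by pair. Take an unordered pair $\{a,b\}$. If $\{a,b\}\notin R_i$, the arc between $a$ and $b$ in $T_i$ is the one from $v_i$, so $\hat v_i$ agrees with $v_i$ on this pair. If $\{a,b\}\in R_i$, the arc was reversed, so $\hat v_i$ disagrees with $v_i$ on it. Hence the set of pairs on which $\hat v_i$ and $v_i$ disagree is exactly $R_i$.

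The remaining statements then follow quickly.
\begin{itemize}
\item $\hat v_i$ inverts every pair of $R_i$ relative to $v_i$, so $\hat E$ is admissible for $\R$ by definition.
\item The swap distance between two votes equals the number of pairs on which they disagree. So $\dist(v_i,\hat v_i)=|R_i|$, and summing over voters gives $\dist(E,\hat E)=\sum_i|R_i|$.
\item Let $E'$ be any election admissible for $\R$. For each pair in $R_i$, its $i$-th vote orders that pair opposite to $v_i$. So does $\hat v_i$, and since there are only two orientations, the two votes agree on every pair in $R_i$.
\end{itemize}

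The only point needing care is that the pairs in $R_i$ are recorded as unordered pairs. Reversal is therefore well defined relative to $v_i$ and cannot be applied twice. This holds because the branching rules only ever add pairs not already in $R_i$. Beyond this, the proof is mechanical and I expect no real obstacle.
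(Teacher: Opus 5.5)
Your proposal is correct and takes essentially the same route as the paper: $\hat v_i$, as the unique linear order of the transitive tournament $T_i$, realizes every arc of $T_i$, so it differs from $v_i$ exactly on the pairs of $R_i$. Admissibility of $\hat E$, the distance $\sum_i |R_i|$, and the agreement of every admissible election with $\hat E$ on the pairs of $R_i$ then follow immediately, just as you argue. Your closing remark about unordered pairs is accurate but not needed, since $R_i$ is a set and cannot contain a pair twice.
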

	\begin{claimproof}
		The vote $\hat{v}_i$ is the unique linear extension of the transitive tournament $T_i$, so it realizes every arc of $T_i$: it inverts, relative to $v_i$, all pairs in $R_i$, and it orders every pair outside $R_i$ as $v_i$ does, by construction of $T_i$. Hence, $\hat{v}_i$ disagrees with $v_i$ exactly on $R_i$, so $\hat{E}$ is admissible and $\sum_i\dist(v_i,\hat{v}_i)=\sum_i|R_i|$. Finally, any vote admissible for~$R_i$ inverts every pair in $R_i$ relative to $v_i$, and so does~$\hat{v}_i$; hence, the two agree on all pairs in $R_i$.
	\end{claimproof}

	\begin{claim}
		\label{clm:SPeak:swap:FPT:soundness}
		Whenever  $f(\mathcal{R})=\budget' \neq \infty$, we have $\budget'\leq\budget$ and there is an election in~$\domain$ admissible for $\mathcal{R}$ at swap distance exactly $\budget'$ from~$E$.
	\end{claim}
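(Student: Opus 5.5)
The claim is proved by induction on the structure of the recursion, i.e., on the quantity $\budget+1-\sum_i|R_i|$. This quantity strictly decreases in every branch, because each branching step adds a new pair to some~$R_i$. The recursion stops as soon as it becomes negative, so the induction is well-founded. Throughout, the induction hypothesis will be the statement of the claim itself, applied to every child state~$\R'$ generated from~$\R$.

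I will go through the ways in which $f(\R)$ can be assigned a finite value.
\begin{enumerate}
\item The state is rejected because $\sum_i|R_i|>\budget$, or because a triangle has all three arcs in~$R_i$. Then $f(\R)=\infty$ and there is nothing to prove.
\item Every tournament~$T_i$ is acyclic and $\hat E\in\domain$, so that $f(\R)=\sum_i|R_i|$. This value is at most~$\budget$, since otherwise we would be in the first case. By \Cref{clm:SPeak:swap:FPT:exactness}, $\hat E$ is admissible for~$\R$ and lies at swap distance exactly $\sum_i|R_i|$ from~$E$. So $\hat E$ is the desired witness.
\item $f(\R)=\min_{\R'} f(\R')$ over child states. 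If this value $\budget'$ is finite, it is attained by some child~$\R'$ with $f(\R')=\budget'$. By induction, $\budget'\le\budget$, and there is an election $E'\in\domain$ admissible for~$\R'$ at distance exactly~$\budget'$ from~$E$.
\end{enumerate}

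In the third case it remains to check that $E'$ is also admissible for~$\R$. This holds because $\R'$ arises from~$\R$ only by adding pairs, so $R_i\subseteq R'_i$ for all~$i$. Inverting every pair of~$R'_i$ relative to~$v_i$ therefore also inverts every pair of~$R_i$. Hence $E'$ witnesses the claim for~$\R$.

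The only point that needs care is that the admissibility conditions are monotone under enlarging the sets~$R_i$. Once this is noted, every case is immediate. The triangle branch and the forbidden-subelection branch are handled identically, since both only add committed inversions.
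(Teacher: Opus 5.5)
Your proof is correct and follows essentially the same route as the paper: induction on the remaining budget $\budget-\sum_i|R_i|$. The direct case uses \Cref{clm:SPeak:swap:FPT:exactness} to certify $\hat{E}$ as the witness. The branching case uses the induction hypothesis on the child state attaining the minimum, together with the fact that admissibility is monotone under $R_i\subseteq R'_i$.
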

	\begin{claimproof}
		We proceed by induction over~$\budget-\sum_i |R_i|$. If the algorithm computes $f(\R)$ directly and $f(\R)=\budget'$ is finite, then $\budget'=\sum_i|R_i|\leq\budget$ and $\hat{E}$ is in~$\domain$; moreover, by \Cref{clm:SPeak:swap:FPT:exactness}, $\hat{E}$ is admissible for~$\mathcal{R}$ and at swap distance exactly $\budget'$ from $E$. Otherwise,  $f(\R)=\budget'$ is the finite value returned by some branch with state $\mathcal{R}'$, where $\mathcal{R}'$ arises from $\mathcal{R}$ by adding one pair to some~$R_i$. By the induction hypothesis, we get that $\budget'\leq\budget$ and there is an election in~$\domain$ admissible for~$\mathcal{R}'$ at swap distance exactly $\budget'$ from $E$; since $R_i\subseteq R'_i$ for every $i$, every election admissible for $\mathcal{R}'$ is admissible for $\mathcal{R}$, which completes the proof.
	\end{claimproof}

	\begin{claim}
		\label{clm:SPeak:swap:FPT:completeness}
		If there is an election $E'$ in~$\domain$ admissible for~$\mathcal{R}$ at swap distance $\budget'\leq\budget$ from $E$, then the algorithms set $f(\mathcal{R})$ to a value at most~$\budget'$.
	\end{claim}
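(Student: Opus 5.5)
We argue by induction on $\budget-\sum_i|R_i|$, in reverse order of the recursion, mirroring the proof of \Cref{clm:SPeak:swap:FPT:soundness}. Fix an election $E'=(\candidates,(v'_1,\ldots,v'_\numVoters))$ in~$\domain$ that is admissible for~$\R$ and at swap distance $\budget'\le\budget$ from~$E$. Let $D_i$ be the set of pairs on which $v'_i$ disagrees with~$v_i$. Admissibility gives $R_i\subseteq D_i$, and $\sum_i|D_i|=\budget'$. In particular $\sum_i|R_i|\le\budget'\le\budget$, so the algorithm does not set $f(\R)=\infty$ because of the budget check. Throughout, the guiding invariant is that every branch we follow adds to some $R_i$ a pair of~$D_i$, so that $E'$ remains admissible for the new state~$\R'$.

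\emph{Triangle case.} Suppose some $T_i$ contains a directed triangle on $a,b,c$. Since $v'_i$ is a linear order, it cannot realize all three arcs of this triangle, so $v'_i$ disagrees with $T_i$ on at least one of the three pairs. On pairs in~$R_i$, $T_i$ agrees with $v'_i$, because both invert $v_i$ there. Hence the disagreement occurs on a pair outside~$R_i$, where $T_i$ agrees with~$v_i$; that pair therefore lies in $D_i\setminus R_i$. This has two consequences. First, the case ``all three arcs in~$R_i$'' (where $f(\R)=\infty$) cannot occur. Second, one of the branches adds exactly this pair, giving a state~$\R'$ for which $E'$ is still admissible. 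Since $\sum_i|R'_i|$ has increased by one, the induction hypothesis yields $f(\R')\le\budget'$, and hence $f(\R)\le\budget'$.

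\emph{Acyclic case.} If every $T_i$ is acyclic, \Cref{clm:SPeak:swap:FPT:exactness} tells us that $\hat E$ is at distance $\sum_i|R_i|\le\budget'$ from~$E$. If $\hat E\in\domain$, then $f(\R)=\sum_i|R_i|\le\budget'$ and we are done. Otherwise the algorithm finds an occurrence of some $(\forbCandidates,\forbVoters)\in\F$ in~$\hat E$, given by an injection~$\pi$ and an assignment~$\varphi$ of partial votes to votes of~$\hat E$. This occurrence cannot be present in~$E'$, since $E'\in\domain$. Consequently, for some $v\in\forbVoters$ with $\varphi(v)=\hat v_i$, the vote $v'_i$ fails to extend the partial vote induced by~$\pi$. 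Because $\pi(v)$ is the transitive closure of the covering pairs~$P_v$, failing to extend it means that $v'_i$ reverses some covering pair $(\pi(a),\pi(b))$ with $(a,b)\in P_v$; note that it is enough to check the covering pairs, since the extension property is determined by them. The vote $\hat v_i$ orders this pair as $\pi(a)\succ\pi(b)$, while $v'_i$ does not. By \Cref{clm:SPeak:swap:FPT:exactness}, $v'_i$ agrees with $\hat v_i$ on all pairs of~$R_i$, so this pair is not in~$R_i$. Outside~$R_i$, $\hat v_i$ agrees with~$v_i$, so $v'_i$ disagrees with $v_i$ on this pair, i.e., the pair lies in $D_i\setminus R_i$. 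The algorithm branches on precisely such pairs (voter $v_i$ and a pair of $P_{\varphi^{-1}(v_i)}$ not already committed), so one branch again keeps $E'$ admissible. Applying the induction hypothesis completes this case.

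The main obstacle is the acyclic case: one has to carefully justify that a failed extension is witnessed by a covering pair of~$P_v$. One must also handle the fact that several partial votes may be mapped to the same voter, which is why the branching ranges over all such partial votes. The base of the induction is the point where $\sum_i|R_i|$ cannot increase further without exceeding $\budget'$. There the triangle case is impossible, since it would produce a pair of $D_i\setminus R_i$ although $|R_i|$ already equals $|D_i|$ summed over $i$, and the same argument rules out $\hat E\notin\domain$. Hence $f(\R)=\sum_i|R_i|\le\budget'$ directly.
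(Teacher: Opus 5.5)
Your proposal is correct and follows essentially the same route as the paper. The paper also inducts on the gap between the budget and $\sum_i|R_i|$. In the triangle case, $v'_i$ must invert an uncommitted pair of the triangle. In the acyclic case, some covering pair in $P_v$ is ordered differently by $v'_i$ and $\hat v_i$; by the exactness claim this pair lies outside $R_i$, so $E'$ stays admissible in the corresponding branch.

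Your separate base case is harmless but not needed, since the inductive step already yields a contradiction when $R_i=D_i$ for all $i$. You also keep the bound at $\budget'$ throughout, which is what the claim requires, whereas the paper's write-up states only $\le\budget$ in two places (the triangle branch and the case $\hat E\in\domain$).
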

	\begin{claimproof}
		We proceed by induction over~$d'-\sum_i |R_i|$, and denote the votes of~$E'$ by $v'_1,\ldots,v'_{\numVoters}$. Since $E'$ is admissible for~$\R$, we have $\sum_i|R_i|\leq\sum_i\dist(v_i,v'_i)=\budget'\leq\budget$, so the algorithm does not set $f(\R)$ as~$ \infty$ in the initial pruning step. Suppose first that some~$T_i$ contains a directed triangle. No linear order realizes all three arcs of a directed triangle, and $v'_i$ realizes all arcs of~$R_i$; hence $v'_i$ inverts, relative to~$v_i$, at least one pair of the triangle outside~$R_i$. In particular, not all three arcs belong to~$R_i$, and $E'$ is admissible for the state~$\mathcal{R}'$ of the corresponding branch. By the induction hypothesis, the algorithm computes a value $f(\R') \leq \budget$ in that branch, which implies $f(\R) \leq f(\R') \leq \budget$. 
        Suppose next that all tournaments are acyclic. 
        If the algorithm finds that~$\hat{E}$ belongs to~$\domain$, then it returns the value $\sum_i|R_i| \leq \budget$.
        Otherwise, assume that it finds a forbidden partial subelection $(\forbCandidates,\forbVoters) \in \F$ present in~$\hat{E}$;
        let $\pi:\candidates^\star \to \candidates$ and $\varphi:\voters^\star \to \voters$ be the injections witnessing this. Since  $E' \in \domain$, we know that $(\forbCandidates,\forbVoters)$ is not present in~$E'$. Consequently, there exists some~$v \in \forbVoters$ and a pair $(a,b) \in P_v$ for which the voter $\varphi(v)=\hat{v}_i$ in~$\hat{E}$ and the corresponding voter~$v'_i$ in~$E'$ ranks $\pi(a)$ and~$\pi(b)$ differently. 
        By \Cref{clm:SPeak:swap:FPT:exactness}, $v'_i$ agrees with $\hat{v}_i$ on all pairs in~$R_i$, so~$\{\pi(a),\pi(b)\} \notin R_i$, and $\hat{v}_i$ orders $\{\pi(a),\pi(b)\}$ as $v_i$ does. Hence, $v'_i$ inverts $\{\pi(a),\pi(b)\}$ relative to~$v_i$ and so $E'$ is admissible for the state of the branch that adds~$\{\pi(a),\pi(b)\}$ to~$R_i$. By the induction hypothesis, this branch returns a value at most~$\budget'$, which ensures that the algorithm sets $f(\R)$ to a value at most~$\budget'$. 
	\end{claimproof}

	Combining \Cref{clm:SPeak:swap:FPT:soundness,clm:SPeak:swap:FPT:completeness}, the algorithm sets $f(\mathcal{R})$ as the minimum swap distance between~$E$ and an election in~$\domain$ admissible for $\mathcal{R}$ if this distance is at most $\budget$, and $\infty$ otherwise. In particular, $f(\R_0)$ is a finite value if and only if some election in~$\domain$ lies within swap distance~$\budget$ of~$E$, and in that case it returns the minimum such distance.
    }%
    There are at most two branches in each triangle-resolution branching and at most $\sum_{v\in \forbVoters} |P_v|$ branches in each branching that results from finding a forbidden partial subelection in~$\hat{E}$; therefore, the number of branches in such a case is at most the size of this partial subelection and, hence, 
    at most~$\maxProfileSize$. Each branch adds a new committed inversion, and states with $\sum_i|R_i|>\budget$ are pruned, so the depth of the search tree is at most $\budget$. Hence, starting from $\R_0=(\emptyset,\ldots,\emptyset)$, we obtain a search tree of size $\Oh{\max(2,\maxProfileSize)^{\budget}}$, and in each state we spend $|\F| \cdot  \numCandidates^{\maxForbCand} \cdot \mathcal{O}(\numCandidates  \numVoters)$ time for building the tournaments, detecting directed triangles, and finding a forbidden partial subelection (or detecting that $\hat{E}$ belongs to~$\domain$). This proves the theorem.%
\end{proofsketch}

It is known that both the $\SPeak$ and the $\GS$ domains can be characterized by a family of forbidden partial subelections, each of size~$6$~\cite{bal-har:j:characterization-single-peaked}. 
Moreover, the $\GScat$ domain is characterized by a family of forbidden partial subelections, one of size~$8$ and the rest of size~$6$~\cite{kra-elk:c:explaining-prefs}.  
The $\SCros$ domain is characterized by two forbidden partial subelections, one of size~$8$ and the other of size~$9$~\cite{bre-che-woe:j:single-crossing}. Hence,
we have the following corollary.

\begin{corollary}
\label{cor:forbidden-stuff}
\label{thm:NE:swap:singlpeaked:FPT:budget}
\label{thm:NE:swap:singlcrossing:FPT:budget}
\label{thm:NE:swap:GS:FPT:budget}
\label{thm:NE:swap:GScat:FPT:budget}
     \swapNE{\mathcal{X}} can be solved in 
     \begin{itemize}
         \item $6^\budget  \cdot (\numCandidates \numVoters)^\Oh{1}$ time if $\mathcal{X} \in \{\SPeak,\GS\}$;
         \item $8^\budget \cdot (\numCandidates \numVoters)^\Oh{1}$ time if $\mathcal{X}=\GScat$;
         \item $9^\budget \cdot (\numCandidates \numVoters)^\Oh{1}$ time if $\mathcal{X}=\SCros$.
     \end{itemize}
\end{corollary}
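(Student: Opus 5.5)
This corollary follows by instantiating \Cref{thm:NE:swap:domainByForbidden:FPT:budget} with the known forbidden-subelection characterizations. For each family $\mathcal{X}$ we need three facts: (i) $\mathcal{X}$ is characterized by a family $\F$ of forbidden partial subelections; (ii) every member of $\F$ has a constant number $\maxForbCand$ of candidates and size at most $\maxProfileSize$; (iii) $|\F|$ is constant. Then the bound $\Oh{\max(2,\maxProfileSize)^\budget\cdot|\F|\cdot\numCandidates^{\maxForbCand+1}\numVoters}$ becomes $\max(2,\maxProfileSize)^\budget\cdot(\numCandidates\numVoters)^{\Oh{1}}$.

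\textbf{Reading off the parameters.} For $\SPeak$ and $\GS$ we use the characterization of \citet{bal-har:j:characterization-single-peaked}: finitely many patterns, each of size $6$ on at most four candidates, so $\maxProfileSize=6$. For $\GScat$ we use \citet{kra-elk:c:explaining-prefs}: one pattern of size $8$ and the others of size $6$, so $\maxProfileSize=8$. For $\SCros$ we use \citet{bre-che-woe:j:single-crossing}: two patterns, of sizes $8$ and $9$, so $\maxProfileSize=9$. In each case the patterns use a bounded number of candidates (at most six), so $\numCandidates^{\maxForbCand+1}$ is polynomial. The exponential bases $6,6,8,9$ then follow.

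\textbf{Main obstacle.} The delicate step is matching the cited characterizations to the paper's definition of ``characterized by forbidden partial subelections.'' That definition requires that a linear extension of the pattern be present, with partial votes given by cover pairs $P_v$ and size $\sum_v|P_v|$. Two points need care.

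First, we must confirm that the size the theorem uses (the number of cover pairs) equals the quoted numbers. For example, a vote in a single-peaked pattern such as $\{a,c\}\succ b$ contributes two cover pairs. We would recount each pattern explicitly.

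Second, single-peakedness and single-crossingness are properties of an \emph{election} (existence of some axis or ordering), not membership in a fixed domain. The corollary's statement concerns the family, and \Cref{thm:NE:swap:domainByForbidden:FPT:budget} is written for a domain $\domain$. We would observe that its proof only tests whether $\hat E$ has the property and searches for pattern occurrences; it never uses a fixed domain. So it applies verbatim with ``$\hat E\in\domain$'' replaced by ``$\hat E$ is an $\mathcal{X}$ election.'' This is legitimate because these families are hereditary and the cited results characterize exactly the elections admitting some axis, tree, or ordering.

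Finally, $\hat E\in\mathcal{X}$ can be checked in polynomial time in every case, either directly or simply by checking that no pattern is present.
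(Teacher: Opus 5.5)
Your proposal is correct and follows the paper's argument. The paper, like you, obtains the corollary by plugging the cited forbidden-partial-subelection characterizations into \Cref{thm:NE:swap:domainByForbidden:FPT:budget}: sizes $6$ for $\SPeak$ and $\GS$, at most $8$ for $\GScat$, and at most $9$ for $\SCros$, each with finitely many patterns on a bounded number of candidates. Your extra remark is also sound, and the paper leaves it implicit: the theorem's algorithm only tests whether $\hat E$ belongs to the family and searches for pattern occurrences, so it applies verbatim to families such as $\SPeak$ and $\SCros$.
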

We remark that we also developed an algorithm tailored directly to the $\GS$ domain; see 
\iflong
\Cref{app:GSbin}.
\else
the supplementary material.
\fi

\toappendix{

\subsection{An alternative algorithm for \probName{Nearly GS Election}}
\label{app:GSbin}
\begin{theorem}
    \label{thm:NE:swap:binary:FPT:budget}
    \swapNE{\GS} can be solved in $\mathcal{O}^*( 2^{3\budget^2+\mathcal{O}(\budget \log \budget) })$ time.
\end{theorem}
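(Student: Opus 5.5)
A direct bounded-search-tree algorithm that builds the $\GS$ tree top-down, guided by the votes. The bound $2^{3\budget^2+\Oh{\budget\log\budget}}$ suggests a two-stage design: a kernelization-style argument reducing the problem to $\Oh{\budget}$ relevant candidates, followed by branching whose cost is quadratic in $\budget$ in the exponent.

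\textbf{Step 1: reduce the voters.} If $E$ has more than $\budget$ votes differing from a reference vote, pairwise considerations cap the number of distinct modified votes. Concretely, at most $\budget$ votes are changed, so all but $\budget$ votes already lie in the target domain $\domain(T,\mapping)$. I would first guess (by branching) the set of modified votes, or rather handle it implicitly. Every split of the tree must be consistent with all unchanged votes: each candidate set at a node is an interval in every unchanged vote, and the two children occupy a prefix or suffix of it.

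\textbf{Step 2: recursive splitting.} Take the root split $(L,R)$. For every unchanged vote, $L$ is a prefix or suffix of that vote. Using the first vote $v_1$, which is modified by at most $\budget$ swaps, $L$ must be, up to at most $\budget$ "misplaced" candidates, a prefix or suffix of $v_1$. I would guess the cut position in $v_1$ only near positions where it matters. The key observation is that any prefix split of $v_1$ that is compatible with all votes after modifications is determined by which candidates within distance $\budget$ of the cut are moved. This gives at most $2\cdot\binom{2\budget}{\le \budget}$-type choices per "nontrivial" node. A node is nontrivial only if some swap is charged to it, so there are at most $\budget$ nontrivial nodes. Trivial nodes, where no vote needs swaps, are forced by a greedy rule: the split is a common prefix/suffix block of all current votes, and by the hereditary uniqueness such canonical splits can be taken without loss of generality.

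\textbf{Step 3: bound the branching.} At each of at most $\budget$ nontrivial nodes, we do two things. First, we guess which votes receive swaps, i.e., a distribution of the remaining budget, costing $2^{\Oh{\budget\log\budget}}$ overall. Second, we guess the split within a window of $\Oh{\budget}$ candidates around the cut in the reference vote, giving at most $2^{3\budget}$ choices. Over $\budget$ nodes this yields $2^{3\budget^2}$.

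\textbf{Main obstacle.} The hardest step is proving that trivial nodes can be resolved canonically without branching. The window argument needs that a candidate more than $\budget$ positions from the cut in the reference vote cannot cross sides. I would prove this exchange lemma by showing that moving such a candidate costs at least $\budget+1$ swaps in that vote, and that among zero-cost splits the finest common block decomposition never hurts, using that $\GS$ is hereditary.
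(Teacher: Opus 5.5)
Your overall skeleton matches the paper's: split off clean splits for free, and otherwise branch on the root split using a window in a reference vote, with every branch costing at least one swap. But there is a genuine gap exactly where you write ``I would guess the cut position in $v_1$ only near positions where it matters.''

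\textbf{The missing bound on cut positions.} Once the size $k=|L|$ of the root split is fixed, your window argument works. Every admissible $L$ contains the top $k-\budget$ candidates of $v_1$ and none below position $k+\budget$, so there are at most $2^{2\budget}$ choices for $L$. But $k$ can be any value in $[\numCandidates-1]$, and nothing in your proposal bounds the number of positions to try by a function of~$\budget$. Branching on $k$ at each of up to $\budget$ nontrivial nodes gives $\numCandidates^{\budget}\cdot 2^{\Oh{\budget^2}}$, which is XP, not FPT.

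Your figure of $2^{3\budget}$ choices per node is not derived from anything. The window yields only $2^{2\budget}$; the missing factor $\Oh{\budget}\cdot 2^{\budget}$ is precisely a bound on the number of admissible positions. That such a bound exists once $E$ has no clean split is true, but it is the heart of the paper's proof.

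The paper proves it as follows. Suppose $E$ admits no clean split. Then at most $8\budget\cdot 2^{\budget}$ positions $k$ admit a set $S$ with $|S|=k$, split cost at most $\budget$, and movable to the top of the reference vote within $\budget$ swaps. The argument runs in four steps:
\begin{enumerate}
\item Fix an optimal tree and the path from its root to the $k$-th leaf. Call a node on this path heavy if its subtree has more than $4\budget$ leaves on each side of that leaf.
\item Show that heavy nodes must be flipped in lockstep in every vote of the optimal election.
\item Conclude that the heavy nodes induce a chain of nested clean splits of the optimal election. At most $\budget$ of these can be destroyed by $\budget$ adjacent swaps.
\item Since $E$ has no clean split, the heavy part of the path has length at most $\budget$. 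Hence $k$ lies within $\Oh{\budget}$ of a subtree boundary at one of $\Oh{2^{\budget}}$ nodes of depth at most $\budget$.
\end{enumerate}
Your proposal contains nothing that plays this role.

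\textbf{Guessing the modified votes.} Guessing which votes receive swaps (your Steps~1 and~3) does not cost $2^{\Oh{\budget\log\budget}}$. Selecting up to $\budget$ of the $\numVoters$ votes costs $\numVoters^{\Oh{\budget}}$, which again breaks fixed-parameter tractability. It is also unnecessary: for a fixed set $S$, the cost of making $(S,\candidates\setminus S)$ clean is computed vote by vote, moving $S$ to the cheaper end of each vote.

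\textbf{The step you flag as hardest is the easy one.} If $E$ has any clean split $(C_1,C_2)$, then
$\dist(E,\GS)=\dist(\restr{E}{C_1},\GS)+\dist(\restr{E}{C_2},\GS)$.
\begin{itemize}
\item For $\geq$, restrict an optimal tree to $C_1$ and to $C_2$.
\item For $\leq$, hang optimal trees for $C_1$ and $C_2$ under a new $\typeQ$-root. No cross swaps are needed precisely because the split is clean.
\end{itemize}
So zero-cost splits can be taken greedily, with no ``finest common block decomposition'' argument. Once no clean split exists, every root split costs at least one swap, which bounds the branching depth by~$\budget$.
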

\begin{proof}
Let $E=(\candidates,\voters)$ be our input election, and $\budget$ our budget.
We say that $(C_1,C_2)$ is a \emph{clean split} for~$E$ if in each vote~$\vote \in \voters$, either all candidates in~$C_1$ precede all those in~$C_2$, or vice versa, all candidates in~$C_2$ precede those in~$C_1$.
We start with a claim that allows us to divide an instance into two smaller instances whenever the election admits a clean split.
\begin{claim}
\label{clm:clean-split}
    Suppose that $(C_1,C_2)$ is a partition of~$\candidates$ into two non-empty sets that is a clean split for~$E$. Let $\budget_0=\dist(E,\GS)$, $\budget_1=\dist(\restr{E}{C_1},\GS)$, and $\budget_2=\dist(\restr{E}{C_2},\GS)$ be the swap distances of~$E$, of~$\restr{E}{C_1}$, and of~$\restr{E}{C_2}$, respectively, from the family of $\GS$ domains. Then $\budget_0=\budget_1+\budget_2$.
\end{claim}
\begin{claimproof}
    For two votes~$u$ and~$w$ over the same candidate set, let $\Inversions(u,w)$ denote the set of candidate pairs that are ordered differently in~$u$ and in~$w$; recall that $\dist(u,w)=|\Inversions(u,w)|$.
    We use the following two observations.
    First, for each subset~$D$ of candidates,
    \begin{equation}
    \label{eq:clean-split:restriction}
        \Inversions(\restr{u}{D},\restr{w}{D})=\Inversions(u,w) \cap \binom{D}{2},
    \end{equation}
    since restricting both votes to~$D$ retains exactly those pairs having both of their candidates in~$D$.
    Second, given a binary \typeQ-tree~$T$ with a mapping~$\mapping$ and a subset~$D$ of candidates, let $\restr{T}{D}$ denote the binary \typeQ-tree obtained from~$T$ by deleting all leaves in~$\mapping(\candidates \setminus D)$ and then contracting every edge that connects a node to a unique child. Then
    \begin{equation}
    \label{eq:clean-split:heredity}
        \restr{u}{D} \in \domain(\restr{T}{D},\restr{\mapping}{D}) \quad \textrm{ for each vote } u \in \domain(T,\mapping),
    \end{equation}
    because the permutations of the children of the internal nodes of~$T$ that witness $u \in \domain(T,\mapping)$ induce such permutations in~$\restr{T}{D}$ as well.

    We first prove $\budget_1+\budget_2 \leq \budget_0$; note that this direction holds for an arbitrary partition~$(C_1,C_2)$ of~$\candidates$.
    Let $T$ be a binary \typeQ-tree with a mapping~$\mapping$ such that $\dist(E,\domain(T,\mapping))=\budget_0$, and for each vote~$\vote$ of~$E$ let $u_\vote$ be a vote in~$\domain(T,\mapping)$ closest to~$\vote$.
    Since the pairs within~$C_1$ and the pairs within~$C_2$ form two disjoint subsets of~$\Inversions(\vote,u_\vote)$, we obtain from~\eqref{eq:clean-split:restriction} that
    \[
        \dist(\vote,u_\vote) \geq \dist(\restr{\vote}{C_1},\restr{{u_\vote}}{C_1})+\dist(\restr{\vote}{C_2},\restr{{u_\vote}}{C_2}).
    \]
    By~\eqref{eq:clean-split:heredity} we know $\restr{{u_\vote}}{C_i} \in \domain(\restr{T}{C_i},\restr{\mapping}{C_i})$ for both $i \in [2]$, so summing the above inequality over all votes of~$E$ yields
    \begin{multline*}
        \budget_0 = \dist(E,\domain(T,\mapping)) \\\geq
        \sum_{i \in [2]} \dist(\restr{E}{C_i},\domain(\restr{T}{C_i},\restr{\mapping}{C_i}))
        \geq \budget_1+\budget_2,
    \end{multline*}
    where the last inequality holds because $\budget_i$ is the smallest distance of~$\restr{E}{C_i}$ from a $\GS$ domain.

    We now prove $\budget_0 \leq \budget_1+\budget_2$; this is the direction where we use that $(C_1,C_2)$ is a clean split.
    For each $i \in [2]$, let $T_i$ be a binary \typeQ-tree with a mapping~$\mapping_i$ from~$C_i$ to the leaves of~$T_i$ such that $\dist(\restr{E}{C_i},\domain(T_i,\mapping_i))=\budget_i$.
    Let $T$ be the binary \typeQ-tree whose root has $T_1$ and~$T_2$ as the subtrees rooted at its two children, and let $\mapping=\mapping_1 \cup \mapping_2$.
    Observe that a vote~$u$ belongs to~$\domain(T,\mapping)$ whenever the candidates of~$C_1$ and those of~$C_2$ form two contiguous blocks in~$u$ and $\restr{u}{C_i} \in \domain(T_i,\mapping_i)$ for both $i \in [2]$, since the permutation of the children of the root of~$T$ allows for both orders of these two blocks.

    Consider now a vote~$\vote$ of~$E$, and for both $i \in [2]$ let $u_i$ be a vote in~$\domain(T_i,\mapping_i)$ closest to~$\restr{\vote}{C_i}$.
    As $(C_1,C_2)$ is a clean split, $\vote$ is the concatenation of the blocks $\restr{\vote}{C_1}$ and~$\restr{\vote}{C_2}$ in one of the two possible orders; let $u$ be the concatenation of $u_1$ and~$u_2$ in the same order.
    By the observation above, $u \in \domain(T,\mapping)$. Moreover, $\vote$ and~$u$ order each pair of candidates coming from different sets~$C_1$ and~$C_2$ in the same way, so \eqref{eq:clean-split:restriction} yields
    \[
        \dist(\vote,u)=\dist(\restr{\vote}{C_1},u_1)+\dist(\restr{\vote}{C_2},u_2).
    \]
    Summing over all votes of~$E$, we get \[\budget_0 \leq \dist(E,\domain(T,\mapping)) \leq \budget_1+\budget_2,\] 
    which finishes the proof.
\end{claimproof}

Our algorithm starts by trying to identify a clean split for~$E$. To this end, we fix a vote~$\vote \in \voters$, and we check for each ${k \in [\numCandidates-1]}$ whether the candidates ranked within the top $k$ positions in~$\vote$ form a clean split (with the set of the remaining candidates) for~$E$; this can be checked in time $\Oh{\numCandidates \cdot \numVoters}$. If so, we divide our instance along this split into two sub-instances using \Cref{clm:clean-split}, and apply recursion. Henceforth, we assume that $E$ admits no clean split.

Consider now an election~$E'$ in some $\GS$ domain with minimum swap distance from~$E$, and let $T$ be a binary \typeQ-tree with a mapping~$\mapping$ from $\candidates$ to the leaves of~$T$ such that $E' \in \domain(T,\mapping)$. Let us fix a vote~$\vote \in \voters$ and a \emph{default ordering} of the leaves of~$T$ (complying with the structure of~$T$)  which results in a vote~$\vote'$ in~$E'$ closest to~$\vote$ among all votes in~$\domain(T,\mapping)$.
Let $S^\star$ denote the set of candidates mapped to leaves in the  subtree of~$T$ rooted at the left child of the root~$r$.
We provide an algorithm that determines a family~$\mathcal{S}$ of candidate sets in FPT time with parameter~$\budget$ such that $S^\star \in \mathcal{S}$ and  $|\mathcal{S}|$ is bounded by a function of~$\budget$.

First note that for a set~$S \subseteq \candidates$,
we can compute in $\Oh{\numCandidates \cdot \numVoters}$ time the number of swaps necessary to ensure that $(S,\candidates \setminus S)$ becomes a clean split in~$E$ (i.e., so that in all votes of~$E$, candidates in~$S$ are placed on the first or the last~$|S|$ positions); we call this the \emph{cost} of the split induced by~$S$  and denote it as $\cost(S)$.
Given a position~$k \in [\numCandidates-1]$, we call a candidate set~$S$ \emph{valid for~$k$}
if $|S|=k$, $\cost(S) \leq \budget$, and at most~$\budget$ swaps suffice to move~$S$ to the first positions in~$\vote$.
Clearly, $S^\star$ is valid for~$|S^\star|$.

\begin{claim}
\label{clm:number-of-valid-sets}
For a fixed $k \in [\numCandidates-1]$, there are at most~$2^{2\budget}$ candidate sets valid for~$k$. All such sets and their costs can be computed in $\mathcal{O}^*(2^{2\budget})$ time.
\end{claim}
\begin{claimproof}
Let $C^\vote_j=\{c \colon c \in \candidates, \rank_\vote(c) \leq j\}$ for each $j \in [\numCandidates]$. %
    If $S$ is valid for~$k$, then we can move all candidates in~$S$ with at most~$\budget$ swaps to the first $k$ positions in~$\vote$.
This implies $C_{k-\budget}^\vote \subseteq S$ and $S \cap (\candidates \setminus C^\vote_{k+\budget})=\emptyset$, yielding at most~$2^{2\budget}$ possibilities for~$S$.

    To compute~$\cost(S)$, we simply need to calculate the number of swaps in each vote that suffices to move $S$ to the first or the last~$k$ positions; this can be done in $\Oh{\numCandidates \cdot \numVoters}$ time.
\end{claimproof}

We say that a position in $[\numCandidates-1]$ is \emph{valid} if there exists a candidate set that is valid for it.

\begin{claim}
\label{clm:number-of-valid-positions}
    If $E$ admits no clean split, then there are at most~$8\budget \cdot 2^{\budget}$ valid positions.
\end{claim}
\begin{claimproof}
Let $\budget'=2\budget$.
Fix some valid position~$k \in [\numCandidates-1]$, and consider the path~$P_k$ in~$T$ leading from the root to the $k$-th leaf~$\kappa$ (counted from the left).
We say that a node~$x$ on this path is \emph{heavy}, if the subtree rooted at~$x$ has more than $2\budget'$ leaves to the left of~$\kappa$ and more than $2\budget'$ leaves to the right of~$\kappa$.
By definition, if a node of~$P_k$ is heavy, then so is its parent.
Let $P'_k$ denote the subpath of~$P_k$ formed by heavy nodes.

We say that a node~$x$ and its child~$y$, both on~$P'_k$, form a \emph{rigid pair} if for each vote~$\hat\vote$ of~$E'$, the children of~$x$ have to be reversed  (compared to their ordering in~$T$) if and only if the children of~$y$ have to be reversed when constructing the vote~$\hat\vote \in \domain(T,\mapping)$.
We claim that all edges on~$P'_k$ form a rigid pair.

Assume for a contradiction that $(x,y)$ is not rigid for some edge~$\{x,y\}$ on~$P'_k$; we choose $x$ to be as close to the root as possible.
Let the subpath of~$P'_k$ between $x$ and the root~$r$ be denoted by~$P_{rx}$.
Let $T_1,T_2,\dots,T_\ell$ denote the trees obtained from~$T$ by deleting the vertices of~$P_{rx}$, where $\ell-1$ is the number of vertices on~$P_{rx}$, and let $T_i$ be the unique tree among these that contains~$y$. For each $j \in [\ell]$, let $A_j$ denote the candidates mapped by~$\mapping$ to the leaves of~$T_j$,
and we assume that the default ordering of~$T$ yields a left-to-right order of the candidates that is a refinement of
$A_1,A_2,\dots,A_\ell$; we call this series the partitioning of~$\candidates$ \emph{induced by the path~$P_{rx}$}.
By our choice of~$x$, all edges on~$P_{rx}$ are rigid. Therefore, in all votes~$\hat\vote$ in~$E'$, the ordering of the candidates is a refinement of either~$A_1,A_2,\dots,A_\ell$ or of~$A_\ell,A_{\ell-1},\dots,A_1$.

Let $S$ be a candidate set valid for~$k$. By definition, it is possible to move $S$ to the first~$k$ positions in~$\vote$ by at most~$\budget$ swaps. Moreover, $\vote'$ can also be obtained from~$\vote$ by at most~$\budget$ swaps. Therefore, it is possible to move~$S$ to the first~$k$ positions in~$\vote'$ using at most~$\budget'=2\budget$ swaps; we refer to this fact as ($\blacklozenge$).
Since $y$ is heavy, this implies that $A_1 \cup \cdots \cup A_{i-1} \subseteq S$ and
$(A_{i+1} \cup \cdots \cup A_{\ell}) \cap S = \emptyset$; we refer by ($\spadesuit$) to this property.

Let $T_i^1$ and~$T_i^2$ denote the subtrees rooted at the left and right child of~$y$ (according to the default ordering of~$T$), with candidates sets~$A_i^1$ and~$A_i^2$ mapped to their leaves, respectively.
Since $(x,y)$ is not rigid, there exists a vote $\hat\vote$ in~$E'$ where the ordering of the candidates is a refinement of $A_1,\dots,A_{i-1},A^2_i,A^1_i,A_{i+1},\dots,A_\ell$ or its reverse; we assume the former, as the latter case is analogous.
Due to ($\spadesuit$) and $|A_i|> 4\budget'$, it is not possible to move the vertices in~$S$ to the last~$k$ positions in~$\vote'$ with at most~$\budget'$ swaps; thus, we need to move~$S$ to the first~$k$ positions in~$\vote'$.
Since $S$ is valid for~$k$ and $y$ is heavy, we have that
\begin{itemize}
    \item  the set~$B_1 \subseteq A_i^1$ of candidates with rank at most~$k-\budget'$ in~$\vote'$ must be in~$S$, and
    \item the set~$B_2 \subseteq A_i^2$ of candidates with rank at least $k+\budget'+1$ in~$\vote'$ is disjoint from~$S$.
\end{itemize}

First, if $\kappa$ is contained in $T_i^2$, then by $|A_i^1 \setminus B_1| \leq \budget'-1$ we know that
at least one candidate from~$B_1$, say~$b_1$, must have rank at least~$k+\budget'+1$ in~$\hat\vote$. This means that $b_1$ cannot be moved to the first~$k$ positions within~$\hat\vote$ using at most~$\budget'$ swaps. However, by $b_1 \in S$, this contradicts~($\blacklozenge$). %
Second, if~$\kappa$ is contained in~$T_i^1$, then
by $|A_i^2 \setminus B_2| \leq \budget'$,
at least one candidate from~$B_2$, say~$b_2$, must have rank at most~$k-\budget'$ in~$\hat\vote$; however, this means that $b_2$ cannot be moved out from the first~$k$ positions within~$\hat\vote$ with at most~$\budget'$ swaps. By $b_2 \notin S$, this again contradicts~($\blacklozenge$).

This proves that each edge on~$P'_k$ forms a rigid pair. Let $A_1,\dots,A_{\ell'}$ denote the partitioning of~$\candidates$ induced by~$P'_k$. Since all edges on~$P'_k$ are rigid, in all votes~$\hat\vote$ in~$E'$, the ordering of the candidates is a refinement of either~$A_1,\dots,A_{\ell'}$ or of~$A_{\ell'},\dots,A_1$. Thus, $(A_1 \cup \cdots \cup A_i,A_{i+1} \cup \cdots \cup A_{\ell'})$ forms a clean split for each $i \in [\ell'-1]$ for~$E'$. However, since $E$ has swap distance at most~$\budget$ from~$E'$, and by applying $\budget$ swaps one can increase the cost of at most~$\budget$ of these splits, we obtain that $\ell'-1 \leq \budget$, as otherwise $E$ would admit at least one clean split, a contradiction.
This implies that the node on~$P'_k$ farthest from~$r$ is at most $\budget$ edges away from~$r$ within~$T$; we call this node the \emph{guard} for~$k$, and denote it as~$g(k)$. Let $z$ be the non-heavy child of~$g(k)$ that contains~$\kappa$. Since $z$ is not heavy, we know that $\kappa$ is among the first or last $2\budget'=4\budget$ leaves within the subtree of~$T$ rooted~$g(k)$.

Therefore, for each node~$t$ in~$T$, there can be at most~$8\budget$ valid positions whose guard is~$t$. Moreover, $t$ can only be a guard if it is at most~$\budget$ edges away from the root, implying that there are most~$2^\budget$ guards. Thus, there are at most~$8\budget \cdot 2^\budget$ valid positions in total.
\end{claimproof}

\Cref{clm:number-of-valid-sets,clm:number-of-valid-positions} offer a straightforward approach to construct the desired  family~$\mathcal{S}$ of candidate sets: for each position~$k$, we compute the set of all valid candidate sets. By \Cref{clm:number-of-valid-sets}, this can be done in~$\mathcal{O}^*(2^{2\budget})$ time %
and by \Cref{clm:number-of-valid-sets}, the size of the resulting family $\mathcal{S}$ is  in~$\mathcal{O}^*(8\budget \cdot 2^{3\budget})$. %
Therefore, we can use a classic bounded-search tree approach: if $E$ admits no clean split, we branch by guessing the set~$S^\star$ from among all sets in~$\mathcal{S}$, and in each branch separate our instance to two sub-instances restricted to~$S^\star$ and to~$\candidates \setminus S^\star$. We solve the two obtained sub-instances recursively; note that the parameter decreases by $\cost(S^\star)>0$ in both sub-instances. The recursion stops trivially on instances with only one candidate or with a negative budget. Since the height of the resulting search tree is at most~$d$, the total running time is $\mathcal{O}^*\left((8\budget \cdot 2^{3\budget})^\budget\right)=\mathcal{O}^*( 2^{3\budget^2+\mathcal{O}(\budget \log \budget) })$.
\end{proof}
}

We also provide a specialized algorithm for $\GScat$, which relies on the
observation that in a $\GScat$ election there is a candidate that is always
ranked first or last. The algorithm guesses this candidate, moves him or her
to the closer end of each vote, removes it from consideration and repeats the
process. A similar in spirit approach also works for $\GSbal$, sidestepping
the fact that it is not hereditary.

\begin{restatable}[\linkproof{thm:NE:swap:GSbal:GScat:FPT:budget}]{theorem}{thmNEswapGSbalGScatFPTbudget}
\label{thm:NE:swap:GSbal:GScat:FPT:budget}
    \swapNE{{\GScat}} can be solved in $2^{\Oh{\budget\cdot\log\budget}}\cdot(\numCandidates \numVoters)^{\Oh{1}}$ time, and 
    \swapNE{\GSbal} can be solved in $2^{\Oh{\budget^2}}\cdot(\numCandidates\numVoters)^{\Oh{1}}$ time. 
\end{restatable}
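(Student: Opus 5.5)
For $\GScat$ I would use a bounded search tree built on the recursive characterisation already used in the proof of \Cref{lem:GS-cat_fixed}. If an election is caterpillar-consistent with respect to $\mapping=(\mapping_1,\dots,\mapping_\numCandidates)$, then $\mapping_1$ is first or last in every vote. The remaining election, restricted to $\candidates\setminus\{\mapping_1\}$, is caterpillar-consistent with respect to $(\mapping_2,\dots,\mapping_\numCandidates)$.

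The algorithm repeatedly picks the candidate $c$ that will become the next leaf. In each vote it moves $c$ to the cheaper end, which costs $\min(\rank_\vote(c)-1,\numCandidates'-\rank_\vote(c))$ swaps, where $\numCandidates'$ is the current number of candidates. It then deletes $c$ and recurses. The swap counts add up exactly, because the pairs involving $c$ are disjoint from the pairs among the remaining candidates. So this is the optimal cost of the DP in \Cref{thm:NE:swap:GScat:FPT:numCandidates}, and the only task is to limit the choices of $c$.

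Call $c$ \emph{free} if it is already first or last in every current vote. I would show that choosing a free candidate is always safe. Suppose an optimal solution peels some other $c'$ first. Moving $c$ to the front of the mapping, and keeping it at its current extreme in every vote, does not increase the cost, by an exchange argument on the inversion sets of $c$ against everything else. So whenever a free candidate exists, we take it without branching.

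If no free candidate exists, every choice costs at least one swap. A candidate that costs at most $\budget$ must be within distance $\budget$ of an end in every vote. Fixing one vote $\vote$, the candidate must lie among the top $\budget+1$ or bottom $\budget+1$ positions of $\vote$, giving at most $2\budget+2$ branches. Each branch strictly reduces the budget, so the tree has at most $(2\budget+2)^\budget=2^{\Oh{\budget\log\budget}}$ leaves, each handled with polynomial work. The main obstacle is making the ``free candidate is safe'' exchange argument rigorous. I would first try to show that the cost of peeling the remaining candidates in the original order, computed on the election with $c$ removed, never exceeds the original cost. This should follow from heredity: removing a candidate only deletes pairs from every inversion set.

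For $\GSbal$ the peeling structure is different. At the root we must choose a split $(L,R)$ with $|L|=|R|=\numCandidates/2$, pay for it, and recurse on both halves with the budget divided between them. If the election already has a clean balanced split, I would argue, in the spirit of \Cref{clm:clean-split} adapted to perfect trees, that taking it is optimal, so we do not branch. Otherwise the split costs at least one swap and, as in \Cref{clm:number-of-valid-sets}, $L$ must contain the top $\numCandidates/2-\budget$ candidates of a fixed vote and avoid its bottom ones. This leaves at most $2^{2\budget}$ candidate sets for $L$. We then guess how the remaining budget is divided between the two subinstances (at most $\budget+1$ ways) and recurse. Every nontrivial branching consumes at least one unit of budget, so the depth is at most $\budget$, and the total running time is $(2^{2\budget}(\budget+1))^{\budget}\cdot\poly=2^{\Oh{\budget^2}}\cdot(\numCandidates\numVoters)^{\Oh{1}}$.

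The delicate point here is the same as before: showing that a zero-cost clean balanced split can always be taken without loss. $\GSbal$ is not hereditary, so I would prove it directly. Take the optimal tree, restrict it to the two sides of the clean split, and use the equal sizes together with reversal symmetry to rebuild a perfect tree of no larger cost.
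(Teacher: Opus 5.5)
Your $\GScat$ half is essentially the paper's argument. Ideal (your ``free'') candidates are fixed at no cost using heredity, and otherwise you branch over the at most $2\budget+2$ candidates near the ends of one fixed vote, each branch costing at least one swap.

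The $\GSbal$ half has a genuine gap. It rests on a lemma you do not prove, and that lemma is the real difficulty: that a zero-cost balanced split $(L,R)$ can always be taken, i.e.\ $\opt(\candidates)=\opt(L)+\opt(R)$. In \Cref{clm:clean-split} the nontrivial inequality $\budget_1+\budget_2\le\budget_0$ comes from heredity, and that is exactly what fails here. Suppose an optimal perfect tree $T$ has root split $(L',R')\neq(L,R)$. Then $\restr{T}{L}$ has root split $(L\cap L',L\cap R')$, usually of unequal sizes, so it is not perfect. It can also fit $\restr{E}{L}$ strictly better than every perfect tree on $L$. For example, the votes $a_1a_2a_3x$, $a_2a_1a_3x$, $a_3a_1a_2x$, $a_3a_2a_1x$ lie in the domain of $(((a_1a_2)a_3)x)$, which is what such a restriction looks like, yet they are at total distance $4$ from each of the three balanced domains on these four candidates.

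``Equal sizes and reversal symmetry'' give no recipe for turning $\restr{T}{L}$ into a perfect tree of no larger cost. A proof would have to show that this advantage is always paid for by the $L$--$R$ pairs that $T$ breaks, and your sketch has no such accounting. The lemma may even be true, but nothing in the proposal establishes it. You also cannot simply drop it. If the clean split becomes just one more branch, zero-cost branches stop consuming budget. They can then cascade through $\Theta(\numCandidates)$ clean-split nodes, each spawning up to $4^{\budget}$ paying branches. Your depth-$\budget$ argument then yields only an $\numCandidates^{\Oh{\budget}}$-type bound.

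The paper never claims the clean split is optimal. At every state it branches over all fewer than $4^{\budget'}$ splits near the middle of one vote, the clean one included. It then bounds the free (zero-cost) edges via the skeleton of unique clean splits (\Cref{clm:gsbal:unique-clean-split}).

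The idea missing from your plan is \Cref{clm:gsbal:few-blocked}: every blocked skeleton node is touched by a pair broken in an optimal solution. An untouched node keeps its aligned block of positions, so it is the leaf set of a node of the optimal perfect tree, which would give it a clean split. Hence there are at most $2\opt$ blocked nodes and $\Oh{\budget\log\numCandidates}$ active nodes. States without blocked nodes return $0$, and states with more than $2\budget'$ blocked nodes are pruned. The recurrence gives $(4\budget(\log\numCandidates+1)4^{\budget})^{\budget}$, and $(\log\numCandidates)^{\budget}\le 2^{\budget^2}+\numCandidates^{o(1)}$ finishes the bound.
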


\prooftoappendixdivided{thm:NE:swap:GSbal:GScat:FPT:budget}{\thmNEswapGSbalGScatFPTbudget*}{
We prove the two results separately, in \Cref{thm:NE:swap:GScat:FPT:budget:only,thm:NE:swap:GSbal:FPT:budget:practical}.

We start with our algorithm for \swapNE{\GScat}, which uses a simpler branching strategy than our algorithm for \swapNE{\GScat}, thanks to the following fact: if~$x$ is an inner node of a caterpillar tree and $C_x$ denotes the set  of candidates mapped to the leaves of the subtree rooted at~$x$, then the corresponding partitioning of~$C_x$ (formed by the candidate sets mapped to the leaves of the subtree rooted at the left and the right child of~$x$) always consists of a singleton~$\{c\}$ and the remainder $C_x \setminus \{c\}$ for some candidate~$c \in C_x$.
\begin{theorem}   
\label{thm:NE:swap:GScat:FPT:budget:only}
    \swapNE{\GScat} can be solved in $2^{\Oh{\budget\cdot\log\budget}}\cdot(\numCandidates \numVoters)^{\Oh{1}}$ time. 
\end{theorem}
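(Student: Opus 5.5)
We design a bounded search tree that peels candidates off the caterpillar from the root downward. In a caterpillar $\typeQ$-tree, every internal node $x$ splits its candidate set $C_x$ into a singleton $\{c\}$ and $C_x\setminus\{c\}$. Hence in every vote of a $\GScat$ election over $C_x$, the candidate $c$ is ranked first or last. Restricting $E$ to $C_x\setminus\{c\}$ keeps the election caterpillar-consistent, since the domain is hereditary. The algorithm maintains the current candidate set $X$ (initially $\candidates$) and the remaining budget $\budget'$. At each step it picks a candidate $c\in X$, and in every vote $v$ pays $\min\{|\{b\in X: b\succ_v c\}|,\, |\{b\in X: c\succ_v b\}|\}$ swaps to move $c$ to the nearer end of $\restr{v}{X}$. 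It then recurses on $X\setminus\{c\}$, stopping with success when $|X|\le 2$. Correctness is the analogue of the $\GScat$ recurrence in \Cref{thm:NE:swap:GScat:FPT:numCandidates}: $\dist(\restr{E}{X},\GScat)=\min_{c}\{\sum_v \delta_{v,\{c\},X}+\dist(\restr{E}{X\setminus\{c\}},\GScat)\}$. One direction splits the inversion set of each vote into pairs involving $c$ and pairs not involving $c$. The other direction reattaches $c$ at the top of the caterpillar, with $c$ placed at the end it already occupies in each vote.

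The key point is to bound the number of choices for $c$. First, if some $c\in X$ is already first or last in every vote of $\restr{E}{X}$, we take it at zero cost with no branching. This is safe by the recurrence: a zero-cost candidate can be taken greedily, because any optimal caterpillar for $X\setminus\{c\}$ extends to $X$ at no extra cost, and the cost for $X$ is at least the cost for $X\setminus\{c\}$ by heredity. Otherwise, every candidate $c$ costs at least one swap, so the budget strictly decreases at each branching step and the depth is at most $\budget$. Moreover, any admissible $c$ has total cost at most $\budget'$. Fix one vote $v$: in $\restr{v}{X}$, the candidate $c$ must lie among the top $\budget'+1$ or the bottom $\budget'+1$ positions. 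This leaves at most $2\budget+2$ candidates to branch on. The search tree therefore has size $(2\budget+2)^{\budget}=2^{\Oh{\budget\log\budget}}$, and each node takes polynomial time, which gives the claimed bound.

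The main obstacle is justifying the greedy zero-cost step together with the recurrence. We need the optimum for $X$ to equal the sum of the separable costs at each peeling level, and we need a forced candidate never to hurt. I would prove the recurrence inequality $\dist(\restr{E}{X})\ge \delta(c^\ast)+\dist(\restr{E}{X\setminus\{c^\ast\}})$, where $c^\ast$ is the root leaf of an optimal caterpillar. This follows because the pairs involving $c^\ast$ that are inverted in the solution are at least $\delta_{v,\{c^\ast\},X}$ per vote. The reverse inequality comes from the reattachment construction. Together these give the two claims, after which the branching analysis is routine.
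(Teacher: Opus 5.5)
Your proposal is correct and follows essentially the same route as the paper. You greedily remove a candidate that is already first or last in every vote, justified by heredity of $\GScat$ together with the reattachment argument. Otherwise you branch over the at most $2\budget'+2$ candidates near either end of one fixed vote; each branch costs at least one swap, so the search tree has size $(2\budget+2)^{\budget}$. The only difference is cosmetic: the paper also checks caterpillar-consistency explicitly at each state, which your zero-cost peeling down to $|X|\le 2$ makes unnecessary.
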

\begin{proof}
	The algorithm is a bounded-search-tree exploration of the search space. We start with an election $E=(\candidates,\voters)$ and the caterpillar \typeQ-tree $T$ with $\numCandidates$ leaves $\ell_1,\ldots,\ell_\numCandidates$, with the mapping of the candidate set to the leaves not yet fixed. In every step, we fix the candidate mapped to at least one leaf; every branching step additionally decreases the remaining budget by at least one. The algorithm relies on the two following observations:
	\begin{enumerate}[left=-10pt,label=\texttt{(\roman*)}]
		\item If a candidate $c$ is ranked first or last by every voter, then some optimal solution places $c$ on the left-most leaf, and we may fix it there and remove it from the instance.\label{prop:GScat:swap:FPT:optimalStep}
		\item Given an election $(\candidates',\voters)$, we can check in polynomial time whether it is caterpillar-consistent.\label{prop:GScat:swap:FPT:check}
	\end{enumerate}
	The property in~\ref{prop:GScat:swap:FPT:optimalStep} can also be verified in polynomial time; we call a candidate $c$ satisfying it an \emph{ideal candidate}.

	To prove~\ref{prop:GScat:swap:FPT:optimalStep}, let $c$ be an ideal candidate and let $E''$ be a caterpillar-consistent election that can be obtained from~$E$ using $d$ swaps. We construct a caterpillar-consistent election~$E'''$ that can be obtained from~$E$ using at most~$d$ swaps and whose mapping places~$c$ on the left-most leaf. In each vote of~$E'''$, we place~$c$ at the extreme position (first or last) that it occupies in the corresponding vote of~$E$, and we order the remaining candidates as in the corresponding vote of~$E''$. Since the swap distance of two votes is the number of candidate pairs they order differently, and since each vote of~$E'''$ agrees with the corresponding vote of~$E$ on every pair containing~$c$ and with the corresponding vote of~$E''$ on every other pair, the election $E'''$ can indeed be obtained from~$E$ using at most~$d$ swaps. Moreover, $E'''$ is caterpillar-consistent: as the $\GScat$ domain is hereditary~\cite{kra-elk:c:explaining-prefs}, the election $\restr{E''}{\candidates\setminus\{c\}}=\restr{E'''}{\candidates\setminus\{c\}}$ is caterpillar-consistent with respect to some mapping~$\mapping'$, and since every vote of~$E'''$ ranks~$c$ first or last, by the recursive characterization $E'''$ is caterpillar-consistent with respect to the mapping that places~$c$ on the left-most leaf, followed by~$\mapping'$. Hence some optimal solution places~$c$ on the left-most leaf and never swaps~$c$. Applying this construction to an optimal~$E''$ further shows that the optimal number of swaps for~$E$ equals that for~$\restr{E}{\candidates\setminus\{c\}}$: the solution~$E'''$ restricted to $\candidates\setminus\{c\}$ costs the same number of swaps, and conversely, re-inserting~$c$ at its original extreme positions into any caterpillar-consistent election obtained from~$\restr{E}{\candidates\setminus\{c\}}$ costs no additional swaps and, by the argument above, preserves caterpillar-consistency. We may therefore fix~$c$ on the left-most leaf and remove it from the instance.

	A \emph{state} of the algorithm is a pair $(\candidates',\budget')$, where ${\candidates'\subseteq\candidates}$ and $\budget'\leq\budget$ is the remaining number of swaps. Each state returns the minimum number of swaps necessary to make the election $\restr{E}{\candidates'}$ caterpillar-consistent if this number is at most~$\budget'$, and $\infty$ otherwise. The whole instance is a yes-instance if and only if the initial state $(\candidates,\budget)$ returns a finite value.

	Assume the algorithm is in a state $(\candidates',\budget')$. If $\restr{E}{\candidates'}$ is caterpillar-consistent, we return $0$, as no swap is necessary. Otherwise, as long as $\candidates'$ contains an ideal candidate, we fix it on the left-most free leaf and remove it from $\candidates'$, which by~\ref{prop:GScat:swap:FPT:optimalStep} preserves optimality and costs no swaps. After all ideal candidates have been removed, every remaining candidate requires at least one swap to become ideal.

	We now create the branching states. We fix a single voter~$v_1$ and let $P\subseteq\candidates'$ be the set of candidates that $v_1$ ranks among its top $\budget'+1$ or bottom $\budget'+1$ positions; then $|P|\leq 2\budget'+2$. No other candidate can occupy the left-most leaf of a caterpillar tree representing an election that has swap distance at most~$\budget'$ from~$\restr{E}{\candidates'}$: such a candidate would have to become first or last in~$v_1$, which alone would already require more than $\budget'$ swaps, as more than~$\budget'$ candidates separate it from either end of~$v_1$. For each candidate $c\in P$ we compute the number $\sigma_c$ of swaps required to make $c$ ideal, i.e., $\sigma_c=\sum_{\vote\in\voters}\min\{\rank_{\restr{\vote}{\candidates'}}(c)-1,\ |\candidates'|-\rank_{\restr{\vote}{\candidates'}}(c)\}$, moving~$c$ to its cheaper extreme in each vote independently; we discard $c$ if $\sigma_c>\budget'$, and let $P'=\{c\in P:\sigma_c\leq\budget'\}$. If $P'=\emptyset$, we return $\infty$: in any caterpillar-consistent election within swap distance~$\budget'$ of~$\restr{E}{\candidates'}$, the candidate~$c$ on the left-most leaf is first or last in every vote, so already the swaps of pairs containing~$c$ certify $\sigma_c\leq\budget'$; hence no caterpillar election lies within swap distance $\budget'$ of this sub-election. Otherwise we branch over the candidates $c\in P'$: for each, we make $c$ ideal at cost $\sigma_c$, fix it on the left-most leaf, and recurse on $(\candidates'\setminus\{c\},\budget'-\sigma_c)$; note that the swaps making~$c$ ideal only affect pairs containing~$c$, so the remaining sub-election is again the restriction $\restr{E}{\candidates'\setminus\{c\}}$ of the original election. We return the minimum over $c\in P'$ of $\sigma_c$ plus the value returned from the corresponding subproblem. This is correct since the swap distance decomposes over candidate pairs: any solution placing~$c$ on the left-most leaf spends at least~$\sigma_c$ swaps on pairs containing~$c$ and at least the optimum of $\restr{E}{\candidates'\setminus\{c\}}$ on the remaining pairs, and conversely, combining the $\sigma_c$ swaps with an optimal solution for $\restr{E}{\candidates'\setminus\{c\}}$ yields a caterpillar-consistent election, as in~\ref{prop:GScat:swap:FPT:optimalStep}.

	There are at most $2\budget'+2$ branches in each state, and each recursive call decreases $\budget'$ by at least one, as every candidate in~$P'$ requires at least one swap to become ideal. Hence, starting from $(\candidates,\budget)$, we obtain a search tree of size $(2\budget+2)^{\budget}=\budget^{\Oh{\budget}}$, and in each state we spend $(\numCandidates \numVoters )^{\Oh{1}}$ time for the checks of~\ref{prop:GScat:swap:FPT:optimalStep} and~\ref{prop:GScat:swap:FPT:check} and for computing the values $\sigma_c$. Therefore the algorithm runs in $2^{\Oh{\budget\cdot\log\budget}}\cdot(\numCandidates\numVoters)^{\Oh{1}}$ time.%
\end{proof}

Next, we apply the same technique to solve \swapNE{\GSbal}; the branching step here will be more involved.  

\begin{theorem}   
\label{thm:NE:swap:GSbal:FPT:budget:practical}
    \swapNE{\GSbal} can be solved in $2^{\Oh{\budget^2}}\cdot(\numCandidates\numVoters)^{\Oh{1}}$ time. 
\end{theorem}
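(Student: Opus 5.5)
We will use a bounded search tree that builds the perfect binary $\typeQ$-tree top-down, guessing at each node how its candidate set splits between its two children. Unlike the caterpillar case there is no single candidate to guess; we must guess a balanced bipartition $(L, X\setminus L)$ of the current set $X$ of size $2^j$. The recursion rests on a decomposition analogous to \Cref{clm:clean-split}. Fix the top split $(L,R)$ of a perfect tree over $X$. Then the distance to $\domain(T,\mapping)$ equals $\sum_{\vote}\delta_{\vote,L,X}$ plus the distances of $\restr{E}{L}$ and $\restr{E}{R}$ to their subtrees. This holds because the swap distance decomposes over candidate pairs, and a $\typeQ$-root allows both block orders independently per vote. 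This is the same identity used in the dynamic program of \Cref{thm:NE:swap:GSbal:FPT:numCandidates}. Hence $\GSbal$ is not needed to be hereditary: we only ever recurse on restrictions to the exact leaf sets of subtrees, whose sizes are again powers of two.

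\textbf{Branching step.} For a state $(X,\budget')$, we first check whether some balanced split $L$ has cost $\sum_\vote\delta_{\vote,L,X}=0$, i.e., is a clean split of size $|X|/2$. If a vote $\vote_1$ is fixed, such an $L$ must be the top or bottom half of $\vote_1$, so there are only two candidates to test. If one exists, we recurse on both halves with no cost; we argue that taking it is without loss of generality. Otherwise every split costs at least one, and we enumerate candidate splits as in \Cref{clm:number-of-valid-sets}. If the split $L$ costs at most $\budget'$, then in $\vote_1$ the set $L$ can be moved to the top (or bottom) $|X|/2$ positions using at most $\budget'$ swaps. So $L$ contains the top $|X|/2-\budget'$ candidates of $\vote_1$ and avoids its bottom $|X|/2-\budget'$ (or symmetrically). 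This leaves at most $2\cdot 2^{2\budget'}$ choices. We branch over these, discard those with cost $>\budget'$, and recurse on $L$ and $X\setminus L$, splitting the remaining budget. Since the two subproblems are independent, we solve each with the full residual budget and return the minimum total, which avoids enumerating the budget split.

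\textbf{Running time.} Each branching node fans out into $2^{\Oh{\budget}}$ children and consumes at least one unit of budget. Zero-cost splits do not branch. Thus the tree of \emph{costly} branchings has depth at most $\budget$. The recursion is tree-shaped because of the two independent subproblems, so we count it as a search tree whose leaves partition the budget. The number of branching nodes with positive cost is at most $\budget$ in any root-to-leaf accounting. This gives a total of $(2^{\Oh{\budget}})^{\budget}=2^{\Oh{\budget^2}}$ times polynomial factors for the non-branching splits and cost evaluations.

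\textbf{Main obstacle.} The hardest part is justifying that a zero-cost balanced split can always be taken greedily. An optimal tree might use a different top split while the election happens to also admit a clean balanced split. I would first try an exchange argument in the spirit of observation (i) for $\GScat$. Given an optimal tree with top split $(L^*,R^*)$ and a clean split $(L,R)$, we would show that the tree with top split $(L,R)$ and optimal subtrees is no worse. This would use the decomposition identity together with the lower-bound direction of \Cref{clm:clean-split}, which holds for arbitrary partitions and restricts the optimal solution to $L$ and $R$. The subtlety is that the restricted trees $\restr{T}{L}$ need not be perfect. If this exchange fails, I would simply branch over the two clean candidates as well; this costs only a factor $2$ per level, but then bounding the number of zero-cost levels requires care, since there are $\log\numCandidates$ of them. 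Their contribution is polynomial, because the two branches at a zero-cost level are the only options and the tree has $\Oh{\numCandidates}$ nodes.
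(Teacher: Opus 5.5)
Your decomposition identity and the enumeration of splits near the midpoint of $v_1$ match the paper. The proof still has a gap at the step you flag yourself. Your algorithm commits to a clean balanced split without branching, so both correctness and your running-time count rest on the claim that $\opt(X)=\opt(L)+\opt(R)$ whenever $(L,R)$ is a clean split of $\restr{E}{X}$, and this claim is never established. The route you sketch does not deliver it. The lower-bound half of the clean-split decomposition for $\GS$ only shows that the optimal solution restricted to $L$ lies in the domain of $\restr{T}{L}$. That tree is a binary $\typeQ$-tree but in general not perfect, so the argument bounds $\dist(\restr{E}{L},\GS)$ rather than $\opt(L)$. Nothing in it rules out a perfect tree with a different top split being cheaper. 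The paper never uses such a greedy lemma: the clean split is simply one of the enumerated splits, a ``free'' edge of the search tree.

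Your fallback does not close the gap either. Once the clean split is not taken greedily, every node of the chain of clean splits (the paper's skeleton) must also branch over all paying splits of cost at most $d'$. So the zero-cost levels are not two-way levels. A state may have $\Theta(\numCandidates)$ skeleton nodes, each spawning up to $4^{d'}$ paying branches at essentially the same budget, and your argument does not control this.

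The missing idea is the paper's bound that the skeleton has at most $2\,\opt(C')$ blocked nodes. The reason is that a blocked node not touched by any pair on which $E$ and an optimal solution disagree would coincide with the leaf set of a node of the optimal perfect tree, and that node's two children would form a clean split. With this bound one returns $0$ when there is no blocked node and rejects when there are more than $2d'$. Then only the at most $2d'(\log\numCandidates+1)$ active skeleton nodes spawn paying branches, giving at most $(4d(\log\numCandidates+1)4^{d})^{d}$ search-tree nodes, and $(\log\numCandidates)^{d}\le 2^{d^2}+\numCandidates^{o(1)}$ finishes the bound.

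Separately, even under the greedy assumption your count is not justified as written. ``Leaves partition the budget'' contradicts handing the full residual budget to both subproblems. A depth bound of $d$ alone does not bound the tree size when one state contains many blocked nodes. This part can be repaired by a potential argument weighted by $|X|$, using that sibling subproblems are disjoint, but that argument is not in your proposal.
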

\begin{proof}
    The algorithm uses a bounded-search-tree strategy.
    Let $E=(\candidates,\voters)$ and $\budget$ be our input, and recall that for an election to belong to $\GSbal$ we need $\numCandidates=|\candidates|=2^{\treeheight}$ for some integer~$\treeheight$; otherwise we reject immediately.
    Throughout, $C'$ denotes a subset of~$\candidates$ whose size is a power of~$2$, and we write $\opt(C')=\dist(\restr{E}{C'},\GSbal)$.
    We call a partitioning~$(L,R)$ of~$C'$ with $|L|=|R|=\nicefrac{|C'|}{2}$ a \emph{split of~$C'$}, and we let $\splitcost(L,R)$ denote the number of swaps needed to ensure that each voter ranks either all of~$L$ above all of~$R$, or vice versa; that is,
    \[
        \splitcost(L,R)=\sum_{\vote \in \voters} \min\{x_\vote,\ |L| \cdot |R|-x_\vote\},
    \]
    where $x_\vote=|\{(l,r) \in L \times R \colon r \succ_\vote l\}|$.
    A split with $\splitcost(L,R)=0$ is a \emph{clean split} for~$\restr{E}{C'}$. Note that $\splitcost(L,R)$ is computable in $\Oh{\numCandidates \numVoters}$ time.

    \proofsubparagraph{Two observations.}
    We first record the recursion underlying the algorithm.

    \begin{claim}
    \label{clm:gsbal:decomposition}
        For each~$C'$ with $|C'| \geq 2$ we have $\opt(C')=\min_{(L,R)} \bigl( \splitcost(L,R)+\opt(L)+\opt(R) \bigr)$, where the minimum is taken over all splits~$(L,R)$ of~$C'$.
    \end{claim}
    \begin{claimproof}
        Let $T$ be a balanced \typeQ-tree with a mapping~$\mapping$ from~$C'$ to its leaves, and let $(L,R)$ be the sets of candidates mapped to the leaves of the two subtrees rooted at the children of the root of~$T$. Every vote of~$\domain(T,\mapping)$ ranks the candidates of~$L$ and those of~$R$ in two contiguous blocks, in one of the two possible orders, and its restrictions to~$L$ and to~$R$ range independently over the domains defined by the two subtrees. Since the swap distance of two votes is the number of candidate pairs they order differently, and since each candidate pair either belongs to $L \times R$ or lies within~$L$ or within~$R$, we get $\dist(\restr{E}{C'},\domain(T,\mapping))=\splitcost(L,R)+\dist(\restr{E}{L},\domain(T^L,\restr{\mapping}{L}))+\dist(\restr{E}{R},\domain(T^R,\restr{\mapping}{R}))$ for the two subtrees $T^L$ and~$T^R$. Taking the minimum over all balanced \typeQ-trees and mappings, and noting that each split~$(L,R)$ of~$C'$ arises from some such tree, proves the claim.
    \end{claimproof}

    \begin{claim}
    \label{clm:gsbal:unique-clean-split}
        The election $\restr{E}{C'}$ has at most one clean split, and it can be identified in $\Oh{\numCandidates\numVoters}$ time.
    \end{claim}
    \begin{claimproof}
        Let $(L,R)$ be a clean split for~$\restr{E}{C'}$ and let $\vote$ be any vote of~$\voters$. Then $L$ consists of the candidates that $\vote$ ranks within the top $\nicefrac{|C'|}{2}$ positions of~$C'$, or of those it ranks within the bottom $\nicefrac{|C'|}{2}$ positions; thus $\{L,R\}$ is determined by~$\vote$. To find it, we compute this partitioning for one fixed vote and check in $\Oh{\numCandidates\numVoters}$ time whether it is clean.
    \end{claimproof}

    \proofsubparagraph{The skeleton.}
    By \Cref{clm:gsbal:unique-clean-split} we may define, for each~$C'$, the following rooted tree, the \emph{skeleton} of~$\restr{E}{C'}$: its root is~$C'$; a node~$D$ with $|D| \geq 2$ is a leaf and is called \emph{blocked} if $\restr{E}{D}$ has no clean split, and otherwise its two children are the two sides of the unique clean split of~$\restr{E}{D}$; nodes of size~$1$ are leaves as well. Observe that the blocked nodes are pairwise disjoint, and that the skeleton has depth at most $\log \numCandidates$.
    We call a node of the skeleton \emph{active} if the subtree of the skeleton rooted at it contains a blocked node; the active nodes are thus exactly the ancestors of blocked nodes, including the blocked nodes themselves. Since each blocked node has at most $\log \numCandidates+1$ ancestors, including itself, a skeleton with $b$ blocked nodes has at most $b\,(\log \numCandidates+1)$ active nodes.
    Note also that if the skeleton of~$\restr{E}{D}$ contains no blocked node, then it is a perfect binary tree whose leaves are the singletons of~$D$; regarding it as a balanced \typeQ-tree~$T$ and letting $\mapping$ map each candidate to its own leaf, every vote of~$\restr{E}{D}$ lies in~$\domain(T,\mapping)$ by the definition of a clean split, and hence $\opt(D)=0$.

    \proofsubparagraph{Few blocked nodes.}
    The following lemma is the heart of the running-time analysis: it states that a bounded budget permits only a bounded number of blocked nodes, no matter how large~$\numCandidates$ is.

    \begin{claim}
    \label{clm:gsbal:few-blocked}
        The skeleton of~$\restr{E}{C'}$ contains at most $2\,\opt(C')$ blocked nodes.
    \end{claim}
    \begin{claimproof}
        It suffices to prove the statement for $C'=\candidates$: the general case follows by applying it to the election~$\restr{E}{C'}$ in place of~$E$. We may further assume $\voters \neq \emptyset$, as otherwise every split is clean and no node is blocked. Let $F$ be an election in a $\GSbal$ domain $\domain(T,\mapping)$ with $\dist(E,F)=\opt(\candidates)$, and for each vote~$\vote$ of~$E$ let $\vote^F$ be the corresponding vote of~$F$. We say that a candidate pair \emph{touches} a set~$D$ if at least one of its two candidates lies in~$D$, and we call it \emph{broken} if some vote~$\vote$ of~$E$ orders it differently than~$\vote^F$ does. The number of broken pairs is at most $\dist(E,F)=\opt(\candidates)$.
        We show that every blocked node~$D$ is touched by a broken pair; since the blocked nodes are pairwise disjoint and each pair touches at most two of them, this proves the claim.

        We first observe that in each vote~$\vote$ of~$E$, the candidates of a skeleton node~$D$ with $|D|=2^{j}$ occupy the positions $a2^{j}+1,a2^{j}+2,\dots,(a+1)2^{j}$ for some integer~$a \geq 0$; we call such a set of positions a \emph{block}. This holds for the root $\candidates$, and if it holds for a node, then its two children each occupy the upper or the lower half of its block, which is again a block.

        Now suppose for contradiction that no broken pair touches the blocked node~$D$, that is, $\vote$ and~$\vote^F$ order every pair touching~$D$ in the same way, for every vote~$\vote$. Then each candidate of~$D$ is preceded by the same candidates in~$\vote$ as in~$\vote^F$, so the candidates of~$D$ occupy the same positions in~$\vote$ and in~$\vote^F$, and $\restr{E}{D}$ and $\restr{F}{D}$ are the same election. By the previous paragraph, the candidates of~$D$ therefore occupy a block in each vote of~$F$.
        Next, we claim that in any vote $u \in \domain(T,\mapping)$, the candidates occupying a block of size~$2^{j}$ are exactly those mapped by~$\mapping$ to the leaves of the subtree rooted at some node of~$T$ at distance $\treeheight-j$ from its root. Indeed, a vote of~$\domain(T,\mapping)$ arises by permuting the children of the nodes of~$T$ and reading the leaves from left to right; as $T$ is a perfect binary tree, the leaves of the subtrees rooted at the nodes at distance $\treeheight-j$ from the root occupy consecutive blocks of size~$2^{j}$ in this reading, in some order.
        Since the candidate sets belonging to the nodes of~$T$ at a fixed distance from the root are pairwise disjoint, and $D$ coincides with one of them, as witnessed by any single vote of~$F$, the set~$D$ is the candidate set of a single node~$t^\star$ of~$T$. Consequently $\restr{F}{D}$ lies in the domain defined by the subtree of~$T$ rooted at~$t^\star$, which is a perfect binary \typeQ-tree; in particular, the two candidate sets belonging to the children of~$t^\star$ form a clean split for $\restr{F}{D}=\restr{E}{D}$, contradicting the assumption that $D$ is blocked.
    \end{claimproof}

    \proofsubparagraph{The algorithm.}
    A \emph{state} is a pair $(C',\budget')$ with $\budget' \leq \budget$; it \emph{returns} $\min\{\opt(C'),\budget'+1\}$, so that the input is a yes-instance if and only if the initial state $(\candidates,\budget)$ returns a value of at most~$\budget$. We proceed as follows on a state $(C',\budget')$.
    \begin{enumerate}
        \item If $|C'|=1$, we return~$0$.
        \item We build the skeleton of $\restr{E}{C'}$; by \Cref{clm:gsbal:unique-clean-split} this takes polynomial time. If it contains no blocked node, then $\opt(C')=0$ as observed above, and we return~$0$. If it contains more than $2\budget'$ blocked nodes, then $\opt(C')>\budget'$ by~\Cref{clm:gsbal:few-blocked} and we return $\budget'+1$.
        \item Otherwise, we fix a vote~$\vote \in \voters$ and let $P$ contain those candidates of~$C'$ that $\restr{\vote}{C'}$ ranks within $\budget'$ positions of the boundary between the top and the bottom half of~$C'$; thus $|P| \leq 2\budget'$. Any candidate of $C' \setminus P$ lies more than~$\budget'$ positions away from this boundary, so moving it to the other half would already require more than~$\budget'$ swaps within~$\vote$; hence every split~$(L,R)$ with $\splitcost(L,R) \leq \budget'$ agrees with the split of~$\restr{\vote}{C'}$ into its top and bottom half outside of~$P$. We branch over all at most $\binom{2\budget'}{\budget'}<4^{\budget'}$ splits of this form. For each of them we compute $\splitcost(L,R)$, discard it if $\splitcost(L,R)>\budget'$, and otherwise recurse on the states $(L,\budget'-\splitcost(L,R))$ and $(R,\budget'-\splitcost(L,R))$. We return the minimum of $\splitcost(L,R)+\alpha_L+\alpha_R$ over the surviving splits (i.e., over those splits in which both calls return a value at most $\budget'-\splitcost(L,R)$), where $\alpha_L$ and~$\alpha_R$ are the values returned by the two recursive calls, capped at $\budget'+1$.
    \end{enumerate}
    Correctness is immediate from~\Cref{clm:gsbal:decomposition} together with the observation in step~3 that all splits of cost at most~$\budget'$ are among those we enumerate: the optimum of the recursion in \Cref{clm:gsbal:decomposition} is attained either by such a split, or $\opt(C')>\budget'$ and the returned value exceeds~$\budget'$ as well.

    \proofsubparagraph{Running time.}
    \def\searchtree{\mathcal{T}}
    \def\skeleton{\mathcal{S}}
    Consider the search tree~$\searchtree$ of the algorithm in which every node~$x$ is a call of the algorithm, initiated with some state of the algorithm. 
    We call an edge of $\searchtree$ \emph{free} if it corresponds to a split of cost~$0$, and \emph{paying} otherwise. By \Cref{clm:gsbal:unique-clean-split}, a state has at most one clean split available, so the free edges leaving a node~$x$ of~$\searchtree$ lead to the two sides of the unique clean split of the state corresponding to~$x$. Consequently, the descendants of a node~$x$ with state~$(C',\budget')$ that are reachable from~$x$ using free edges only form a subgraph of the skeleton  of~$\restr{E}{C'}$; let $\skeleton_x$ denote this subgraph of~$\searchtree$. Observe that each node of $\skeleton_x$ that is not a leaf in~$\searchtree$ must be an active node in~$\skeleton_x$, due to step~2. %
    Again by step~2, the skeleton of a state $(\candidates',\budget')$ corresponding to some non-leaf node of~$\searchtree$  contains at most $2\budget'$ blocked nodes and, hence, at most~$2\budget'(\log \numCandidates+1)$ active nodes; therefore, the number of nodes in~$\skeleton_x$ is at most $2\budget'(\log \numCandidates+1)$. %
    For each node~$y$ in~$\skeleton_x$, the children of~$y$ outside~$\skeleton_x$ arise as a consequence of at most $4^{\budget'}$ (non-clean) splits: each split yields two recursive calls, with the budget decreased by at least~$1$ in both. This means that each node~$y$ in~$\skeleton_x$ has at most~$2 \cdot  4^{\budget'}$ children in~$\searchtree$ but outside~$\skeleton_x$. Moreover, $y$ is connected each these children nodes via a paying edge, each of them associated with a state having decreased parameter. Writing $N(\budget')$ for the maximum number of non-leaf nodes in a search tree whose root has initial budget~$\budget'$, we thus get
    \[
        N(\budget') \leq 2\budget'(\log \numCandidates+1) \cdot \bigl(1+2 \cdot 4^{\budget'} N(\budget'-1)\bigr).
    \]
    Using induction on~$\budget$, one can prove that this implies \[N(\budget) \leq \bigl(4\budget(\log \numCandidates+1)4^{\budget}\bigr)^{\budget} = 2^{\Oh{\budget^2}} \cdot (\log \numCandidates)^{\Oh{\budget}}.\]
    
    Finally, we show $(\log \numCandidates)^{\budget} \leq 2^{\budget^2}+\numCandidates^{o(1)}$: if $\log \numCandidates \leq 2^{\budget}$, then $(\log \numCandidates)^\budget \leq 2^{\budget^2}$, and otherwise $\budget<\log\log \numCandidates$ and thus $(\log \numCandidates)^{\budget}<2^{(\log\log \numCandidates)^2}=\numCandidates^{o(1)}$.
    Since we spend polynomial time in each state, the total running time of the algorithm is $2^{\Oh{\budget^2}} \cdot (\numCandidates \numVoters)^{\Oh{1}}$.
\end{proof}
}

For domains with a simpler structure, a more direct approach---using
branching but no recursive decomposition---offers FPT algorithms with a better
dependence on the parameter~$\budget$.
For $\Antag$ (and $\Ident$), which  are hereditary 
\iflong
(see \Cref{thm:ident:antag:hereditary} in
 \Cref{app:sec:param_distance})
\else
(see the supplementary material) 
\fi
we could have applied
\Cref{thm:NE:swap:domainByForbidden:FPT:budget}, but
the algorithm would be slower.

\begin{restatable}[\linkproof{thm:NE:swap:FPT:budget:collapsed}]{theorem}{thmNEswapFPTbudgetcollapsed}
\label{thm:NE:swap:FPT:budget:collapsed}
     \swapNE{\mathcal{X}} can be solved in
     \begin{itemize}
         \item $3^{\budget} \cdot \Oh{\numCandidates \numVoters}$ time if $\mathcal{X}=\Strat$, for every~$k$;
         \item $2.832^{\budget} \cdot \Oh{\numCandidates \numVoters}$ time if $\calX=\Strat[\numCandidates/2]$;
         \item $5^{\budget} \cdot \Oh{\numCandidates \numVoters}$ time if $\calX=\Separ$;
         \item $3.06^{\budget} \cdot \Oh{\numCandidates \numVoters}$ time if $\calX = \Antag$.
     \end{itemize}
\end{restatable}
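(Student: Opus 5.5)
Each of the four bounds comes from a bounded search tree in which every branching step commits to at least one swap, so the depth is at most $\budget$ and only the branching factor needs analysing. Throughout, I use that the swap distance of two votes equals the number of candidate pairs they order differently. So once the target domain $\domain(T,\mapping)$ is fixed, the cost of a vote is the number of its pairs that $\domain(T,\mapping)$ orders "the wrong way". For each family, the task is to find, in polynomial time, a small obstruction that forces one of a few specific pairs to be inverted.

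\emph{$\Strat$.} The domain is fixed by an ordered partition into $k$ blocks of size $b=\nicefrac{\numCandidates}{k}$. I build the blocks greedily from the top. In the current (already modified) votes, consider the remaining candidates. If all votes agree on which $b$ candidates occupy their top $b$ remaining positions, we commit that set as the next block. This is w.l.o.g., since any solution can be exchanged to take it without extra cost, by the argument used in the proof of the $\Strat[3]$ reduction for $B_1$. Otherwise, there are votes $u,w$, a candidate $a$ at position $b$ of $u$ (restricted to the remaining candidates), and a candidate $c$ at position $b+1$ of $u$, such that the top-$b$ sets of $u$ and $w$ differ. I would extract a triple obstruction forcing one of three adjacent-pair swaps: either $a,c$ is swapped in $u$, or one of two specific pairs is swapped in $w$. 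Each branch costs one swap, giving $3^\budget$. For $b=2$ ($\Strat[\nicefrac{\numCandidates}{2}]$), the obstruction is finer: a block is a pair, and a conflict involves few candidates. I would set up a branching vector of the form $(1,2,2,\dots)$ whose characteristic root is $\approx 2.832$, and verify this root numerically.

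\emph{$\Separ$ and $\Antag$.} These domains allow reversing each vote. For $\Antag$ I would reduce to Kemeny-style branching: the target is a single order $u^*$ or its reverse per voter. I first branch on whether each voter is "forward" or "reversed" only when that voter is not already consistent. Then I use the standard Kemeny dirty-pair branching (as in BFGNR, the $1.53^\budget$ bound) on the flipped votes. The orientation guess is paid for by the triangle/dirty-pair conflicts, with a branching vector that yields $\approx 3.06 = 2\cdot1.53$. For $\Separ$ I combine the orientation choice with the $\Strat[2]$ branching: the $3$-way block obstruction plus the option of reversing gives $5$ branches per committed swap.

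The main obstacle is, in each case, proving that the greedy "no conflict, commit the forced choice" step is safe, and that every conflict really does yield at most the claimed number of branches, each forcing at least one swap. I would handle this uniformly with an exchange argument on pairs: if all votes already agree on a structure, some optimal solution keeps it, because changing it would only invert additional pairs. The sharp constants $2.832$ and $3.06$ require a careful case analysis of the conflicts' branching vectors, and I expect this to be the hardest part.
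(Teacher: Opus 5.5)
Your overall shape (bounded search trees, and for $\Antag$ reversing votes and calling the $1.53^d$ Kemeny algorithm) matches the paper, but the branching rules that are supposed to produce the constants are unsound or missing. For $\Strat$, the ``triple obstruction'' does not exist. Two votes $u,w$ with different top-$b$ sets $S_u,S_w$ force a specific inversion only if the target first block $C_1$ equals $S_u$ or $S_w$. Otherwise the inverted pairs depend on the unknown $C_1$, and the boundary pair of $u$ need not be among them.

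Concretely, let $b=2$, $m=4$, $d=3$, with ten votes $a\,z\,x\,c$ and one vote $x\,a\,c\,z$. The unique optimum has $C_1=\{a,z\}$ and inverts exactly $\{x,a\},\{x,z\},\{c,z\}$ in the last vote; any other $C_1$ costs at least $10$. Suppose that last vote is your $u$. Its positions $b,b+1$ hold $a,c$, which the optimum never inverts, and the optimum touches no pair of $w$. So all three of your branches exceed the budget on a yes-instance.

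The paper gets $3^d$ from a weighted branching, not a unit-cost one. For a dirty candidate $c$ it does the following:
\begin{itemize}
\item It guesses whether $c\in C_1$ (two options).
\item It takes a vote in which $c$ lies on the wrong side of the boundary.
\item It guesses the nearest candidate $b'$ of the other part, preceding $c$ if $c\in C_1$ and following $c$ otherwise.
\end{itemize}
Every candidate strictly between $b'$ and $c$ lies in $c$'s part, so $b'$ must be swapped with all of them and with $c$. The option at distance $i$ therefore commits $i$ forced inversions at once, giving $T(d)\le 2\sum_{i=1}^{d}T(d-i)$, i.e.\ $T(d)=3T(d-1)$. The base $3$ is $1+2$, not three unit branches.

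Two further binary guesses (which vote, and which candidate of a dirty pair) give $T(d)\le 4\sum_{i}T(d-i)$ and hence $5^d$ for $\Separ$. Your ``three plus reversal'' count does not justify this. For blocks of size $2$, the choice of $b'$ has at most $2$ resp.\ $3$ options, which yields $T(d)\le 2T(d-1)+2T(d-2)+T(d-3)$ with root $\approx 2.832$. The vector $(1,2,2,\dots)$ you mention is not this, and you leave it unspecified.

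For $\Antag$, orientation guesses consume no budget, so they cannot be ``paid for'' by triangle or dirty-pair conflicts. Also, which voters are ``already consistent'' is unknown until the target order is fixed. As written, your algorithm carries an unbounded $2^n$ factor.

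The missing idea is a case split on $n$ versus $d$. If $n>d$, some vote receives no swap in an optimal solution. Guessing that vote ($n$ options) fixes the domain, and the distance is then computed directly in polynomial time. If $n\le d$, guessing all orientations costs only $2^n\le 2^d$, and reversing accordingly and running the Kemeny algorithm gives $2^d\cdot 1.53^d=3.06^d$.
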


\prooftoappendixdivided{thm:NE:swap:FPT:budget:collapsed}{\thmNEswapFPTbudgetcollapsed*}{
We prove the results one by one in the following four theorems.
\begin{theorem}    
\label{thm:NE:swap:Strat:FPT:budget}
     \swapNE{\Strat} can be solved in $3^{\budget} \cdot \Oh{\numCandidates  \numVoters}$ time for any integer~$k$.
\end{theorem}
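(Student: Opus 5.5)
We solve \swapNE{\Strat} with a bounded search tree in which every branching step commits a pair of candidates to a fixed relative order and consumes at least one unit of budget. A $\Strat$ domain is an ordered partition $(B_1,\dots,B_k)$ of $\candidates$ into blocks of size $b=\nicefrac{\numCandidates}{k}$. For a fixed partition, the distance of a vote $\vote$ to the domain equals the number of pairs $(x,y)$ with $x\in B_i$, $y\in B_j$, $i<j$, and $y\succ_\vote x$. The reason is that pairs inside a block are unconstrained, so we keep $\vote$'s internal order in each block. The task is therefore to choose an ordered partition minimizing the total number of such ``cross inversions''.

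We build the partition block by block from the top, using the same decomposition as in the $\DP$ of \Cref{thm:NE:swap:GScat:FPT:numCandidates}. Given the current remaining set $X$ and remaining budget $\budget'$, we must choose the top block $B\subseteq X$ with $|B|=b$. The cost of this choice is $\sum_{\vote}\delta^{\uparrow}_{\vote,B,X}$, after which we recurse on $X\setminus B$. If every vote restricted to $X$ has the same set of top $b$ candidates, we take that set as $B$ at zero cost. An exchange argument justifies this: any solution can swap block memberships so that this common top set becomes its first block, and no vote pays more.

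Otherwise, fix a vote $\vote$ and consider its top-$b$ set $S$ within $X$. Some other vote $\vote'$ has a different top-$b$ set, so there is a candidate $x\in S$ not in the top $b$ of $\vote'$. Hence some $y\notin S$ is in the top $b$ of $\vote'$ and ranked above $x$ there, while $x\succ_\vote y$. The pair $\{x,y\}$ is ordered oppositely by $\vote$ and $\vote'$, so in any solution either $x$ and $y$ lie in the same block, or one of $\vote,\vote'$ pays an inversion on this pair. We branch three ways:
\begin{itemize}
\item[(i)] $x$ is placed in a strictly higher block than $y$;
\item[(ii)] $y$ is placed in a strictly higher block than $x$;
\item[(iii)] $x$ and $y$ are placed in the same block.
\end{itemize}
Branches (i) and (ii) each force one paid inversion, in $\vote'$ and in $\vote$ respectively, and so decrease the budget by one. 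This matches the desired $3^\budget$ bound only if branch (iii) also makes progress.

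The main obstacle is branch (iii), since placing $x$ and $y$ in the same block costs nothing. The first thing we would try is to merge $x$ and $y$ into a single super-candidate of weight two and recharge the budget accounting by a potential that combines the budget with the number of merges still available before some block is full. If this fails, the alternative is to find the disagreement pair more carefully. We would choose $x$ as the lowest member of $S$ in $\vote$, placed at position $b$, and $y$ as the candidate at position $b+1$ of $\vote$. Then in branch (iii) the block containing $x$ must also contain a further candidate that $\vote$ ranks lower, which forces some paid inversion in $\vote$ against a candidate of $S$ that gets pushed out. The step we would check most carefully is that every one of the three branches provably decreases $\budget$ by at least one, because that is exactly what the $3^{\budget}$ recurrence needs. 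Each node requires $\Oh{\numCandidates\numVoters}$ time to test whether all votes share the same top-$b$ set and to locate the disagreeing pair, which gives the overall running time $3^{\budget}\cdot\Oh{\numCandidates\numVoters}$.
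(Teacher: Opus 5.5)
Your proposal has a genuine gap, and it is the one you flag yourself. Branch (iii), which puts $x$ and $y$ in the same block, consumes no budget. So the search tree does not satisfy $T(\budget)\le 3\,T(\budget-1)$, and the $3^{\budget}$ bound is not established. Neither proposed repair closes the gap.
\begin{itemize}
\item Merging $x$ and $y$ and adding the number of still-available merges to the potential does not help. Up to $\numCandidates-k$ free merges can occur, so any such potential gives a bound that depends on $\numCandidates$.
\item Choosing $x$ and $y$ at positions $b$ and $b+1$ of $\vote$ does make (iii) force a change in $\vote$. But then no other vote need rank $y$ above $x$, so branch (i), which agrees with $\vote$, can be free.
\item Moreover, (iii) then only shows that \emph{some} inversion of $\vote$ is paid, without identifying it. Decrementing the budget for an unidentified inversion risks charging the same inversion again at a later step.
\end{itemize}
A related omission is that you never say how the instance is updated after (i) or (ii). The votes are unchanged, so the same disagreeing pair can be rediscovered, and your zero-cost rule for a common top set may contradict earlier commitments.

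The missing idea, which the paper uses, is to branch on the top-block membership of a single candidate rather than on the relative placement of a pair. Your outer block-by-block structure already matches the paper; only this branching step changes.
\begin{itemize}
\item Call $c$ \emph{dirty} if it lies among the top $\numCandidates/k$ positions in some but not all votes. Whichever way you decide whether $c$ is in the top block, some vote $\vote$ has $c$ on the wrong side of the boundary, so both options cost.
\item In that vote, guess the candidate $z$ nearest to $c$ that must cross $c$. If $c$ is in the top block, $z$ precedes $c$ and lies outside the top block; otherwise $z$ follows $c$ and lies inside it.
\item Every candidate strictly between $z$ and $c$ is on $c$'s side. So if $z$ is at distance $i$ from $c$ in $\vote$, exactly $i$ specific inversions are forced: those of $z$ with these candidates and with $c$.
\item You perform these swaps in $\vote$ and decrease the budget by $i\le\budget$. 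This gives $t(\budget)\le 2\sum_{i=1}^{\budget}t(\budget-i)$, which solves to $3^{\budget}\cdot\Oh{\numCandidates\numVoters}$.
\end{itemize}
When no dirty candidate remains, the common top set is removed as the first block, exactly as in your zero-cost step.
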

\begin{proof}
    Let $E=(\candidates,\voters)$ and~$\budget$ be our input instance with $|\candidates|=\numCandidates$. 
    Let $T$ be the tree representing the \Strat[k] domain for~$\numCandidates$ candidates, and let $\mapping$ be a bijection from~$\candidates$ to the leaves of~$T$ such that $E$ has minimum swap distance from~$\domain(T,\mapping)$.

    We start by computing the set~$\candidates_1$ of candidates that are mapped by~$\mapping$ to the children of the first \typeP-node in~$T$. For a vote~${v \in \voters}$, let $F_\vote$ denote the set of candidates appearing on the first $\numCandidates/k$ positions in~$\vote$.
    We will call a candidate~$c$ \emph{dirty} if $c$ is in~$F_\vote$ for some but not all votes~$\vote$. 
    As long as there exists some dirty candidate~$c$, we proceed as follows. First, we guess whether~$c \in C_1$ or not, and find a vote~$\vote$ where $c$ appears in the ``wrong half'' of~$\vote$.
    That is, if $c \in C_1$, we find a vote~$\vote$ where  $c \notin F_\vote$, and we guess the 
    candidate~$b$ closest to~$c$ in~$\vote$ that precedes~$c$ but is not in~$C_1$; such a candidate must exist by $|C_1|=|F_\vote|$. In a correct guess, $\rank_\vote(c)-\rank_\vote(b) \leq \budget$, because $b$ and~$c$ need to be swapped, which requires $\rank_\vote(c)-\rank_\vote(b)$ swaps (as~$b$ needs to be swapped with~$c$ and all candidates between~$b$ and~$c$ in~$\vote$, all belonging to~$C_1$).
    Similarly, if $c \notin C_1$, then  we find a vote~$\vote$ where $c \in F_\vote$, and we guess the candidate~$b$ closest to~$c$ that follows~$c$ and is in~$C_1$; again, swapping~$b$ and~$c$ requires $\rank_\vote(b)-\rank_\vote(c)$ swaps.
    
    In both cases, we perform the necessary $\budget_c=|\rank_\vote(b)-\rank_\vote(c)|$ swaps, decrease the distance budget by~$\budget_c$, and proceed with the next dirty candidate. If there are no dirty candidates left, it must be the case that $C_1=F_{\vote}$ for all votes~$\vote \in \voters$. Hence, we remove $C_1$ from the set of candidates, and proceed with the remaining instance of \swapNE{\Strat[(k-1)]}.

    To compute the running time of this algorithm, let $t^{\numCandidates,\numVoters}(\budget)$ denote the running time on an instance with~$\numCandidates$ candidates, $\numVoters$ voters, and parameter~$\budget$.
    Recall that we first make a binary guess, then branch into at most $\budget$ possibilities for candidate~$b$, decreasing the parameter by~$i$ in the $i$-th branch, for $i \in [\budget]$. Therefore, we have
    \[t^{\numCandidates,\numVoters}(\budget) \leq 
    2\left(
    t^{\numCandidates,\numVoters}(\budget-1)+t^{\numCandidates,\numVoters}(\budget-2)+\cdots+t^{\numCandidates,\numVoters}(0)\right)\]
    where we also know $t^{\numCandidates,\numVoters}(0)=\Oh{\numCandidates\cdot \numVoters}$.
    Using induction on~$\budget$, we can show that 
    $t^{\numCandidates,\numVoters}(\budget) = 3^{\budget} \cdot \Oh{\numCandidates \cdot \numVoters}$, because
    \begin{align*}
    t^{\numCandidates,\numVoters}(\budget) &\leq 
    2(
    3^{\budget-1}+3^{\budget-2}+\cdots+1) \cdot \Oh{\numCandidates \cdot \numVoters}  \\
    &=(3^{\budget}-1) \cdot \Oh{\numCandidates \cdot \numVoters} \leq 3^\budget \cdot \Oh{\numCandidates \cdot \numVoters}. \qedhere
    \end{align*}
\end{proof}

\begin{corollary}    
\label{thm:NE:swap:StratHalf:FPT:budget}
    \swapNE{\Strat[\numCandidates/2]} can be solved in $2.832^{\budget} \cdot \Oh{\numCandidates  \numVoters}$ time.    
\end{corollary}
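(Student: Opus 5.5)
The plan is to specialize the branching algorithm from \Cref{thm:NE:swap:Strat:FPT:budget} to block size $b=2$ and to re-analyze the branching vector. For $\Strat[\numCandidates/2]$ the first $\typeP$-node receives exactly two candidates. Hence $C_1$ is a pair, and in every vote of the target election these two candidates occupy positions $1$ and $2$.

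We proceed as in the general algorithm. Take a dirty candidate $c$, i.e., one that is among the top two in some votes but not in others, and guess whether $c\in C_1$.

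If $c\in C_1$, pick a vote $\vote$ with $\rank_\vote(c)\ge 3$. Then $c$ must be moved to the top two positions, which costs at least $\rank_\vote(c)-2\ge 1$ swaps, and $c$ must overtake a candidate $b\notin C_1$. As before, we guess the closest such $b$ preceding $c$. The crucial saving is that at most one candidate other than $c$ belongs to $C_1$. So between $b$ and $c$ there is at most one other candidate, and the guess ranges over only one or two options: the candidate immediately above $c$, or, if that candidate is the partner of $c$ in $C_1$, the one above it. These options cost $1$ and $2$ swaps, respectively.

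If $c\notin C_1$, pick a vote where $c$ is in the top two. The candidate $b\in C_1$ closest below $c$ must be swapped past $c$. Again at most one member of $C_1$ lies in between, so there are only two options, costing $1$ and $2$.

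This gives the recurrence
\[t(\budget)\le 2\bigl(t(\budget-1)+t(\budget-2)\bigr),\]
whose characteristic root solves $x^2=2x+2$, i.e., $x=1+\sqrt3\approx 2.732$. This is already below $2.832$.

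Correctness is inherited from the proof of \Cref{thm:NE:swap:Strat:FPT:budget}. When no dirty candidates remain, the top two positions agree across all votes; we peel them off and recurse on $\Strat[(\numCandidates/2-1)]$ with the remaining budget. Peeling is free, so the recurrence governs the whole search tree, and each node costs $\Oh{\numCandidates\numVoters}$.

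The main obstacle is checking that the guessed swaps are charged correctly. In the ``$c\notin C_1$'' case, moving $b$ above $c$ may also require passing the other member of $C_1$, and we must ensure the counted cost $|\rank_\vote(b)-\rank_\vote(c)|$ is truly forced and not double-counted across later steps. I would first argue, exactly as in the general theorem, that every candidate strictly between $b$ and $c$ lies on the same side as the moving candidate. Then the $|\rank_\vote(b)-\rank_\vote(c)|$ swaps are pairwise distinct inversions that any solution must contain, and afterwards $\vote$ is updated.

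If the two-option restriction turns out to fail in some configuration, for instance when both guesses interact with already-performed swaps, I would fall back to the weaker vector $(1,2,3)$ with binary guess, $t(\budget)\le 2(t(\budget-1)+t(\budget-2))+t(\budget-3)$ or a similar variant. Its root should be about $2.83$, and I suspect that this is the source of the stated constant $2.832$.
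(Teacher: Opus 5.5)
\textbf{The case $c\notin C_1$ does not have only two options.} Your treatment of the case $c\in C_1$ is correct and matches the paper. There the candidates strictly between $b$ and $c$ all lie in $C_1\setminus\{c\}$, so there is at most one of them, and the guess is binary with costs $1$ and $2$.

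The case $c\notin C_1$ is where the argument fails. Here $b$ is the \emph{closest} member of $C_1$ below $c$, so every candidate strictly between $c$ and $b$ lies \emph{outside} $C_1$. In particular, the other member of $C_1$ is never in between, contrary to your ``main obstacle'' paragraph. The fact $|C_1|=2$ therefore says nothing about how many candidates lie in between.

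For example, take $v=(y,c,a_1,\dots,a_{j-1},x,\dots)$ while many other votes begin with $y,x$. Then $C_1=\{x,y\}$ and $c$ is dirty. In the correct branch, $x$ must be moved past the $j$ non-members $c,a_1,\dots,a_{j-1}$, and $j$ can be as large as $d$. So this guess has up to $d$ options with costs $1,\dots,d$. Neither your recurrence $t(d)\le 2\bigl(t(d-1)+t(d-2)\bigr)$ nor the constant $1+\sqrt{3}$ follows. Obtaining a constant noticeably better than the one in the statement should have prompted a recheck.

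\textbf{The fallback is not justified either.} The paper's proof states that the guess in this case is ternary (costs $1,2,3$). That gives exactly your fallback recurrence $t(d)\le 2t(d-1)+2t(d-2)+t(d-3)$, with root about $2.8312$. However, you give no reason why only three options can arise, and the rule ``closest member of $C_1$ below $c$'' does not supply one, as the example above shows.

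If all options are kept, the recurrence is $t(d)\le 2t(d-1)+2t(d-2)+\sum_{i\ge 3}t(d-i)$. Its characteristic equation reduces to $x^3-3x^2+1=0$, whose largest root is about $2.879$. To reach $2.832^{d}$, you need an additional argument specific to the case $c\notin C_1$ that caps the number of branches or improves their costs. The observation $|C_1|=2$ only handles the case $c\in C_1$.
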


\begin{proof}
    The same algorithm can be used as in~\Cref{thm:NE:swap:Strat:FPT:budget} for $k=\numCandidates/2$; however, observe that guessing the candidate~$b$ is a binary guess if $c \in C_1$, and a ternary guess if $c \notin C_1$, using that $|C_1|=2$. Thus, we obtain the recurrence  
    \[t^{\numCandidates,\numVoters}(\budget) \leq 
    2
    t^{\numCandidates,\numVoters}(\budget-1)
    +2t^{\numCandidates,\numVoters}(\budget-2)+
    t^{\numCandidates,\numVoters}(\budget-3)
    \]
    which yields $t^{\numCandidates,\numVoters}(\budget) \leq 2.832^\budget \cdot \Oh{\numCandidates \cdot \numVoters}$.
\end{proof}

\begin{theorem}
    \label{thm:NE:swap:Separ:FPT:budget}
    \swapNE{\Separ} can be solved in $5^{\budget} \cdot \Oh{\numCandidates  \numVoters}$ time. 
\end{theorem}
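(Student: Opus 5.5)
We adapt the branching algorithm from \Cref{thm:NE:swap:Strat:FPT:budget} (for $k=2$) to the $\Separ$ domain, where each vote may rank either half on top. Fix an optimal balanced partition $(C_1,C_2)$. For a vote $\vote$, let $F_\vote$ be the set of its top $\numCandidates/2$ candidates. In a solution, each vote must end up with $F$ equal to $C_1$ or to $C_2$. First, we guess the orientation of one fixed vote $\vote_0$: whether $C_1$ is its top half. This is a binary guess with no cost, but it only fixes a labelling. To avoid paying a factor $2$ overall, we normalise by letting $C_1$ be the half that $\vote_0$ should have on top.

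We then process \emph{conflicts}. Call a pair of candidates $(c,c')$ a conflict if some vote puts them in the same half while another puts them in different halves. Whenever every vote already has $F_\vote\in\{F,\candidates\setminus F\}$ for a common set $F$, the election is in $\Separ$ and we stop. Otherwise, there is a candidate $c$ and a vote $\vote$ whose side pattern disagrees with $\vote_0$ relative to $c$. For such a $c$, we guess which of $C_1,C_2$ contains it (2 options). We also guess the orientation of $\vote$ (2 options); once both are known, $c$ is in the wrong half of $\vote$ or of $\vote_0$.

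In that vote, we guess the nearest candidate $b$ across the boundary that must exchange sides with $c$, exactly as in \Cref{thm:NE:swap:Strat:FPT:budget}. We then perform the $i=|\rank(b)-\rank(c)|$ swaps, which decreases the budget by $i$. This yields a recurrence of the form
\[t(\budget)\le a\,t(\budget-1)+b\sum_{i\ge2}t(\budget-i).\]
To reach $5^\budget$, the intended bound is $t(\budget)\le 4\sum_{i\ge1}t(\budget-i)$, which solves to $t(\budget)=5^{\budget}$ since $4(5^{\budget}-1)/4\le 5^{\budget}$. So we need at most four branches per distance value $i$.

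The main obstacle is controlling orientation guesses so that they are charged to budget rather than becoming free branching. The idea is that a vote's orientation is forced unless its distance to both options is relevant. If $\min$ is attained by orienting $\vote$ one way, guessing the other orientation costs at least one swap, and that swap is what we charge. We would organise this by fixing the orientation of each vote lazily, only when it is first involved in a conflict. The cheaper orientation is then accepted at no branching cost, since it is consistent with the current partial $C_1$ assignment. The expensive orientation is pursued only through the swap branches, which already pay. Combining the candidate-side guess (2) with the orientation guess (2) then gives the four-way branching per distance value. Verifying that lazy orientation never misses an optimum, by an exchange argument on which half $b$ and $c$ sit in, is the step I expect to require care. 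With $O(\numCandidates\numVoters)$ work per node, the running time is $5^{\budget}\cdot\Oh{\numCandidates\numVoters}$.
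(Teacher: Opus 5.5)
Your counting (a $2\times 2$ guess followed by a guess of the nearest partner $b$ at cost $i$, giving $t(\budget)\le 4\sum_{i=1}^{\budget}t(\budget-i)$) matches the paper's, but your branching step is unsound. You pick a candidate $c$ whose half-mates differ between $v_0$ and $v$, guess the group of $c$ and the orientation of $v$, and claim that $c$ is then in the wrong half of $v$ or of $v_0$. This fails already for $\numCandidates=4$, with $v_0\colon c\succ a\succ b\succ d$ and $v\colon c\succ b\succ a\succ d$.

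\begin{itemize}
\item Every candidate, $c$ included, has a different half-mate in the two votes.
\item The distance is $1$: use the partition $\{c,a\},\{b,d\}$ with $\{c,a\}$ on top in both votes, and swap $a,b$ in $v$.
\item Under the correct guesses, $c$ is correctly placed in both votes, so your branch has no move to make.
\item In each of the other three branches, every admissible move of $c$ costs at least one swap. It then leaves two top halves that are neither equal nor complementary.
\end{itemize}

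So with $\budget=1$ your procedure rejects this yes-instance whenever it happens to select $c$ (or, by the symmetry of the instance, $d$). The candidates that really must move are $a$ and $b$, i.e., the \emph{other} member of a conflicting pair, and your scheme never branches on that.

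The missing idea is to branch on a conflicting \emph{pair}. Suppose $c,c'$ share a half in $v$ but not in $v'$. In any solution, either they end in the same group, and then exactly one of them is misplaced in $v'$. Or they end in different groups, and then exactly one of them is misplaced in $v$. The paper guesses this vote (two ways) and this candidate $\hat c$ (two ways). It then guesses the nearest candidate $b$ beyond $\hat c$, in the direction $\hat c$ must move, that belongs to the group opposite to that of $\hat c$; all $i$ swaps are then forced.

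Your scheme can also be repaired by choosing the witness \emph{after} guessing the orientation of $v$. Take $c\in F_v\setminus F_{v_0}$ if $v$ gets the same group on top as $v_0$, and $c\in F_v\cap F_{v_0}$ otherwise. Both sets are non-empty since $F_v\notin\{F_{v_0},\candidates\setminus F_{v_0}\}$. Guessing the group of $c$ then identifies the unique one of $v,v_0$ in which $c$ is misplaced.

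With either fix, the lazy-orientation mechanism is unnecessary. A plain binary orientation guess costs nothing in the analysis, because every branch immediately pays $i\ge 1$ through the choice of $b$. This gives $t(\budget)\le 4\sum_{i=1}^{\budget}t(\budget-i)$, which solves to $t(\budget)\le 5^{\budget}\cdot\Oh{\numCandidates\numVoters}$. As you describe it, accepting a vote's ``cheaper'' orientation without branching is not well defined before the partition is known. You also give no argument that it preserves an optimum.
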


\begin{proof}
    We apply the same ideas as in \Cref{thm:NE:swap:Strat:FPT:budget}, and we re-use the notation therein for~$k=2$, in particular, we let $F_\vote$ denote the candidates in the first $\numCandidates/2$ positions of~$\vote$. 
    Let us say that a candidate pair~$(c,c')$ is \emph{dirty} if there exist votes~$\vote$ and~$\vote'$ such that (i) $c$ and~$c'$ are \emph{in the same half} of~$\vote$, meaning that either~$\{c,c'\} \subseteq F_\vote$ or $\{c,c'\} \cap F_\vote = \emptyset$, but (ii) $c$ and~$c'$ are not in the same half of~$\vote'$.
    Then some candidate in~$\{c,c'\}$ needs to be moved to the ``other half'' either in~$\vote$ or in~$\vote'$; we first guess this vote~$\hat{\vote} \in \{\vote,\vote'\}$. Next, we guess which candidate, $c$ or~$c'$, is the one to be moved in~$\hat{\vote}$; let $\hat{c}$ be this candidate. From this point on, we proceed as in \Cref{thm:NE:swap:Strat:FPT:budget}, guessing the candidate~$b$ closest to~$\hat{c}$ in the appropriate direction from~$\hat{c}$ in~$\hat{\vote}$ that needs to be swapped with~$\hat{c}$. After performing these swaps and reducing the parameter with $|\rank_{\hat{\vote}}(\hat{c})-\rank_{\hat{\vote}}(b)|$, we proceed with the next dirty pair. Once there are no dirty pairs left, we know that the election belongs to the $\Separ$ domain.

    The running time of this algorithm can be computed with the same techniques as in \Cref{thm:NE:swap:Strat:FPT:budget}, using that the running time function satisfies
    \[t^{\numCandidates,\numVoters}(\budget) \leq 
    4\left(
    t^{\numCandidates,\numVoters}(\budget-1)+t^{\numCandidates,\numVoters}(\budget-2)+\cdots+t^{\numCandidates,\numVoters}(0)\right)
    \]
    due to the two binary guesses we start with in each iteration; this leads to $t^{\numCandidates,\numVoters}(\budget) =5^\budget \cdot \Oh{\numCandidates \cdot \numVoters}$.
\end{proof}

\begin{theorem}
    \label{thm:NE:swap:Antag:FPT:budget}
    \swapNE{\Antag} can be solved in $3.06^\budget\cdot\Oh{\numCandidates \numVoters}$ time. 
\end{theorem}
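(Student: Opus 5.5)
We would follow the bounded-search-tree framework of \Cref{thm:NE:swap:domainByForbidden:FPT:budget}, with a branching rule tailored to $\Antag$ so that its branching vector is better than the generic one. The starting point is that an election is in an $\Antag$ domain exactly if, for every two votes $\vote,\vote'$, either they order every candidate pair identically or they order every pair oppositely. Hence, whenever $\hat E$ is not yet antagonistic, there are two votes $\vote,\vote'$ and two candidate pairs $p,q$ such that $\vote,\vote'$ agree on $p$ and disagree on $q$. In any solution, at least one of the four (vote, pair) combinations $(\vote,p),(\vote,q),(\vote',p),(\vote',q)$ must become a committed inversion. As in \Cref{thm:NE:swap:domainByForbidden:FPT:budget}, we would maintain sets $R_i$ of committed inversions, resolve directed triangles by branching, prune once $\sum_i|R_i|>\budget$, and stop when the current election $\hat E$ is antagonistic. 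The correctness claims (exactness, soundness, completeness) would carry over verbatim.

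Plain branching over these four combinations gives only $4^\budget$, so the work is in choosing the obstruction cleverly. First, we would always pick $p$ and $q$ sharing a candidate. Indeed, if $\vote,\vote'$ are neither equal nor reversed, then in the tournament of pairs coloured by agree/disagree both colours occur. Since the pairs on a set of candidates are connected through shared candidates, some agree-pair and some disagree-pair share a candidate; that is, there is a triple $a,b,c$ with $\vote,\vote'$ agreeing on $\{a,b\}$ and disagreeing on $\{a,c\}$.

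Second, we would choose the triple to be \emph{tight}, with $b$ and $c$ adjacent to $a$ in one of the two votes where possible. Inverting a pair whose candidates are separated by $s$ others in $\hat\vote_i$ forces committing further inversions: by the triangle-resolution step, each separating candidate must also be inverted with one endpoint. So such a branch decreases the budget by at least $s+1$, not merely by $1$. We would then do a case analysis on the relative positions of $a,b,c$ in $\hat\vote$ and $\hat\vote'$. In the best case all four branches cost $1$ but some are dominated, leaving a vector like $(1,1,2)$. In other cases we get vectors such as $(1,2,2,2)$ or $(1,1,2,3)$. Finally, we would take the worst root among the characteristic polynomials, e.g.\ $x^3=2x^2+\dots$, and aim to show that it is at most $3.06$.

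The main obstacle is this case analysis. We must argue that a configuration with four independent cost-$1$ branches cannot be forced. Our first attempt would be to pick $\vote$ and $\vote'$ so that $\vote$ is unchanged so far (i.e., $R_{\vote}=\emptyset$) and is used as the reference orientation. Since two of the four options ask whether $\vote$ or $\vote'$ is the ``correct'' reference, we would fold the choice of which side is reversed into a single branching decision per vote, and exploit the symmetry between a vote and its reverse to merge branches. If this merging works, the recurrence should come out as roughly $T(\budget)\le 2T(\budget-1)+3T(\budget-2)+\dots$, whose root would have to be checked numerically against $3.06$. If it does not, we would fall back on analysing the triple $a,b,c$ across three votes rather than two.
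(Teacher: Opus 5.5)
\textbf{Genuine gap: the $3.06^{\budget}$ bound is never established.} The running-time bound is the whole content of the statement. FPT membership with base~$4$ already follows from \Cref{thm:NE:swap:domainByForbidden:FPT:budget}: each of the three forbidden subelections for $\Antag$ has two votes on three candidates, hence size~$4$.

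\begin{itemize}
\item The case analysis that should produce a branching vector with root at most $3.06$ is explicitly deferred, with a fallback.
\item Your sketched recurrence $T(\budget)\le 2T(\budget-1)+3T(\budget-2)+\dots$ already has root~$3$ from its first two terms alone. Whether the unspecified tail stays below $3.06$ is exactly what would need proof.
\item The ``tight triple'' heuristic works against you. If $b$ and $c$ are adjacent to $a$, each of the four (vote, pair) branches commits exactly one inversion. That is precisely the $(1,1,1,1)$ vector you need to avoid.
\item You give no domination argument showing that one of these four cost-$1$ branches can be discarded.
\end{itemize}

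\textbf{The missing idea.} Drop obstruction-branching and split on $\numVoters$ versus $\budget$.
\begin{itemize}
\item \emph{Case $\numVoters>\budget$.} Each changed vote costs at least one swap, so some vote $\vote$ is untouched in a solution and already lies in the target domain. Guess it ($\numVoters$ choices). This fixes the domain $\{\vote,\overline{\vote}\}$, and $\sum_{w}\min\{\dist(w,\vote),\dist(w,\overline{\vote})\}$ is computed in polynomial time.
\item \emph{Case $\numVoters\le\budget$.} Guess for every voter whether its target is the reference order or its reverse ($2^{\numVoters}\le 2^{\budget}$ choices), and reverse the votes of the second kind. Since $\dist(\overline{w},u)=\dist(w,\overline{u})$, this yields an instance of \KemenyOA with the same budget. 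The algorithm of \citet{bet-fel-guo-nie-ros:j:fpt-kemeny-aaim} solves it in $\Ohstar{1.53^{\budget}}$ time.
\end{itemize}
Together this gives $2^{\budget}\cdot 1.53^{\budget}=3.06^{\budget}$ up to polynomial factors; the constant $3.06=2\cdot 1.53$ points to this product structure.

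Your idea of folding ``which side is reversed'' into one decision per vote is the seed of the second case. It only gives a bound once the first case caps the number of such decisions by $\budget$ and the remaining instance is handed to a Kemeny algorithm.
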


\begin{proof}
    Consider an input election~$E=(\candidates,\voters)$ and distance~$\budget$ with $\numVoters=|\voters|$.
    Let $T$ be the tree with $|\candidates|$ leaves representing the $\Antag$ domain, and $\mapping$ a  bijection from~$\candidates$ to the leaves of~$T$ such that $E$ has minimum swap distance from~$\domain(T,\mapping)$.

    First, if $\numVoters > \budget$, then there must exist a vote~$\vote$ in which no swaps are needed, that is, for which $\vote \in \domain(T,\mapping)$. We guess such a vote~$\vote$, which immediately determines~$\domain(T,\mapping)$, allowing us to  compute the swap distance of~$E$ from~$\domain(T,\mapping)$ in polynomial time.
    
    If $\numVoters \leq \budget$, then we proceed as follows. For each vote~$\vote$, let~$\vote'$ denote the vote in~$\domain(T,\mapping)$ with minimal swap distance to~$\vote$. We guess for each vote~$\vote$ whether $\vote'$ is obtained by reversing the children of the root of~$T$ (a type \typeQ\ node), and if so, we reverse the vote~$\vote$ in~$E$. Assuming correct guesses, the obtained election~$\hat{E}$ has distance at most~$\budget$ from the $\Ident$  domain if and only if $E$ has distance at most~$\budget$ from the $\Antag$ domain. Thus, we can apply the algorithm by \citet{bet-fel-guo-nie-ros:j:fpt-kemeny-aaim} for \KemenyOA which runs in $\mathcal{O}^*(1.53^\budget)$ time. The total running time is therefore $2^{\numVoters} \cdot \mathcal{O}^*(1.53^\budget) \leq 2^\budget \cdot \mathcal{O}^*(1.53^\budget) = \mathcal{O}^*(3.06^\budget)$.
\end{proof}

} %

\toappendix{
\subsection{Identity and Antagonism are hereditary}

\begin{proposition}\label{thm:ident:antag:hereditary}
    Both $\Ident$ and $\Antag$ are hereditary and admit a finite set of forbidden subelections.
\end{proposition}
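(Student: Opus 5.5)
We argue heredity directly from the definitions, and then read off finite forbidden subelections from what a violation of membership looks like.

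\emph{Heredity.} Let $E=(\candidates,\voters)$ be an $\Ident$ election, so all its votes coincide. Restricting to any $\candidates'\subseteq\candidates$ and any subset of voters again gives identical votes, so the subelection is $\Ident$. For $\Antag$, all votes of $E$ are either some $u$ or its reverse $\overleftarrow{u}$. Restriction commutes with reversal: $\restr{\overleftarrow{u}}{\candidates'}$ is the reverse of $\restr{u}{\candidates'}$. Hence every subelection again uses only one order and its reverse, and lies in an $\Antag$ domain over $\candidates'$.

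\emph{Forbidden subelections for $\Ident$.} An election fails to be $\Ident$ exactly when two votes disagree on some pair $\{a,b\}$. So the single forbidden subelection is the two-candidate, two-voter election with votes $a\succ b$ and $b\succ a$. Conversely, its presence clearly violates $\Ident$, so this characterization is exact.

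\emph{Forbidden subelections for $\Antag$.} This is the main step. An election is $\Antag$ iff every two of its votes are equal or mutually reverse. Suppose votes $v,w$ are neither equal nor reverse. Then there is a pair $\{a,b\}$ on which they agree and a pair $\{c,d\}$ on which they disagree. These pairs may share a candidate, so we restrict to at most four candidates, $\{a,b,c,d\}$. On this set, $v$ and $w$ are two votes that are neither equal nor reverse of each other, and this property is preserved by restriction.

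Hence the forbidden family consists of all two-voter elections on three or four candidates whose votes are neither equal nor reverse. This is finitely many up to isomorphism. One could shrink the family to three candidates by choosing the agreeing and disagreeing pairs to share a candidate, which is possible via a path argument on pairs. This refinement is not needed, since finiteness is all the statement requires.

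Conversely, if such a subelection is present, the two involved voters' full votes cannot be equal or reverse, because restriction preserves both equality and reversal. So the election is not $\Antag$, and the characterization is exact.
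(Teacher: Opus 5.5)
Your proof is correct, but it reaches the $\Antag$ characterization by a different route than the paper. The paper exhibits three explicit forbidden elections $A_1,A_2,A_3$, each with two voters and three candidates: with $v\colon c\succ d\succ e$ fixed, the second vote ranges over the orders that are neither equal nor reverse, up to isomorphism. It then proves sufficiency by induction on $\numCandidates$. Restricting to $\candidates\setminus\{c_\numCandidates\}$, the second vote $u'$ is $u$ or its reverse there, and a case analysis on where $c_\numCandidates$ is inserted uses $A_1,A_2,A_3$ to force $u'$ to be $u$ or the reverse of $u$.

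You instead observe directly that two votes are neither equal nor reverse exactly when they agree on some pair and disagree on some pair, and you restrict to the at most four candidates involved. This needs no induction or case analysis, and it makes exactness of the characterization immediate. The cost is a larger family that includes four-candidate elections. The paper's family is minimal (as it remarks) and has fewer candidates and smaller size per forbidden election. Those are exactly the quantities that enter the running time of \Cref{thm:NE:swap:domainByForbidden:FPT:budget} if one were to apply it.

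Your suggested refinement to three candidates does go through and recovers the paper's family. If no candidate lay in both an agreeing and a disagreeing pair, then for an agreeing pair $\{a,b\}$ and a disagreeing pair $\{c,d\}$, the pair $\{a,c\}$ would have to be both. You also argue heredity explicitly via restriction commuting with reversal, which the paper uses only implicitly.
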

\begin{proof}
    For $\Ident$, the forbidden subelection $E_{\textup{id}}$ consists of two voters $v$ and $v'$ and two candidates $c$ and $d$ with $v\colon c \succ d$ and $v'\colon d \succ c$. If an election~$E$ contains two distinct votes, then these disagree on some pair of candidates, which means that $E_{\textup{id}}$ is present in~$E$; conversely, an election whose votes all coincide clearly avoids $E_{\textup{id}}$ (meaning that $E_{\textup{id}}$ is not present in such an election).

    For $\Antag$, we forbid three subelections $A_1$, $A_2$, $A_3$, each consisting of two voters $v$ and $v'$ and three candidates $c$, $d$, and $e$, where in all three $v\colon c \succ d \succ e$, and
    \begin{align*}
        A_1&\colon\ v'\colon c \succ e \succ d \\ 
        A_2&\colon\ v'\colon d \succ c \succ e \\
        A_3&\colon\ v'\colon d \succ e \succ c
    \end{align*}
    Two votes that are equal or reversals of each other restrict, on every set of three candidates, to two orders that are again equal or reversals of each other; in each~$A_i$ the two orders are neither. Hence no profile in~$\Antag$ contains any of~$A_1,A_2,A_3$.
    
    Conversely, assume that none of $A_1,A_2,$ or~$A_3$ is present in some election $E$; we show that $E$ belongs to an $\Antag$ domain.
    We  use induction over the size~$\numCandidates$ of the candidate set~$C$; note that the cases $\numCandidates  \leq 2$ are trivial.
    Let $m \geq 3$, and  assume that our claim holds for all elections over candidate sets of size~$m-1$.
    Let $u=(c_1,\dots,c_m)$ and~$u'$ be two votes in~$E$. Since the claim holds for $\restr{E}{C'}$ for $C'=C \setminus \{c_m\}$, we know that $\restr{u'}{C'} \in \{(c_1,\dots,c_{m-1}), (c_{m-1},\dots,c_{1})\}$. We distinguish between two cases: 
    \begin{description}
        \item[Case A:] If $\restr{u'}{C'}=(c_1,\dots,c_{m-1})$.  
        In this case, $u'$ cannot rank $c_m$ before $c_1$, since then $A_3$ would be isomorphic to the restriction of~$E$ to~$\{c_m,c_1,c_2\}$.
        Moreover, $u'$ cannot rank $c_m$ in between two candidates $c_i$ and $c_{i+1}$ for some $i \in [m-2]$, because then $A_1$ would be isomorphic to the restriction of~$E$ to~$\{c_i,c_m,c_{i+1}\}$. Thus, $u'$ must place~$c_m$ after $c_{m-1}$, and thus $u'$ coincides with~$u$.
        \item[Case B:] If $\restr{u'}{C'}=(c_{m-1},\dots,c_{1})$.  
        In this case, $u'$ cannot rank $c_m$ after $c_1$, since then $A_2$ would be isomorphic to the restriction of~$E$ to~$\{c_1,c_2,c_m\}$.
        Moreover, $u'$ cannot rank $c_m$ in between two candidates $c_i$ and $c_{i+1}$ for some $i \in [m-2]$, because then $A_3$ would be isomorphic to the restriction of~$E$ to~$\{c_i,c_{i+1},c_m\}$. Thus, $u'$ must place~$c_m$ before $c_{m-1}$, and thus $u'$ is the reverse of~$u$.
    \end{description}
    This finishes the proof of claim, implying that $E$ indeed belongs to the $\Antag$ domain.
    We remark that the given characterization is minimal in the sense that not forbidding some $A_i$, $i \in [3]$, to be present in an election~$E$ allows~$E$ to be a domain that is not an $\Antag$ domain, as witnessed by the election~$A_i$ itself.
\end{proof}
}

\section{Experiments}
\label{sec:experiments}
\appendixsection{sec:experiments}

We consider three main questions: (a)~How fast are our algorithms in
practice? (b)~How close are various elections to being
(caterpillar/balanced) group-separable? (c)~How likely are closest
$\GS$ elections to retain the original winners?
The latter two questions evaluate similarity of the source and derived
elections, either on the level of individual swaps or on the level of
features relevant for voting rules.

Following the map framework of
\citet{szu-boe-bre-fal-nie-sko-sli-tal:j:map}, we focus on synthetic
data, specifically using the setting of
\citet{fal-kac-sor-szu-was:c:div-agr-pol-map}; see these works, and
the survey of \citet{boe-fal-jan-kac-lis-pie-rey-sto-szu-was:c:guide},
for details on generating elections. Here we provide high-level
intuitions regarding our models:
\begin{itemize}
\item In the \emph{normalized Mallows model} (NM-$\phi$) each vote is
  derived from a fixed central one, by introducing random swaps of
  candidates, driven by the value of the $\phi$
  parameter~\citep{mal:j:mallows,boe-kra-fal:c:normalized-mallows}; $\phi=1$ means
  that, in effect, the resulting vote is chosen uniformly at random
  (this is also known as \emph{Impartial Culture}; IC) and smaller
  values of $\phi$ mean proportionally fewer swaps (for $\phi = 0$ no
  swaps are made and all votes are equal to the central one).
\item In the \emph{P\'olya--Eggenberger urn model} (URN-$\alpha$) we
  have several different-sized clusters of votes. All the votes in a
  single cluster are identical, chosen uniformly at random. The larger
  the $\alpha$ parameter, the fewer clusters there
  are~\citep{berg:j:urn-model,mcc-sli:j:similarity-rules}
\item In the \emph{$t$D-Euclidean model} (EUC-$t$D) candidates and
  voters are points drawn uniformly at random from $[0,1]^t$; the
  voters rank the candidates from the closest to the farthest to them (in terms
  of the Euclidean distance).
\end{itemize}
We also consider elections generated by drawing votes uniformly at
random from the $\GSbin$, $\GScat$, and $\GSbal$ domains, from the
single-peaked~\citep{wal:t:generate-sp,con:j:eliciting-singlepeaked} and single-crossing~\citep{szu-boe-bre-fal-nie-sko-sli-tal:j:map} distributions.

Following \citet{fal-kac-sor-szu-was:c:div-agr-pol-map}, we focus on
the setting with $8$ candidates and $96$ voters. All experiments were
performed on a machine equipped with two $64$-core AMD EPYC 7742 CPUs and $512$~GB of RAM. All implementations are in Python and use the Prefsampling library~\cite{boe-fal-jan-kac-lis-pie-rey-sto-szu-was:c:guide}. We used the Gurobi Optimizer, version 13.0.1.

\begin{figure*}[t]
    \centering

    \begin{subfigure}[t]{0.32\textwidth}
        \centering
        \includegraphics[width=\linewidth]{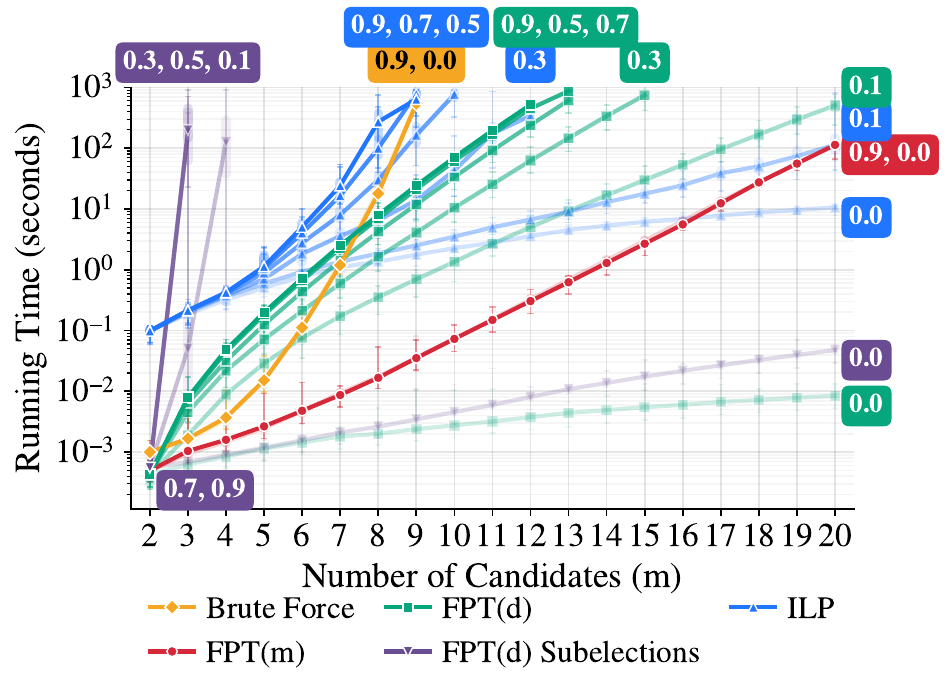}
        \caption{Running times of our algorithms for the
          \swapNE{\GScat} problem on NM-$\phi$ elections with $96$ voters,
          for different numbers of candidates and different values of~$\phi$.}
        \label{fig:results:runningTime:mallows}
    \end{subfigure}
    \hfill
    \begin{subfigure}[t]{0.32\textwidth}
        \centering
        \includegraphics[width=\linewidth]{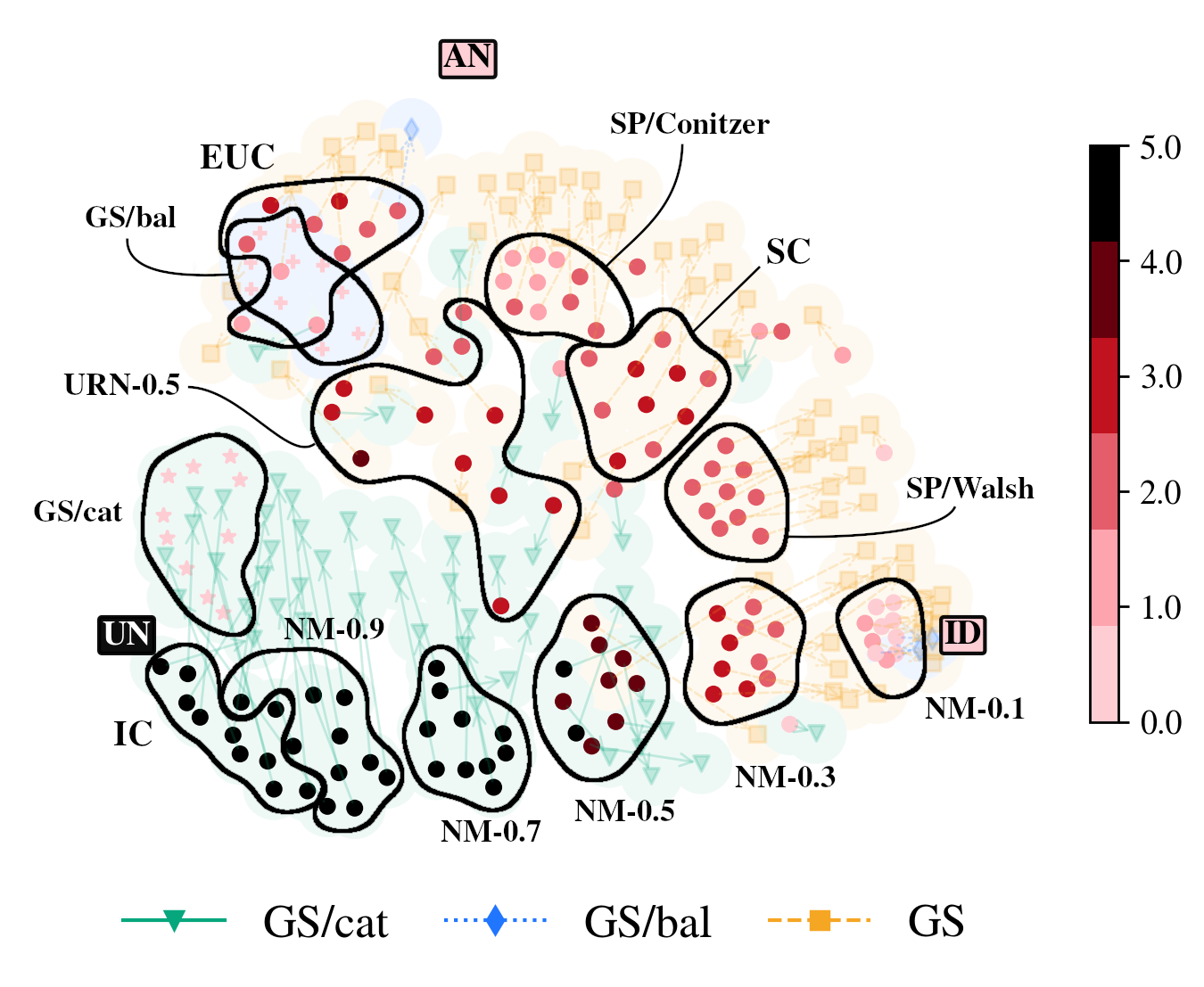}
        \caption{Map of synthetic elections with $8$ candidates and
          $96$ voters, and their $\GS$ projections (pointed to by the
          arrows). Colors represent the per-voter distances to the
          projections.}
        \label{fig:results:map}
    \end{subfigure}
    \hfill
    \begin{subfigure}[t]{0.32\textwidth}
        \centering
        \includegraphics[width=0.95\linewidth]{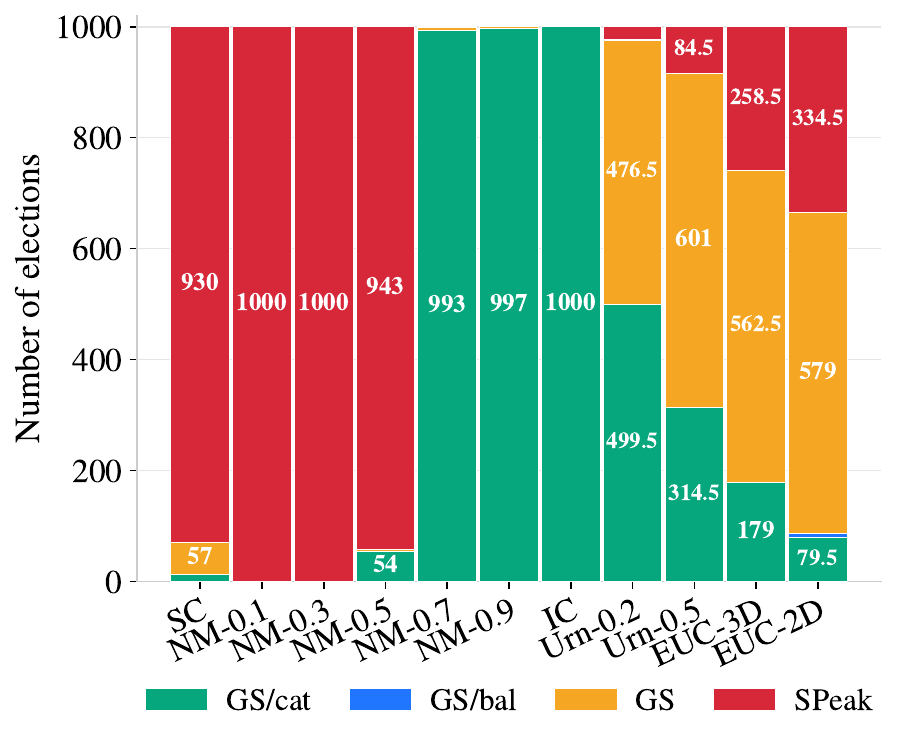}
        \caption{Fractions of elections ($|\candidates| = 8$, $|V| = 96$, sampled from given models) for which $\GScat$,
          $\GSbal$, $\GSbin$, and $\SPeak$ projections are closest
          (among these four).}
        \label{fig:results:distances}
    \end{subfigure}

    \caption{Results of our numerical experiments.}
    \label{fig:results:combined}
\end{figure*}

\subsection{Algorithms Running Time}
\label{sec:experiments:running}
\appendixsubsection{sec:experiments:running}
\toappendix{\FloatBarrier}

Due to limited space, for running-time analysis we focus on $\GScat$;
our results are most varied and most interesting in this case.  We
compare the brute-force algorithm (trying all possible mappings
$\mapping$ between $\candidates$ and $\operatorname{leaves}(T)$), the
\FPT algorithm parameterized by the number of candidates
$\numCandidates$ (\Cref{thm:NE:swap:GScat:FPT:numCandidates}), two
\FPT algorithms parameterized by the distance~$\budget$---the
algorithm tailored to the GS-caterpillar domain
(\Cref{thm:NE:swap:GSbal:GScat:FPT:budget}), denoted $\FPT(\budget)$,
and the general algorithm for domains characterized by forbidden
partial subelections (\Cref{thm:NE:swap:GScat:FPT:budget}), denoted
$\FPT(\budget)$-subelections---and the 
\iflong 
ILP formulation of the problem
(\Cref{sec:ILP:caterpillar}). 
For an analogous analysis regarding
other domains, see \Cref{app:sec:experiments:running}.
\else
ILP formulation of the problem. 
For an analogous analysis regarding
other domains, see the supplementary material.
\fi

For each number of candidates $m$ in $\{2, \ldots, 20\}$ and each
$\phi \in \{0.0, 0.1, 0.3, 0.5, 0.7, 0.9\}$, we attempted to run each
of our algorithms on $1000$ elections sampled from $\phi$-NM, with 96
voters. If an algorithm did not complete within $15$ minutes, we
terminated it and did not run it for larger numbers of candidates. We
plot the averages of the obtained running times in
\Cref{fig:results:runningTime:mallows} (log scale), with the range
between the first and the third quartile highlighted by a shaded bar,
and the minimum and maximum values marked by ticks.

Barring small values of $\phi$, the $\FPT(\numCandidates)$ algorithm
performed best, whereas the $\FPT(\budget)$-subelections performed
worst, losing even to the trivial brute-force one. The $\FPT(\budget)$
algorithm performed far better and was the closest to challenging the
$\FPT(\numCandidates)$ one. Intuitively, this difference between the
theoretical and experimental results stems from the fact that the
$\FPT(\budget)$-subelections algorithm decreases the value of the
parameter by $1$ in each recursive call, whereas the $\FPT(\budget)$
may decrease it even by numbers proportional to the number of voters
(indeed, one might suspect that it is even $\FPT$ with respect to the
average number of swaps per voter, but we have not been able to prove
this).  Finally, we note that
our ILP-based algorithms perform poorly, which is surprising given
how effective this approach usually is.

\toappendix{
In \Cref{fig:runningTime:SPconitzer,fig:runningTime:SPwalsh,fig:runningTime:Scross,fig:runningTime:GScat,fig:runningTime:GSbal,fig:runningTime:IC,fig:runningTime:URN2,fig:runningTime:URN5,fig:runningTime:EUC}, we also provide the running time analysis for the remaining domains used later for the map of elections experiment. Each election consisted of 96 voters and $m$ candidates (x-axis). The running time in seconds (y-axis) is on a log scale. Each solver had 15 minutes to compute the distance and was not run for higher values of $m$ if it failed to finish in time for some $m' < m$. The results are comparable to the normalized Mallows culture. The only culture where the $\FPT(m)$ algorithm  is slower than all other algorithms except for the naïve brute-force is the group-separable caterpillar elections.

\begin{figure}[h!]
    \centering
    \includegraphics[width=0.9\linewidth]{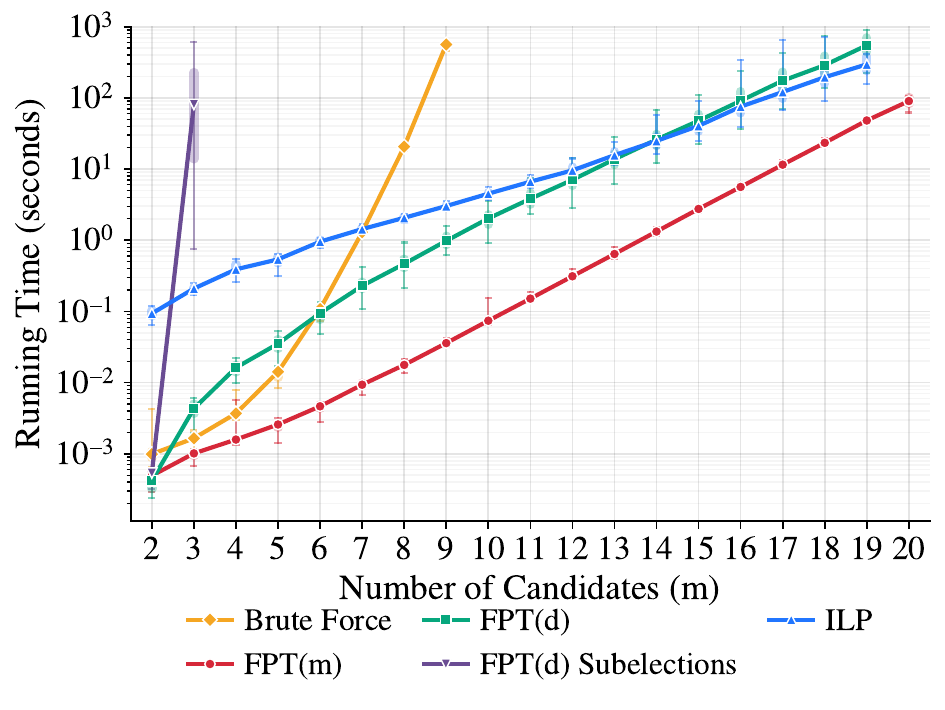}
    \caption{Running time of our algorithms on 1000 elections sampled from the SP/Conitzer culture.}
    \label{fig:runningTime:SPconitzer}
\end{figure}

\begin{figure}[h!]
    \centering
    \includegraphics[width=0.9\linewidth]{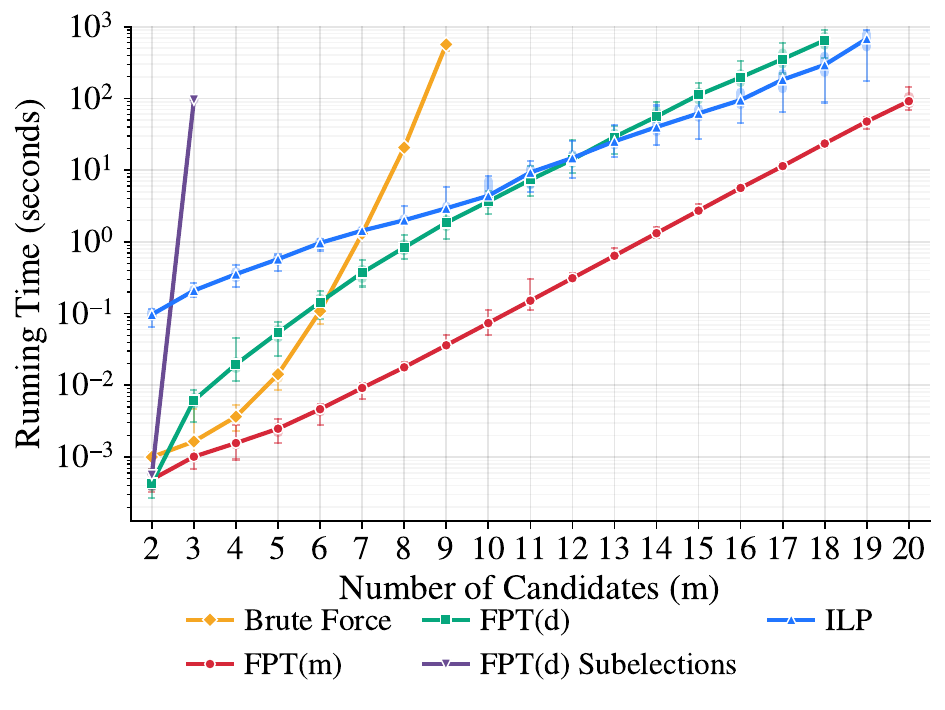}
    \caption{Running time of our algorithms on 1000 elections sampled from the SP/Walsh culture.}
    \label{fig:runningTime:SPwalsh}
\end{figure}

\begin{figure}[h!]
    \centering
    \includegraphics[width=0.9\linewidth]{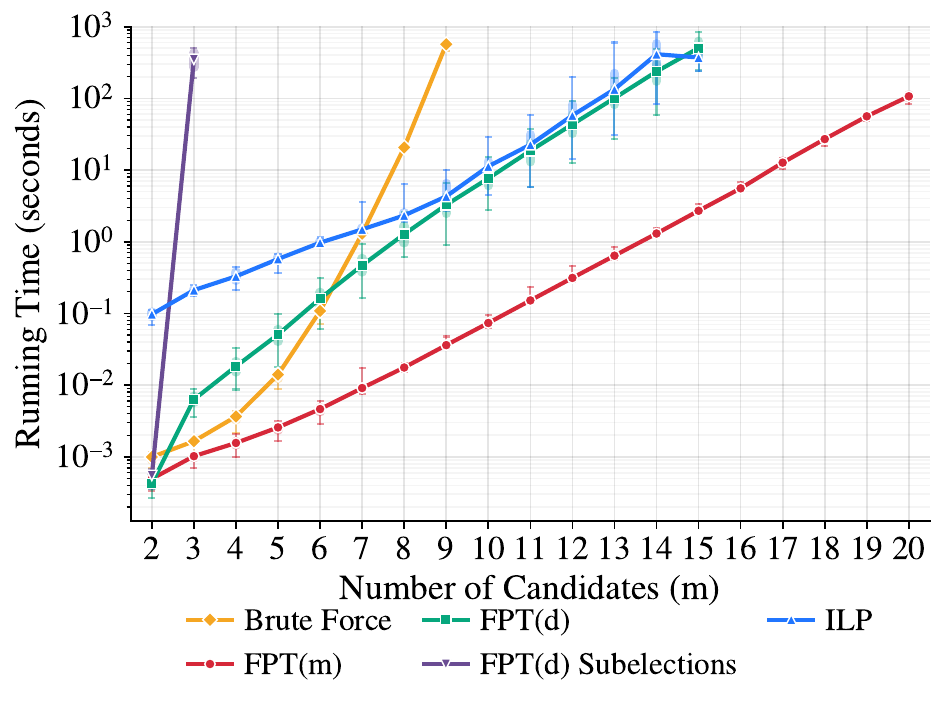}
    \caption{Running time of our algorithms on 1000 elections sampled from the Single-Crossing culture.}
    \label{fig:runningTime:Scross}
\end{figure}

\begin{figure}[h!]
    \centering
    \includegraphics[width=0.9\linewidth]{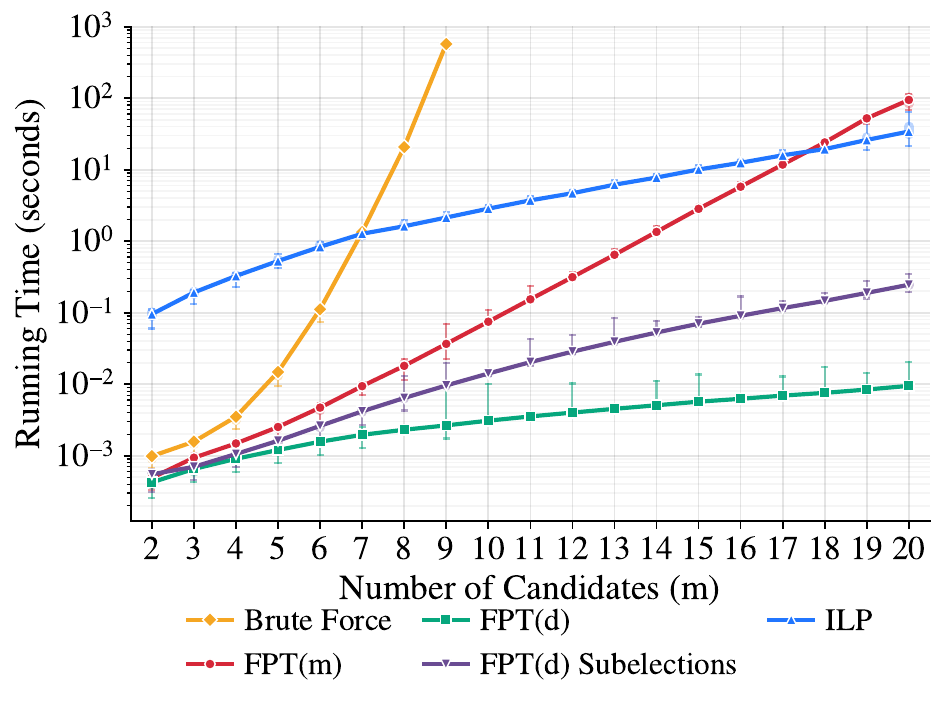}
    \caption{Running time of our algorithms on 1000 elections sampled from the GS/cat culture.}
    \label{fig:runningTime:GScat}
\end{figure}

\begin{figure}[h!]
    \centering
    \includegraphics[width=0.9\linewidth]{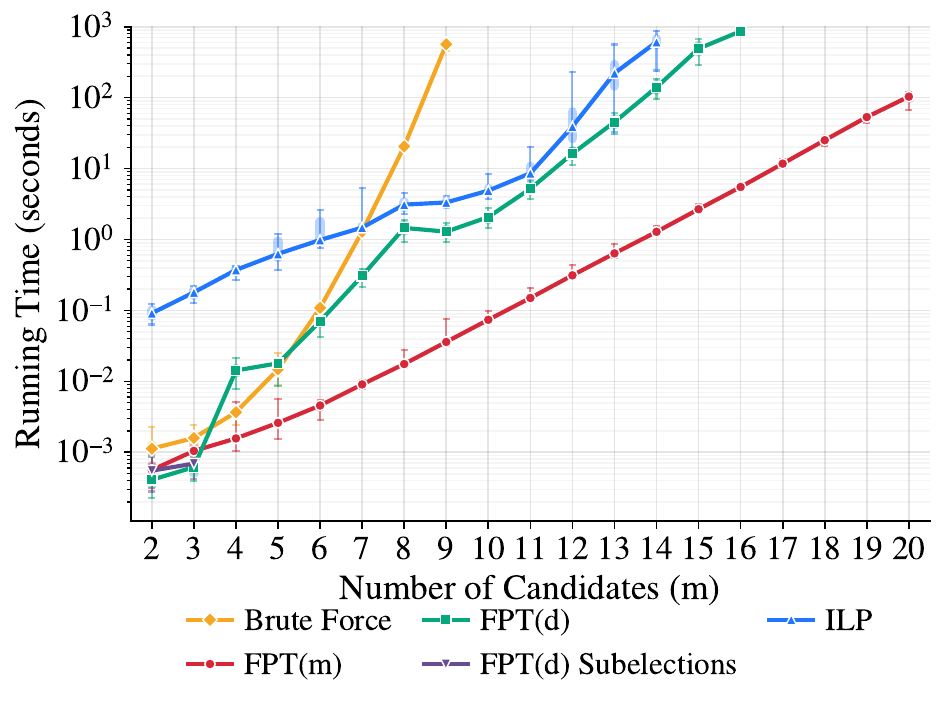}
    \caption{Running time of our algorithms on 1000 elections sampled from the GS/bal culture.}
    \label{fig:runningTime:GSbal}
\end{figure}

\begin{figure}[h!]
    \centering
    \includegraphics[width=0.9\linewidth]{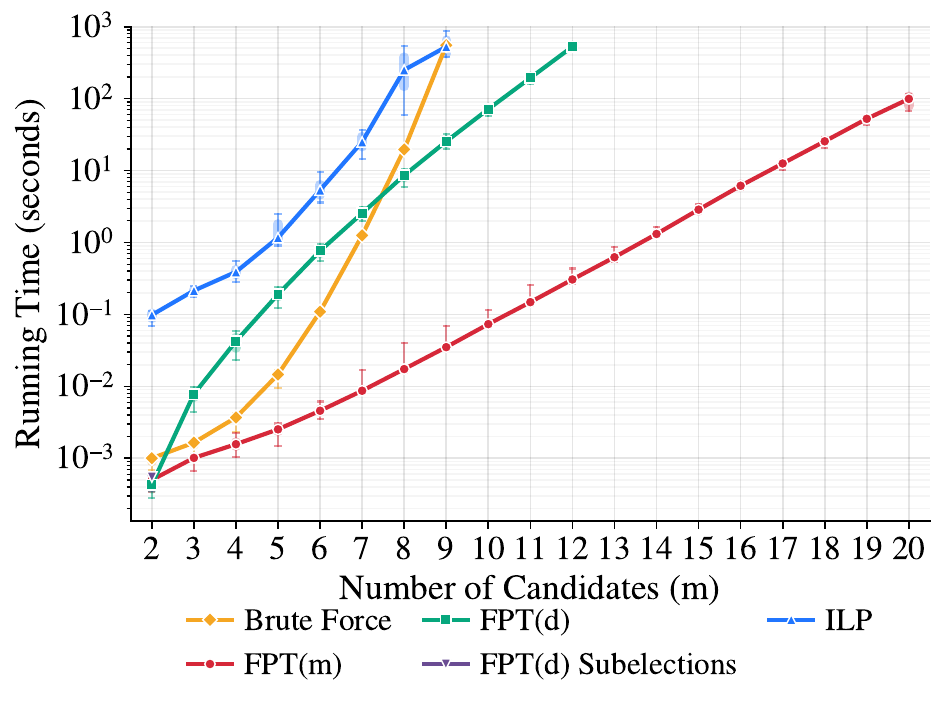}
    \caption{Running time of our algorithms on 1000 elections sampled from the Impartial culture.}
    \label{fig:runningTime:IC}
\end{figure}

\begin{figure}[h!]
    \centering
    \includegraphics[width=0.9\linewidth]{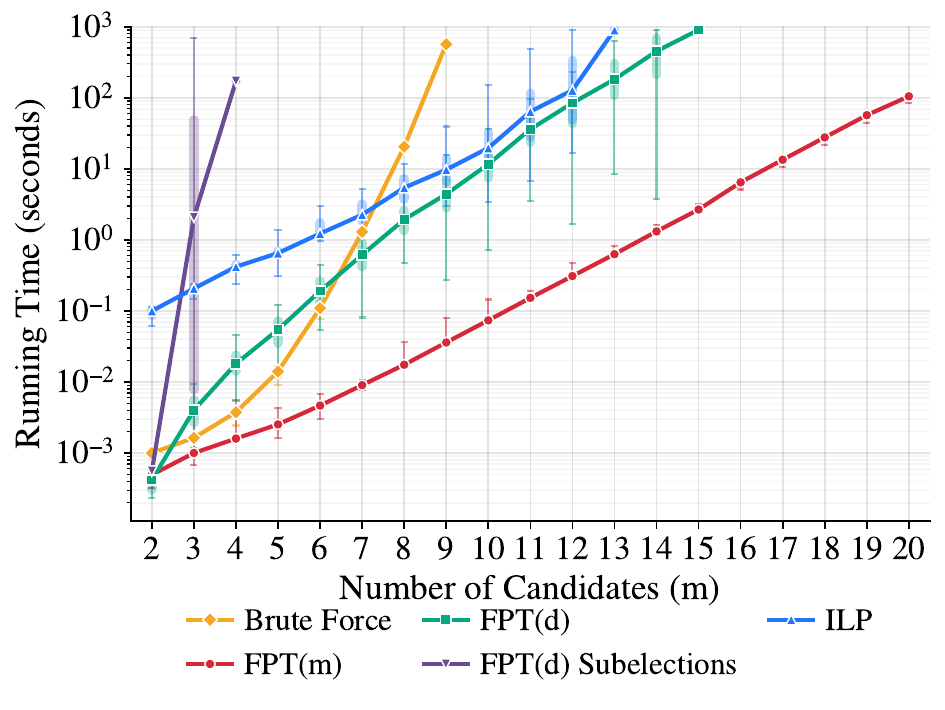}
    \caption{Running time of our algorithms on 1000 elections sampled from the URN-0.2 culture.}
    \label{fig:runningTime:URN2}
\end{figure}

\begin{figure}[h!]
    \centering
    \includegraphics[width=0.9\linewidth]{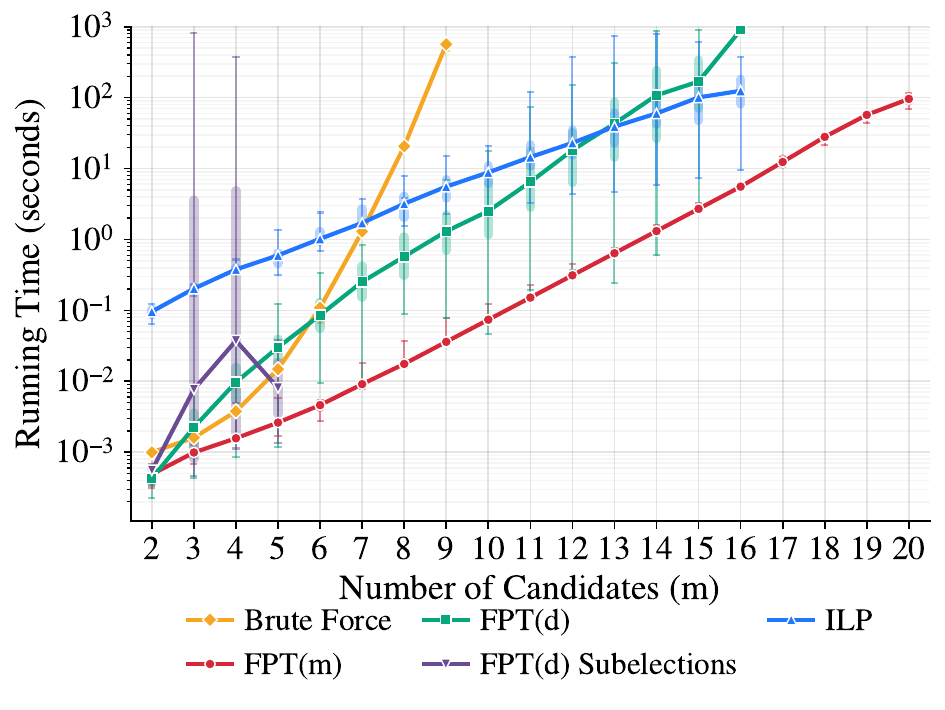}
    \caption{Running time of our algorithms on 1000 elections sampled from the URN-0.5 culture.}
    \label{fig:runningTime:URN5}
\end{figure}

\begin{figure}[h!]
    \centering
    \includegraphics[width=0.9\linewidth]{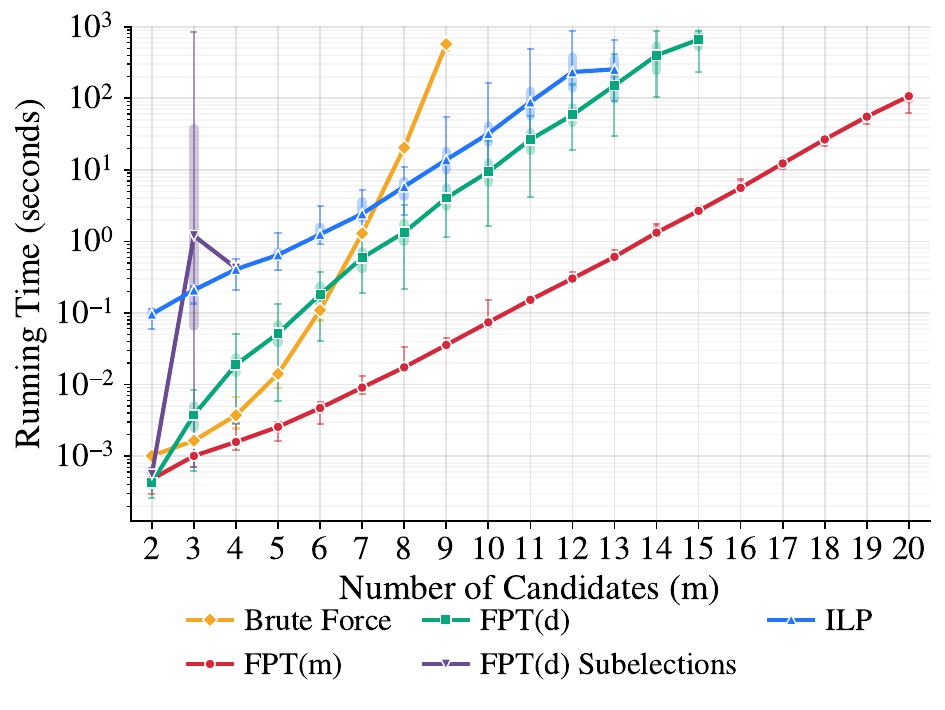}
    \caption{Running time of our algorithms on 1000 elections sampled from the EUC/uni culture.}
    \label{fig:runningTime:EUC}
\end{figure}

}

\subsection{Distance from a Structured Domain}
\label{sec:experiments:distance}
\appendixsubsection{sec:experiments:distance}
\toappendix{
\Cref{tab:map} shows the swap distance to the closest election of each group-separable domain for the elections of the map experiment.
\Cref{tab:closest-domain-c8-v96} then shows similar numerical results for our statistically more significant study with $1000$ samples per data point.
}
Briefly put, a map of elections is a visualization of an election
dataset so that each election is a dot on a 2D plane and the closer
two dots are, the fewer swaps of adjacent candidates are needed to
make the respective elections isomorphic; see, e.g., the works of
\citet{szu-boe-bre-fal-nie-sko-sli-tal:j:map} and
\citet{fal-kac-sor-szu-was:c:div-agr-pol-map}.  The map also includes
three special points: identity (ID), representing elections were all
votes are equal, antagonism (AN), representing elections where voters
are split into two halves, where each half has identical votes that
are reverses of those in the other half, and uniformity (UN),
representing elections with all possible votes.

\toappendix{
    \providecommand{\shadecell}[2]{\tikz[baseline=(s.base)]{\node (s) [transform shape, rounded corners=1pt, fill=blue!#1, inner sep=0, minimum width=15pt, minimum height=7pt] {#2};}}
    \begin{table*}[t]
      \centering
      \caption{Swap distance to the closest election of each group-separable domain for the elections of the map experiment ($8$ candidates, $96$ voters, $10$ elections per culture). The top panel reports absolute distances, the bottom panel the same statistics normalized per voter (i.e., divided by $n=96$). The share column gives the percentage of elections for which the domain is the closest of the three (ties split evenly; the general group-separable domain is only credited when strictly closer than both special cases). In every row the domain with the largest share is set in bold; cells are shaded proportionally to their value relative to the maximum of their statistic over all cells (shares relative to $100\%$).}
      \label{tab:map}
      \setlength{\tabcolsep}{2.5pt}
      \footnotesize
    \scalebox{0.8}{
      \begin{tabular}{l rrrrr rrrrr rrrrr}
        \toprule
        & \multicolumn{5}{c}{GS/cat} & \multicolumn{5}{c}{GS/bal} & \multicolumn{5}{c}{GS} \\
        \cmidrule(lr){2-6}\cmidrule(lr){7-11}\cmidrule(lr){12-16}
        Culture & avg$\pm$std & med & min & max & share & avg$\pm$std & med & min & max & share & avg$\pm$std & med & min & max & share \\
        \midrule
        \multicolumn{16}{c}{\emph{Absolute distance}} \\
        \midrule
        NM-0.1        & \shadecell{1.0}{108.6$\pm$4.7} & \shadecell{1.0}{109} & \shadecell{0.8}{99} & \shadecell{1.0}{115} & \shadecell{0.0}{0} & \shadecell{0.5}{80.3$\pm$6.5} & \shadecell{0.5}{79.5} & \shadecell{0.4}{69} & \shadecell{0.6}{91} & \shadecell{1.3}{20} & \textbf{\shadecell{0.5}{76.6$\pm$4.9}} & \textbf{\shadecell{0.5}{75.5}} & \textbf{\shadecell{0.4}{69}} & \textbf{\shadecell{0.5}{85}} & \textbf{\shadecell{21.1}{80}} \\
        NM-0.3        & \shadecell{5.9}{270.1$\pm$14.0} & \shadecell{6.1}{274.5} & \shadecell{5.1}{245} & \shadecell{6.1}{284} & \shadecell{0.0}{0} & \shadecell{5.9}{270.2$\pm$20.3} & \shadecell{6.1}{273} & \shadecell{5.2}{247} & \shadecell{7.4}{312} & \shadecell{0.0}{0} & \textbf{\shadecell{5.0}{247.3$\pm$18.2}} & \textbf{\shadecell{5.0}{248}} & \textbf{\shadecell{4.1}{218}} & \textbf{\shadecell{5.7}{275}} & \textbf{\shadecell{33.0}{100}} \\
        NM-0.5        & \textbf{\shadecell{11.9}{383.5$\pm$21.4}} & \textbf{\shadecell{12.0}{384}} & \textbf{\shadecell{10.9}{358}} & \textbf{\shadecell{13.1}{415}} & \textbf{\shadecell{21.1}{80}} & \shadecell{16.3}{448.5$\pm$31.7} & \shadecell{16.3}{448} & \shadecell{13.7}{401} & \shadecell{18.4}{492} & \shadecell{0.0}{0} & \shadecell{11.9}{383.1$\pm$21.4} & \shadecell{11.9}{382.5} & \shadecell{10.9}{358} & \shadecell{13.1}{415} & \shadecell{1.3}{20} \\
        NM-0.7        & \textbf{\shadecell{16.4}{448.9$\pm$16.6}} & \textbf{\shadecell{16.3}{447.5}} & \textbf{\shadecell{15.1}{421}} & \textbf{\shadecell{17.1}{475}} & \textbf{\shadecell{33.0}{100}} & \shadecell{27.2}{578.3$\pm$14.6} & \shadecell{27.3}{580} & \shadecell{26.7}{560} & \shadecell{27.7}{604} & \shadecell{0.0}{0} & \shadecell{16.4}{448.9$\pm$16.6} & \shadecell{16.3}{447.5} & \shadecell{15.1}{421} & \shadecell{17.1}{475} & \shadecell{0.0}{0} \\
        NM-0.9        & \textbf{\shadecell{17.6}{465.2$\pm$8.3}} & \textbf{\shadecell{17.5}{464.5}} & \textbf{\shadecell{17.6}{454}} & \textbf{\shadecell{17.3}{477}} & \textbf{\shadecell{33.0}{100}} & \shadecell{32.7}{634.6$\pm$11.3} & \shadecell{32.5}{632.5} & \shadecell{32.8}{620} & \shadecell{33.0}{659} & \shadecell{0.0}{0} & \shadecell{17.6}{465.2$\pm$8.3} & \shadecell{17.5}{464.5} & \shadecell{17.6}{454} & \shadecell{17.3}{477} & \shadecell{0.0}{0} \\
        IC            & \textbf{\shadecell{17.5}{464.9$\pm$14.8}} & \textbf{\shadecell{17.8}{468.5}} & \textbf{\shadecell{16.4}{438}} & \textbf{\shadecell{17.6}{481}} & \textbf{\shadecell{33.0}{100}} & \shadecell{33.0}{637.5$\pm$9.0} & \shadecell{33.0}{637.5} & \shadecell{33.0}{622} & \shadecell{32.4}{653} & \shadecell{0.0}{0} & \shadecell{17.5}{464.9$\pm$14.8} & \shadecell{17.8}{468.5} & \shadecell{16.4}{438} & \shadecell{17.6}{481} & \shadecell{0.0}{0} \\
        URN-0.2       & \shadecell{6.2}{277.1$\pm$27.5} & \shadecell{5.9}{268.5} & \shadecell{5.4}{251} & \shadecell{8.2}{329} & \shadecell{5.3}{40} & \shadecell{14.1}{416.3$\pm$51.5} & \shadecell{13.5}{408} & \shadecell{10.8}{356} & \shadecell{19.0}{500} & \shadecell{0.0}{0} & \textbf{\shadecell{5.9}{269$\pm$30.7}} & \textbf{\shadecell{5.8}{268}} & \textbf{\shadecell{4.1}{220}} & \textbf{\shadecell{7.8}{321}} & \textbf{\shadecell{11.9}{60}} \\
        URN-0.5       & \textbf{\shadecell{2.3}{167.4$\pm$64.6}} & \textbf{\shadecell{2.3}{169.5}} & \textbf{\shadecell{0.5}{79}} & \textbf{\shadecell{5.7}{273}} & \textbf{\shadecell{8.2}{50}} & \shadecell{8.6}{325.1$\pm$82.5} & \shadecell{10.4}{357.5} & \shadecell{3.5}{204} & \shadecell{14.1}{430} & \shadecell{0.0}{0} & \textbf{\shadecell{1.9}{153.5$\pm$59.5}} & \textbf{\shadecell{2.0}{156}} & \textbf{\shadecell{0.4}{66}} & \textbf{\shadecell{3.8}{223}} & \textbf{\shadecell{8.2}{50}} \\
        SP/Walsh      & \shadecell{4.4}{232.9$\pm$11.7} & \shadecell{4.4}{234} & \shadecell{3.9}{214} & \shadecell{4.7}{250} & \shadecell{0.0}{0} & \shadecell{5.7}{265.3$\pm$18.7} & \shadecell{5.8}{268} & \shadecell{4.1}{220} & \shadecell{6.2}{286} & \shadecell{0.0}{0} & \textbf{\shadecell{3.7}{212.1$\pm$13.8}} & \textbf{\shadecell{3.7}{214}} & \textbf{\shadecell{2.9}{183}} & \textbf{\shadecell{4.3}{237}} & \textbf{\shadecell{33.0}{100}} \\
        SP/Conitzer   & \shadecell{2.8}{186.2$\pm$20.6} & \shadecell{2.6}{179.5} & \shadecell{2.3}{165} & \shadecell{4.2}{235} & \shadecell{0.0}{0} & \shadecell{2.6}{177.9$\pm$19.0} & \shadecell{2.5}{174} & \shadecell{2.1}{157} & \shadecell{3.4}{211} & \shadecell{0.0}{0} & \textbf{\shadecell{2.1}{162.5$\pm$18.3}} & \textbf{\shadecell{2.0}{155.5}} & \textbf{\shadecell{1.8}{147}} & \textbf{\shadecell{3.3}{209}} & \textbf{\shadecell{33.0}{100}} \\
        SC            & \shadecell{6.1}{275.2$\pm$14.3} & \shadecell{6.2}{276.5} & \shadecell{5.1}{244} & \shadecell{6.6}{295} & \shadecell{0.0}{0} & \shadecell{8.1}{316.1$\pm$27.5} & \shadecell{8.0}{314.5} & \shadecell{5.8}{260} & \shadecell{9.7}{357} & \shadecell{0.0}{0} & \textbf{\shadecell{4.8}{242.4$\pm$19.4}} & \textbf{\shadecell{4.8}{242}} & \textbf{\shadecell{3.9}{214}} & \textbf{\shadecell{5.7}{274}} & \textbf{\shadecell{33.0}{100}} \\
        EUC-sphere    & \shadecell{4.8}{243.6$\pm$70.0} & \shadecell{5.3}{255} & \shadecell{1.5}{134} & \shadecell{7.5}{315} & \shadecell{0.3}{10} & \shadecell{5.5}{259.3$\pm$48.6} & \shadecell{6.0}{271.5} & \shadecell{2.7}{179} & \shadecell{8.0}{324} & \shadecell{0.3}{10} & \textbf{\shadecell{2.7}{181.4$\pm$38.4}} & \textbf{\shadecell{2.5}{174.5}} & \textbf{\shadecell{1.3}{124}} & \textbf{\shadecell{4.5}{243}} & \textbf{\shadecell{21.1}{80}} \\
        GS/cat        & \textbf{\shadecell{0.0}{0$\pm$0}} & \textbf{\shadecell{0.0}{0}} & \textbf{\shadecell{0.0}{0}} & \textbf{\shadecell{0.0}{0}} & \textbf{\shadecell{33.0}{100}} & \shadecell{30.3}{611.1$\pm$24.3} & \shadecell{30.8}{615.5} & \shadecell{25.7}{549} & \shadecell{30.8}{637} & \shadecell{0.0}{0} & \shadecell{0.0}{0$\pm$0} & \shadecell{0.0}{0} & \shadecell{0.0}{0} & \shadecell{0.0}{0} & \shadecell{0.0}{0} \\
        GS/bal        & \shadecell{7.3}{300.7$\pm$13.7} & \shadecell{7.5}{303.5} & \shadecell{6.5}{277} & \shadecell{7.8}{321} & \shadecell{0.0}{0} & \textbf{\shadecell{0.0}{0$\pm$0}} & \textbf{\shadecell{0.0}{0}} & \textbf{\shadecell{0.0}{0}} & \textbf{\shadecell{0.0}{0}} & \textbf{\shadecell{33.0}{100}} & \shadecell{0.0}{0$\pm$0} & \shadecell{0.0}{0} & \shadecell{0.0}{0} & \shadecell{0.0}{0} & \shadecell{0.0}{0} \\
        \midrule
        \multicolumn{16}{c}{\emph{Distance per voter}} \\
        \midrule
        NM-0.1        & \shadecell{1.0}{1.13$\pm$0.05} & \shadecell{1.0}{1.14} & \shadecell{0.8}{1.03} & \shadecell{1.0}{1.20} &  & \shadecell{0.5}{0.84$\pm$0.07} & \shadecell{0.5}{0.83} & \shadecell{0.4}{0.72} & \shadecell{0.6}{0.95} &  & \textbf{\shadecell{0.5}{0.80$\pm$0.05}} & \textbf{\shadecell{0.5}{0.79}} & \textbf{\shadecell{0.4}{0.72}} & \textbf{\shadecell{0.5}{0.89}} &  \\
        NM-0.3        & \shadecell{5.9}{2.81$\pm$0.15} & \shadecell{6.1}{2.86} & \shadecell{5.1}{2.55} & \shadecell{6.1}{2.96} &  & \shadecell{5.9}{2.81$\pm$0.21} & \shadecell{6.1}{2.84} & \shadecell{5.2}{2.57} & \shadecell{7.4}{3.25} &  & \textbf{\shadecell{5.0}{2.58$\pm$0.19}} & \textbf{\shadecell{5.0}{2.58}} & \textbf{\shadecell{4.1}{2.27}} & \textbf{\shadecell{5.7}{2.86}} &  \\
        NM-0.5        & \textbf{\shadecell{11.9}{3.99$\pm$0.22}} & \textbf{\shadecell{12.0}{4.00}} & \textbf{\shadecell{10.9}{3.73}} & \textbf{\shadecell{13.1}{4.32}} &  & \shadecell{16.3}{4.67$\pm$0.33} & \shadecell{16.3}{4.67} & \shadecell{13.7}{4.18} & \shadecell{18.4}{5.12} &  & \shadecell{11.9}{3.99$\pm$0.22} & \shadecell{11.9}{3.98} & \shadecell{10.9}{3.73} & \shadecell{13.1}{4.32} &  \\
        NM-0.7        & \textbf{\shadecell{16.4}{4.68$\pm$0.17}} & \textbf{\shadecell{16.3}{4.66}} & \textbf{\shadecell{15.1}{4.39}} & \textbf{\shadecell{17.1}{4.95}} &  & \shadecell{27.2}{6.02$\pm$0.15} & \shadecell{27.3}{6.04} & \shadecell{26.7}{5.83} & \shadecell{27.7}{6.29} &  & \shadecell{16.4}{4.68$\pm$0.17} & \shadecell{16.3}{4.66} & \shadecell{15.1}{4.39} & \shadecell{17.1}{4.95} &  \\
        NM-0.9        & \textbf{\shadecell{17.6}{4.85$\pm$0.09}} & \textbf{\shadecell{17.5}{4.84}} & \textbf{\shadecell{17.6}{4.73}} & \textbf{\shadecell{17.3}{4.97}} &  & \shadecell{32.7}{6.61$\pm$0.12} & \shadecell{32.5}{6.59} & \shadecell{32.8}{6.46} & \shadecell{33.0}{6.86} &  & \shadecell{17.6}{4.85$\pm$0.09} & \shadecell{17.5}{4.84} & \shadecell{17.6}{4.73} & \shadecell{17.3}{4.97} &  \\
        IC            & \textbf{\shadecell{17.5}{4.84$\pm$0.15}} & \textbf{\shadecell{17.8}{4.88}} & \textbf{\shadecell{16.4}{4.56}} & \textbf{\shadecell{17.6}{5.01}} &  & \shadecell{33.0}{6.64$\pm$0.09} & \shadecell{33.0}{6.64} & \shadecell{33.0}{6.48} & \shadecell{32.4}{6.80} &  & \shadecell{17.5}{4.84$\pm$0.15} & \shadecell{17.8}{4.88} & \shadecell{16.4}{4.56} & \shadecell{17.6}{5.01} &  \\
        URN-0.2       & \shadecell{6.2}{2.89$\pm$0.29} & \shadecell{5.9}{2.80} & \shadecell{5.4}{2.61} & \shadecell{8.2}{3.43} &  & \shadecell{14.1}{4.34$\pm$0.54} & \shadecell{13.5}{4.25} & \shadecell{10.8}{3.71} & \shadecell{19.0}{5.21} &  & \textbf{\shadecell{5.9}{2.80$\pm$0.32}} & \textbf{\shadecell{5.8}{2.79}} & \textbf{\shadecell{4.1}{2.29}} & \textbf{\shadecell{7.8}{3.34}} &  \\
        URN-0.5       & \textbf{\shadecell{2.3}{1.74$\pm$0.67}} & \textbf{\shadecell{2.3}{1.77}} & \textbf{\shadecell{0.5}{0.82}} & \textbf{\shadecell{5.7}{2.84}} &  & \shadecell{8.6}{3.39$\pm$0.86} & \shadecell{10.4}{3.72} & \shadecell{3.5}{2.12} & \shadecell{14.1}{4.48} &  & \textbf{\shadecell{1.9}{1.60$\pm$0.62}} & \textbf{\shadecell{2.0}{1.62}} & \textbf{\shadecell{0.4}{0.69}} & \textbf{\shadecell{3.8}{2.32}} &  \\
        SP/Walsh      & \shadecell{4.4}{2.43$\pm$0.12} & \shadecell{4.4}{2.44} & \shadecell{3.9}{2.23} & \shadecell{4.7}{2.60} &  & \shadecell{5.7}{2.76$\pm$0.19} & \shadecell{5.8}{2.79} & \shadecell{4.1}{2.29} & \shadecell{6.2}{2.98} &  & \textbf{\shadecell{3.7}{2.21$\pm$0.14}} & \textbf{\shadecell{3.7}{2.23}} & \textbf{\shadecell{2.9}{1.91}} & \textbf{\shadecell{4.3}{2.47}} &  \\
        SP/Conitzer   & \shadecell{2.8}{1.94$\pm$0.21} & \shadecell{2.6}{1.87} & \shadecell{2.3}{1.72} & \shadecell{4.2}{2.45} &  & \shadecell{2.6}{1.85$\pm$0.20} & \shadecell{2.5}{1.81} & \shadecell{2.1}{1.64} & \shadecell{3.4}{2.20} &  & \textbf{\shadecell{2.1}{1.69$\pm$0.19}} & \textbf{\shadecell{2.0}{1.62}} & \textbf{\shadecell{1.8}{1.53}} & \textbf{\shadecell{3.3}{2.18}} &  \\
        SC            & \shadecell{6.1}{2.87$\pm$0.15} & \shadecell{6.2}{2.88} & \shadecell{5.1}{2.54} & \shadecell{6.6}{3.07} &  & \shadecell{8.1}{3.29$\pm$0.29} & \shadecell{8.0}{3.28} & \shadecell{5.8}{2.71} & \shadecell{9.7}{3.72} &  & \textbf{\shadecell{4.8}{2.52$\pm$0.20}} & \textbf{\shadecell{4.8}{2.52}} & \textbf{\shadecell{3.9}{2.23}} & \textbf{\shadecell{5.7}{2.85}} &  \\
        EUC-sphere    & \shadecell{4.8}{2.54$\pm$0.73} & \shadecell{5.3}{2.66} & \shadecell{1.5}{1.40} & \shadecell{7.5}{3.28} &  & \shadecell{5.5}{2.70$\pm$0.51} & \shadecell{6.0}{2.83} & \shadecell{2.7}{1.86} & \shadecell{8.0}{3.38} &  & \textbf{\shadecell{2.7}{1.89$\pm$0.40}} & \textbf{\shadecell{2.5}{1.82}} & \textbf{\shadecell{1.3}{1.29}} & \textbf{\shadecell{4.5}{2.53}} &  \\
        GS/cat        & \textbf{\shadecell{0.0}{0.00$\pm$0.00}} & \textbf{\shadecell{0.0}{0.00}} & \textbf{\shadecell{0.0}{0.00}} & \textbf{\shadecell{0.0}{0.00}} &  & \shadecell{30.3}{6.37$\pm$0.25} & \shadecell{30.8}{6.41} & \shadecell{25.7}{5.72} & \shadecell{30.8}{6.64} &  & \shadecell{0.0}{0.00$\pm$0.00} & \shadecell{0.0}{0.00} & \shadecell{0.0}{0.00} & \shadecell{0.0}{0.00} &  \\
        GS/bal        & \shadecell{7.3}{3.13$\pm$0.14} & \shadecell{7.5}{3.16} & \shadecell{6.5}{2.89} & \shadecell{7.8}{3.34} &  & \textbf{\shadecell{0.0}{0.00$\pm$0.00}} & \textbf{\shadecell{0.0}{0.00}} & \textbf{\shadecell{0.0}{0.00}} & \textbf{\shadecell{0.0}{0.00}} &  & \shadecell{0.0}{0.00$\pm$0.00} & \shadecell{0.0}{0.00} & \shadecell{0.0}{0.00} & \shadecell{0.0}{0.00} &  \\
        \bottomrule
      \end{tabular}
      }
    \end{table*}
}

We show our map in \Cref{fig:results:map}, containing 10 elections for
each of the synthetic models we considered (with $8$ candidates and
$96$ voters); for each of these \emph{original} elections we computed
the closest $\GS$ election, called its \emph{projection}, and also
included it on the map, with an arrow pointing from the original one
to the projection. Each original election is colored by the number of
swaps (per voter) needed to obtain its projection. Each projection
is colored according to its type.

We note that it suffices to make fewer than $5$ swaps of adjacent
candidates (per voter) to transform any of our elections to its
projection. The largest possible swap distance between two votes with
$8$ candidates is $28$, so all our elections are at a modest distance
from having $\GS$ structure. Second, there is a clear separation
between elections projected to $\GScat$ ones and the remaining
ones. Specifically, the former have less internal structure (and,
hence, are closer to IC) and the latter are structured in one way or
another; e.g., are close to ID, are single-peaked (SP),
single-crossing (SC), or are Euclidean (EUC). Third, the projections
are systematically located in a particular neighborhood of their
originals (often in the direction of AN, suggesting increased
polarization).

While the map gives an intuitive overview, we also performed a more
statistically significant study. For several models, we have sampled
$1000$ elections and computed their closest $\GScat$, $\GSbal$, and
$\GSbin$ elections, as well as the closest single-peaked ones, as an
external reference point.  In \Cref{fig:results:distances}, we show
which fractions of elections from a given culture are closest to each
of these domains ($\GS$ gets ``a point'' only if neither a $\GScat$
nor a $\GSbal$ election was closest to the considered one; any
remaining ties split the point equally among the tied domains). 
\iflong 
A more
detailed overview with particular numerical values is available in \Cref{tab:closest-domain-c8-v96}.
\else 
More
details with particular numerical values are available in 
the supplementary material.
\fi 
The general conclusions are the
same as in the case of the map, with the additional observation that
single-peaked elections are closest to NM-$\phi$ ones with $\phi \leq 0.5$.

\toappendix{
\providecommand{\shadecell}[2]{\tikz[baseline=(s.base)]{\node (s) [transform shape, rounded corners=1pt, fill=blue!#1, inner sep=0, minimum width=15pt, minimum height=7pt] {#2};}}
\begin{table*}[t]
  \centering
  \caption{Swap distance to the closest election of each restricted domain ($8$ candidates, $96$ voters, $1000$ elections per culture). The top panel reports absolute distances, the bottom panel the same statistics normalized per voter (i.e., divided by $n=96$). The share column gives the percentage of elections for which the domain is the closest of the four (ties split evenly; the general group-separable domain is only credited when strictly closer than both special cases). In every row the domain with the largest share is set in bold; cells are shaded proportionally to their value relative to the maximum of their statistic over all cells (shares relative to $100\%$).}
  \label{tab:closest-domain-c8-v96}
  \setlength{\tabcolsep}{2.5pt}
  \footnotesize
  \scalebox{0.8}{
  \begin{tabular}{l rrrrr rrrrr rrrrr rrrrr}
    \toprule
    & \multicolumn{5}{c}{GS/cat} & \multicolumn{5}{c}{GS/bal} & \multicolumn{5}{c}{GS} & \multicolumn{5}{c}{SPeak} \\
    \cmidrule(lr){2-6}\cmidrule(lr){7-11}\cmidrule(lr){12-16}\cmidrule(lr){17-21}
    Culture & avg$\pm$std & med & min & max & share & avg$\pm$std & med & min & max & share & avg$\pm$std & med & min & max & share & avg$\pm$std & med & min & max & share \\
    \midrule
    \multicolumn{21}{c}{\emph{Absolute distance}} \\
    \midrule
    NM-0.1        & \shadecell{0.7}{101.2$\pm$9.4} & \shadecell{0.7}{101} & \shadecell{0.4}{66} & \shadecell{1.1}{133} & \shadecell{0.0}{0} & \shadecell{0.4}{75.0$\pm$9.9} & \shadecell{0.4}{75} & \shadecell{0.2}{44} & \shadecell{0.8}{112} & \shadecell{0.0}{0} & \shadecell{0.4}{70.9$\pm$8.4} & \shadecell{0.4}{71} & \shadecell{0.2}{44} & \shadecell{0.6}{97} & \shadecell{0.0}{0} & \textbf{\shadecell{0.1}{31.5$\pm$6.7}} & \textbf{\shadecell{0.1}{31}} & \textbf{\shadecell{0.0}{13}} & \textbf{\shadecell{0.2}{55}} & \textbf{\shadecell{33.0}{100}} \\
    NM-0.3        & \shadecell{5.3}{268.6$\pm$15.1} & \shadecell{5.2}{268} & \shadecell{4.4}{222} & \shadecell{6.3}{313} & \shadecell{0.0}{0} & \shadecell{4.9}{260.2$\pm$20.2} & \shadecell{4.9}{260} & \shadecell{3.4}{195} & \shadecell{6.8}{324} & \shadecell{0.0}{0} & \shadecell{4.2}{239.3$\pm$17.1} & \shadecell{4.2}{239} & \shadecell{3.2}{190} & \shadecell{5.6}{293} & \shadecell{0.0}{0} & \textbf{\shadecell{2.4}{179.6$\pm$17.0}} & \textbf{\shadecell{2.4}{180}} & \textbf{\shadecell{1.5}{129}} & \textbf{\shadecell{3.5}{233}} & \textbf{\shadecell{33.0}{100}} \\
    NM-0.5        & \shadecell{10.7}{383.2$\pm$15.7} & \shadecell{10.7}{384} & \shadecell{10.0}{334} & \shadecell{11.5}{422} & \shadecell{0.1}{5.4} & \shadecell{14.4}{443.7$\pm$24.7} & \shadecell{14.3}{443} & \shadecell{11.2}{354} & \shadecell{17.4}{519} & \shadecell{0.0}{0} & \shadecell{10.6}{381.7$\pm$16.2} & \shadecell{10.6}{382} & \shadecell{10.0}{334} & \shadecell{11.5}{422} & \shadecell{0.0}{0.3} & \textbf{\shadecell{9.1}{352.5$\pm$22.9}} & \textbf{\shadecell{9.0}{352}} & \textbf{\shadecell{6.6}{272}} & \textbf{\shadecell{12.3}{436}} & \textbf{\shadecell{29.3}{94.3}} \\
    NM-0.7        & \textbf{\shadecell{14.6}{447.0$\pm$12.3}} & \textbf{\shadecell{14.6}{448}} & \textbf{\shadecell{14.4}{401}} & \textbf{\shadecell{15.0}{481}} & \textbf{\shadecell{32.5}{99.3}} & \shadecell{24.8}{582.3$\pm$20.1} & \shadecell{24.8}{583} & \shadecell{22.6}{502} & \shadecell{26.6}{641} & \shadecell{0.0}{0} & \shadecell{14.6}{447.0$\pm$12.3} & \shadecell{14.6}{448} & \shadecell{14.2}{398} & \shadecell{15.0}{481} & \shadecell{0.0}{0.6} & \shadecell{19.3}{513.5$\pm$23.4} & \shadecell{19.2}{514} & \shadecell{17.3}{439} & \shadecell{22.6}{591} & \shadecell{0.0}{0.1} \\
    NM-0.9        & \textbf{\shadecell{15.6}{462.2$\pm$10.8}} & \textbf{\shadecell{15.6}{463}} & \textbf{\shadecell{16.0}{423}} & \textbf{\shadecell{15.4}{488}} & \textbf{\shadecell{32.8}{99.7}} & \shadecell{29.0}{629.6$\pm$11.5} & \shadecell{29.0}{631} & \shadecell{28.6}{565} & \shadecell{27.9}{656} & \shadecell{0.0}{0} & \shadecell{15.6}{462.2$\pm$10.8} & \shadecell{15.6}{463} & \shadecell{16.0}{423} & \shadecell{15.4}{488} & \shadecell{0.0}{0.3} & \shadecell{30.4}{644.7$\pm$19.0} & \shadecell{30.4}{646} & \shadecell{30.1}{580} & \shadecell{32.0}{703} & \shadecell{0.0}{0} \\
    IC            & \textbf{\shadecell{15.7}{462.9$\pm$11.0}} & \textbf{\shadecell{15.7}{464}} & \textbf{\shadecell{15.7}{419}} & \textbf{\shadecell{15.6}{491}} & \textbf{\shadecell{33.0}{100}} & \shadecell{29.0}{630.3$\pm$11.2} & \shadecell{29.1}{632} & \shadecell{30.9}{587} & \shadecell{27.9}{657} & \shadecell{0.0}{0} & \shadecell{15.7}{462.9$\pm$11.0} & \shadecell{15.7}{464} & \shadecell{15.7}{419} & \shadecell{15.6}{491} & \shadecell{0.0}{0} & \shadecell{33.0}{671.9$\pm$15.3} & \shadecell{33.0}{673} & \shadecell{33.0}{607} & \shadecell{33.0}{714} & \shadecell{0.0}{0} \\
    URN-0.2       & \textbf{\shadecell{5.8}{281.3$\pm$56.8}} & \textbf{\shadecell{6.1}{289}} & \textbf{\shadecell{0.3}{59}} & \textbf{\shadecell{11.3}{418}} & \textbf{\shadecell{8.2}{49.95}} & \shadecell{13.5}{429.8$\pm$71.5} & \shadecell{14.2}{441} & \shadecell{1.1}{111} & \shadecell{20.8}{567} & \shadecell{0.0}{0} & \shadecell{5.5}{273.8$\pm$57.7} & \shadecell{5.8}{281} & \shadecell{0.3}{59} & \shadecell{10.9}{411} & \shadecell{7.5}{47.65} & \shadecell{11.1}{389.8$\pm$82.4} & \shadecell{11.6}{399} & \shadecell{0.4}{67} & \shadecell{23.1}{597} & \shadecell{0.0}{2.4} \\
    URN-0.5       & \shadecell{2.6}{187.7$\pm$75.6} & \shadecell{2.7}{192} & \shadecell{0.0}{12} & \shadecell{8.4}{360} & \shadecell{3.3}{31.45} & \shadecell{7.5}{320.9$\pm$102.4} & \shadecell{8.2}{335.5} & \shadecell{0.0}{20} & \shadecell{17.9}{526} & \shadecell{0.0}{0} & \textbf{\shadecell{2.2}{173.1$\pm$74.0}} & \textbf{\shadecell{2.2}{173}} & \textbf{\shadecell{0.0}{8}} & \textbf{\shadecell{7.7}{346}} & \textbf{\shadecell{11.9}{60.1}} & \shadecell{4.6}{251.3$\pm$102.5} & \shadecell{4.6}{251} & \shadecell{0.0}{4} & \shadecell{18.6}{536} & \shadecell{0.2}{8.45} \\
    SP/Walsh      & \shadecell{4.1}{235.8$\pm$12.9} & \shadecell{4.1}{236} & \shadecell{3.0}{182} & \shadecell{4.9}{276} & \shadecell{0.0}{0} & \shadecell{5.3}{268.3$\pm$20.5} & \shadecell{5.2}{267} & \shadecell{3.7}{202} & \shadecell{7.6}{343} & \shadecell{0.0}{0} & \shadecell{3.5}{217.9$\pm$13.5} & \shadecell{3.5}{218} & \shadecell{2.8}{178} & \shadecell{4.4}{262} & \shadecell{0.0}{0} & \textbf{\shadecell{0.0}{0$\pm$0}} & \textbf{\shadecell{0.0}{0}} & \textbf{\shadecell{0.0}{0}} & \textbf{\shadecell{0.0}{0}} & \textbf{\shadecell{33.0}{100}} \\
    SP/Conitzer   & \shadecell{2.5}{184.6$\pm$16.1} & \shadecell{2.5}{185} & \shadecell{1.7}{136} & \shadecell{3.5}{234} & \shadecell{0.0}{0} & \shadecell{2.3}{178.3$\pm$22.2} & \shadecell{2.3}{178} & \shadecell{1.3}{119} & \shadecell{4.5}{265} & \shadecell{0.0}{0} & \shadecell{1.9}{160.1$\pm$17.1} & \shadecell{1.9}{160} & \shadecell{1.0}{106} & \shadecell{2.7}{206} & \shadecell{0.0}{0} & \textbf{\shadecell{0.0}{0$\pm$0}} & \textbf{\shadecell{0.0}{0}} & \textbf{\shadecell{0.0}{0}} & \textbf{\shadecell{0.0}{0}} & \textbf{\shadecell{33.0}{100}} \\
    SC            & \shadecell{5.1}{265.3$\pm$34.6} & \shadecell{5.2}{267} & \shadecell{2.0}{148} & \shadecell{7.8}{348} & \shadecell{0.0}{1.3} & \shadecell{7.8}{325.7$\pm$33.4} & \shadecell{7.8}{327.5} & \shadecell{4.1}{215} & \shadecell{11.3}{417} & \shadecell{0.0}{0} & \shadecell{4.1}{237.8$\pm$27.0} & \shadecell{4.2}{240} & \shadecell{1.8}{142} & \shadecell{5.7}{298} & \shadecell{0.1}{5.7} & \textbf{\shadecell{1.8}{156.2$\pm$38.7}} & \textbf{\shadecell{1.8}{157}} & \textbf{\shadecell{0.1}{29}} & \textbf{\shadecell{4.5}{265}} & \textbf{\shadecell{28.5}{93}} \\
    1D EUC-cube   & \shadecell{1.3}{131.0$\pm$41.7} & \shadecell{1.3}{132} & \shadecell{0.1}{26} & \shadecell{4.2}{256} & \shadecell{0.0}{0} & \shadecell{1.9}{162.4$\pm$43.7} & \shadecell{2.0}{164} & \shadecell{0.1}{32} & \shadecell{6.0}{305} & \shadecell{0.0}{0} & \shadecell{0.6}{93.4$\pm$29.1} & \shadecell{0.6}{94} & \shadecell{0.0}{5} & \shadecell{1.9}{172} & \shadecell{0.0}{0} & \textbf{\shadecell{0.0}{0$\pm$0}} & \textbf{\shadecell{0.0}{0}} & \textbf{\shadecell{0.0}{0}} & \textbf{\shadecell{0.0}{0}} & \textbf{\shadecell{33.0}{100}} \\
    2D EUC-cube   & \shadecell{5.6}{276.5$\pm$39.3} & \shadecell{5.6}{278} & \shadecell{2.1}{154} & \shadecell{9.2}{376} & \shadecell{0.2}{7.95} & \shadecell{9.2}{354.1$\pm$52.9} & \shadecell{9.3}{358} & \shadecell{2.4}{162} & \shadecell{15.2}{485} & \shadecell{0.0}{0.7} & \textbf{\shadecell{4.5}{248.6$\pm$38.2}} & \textbf{\shadecell{4.6}{250}} & \textbf{\shadecell{1.4}{125}} & \textbf{\shadecell{7.6}{343}} & \textbf{\shadecell{11.1}{57.9}} & \shadecell{5.3}{269.5$\pm$59.7} & \shadecell{5.5}{274} & \shadecell{0.5}{78} & \shadecell{10.0}{393} & \shadecell{3.7}{33.45} \\
    3D EUC-cube   & \shadecell{7.9}{328.5$\pm$35.3} & \shadecell{8.0}{331} & \shadecell{3.5}{197} & \shadecell{11.7}{425} & \shadecell{1.1}{17.9} & \shadecell{13.2}{424.4$\pm$45.4} & \shadecell{13.3}{427.5} & \shadecell{6.7}{273} & \shadecell{19.2}{545} & \shadecell{0.0}{0} & \textbf{\shadecell{7.1}{312.1$\pm$35.3}} & \textbf{\shadecell{7.1}{313}} & \textbf{\shadecell{3.3}{191}} & \textbf{\shadecell{11.4}{419}} & \textbf{\shadecell{10.4}{56.25}} & \shadecell{8.6}{342.4$\pm$55.9} & \shadecell{8.7}{346} & \shadecell{2.1}{153} & \shadecell{16.3}{502} & \shadecell{2.2}{25.85} \\
    4D EUC-cube   & \shadecell{9.2}{354.8$\pm$31.9} & \shadecell{9.3}{357} & \shadecell{5.6}{251} & \shadecell{12.0}{430} & \shadecell{3.6}{33.25} & \shadecell{15.8}{465.2$\pm$41.2} & \shadecell{15.9}{467.5} & \shadecell{9.9}{333} & \shadecell{21.1}{571} & \shadecell{0.0}{0} & \textbf{\shadecell{8.7}{345.9$\pm$32.1}} & \textbf{\shadecell{8.8}{347}} & \textbf{\shadecell{5.6}{249}} & \textbf{\shadecell{11.9}{428}} & \textbf{\shadecell{7.4}{47.2}} & \shadecell{11.0}{387.6$\pm$56.0} & \shadecell{11.0}{389} & \shadecell{4.0}{211} & \shadecell{18.9}{540} & \shadecell{1.3}{19.55} \\
    5D EUC-cube   & \textbf{\shadecell{10.1}{372.4$\pm$30.7}} & \textbf{\shadecell{10.2}{374}} & \textbf{\shadecell{5.7}{252}} & \textbf{\shadecell{13.4}{455}} & \textbf{\shadecell{6.2}{43.4}} & \shadecell{17.5}{488.7$\pm$36.7} & \shadecell{17.6}{492} & \shadecell{11.2}{354} & \shadecell{21.6}{578} & \shadecell{0.0}{0} & \shadecell{9.8}{366.5$\pm$31.4} & \shadecell{9.9}{368} & \shadecell{5.7}{252} & \shadecell{13.3}{454} & \shadecell{5.1}{39.45} & \shadecell{12.4}{411.9$\pm$56.1} & \shadecell{12.5}{414} & \shadecell{4.2}{217} & \shadecell{20.1}{557} & \shadecell{1.0}{17.15} \\
    GS/cat        & \textbf{\shadecell{0.0}{0$\pm$0}} & \textbf{\shadecell{0.0}{0}} & \textbf{\shadecell{0.0}{0}} & \textbf{\shadecell{0.0}{0}} & \textbf{\shadecell{33.0}{100}} & \shadecell{28.3}{622.1$\pm$16.6} & \shadecell{28.5}{625} & \shadecell{28.4}{563} & \shadecell{28.3}{661} & \shadecell{0.0}{0} & \shadecell{0.0}{0$\pm$0} & \shadecell{0.0}{0} & \shadecell{0.0}{0} & \shadecell{0.0}{0} & \shadecell{0.0}{0} & \shadecell{24.4}{577.4$\pm$25.6} & \shadecell{24.3}{578} & \shadecell{22.3}{499} & \shadecell{27.3}{649} & \shadecell{0.0}{0} \\
    GS/bal        & \shadecell{6.6}{299.5$\pm$12.2} & \shadecell{6.6}{301} & \shadecell{5.6}{251} & \shadecell{6.8}{325} & \shadecell{0.0}{0} & \textbf{\shadecell{0.0}{0$\pm$0}} & \textbf{\shadecell{0.0}{0}} & \textbf{\shadecell{0.0}{0}} & \textbf{\shadecell{0.0}{0}} & \textbf{\shadecell{33.0}{100}} & \shadecell{0.0}{0$\pm$0} & \shadecell{0.0}{0} & \shadecell{0.0}{0} & \shadecell{0.0}{0} & \shadecell{0.0}{0} & \shadecell{6.6}{300.1$\pm$12.4} & \shadecell{6.6}{301} & \shadecell{5.7}{252} & \shadecell{6.9}{326} & \shadecell{0.0}{0} \\
    \midrule
    \multicolumn{21}{c}{\emph{Distance per voter}} \\
    \midrule
    NM-0.1        & \shadecell{0.7}{1.05$\pm$0.10} & \shadecell{0.7}{1.05} & \shadecell{0.4}{0.69} & \shadecell{1.1}{1.39} &  & \shadecell{0.4}{0.78$\pm$0.10} & \shadecell{0.4}{0.78} & \shadecell{0.2}{0.46} & \shadecell{0.8}{1.17} &  & \shadecell{0.4}{0.74$\pm$0.09} & \shadecell{0.4}{0.74} & \shadecell{0.2}{0.46} & \shadecell{0.6}{1.01} &  & \textbf{\shadecell{0.1}{0.33$\pm$0.07}} & \textbf{\shadecell{0.1}{0.32}} & \textbf{\shadecell{0.0}{0.14}} & \textbf{\shadecell{0.2}{0.57}} &  \\
    NM-0.3        & \shadecell{5.3}{2.80$\pm$0.16} & \shadecell{5.2}{2.79} & \shadecell{4.4}{2.31} & \shadecell{6.3}{3.26} &  & \shadecell{4.9}{2.71$\pm$0.21} & \shadecell{4.9}{2.71} & \shadecell{3.4}{2.03} & \shadecell{6.8}{3.38} &  & \shadecell{4.2}{2.49$\pm$0.18} & \shadecell{4.2}{2.49} & \shadecell{3.2}{1.98} & \shadecell{5.6}{3.05} &  & \textbf{\shadecell{2.4}{1.87$\pm$0.18}} & \textbf{\shadecell{2.4}{1.88}} & \textbf{\shadecell{1.5}{1.34}} & \textbf{\shadecell{3.5}{2.43}} &  \\
    NM-0.5        & \shadecell{10.7}{3.99$\pm$0.16} & \shadecell{10.7}{4.00} & \shadecell{10.0}{3.48} & \shadecell{11.5}{4.40} &  & \shadecell{14.4}{4.62$\pm$0.26} & \shadecell{14.3}{4.61} & \shadecell{11.2}{3.69} & \shadecell{17.4}{5.41} &  & \shadecell{10.6}{3.98$\pm$0.17} & \shadecell{10.6}{3.98} & \shadecell{10.0}{3.48} & \shadecell{11.5}{4.40} &  & \textbf{\shadecell{9.1}{3.67$\pm$0.24}} & \textbf{\shadecell{9.0}{3.67}} & \textbf{\shadecell{6.6}{2.83}} & \textbf{\shadecell{12.3}{4.54}} &  \\
    NM-0.7        & \textbf{\shadecell{14.6}{4.66$\pm$0.13}} & \textbf{\shadecell{14.6}{4.67}} & \textbf{\shadecell{14.4}{4.18}} & \textbf{\shadecell{15.0}{5.01}} &  & \shadecell{24.8}{6.07$\pm$0.21} & \shadecell{24.8}{6.07} & \shadecell{22.6}{5.23} & \shadecell{26.6}{6.68} &  & \shadecell{14.6}{4.66$\pm$0.13} & \shadecell{14.6}{4.67} & \shadecell{14.2}{4.15} & \shadecell{15.0}{5.01} &  & \shadecell{19.3}{5.35$\pm$0.24} & \shadecell{19.2}{5.35} & \shadecell{17.3}{4.57} & \shadecell{22.6}{6.16} &  \\
    NM-0.9        & \textbf{\shadecell{15.6}{4.81$\pm$0.11}} & \textbf{\shadecell{15.6}{4.82}} & \textbf{\shadecell{16.0}{4.41}} & \textbf{\shadecell{15.4}{5.08}} &  & \shadecell{29.0}{6.56$\pm$0.12} & \shadecell{29.0}{6.57} & \shadecell{28.6}{5.89} & \shadecell{27.9}{6.83} &  & \shadecell{15.6}{4.81$\pm$0.11} & \shadecell{15.6}{4.82} & \shadecell{16.0}{4.41} & \shadecell{15.4}{5.08} &  & \shadecell{30.4}{6.72$\pm$0.20} & \shadecell{30.4}{6.73} & \shadecell{30.1}{6.04} & \shadecell{32.0}{7.32} &  \\
    IC            & \textbf{\shadecell{15.7}{4.82$\pm$0.11}} & \textbf{\shadecell{15.7}{4.83}} & \textbf{\shadecell{15.7}{4.36}} & \textbf{\shadecell{15.6}{5.11}} &  & \shadecell{29.0}{6.57$\pm$0.12} & \shadecell{29.1}{6.58} & \shadecell{30.9}{6.11} & \shadecell{27.9}{6.84} &  & \shadecell{15.7}{4.82$\pm$0.11} & \shadecell{15.7}{4.83} & \shadecell{15.7}{4.36} & \shadecell{15.6}{5.11} &  & \shadecell{33.0}{7.00$\pm$0.16} & \shadecell{33.0}{7.01} & \shadecell{33.0}{6.32} & \shadecell{33.0}{7.44} &  \\
    URN-0.2       & \textbf{\shadecell{5.8}{2.93$\pm$0.59}} & \textbf{\shadecell{6.1}{3.01}} & \textbf{\shadecell{0.3}{0.61}} & \textbf{\shadecell{11.3}{4.35}} &  & \shadecell{13.5}{4.48$\pm$0.74} & \shadecell{14.2}{4.59} & \shadecell{1.1}{1.16} & \shadecell{20.8}{5.91} &  & \shadecell{5.5}{2.85$\pm$0.60} & \shadecell{5.8}{2.93} & \shadecell{0.3}{0.61} & \shadecell{10.9}{4.28} &  & \shadecell{11.1}{4.06$\pm$0.86} & \shadecell{11.6}{4.16} & \shadecell{0.4}{0.70} & \shadecell{23.1}{6.22} &  \\
    URN-0.5       & \shadecell{2.6}{1.96$\pm$0.79} & \shadecell{2.7}{2.00} & \shadecell{0.0}{0.12} & \shadecell{8.4}{3.75} &  & \shadecell{7.5}{3.34$\pm$1.07} & \shadecell{8.2}{3.49} & \shadecell{0.0}{0.21} & \shadecell{17.9}{5.48} &  & \textbf{\shadecell{2.2}{1.80$\pm$0.77}} & \textbf{\shadecell{2.2}{1.80}} & \textbf{\shadecell{0.0}{0.08}} & \textbf{\shadecell{7.7}{3.60}} &  & \shadecell{4.6}{2.62$\pm$1.07} & \shadecell{4.6}{2.61} & \shadecell{0.0}{0.04} & \shadecell{18.6}{5.58} &  \\
    SP/Walsh      & \shadecell{4.1}{2.46$\pm$0.13} & \shadecell{4.1}{2.46} & \shadecell{3.0}{1.90} & \shadecell{4.9}{2.88} &  & \shadecell{5.3}{2.79$\pm$0.21} & \shadecell{5.2}{2.78} & \shadecell{3.7}{2.10} & \shadecell{7.6}{3.57} &  & \shadecell{3.5}{2.27$\pm$0.14} & \shadecell{3.5}{2.27} & \shadecell{2.8}{1.85} & \shadecell{4.4}{2.73} &  & \textbf{\shadecell{0.0}{0.00$\pm$0.00}} & \textbf{\shadecell{0.0}{0.00}} & \textbf{\shadecell{0.0}{0.00}} & \textbf{\shadecell{0.0}{0.00}} &  \\
    SP/Conitzer   & \shadecell{2.5}{1.92$\pm$0.17} & \shadecell{2.5}{1.93} & \shadecell{1.7}{1.42} & \shadecell{3.5}{2.44} &  & \shadecell{2.3}{1.86$\pm$0.23} & \shadecell{2.3}{1.85} & \shadecell{1.3}{1.24} & \shadecell{4.5}{2.76} &  & \shadecell{1.9}{1.67$\pm$0.18} & \shadecell{1.9}{1.67} & \shadecell{1.0}{1.10} & \shadecell{2.7}{2.15} &  & \textbf{\shadecell{0.0}{0.00$\pm$0.00}} & \textbf{\shadecell{0.0}{0.00}} & \textbf{\shadecell{0.0}{0.00}} & \textbf{\shadecell{0.0}{0.00}} &  \\
    SC            & \shadecell{5.1}{2.76$\pm$0.36} & \shadecell{5.2}{2.78} & \shadecell{2.0}{1.54} & \shadecell{7.8}{3.62} &  & \shadecell{7.8}{3.39$\pm$0.35} & \shadecell{7.8}{3.41} & \shadecell{4.1}{2.24} & \shadecell{11.3}{4.34} &  & \shadecell{4.1}{2.48$\pm$0.28} & \shadecell{4.2}{2.50} & \shadecell{1.8}{1.48} & \shadecell{5.7}{3.10} &  & \textbf{\shadecell{1.8}{1.63$\pm$0.40}} & \textbf{\shadecell{1.8}{1.64}} & \textbf{\shadecell{0.1}{0.30}} & \textbf{\shadecell{4.5}{2.76}} &  \\
    1D EUC-cube   & \shadecell{1.3}{1.36$\pm$0.43} & \shadecell{1.3}{1.38} & \shadecell{0.1}{0.27} & \shadecell{4.2}{2.67} &  & \shadecell{1.9}{1.69$\pm$0.46} & \shadecell{2.0}{1.71} & \shadecell{0.1}{0.33} & \shadecell{6.0}{3.18} &  & \shadecell{0.6}{0.97$\pm$0.30} & \shadecell{0.6}{0.98} & \shadecell{0.0}{0.05} & \shadecell{1.9}{1.79} &  & \textbf{\shadecell{0.0}{0.00$\pm$0.00}} & \textbf{\shadecell{0.0}{0.00}} & \textbf{\shadecell{0.0}{0.00}} & \textbf{\shadecell{0.0}{0.00}} &  \\
    2D EUC-cube   & \shadecell{5.6}{2.88$\pm$0.41} & \shadecell{5.6}{2.90} & \shadecell{2.1}{1.60} & \shadecell{9.2}{3.92} &  & \shadecell{9.2}{3.69$\pm$0.55} & \shadecell{9.3}{3.73} & \shadecell{2.4}{1.69} & \shadecell{15.2}{5.05} &  & \textbf{\shadecell{4.5}{2.59$\pm$0.40}} & \textbf{\shadecell{4.6}{2.60}} & \textbf{\shadecell{1.4}{1.30}} & \textbf{\shadecell{7.6}{3.57}} &  & \shadecell{5.3}{2.81$\pm$0.62} & \shadecell{5.5}{2.85} & \shadecell{0.5}{0.81} & \shadecell{10.0}{4.09} &  \\
    3D EUC-cube   & \shadecell{7.9}{3.42$\pm$0.37} & \shadecell{8.0}{3.45} & \shadecell{3.5}{2.05} & \shadecell{11.7}{4.43} &  & \shadecell{13.2}{4.42$\pm$0.47} & \shadecell{13.3}{4.45} & \shadecell{6.7}{2.84} & \shadecell{19.2}{5.68} &  & \textbf{\shadecell{7.1}{3.25$\pm$0.37}} & \textbf{\shadecell{7.1}{3.26}} & \textbf{\shadecell{3.3}{1.99}} & \textbf{\shadecell{11.4}{4.36}} &  & \shadecell{8.6}{3.57$\pm$0.58} & \shadecell{8.7}{3.60} & \shadecell{2.1}{1.59} & \shadecell{16.3}{5.23} &  \\
    4D EUC-cube   & \shadecell{9.2}{3.70$\pm$0.33} & \shadecell{9.3}{3.72} & \shadecell{5.6}{2.61} & \shadecell{12.0}{4.48} &  & \shadecell{15.8}{4.85$\pm$0.43} & \shadecell{15.9}{4.87} & \shadecell{9.9}{3.47} & \shadecell{21.1}{5.95} &  & \textbf{\shadecell{8.7}{3.60$\pm$0.33}} & \textbf{\shadecell{8.8}{3.61}} & \textbf{\shadecell{5.6}{2.59}} & \textbf{\shadecell{11.9}{4.46}} &  & \shadecell{11.0}{4.04$\pm$0.58} & \shadecell{11.0}{4.05} & \shadecell{4.0}{2.20} & \shadecell{18.9}{5.62} &  \\
    5D EUC-cube   & \textbf{\shadecell{10.1}{3.88$\pm$0.32}} & \textbf{\shadecell{10.2}{3.90}} & \textbf{\shadecell{5.7}{2.62}} & \textbf{\shadecell{13.4}{4.74}} &  & \shadecell{17.5}{5.09$\pm$0.38} & \shadecell{17.6}{5.12} & \shadecell{11.2}{3.69} & \shadecell{21.6}{6.02} &  & \shadecell{9.8}{3.82$\pm$0.33} & \shadecell{9.9}{3.83} & \shadecell{5.7}{2.62} & \shadecell{13.3}{4.73} &  & \shadecell{12.4}{4.29$\pm$0.58} & \shadecell{12.5}{4.31} & \shadecell{4.2}{2.26} & \shadecell{20.1}{5.80} &  \\
    GS/cat        & \textbf{\shadecell{0.0}{0.00$\pm$0.00}} & \textbf{\shadecell{0.0}{0.00}} & \textbf{\shadecell{0.0}{0.00}} & \textbf{\shadecell{0.0}{0.00}} &  & \shadecell{28.3}{6.48$\pm$0.17} & \shadecell{28.5}{6.51} & \shadecell{28.4}{5.86} & \shadecell{28.3}{6.89} &  & \shadecell{0.0}{0.00$\pm$0.00} & \shadecell{0.0}{0.00} & \shadecell{0.0}{0.00} & \shadecell{0.0}{0.00} &  & \shadecell{24.4}{6.01$\pm$0.27} & \shadecell{24.3}{6.02} & \shadecell{22.3}{5.20} & \shadecell{27.3}{6.76} &  \\
    GS/bal        & \shadecell{6.6}{3.12$\pm$0.13} & \shadecell{6.6}{3.14} & \shadecell{5.6}{2.61} & \shadecell{6.8}{3.39} &  & \textbf{\shadecell{0.0}{0.00$\pm$0.00}} & \textbf{\shadecell{0.0}{0.00}} & \textbf{\shadecell{0.0}{0.00}} & \textbf{\shadecell{0.0}{0.00}} &  & \shadecell{0.0}{0.00$\pm$0.00} & \shadecell{0.0}{0.00} & \shadecell{0.0}{0.00} & \shadecell{0.0}{0.00} &  & \shadecell{6.6}{3.13$\pm$0.13} & \shadecell{6.6}{3.14} & \shadecell{5.7}{2.62} & \shadecell{6.9}{3.40} &  \\
    \bottomrule
  \end{tabular}
  }
\end{table*}
}

\iflong 
In \Cref{app:sec:experiments:preflib}, we also provide 
\else
The supplementary material provides
\fi 
a similar analysis of real-life elections from PrefLib~\cite{mat-wal:c:preflib}.

\toappendix{
\subsubsection{Real-life Elections: PrefLib}
\label{app:sec:experiments:preflib}
    We analyzed all elections with full strict ordering and at least~$3$ and at most~$15$ candidates from PrefLib~\cite{mat-wal:c:preflib}. There are $3,685$ such elections. Out of these, we were able to compute distances for $3,102$ of them. The rest timed out within $1$ hour per election time limit.

    \begin{figure}%
        \centering
        \includegraphics[width=\linewidth]{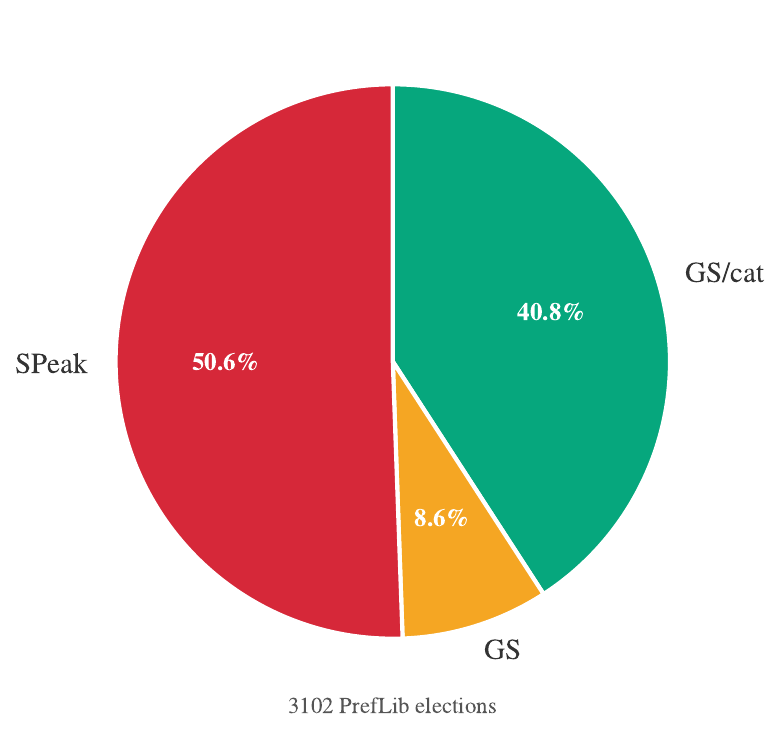}
        \caption{The overall fraction of elections where $\GScat$, $\GSbal$, $\GSbin$, or $\SPeak$ is the closest structured domain.}
        \label{fig:results:preflib:pie}
    \end{figure}

    \begin{figure}[bt]%
        \centering
        \includegraphics[width=\linewidth]{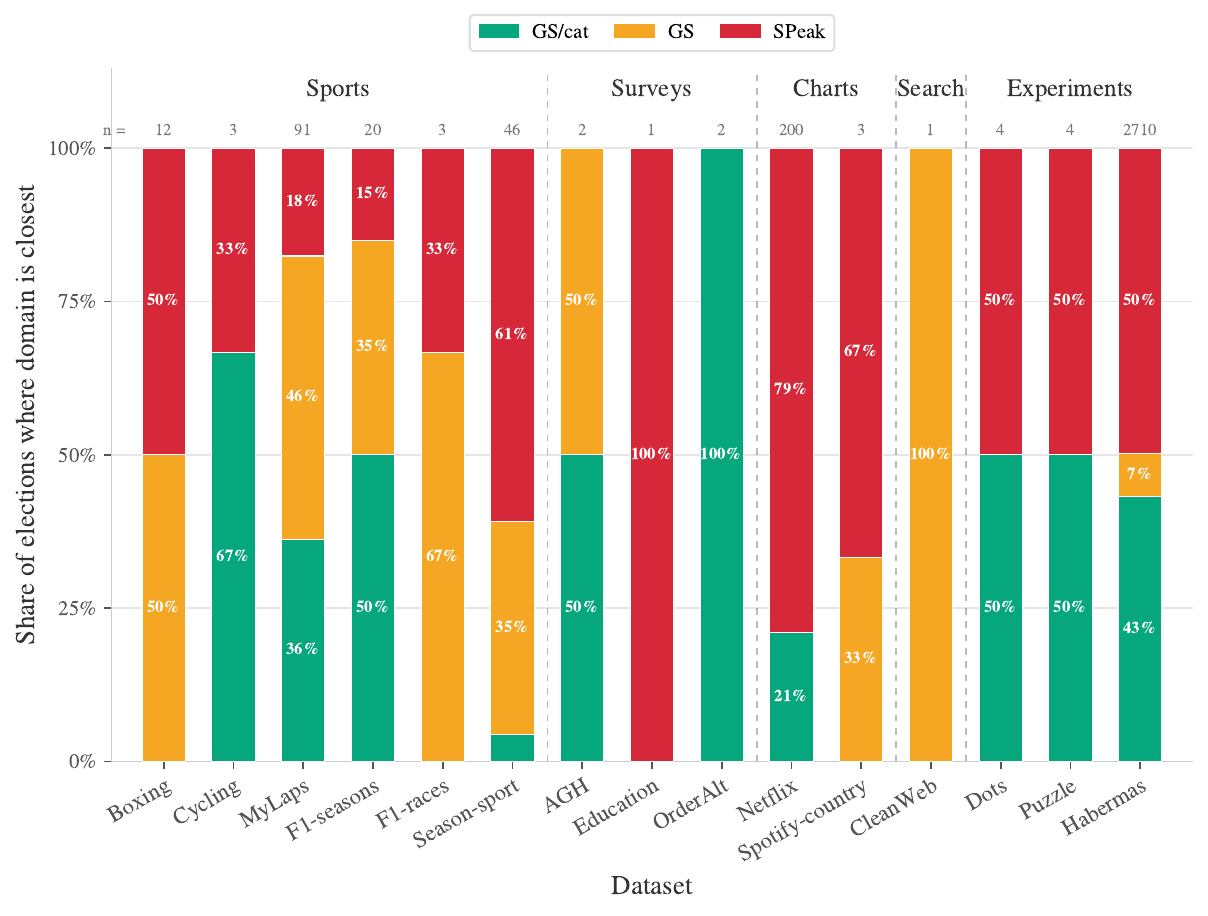}
        \caption{An overview of closest domains per dataset and category.}
        \label{fig:results:preflib:categories}
    \end{figure}

    \begin{figure}[bt]%
        \centering
        \includegraphics[width=0.95\linewidth]{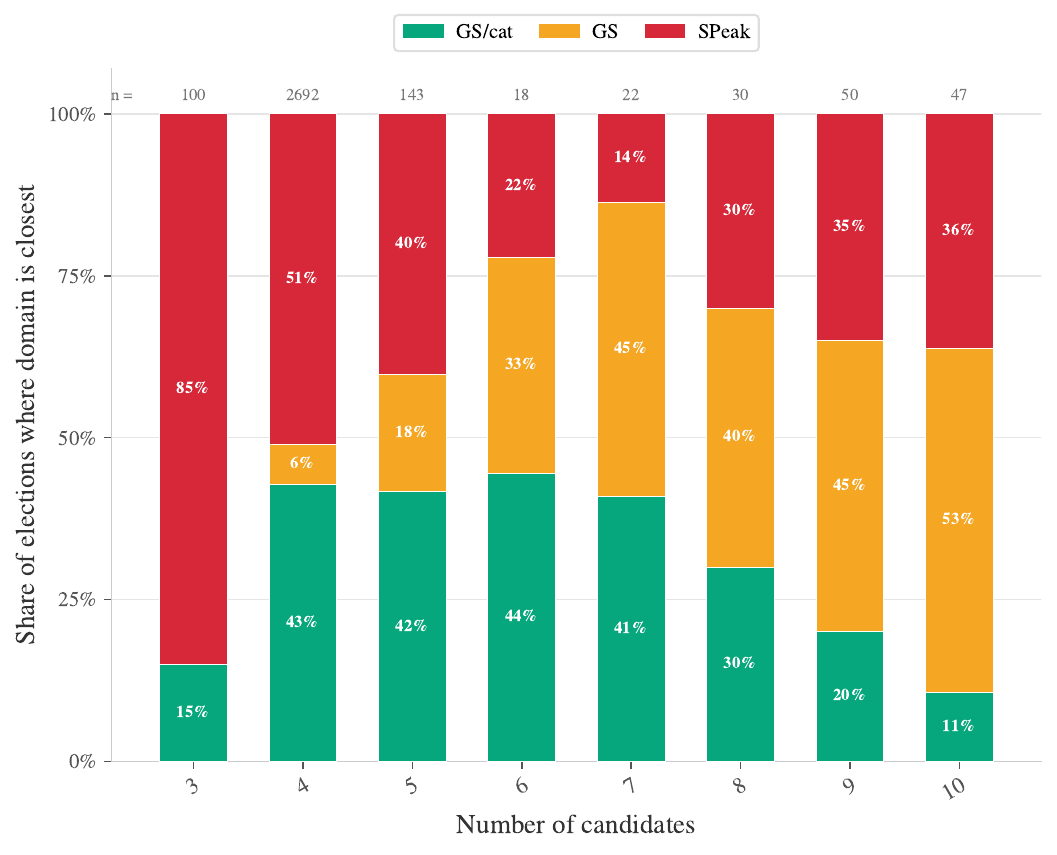}
        \caption{An overview of closest domains per the number of candidates within an election.}
        \label{fig:results:preflib:numCandidates}
    \end{figure}

    We present our results using three figures. First, \Cref{fig:results:preflib:pie} provides an overview of the entire dataset. Clearly, for roughly half of the elections satisfying the conditions above, the closest domain is $\SPeak$, and for roughly half of them, it is some variant of $\GS$. Surprisingly, for $40.8\%$ of all the elections, the closest domain is $\GScat$; however, it should be mentioned that this overview is highly affected by the dataset of \texttt{Habermas}, which on its own contains $2,710$ elections.

    Therefore, we also present an overview of the different datasets separately; see \Cref{fig:results:preflib:categories}. Moreover, we group different datasets based on their categories. Here, we can see that for datasets of sport events, $\GScat$, respectively $\GS$ domain is dominating. On the other hand, the elections based on charts are much more often close to the $\SPeak$ domain. For experiment-based datasets, roughly half of the elections are $\SPeak$ and the rest are $\GS$, with a substantial portion of them being $\GScat$.

    Finally, we also explore how the distance to the closest domain changes with the increasing number of candidates. These results are depicted in \Cref{fig:results:preflib:numCandidates}. The number of elections closest to the $\SPeak$ domain does not have any intuitive explanation. However, there is a clear indication that, with the increasing number of candidates, the proportion of elections where some $\GS$ domain is the closest is increasing at the expense of the $\GScat$ domain.
}

\toappendix{\FloatBarrier
}
\subsection{Retaining Winners}
\label{sec:experiments:winners}
\appendixsubsection{sec:experiments:winners}

\toappendix{

Regarding the winner-retaining experiment, we also evaluate the Single Transferable Vote (STV) rule; see \Cref{fig:stv}. Because our experimental setup strictly considers single-winner elections, the STV mechanism effectively operates as a standard iterative elimination process. The threshold for victory is established by the Droop quota, $q$, which in a single-seat scenario simplifies to an absolute majority requirement $q = \left\lfloor \frac{\numVoters}{2} \right\rfloor + 1$. If no candidate secures this quota, the candidate with the fewest first-preference votes is eliminated, and their ballots are transferred to the voters' next available preferences. Although our experiment utilizes the Weighted Inclusive Gregory (WIG) method, surplus vote redistribution is not triggered, as the election concludes immediately once the quota is met.

For scoring-based voting methods, we evaluate the impact of domain transformations on the original winners by analyzing their score differences. To ensure comparability, we first apply min-max normalization to the candidates' scores in each election. Next, when comparing an original election to its closest transformed counterparts, we identify the smallest score difference experienced by any of the original winning candidates. This minimal score difference is then averaged across all the closest transformed elections associated with that single original election. Finally, we compute the grand mean of these values across all sampled elections, alongside the standard error of the mean; see \Cref{fig:score-diff}.

\begin{figure}[h!]
    \centering
    \includegraphics[width=\linewidth]{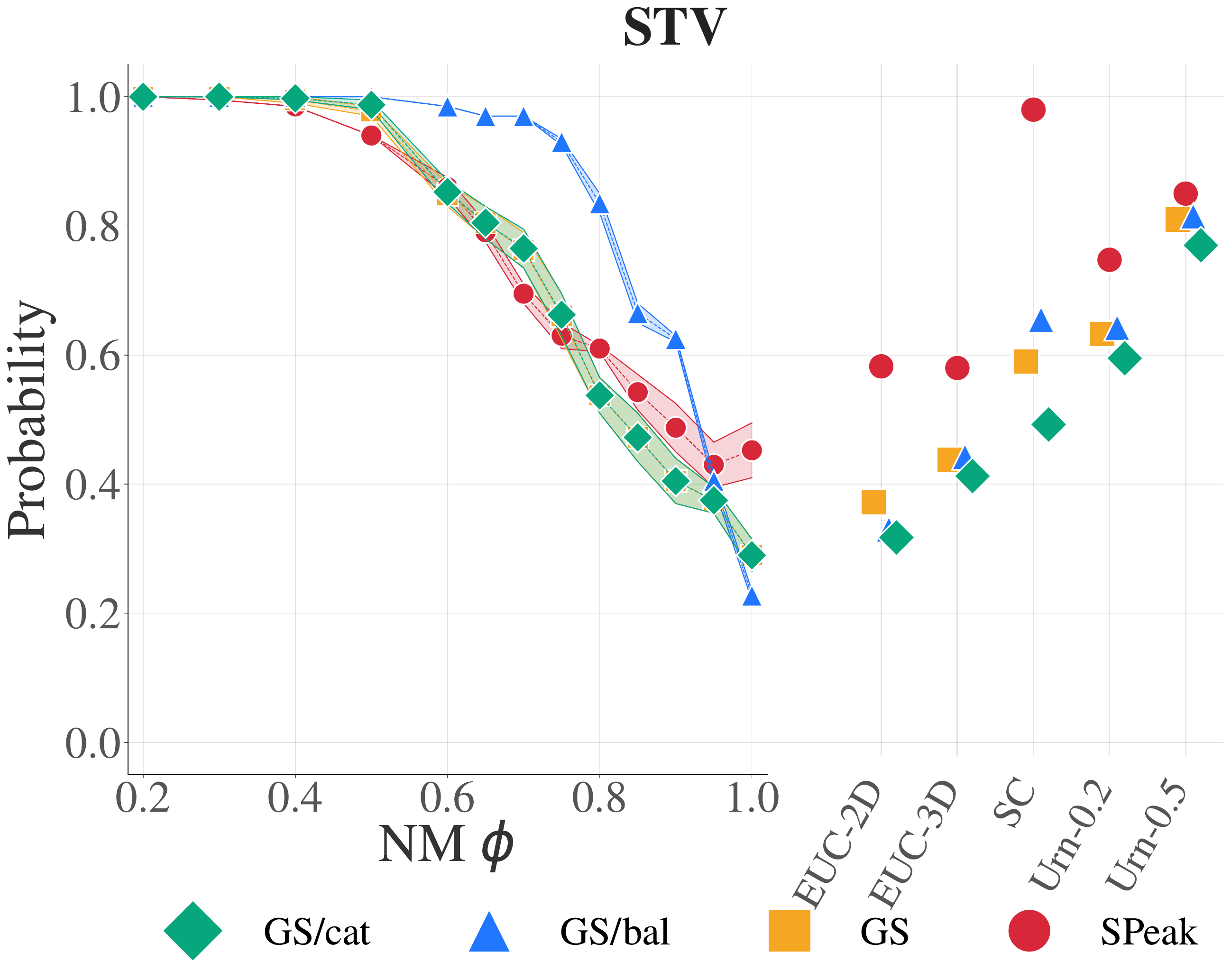}
    \caption{Probability of retaining original winners (success rate) under the STV voting rule for $\GScat$, $\GSbal$, $\GS$, and $\SPeak$ projections.}
    \label{fig:stv}
\end{figure}

\begin{strip}
\InsertBoxC{
\includegraphics[width=0.9\linewidth]{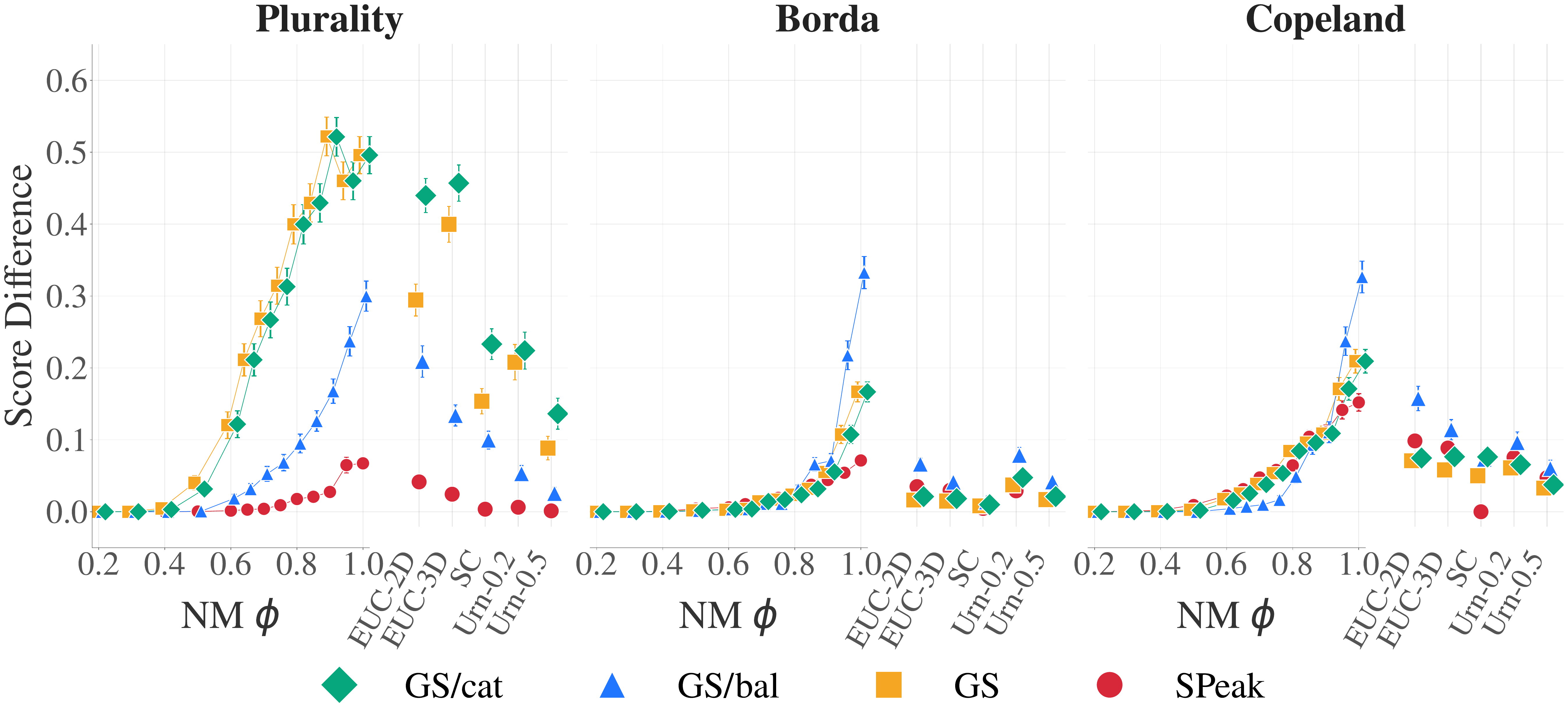}}
\captionof{figure}{{Minimal normalized score difference of original winners for $\GScat$, $\GSbal$, $\GS$, and $\SPeak$ projections.}\label{fig:score-diff}
}
\end{strip}

\FloatBarrier
}

\begin{figure}
    \centering

    \includegraphics[width=\linewidth]{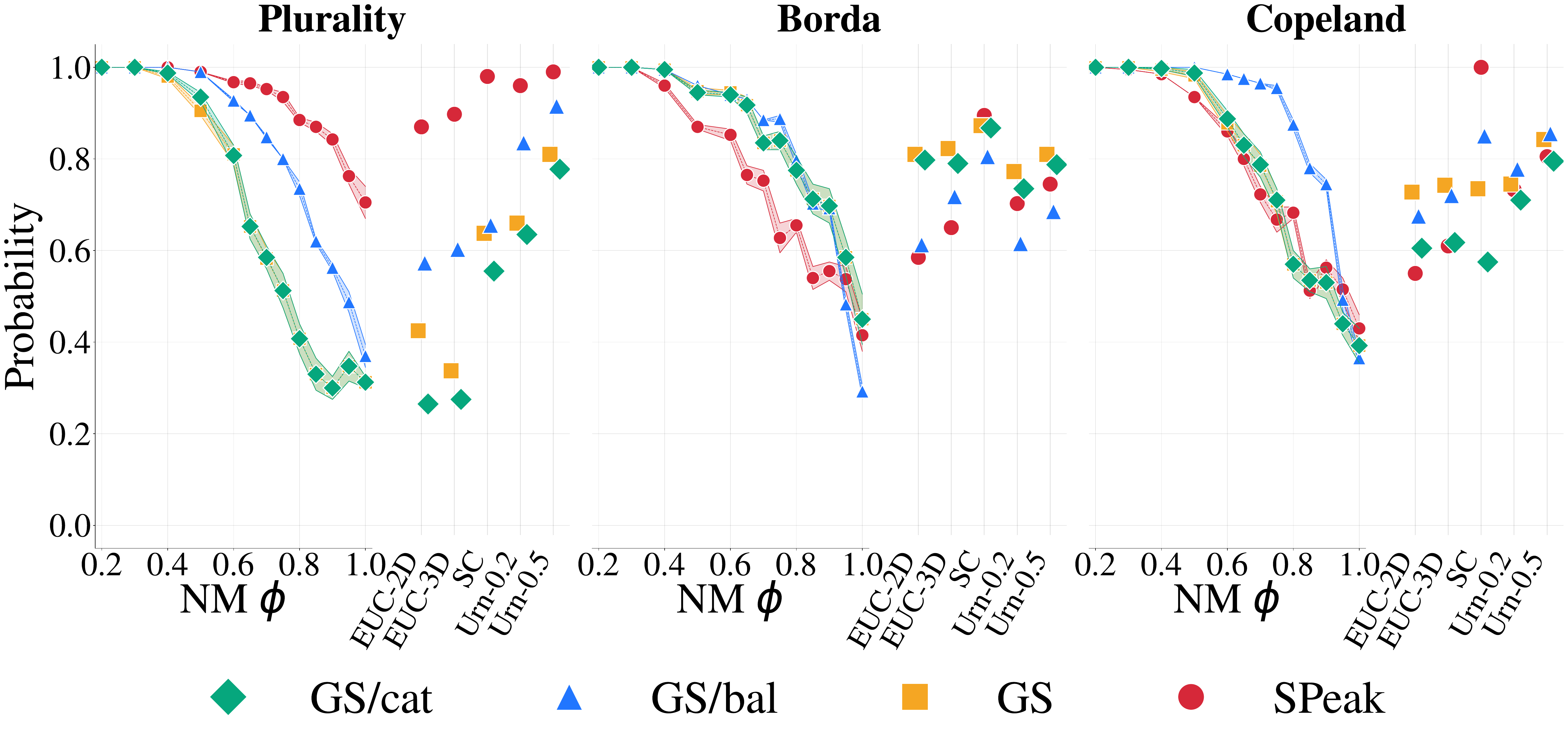}

    \caption{Probability of retaining winners (being successful) for $\GScat$,
      $\GSbal$, $\GS$, and $\SPeak$ projections.%
      }
    \label{fig:plurality-borda-copeland}
\end{figure}

Finally, we checked how likely the projections are to retain the same winners as the original elections. To maintain focus, we considered Plurality, Borda, and Copeland voting rules. For each of the considered models, i.e., for NM-$\phi$, EUC-$t$D, SC, and URN-$\alpha$, we have sampled $200$ elections (with $8$ candidates and $96$ voters), computed their closest $\GScat$, $\GSbal$, $\GSbin$, and single-peaked elections, and computed the winners according to each of our rules. Since the rules can output sets of winners, we refer to a projection as \emph{successful} if it shares at least one of the winners with the original election. Further, it is possible that there are several closest elections of a given type to a given original one, each with a possibly different set of winners. In \Cref{fig:plurality-borda-copeland}, we report the probabilities that the closest election of a given type is successful; additionally, the top line gives the probability that at least one of the projections is successful, and the bottom line gives the probability that all projections are successful.

Overall, our projections are successful with probability at least $50\%$, and the more structured the original election is, the higher the chance of success. That said, $\GScat$ projections are least likely to be successful (except for the Borda rule, where they typically are on par with other domains). $\GSbal$ projections perform best for Copeland, average for Plurality, and either average or worst for Borda. Single-peaked projections are the best for Plurality and perform well overall.

\section{Conclusions}

We have found that many (synthetic) elections (with 8 candidates and
96 voters) are close to being group-separable (GS).  Further, the less
structured they are, the more likely their closest GS elections are
caterpillar. Yet, it was the $\GSbal$ elections that maintained the
high-level features of the original elections best, maintaining the
same winners most frequently. To obtain these results, we have
designed a number of algorithms computing closest structured elections, solving Open Problem~5 of
\citet{elk-lac-pet:t:restricted-domains-survey}.

\section*{Acknowledgments}

This project has received funding from the European Research Council (ERC) under the European Union's Horizon 2020 research and innovation programme (grant agreement No 101002854), was co-funded by the European Union under the project Robotics and Advanced Industrial Production (reg. no. CZ.02.01.01/00/22\_008/0004590), and supported in part by the National Science Centre, Poland, grant number UMO-2025/58/A/ST6/00371.

\bibliography{references}

\iflong

\newpage

\toappendix{
\section{ILP for \probName{Nearly GS-Caterpillar Election}}\label{sec:ILP:caterpillar}
\FloatBarrier

We fix an arbitrary ordering $\triangleright$ on the candidate set $C$. This ordering is used only for indexing in constraints, and whenever we write $a \triangleright b$, we refer to it. 
For the relative order of two candidates in an input vote $\vote$ and in its corresponding vote $\vote'$ in the closest $\GScat$ election, we write $\succ_v$ and $\succ_{\vote'}$, respectively. Finally, for the mapping $\mapping = (\mapping_1, \ldots, \mapping_m)$ of the candidates to leaves and a pair of candidates $a, b \in C$, we write $a \succ_\mapping b$ to indicate that $a$ precedes $b$ in $\mapping$.  

We will use the following constants:
\begin{enumerate}
    \item \emph{Input vote position constants.} For every voter $\vote \in \voters$ and pair of candidates $a \triangleright b$, the value of $p^v_{ab}  \in \{0, 1\}$ indicates their relative order in the input vote $\vote$. Here, $p^v_{ab} = 1$ if $a \succ_\vote b$, and $p^v_{ab} = 0$ otherwise.
\end{enumerate}

We will use the following binary variables:
\begin{enumerate}
    \item \emph{Mapping variables}. For every pair of candidates $a \triangleright b$, the value of $s_{ab} \in \{0, 1\}$ indicates their relative order in the leaf mapping $\mapping$ as $s_{ab} = 1$ if $a \succ_\mapping b$ and $s_{ab} = 0$ otherwise.
    
    \item \emph{Flip variables}. For every voter $v \in V$ and candidate $a \in C$, the value of $f^v_a \in \{0, 1\}$ indicates whether the voter decides to prepend the candidate $a$ before the candidates that appear below in $\mapping$ in the resulting vote $v'$, in which case $f^v_a = 0$, or append the candidate $a$ to the rest of $\mapping$, in which case $f^v_a = 1$.
    
    \item \emph{Resulting votes position variables}. For every voter $v \in V$ and every pair of candidates $a \triangleright b$, the value of $r^v_{ab} \in \{0, 1\}$ indicates the relative order of $a$ and $b$ in the resulting vote~$\vote'$. Here, $r^v_{ab} = 1$ if $a \succ_{\vote'} b$ and $r^v_{ab} = 0$ otherwise.
\end{enumerate}

We will use two types of constraints:
\begin{enumerate}
    \item \emph{Mapping validity constraints} to ensure the leaf mapping variables encode a valid linear order. Antisymmetry is enforced by the variables encoding itself; hence we only need to ensure the transitivity by forbidding directed 3-cycles with the following two constraints:
    \begin{align*}
        s_{ab} + s_{bc} + (1 - s_{ac}) &\leq 2 \quad \forall a \triangleright b \triangleright c \in C \\
        (1 - s_{ab}) + (1 - s_{bc}) + s_{ac} &\leq 2 \quad \forall a \triangleright b \triangleright c \in C
    \end{align*}
    Note that it is enough to consider only these two constraints -- restricting to triples with $a \triangleright b \triangleright c$ does not exclude any case. For example, the 3-cycle $a \triangleleft b \triangleright c$ is the same 3-cycle as $b \triangleright a \triangleright c$ or $b \triangleright c \triangleright a$ (depending on whether $a \triangleleft c$ or $c \triangleleft a$).
    
    \item \emph{Group-separability constraints} to ensure that the resulting votes belong to the caterpillar domain $\domain(\Tcat, \mapping)$. This is enforced by the following constraints for all $\vote \in V$ and $a \triangleright b \in C:$
    \begin{align*}
        r^v_{ab} &= 1 - f^v_a ~~~~\text{ if } a \succ_\mapping b \\ 
        r^v_{ab} &= f^v_b ~~~~~~~~~~~\text{ if } b \succ_\mapping a
    \end{align*}
    These constraints are linearized in a standard way as follows:
    \begin{align*}
        r^v_{ab} &\le (1 - f^v_a) + (1 - s_{ab}) \quad \forall v \in V ~~ \forall a \triangleright b \in C \\
        r^v_{ab} &\ge (1 - f^v_a) - (1 - s_{ab}) \quad \forall v \in V ~~ \forall a \triangleright b \in C \\
        r^v_{ab} &\le f^v_b + s_{ab} \quad~~~~~~~~~~~~~~~~~~~~ \forall v \in V ~~ \forall a \triangleright b \in C \\
        r^v_{ab} &\ge f^v_b - s_{ab} \quad~~~~~~~~~~~~~~~~~~~~\forall v \in V ~~ \forall a \triangleright b \in C 
    \end{align*}
    Observe that when $s_{ab} = 1$ for $a \triangleright b$, the first two constraints yield $r^v_{ab} \leq (1 - f^v_a)$ and $r^v_{ab} \geq (1 - f^v_a)$, forcing $r^v_{ab} = 1 - f^v_a$, while the third and fourth inequalities hold regardless of the value of $f^v_b$; the reverse holds for $s_{ab} = 0$. Therefore, in the rest of the proof, we work directly with the compact formulation instead of the linearized inequalities.
\end{enumerate}

To compute the minimum swap distance between the input vote $\vote$ and the resulting vote $\vote'$, we minimize the number of inversions between them. Hence, as the objective function we set

\begin{align*}
    \min \sum_{v \in V} \sum_{a \triangleright b \in C} |p^v_{ab} - r^v_{ab}|.
\end{align*}
This can be linearized as
\begin{align*}
    \min \sum_{v \in V} \sum_{a \triangleright b \in C} p^v_{ab}(1 - r^v_{ab}) + (1 - p^v_{ab})r^v_{ab}
\end{align*} 

In total, the ILP formulation uses $\mathcal{O}(nm^2)$ variables and $\mathcal{O}((m+n)m^2)$ constraints. We now prove the correctness of the construction.

\emph{Forward implication.}
Assume that $\mapping = (\mapping_1, \ldots, \mapping_m)$ is a mapping to the leaves of the caterpillar tree $\Tcat$ and that $V' = (v'_1, \ldots, v'_n)$ is a list of votes from the $\domain(\Tcat, \mapping)$ domain. We show that then there exists an assignment to the ILP variables satisfying the constraints. We define the assignment as follows: for every pair of candidates $a \triangleright b$, we set $s_{ab} = 1$ if and only if $a \succ_\mapping b$, and for every $\vote \in V$ and $a \triangleright b$, set $r^v_{ab} = 1$ if and only if $a \succ_{\vote'} b$. Finally, based on the resulting vote $\vote'$, we set the flip variables as follows. For a vote $\vote \in V$ and every candidate $a \neq \mapping_m$, i.e. every candidate that is not the last in the mapping $\mapping$, we set $f^v_a = 0$ if and only if $a \succ_{\vote'} b$ for every $b$ with $a \succ_\mapping b$, i.e. for every candidate that follows $a$ in the leaf mapping $\mapping$. The value of the flip variable $f^v_{\mapping_m}$ for the last candidate $\mapping_m$ in $\mapping$ may be set arbitrarily.

We argue that this assignment of variables satisfies the ILP constraints. First, assume for contradiction that the mapping validity constraints are not satisfied. That is, there exists a triple of candidates $a \triangleright b \triangleright c$ such that $s_{ab} + s_{bc} + (1 - s_{ac}) = 3$ or $(1 - s_{ab}) + (1 - s_{bc}) + s_{ac} = 3$. In the first case, this implies that $s_{ab} = s_{bc} = 1$ and $s_{ac} = 0$. However, this would yield $a \succ_\mapping b$, $b \succ_\mapping c$ and $c \succ_\mapping a$. i.e., a directed 3-cycle in the relation induced by the $s$ variables. But, since $s_{ab} = 1$ if and only if $a \succ_\mapping b$, this would imply that $\succ_\mapping$ is not transitive, contradicting the fact that $\mapping$ is a valid linear order.
The second case would then imply $s_{ab} = s_{bc} = 0$ and $s_{ac} = 1$ and the existence of a reversed directed 3-cycle; again, a contradiction.

Next, we show that the assignment satisfies the group-separability constraints. Let $a$ and $b$ be a pair of candidates such that $a \triangleright b$. Based on their relative position in $\mapping$, we distinguish two cases.

\emph{Case A: $a \succ_\mapping b$}. In this case, the first constraint ${r^v_{ab} = 1 - f^v_a}$ applies. Based on the relative order of~$a$ and~$b$ in the resulting vote $\vote'$, we distinguish two subcases. If $a \succ_{\vote'} b$, then the resulting vote variable was set to $r^v_{ab} = 1$, and the flip variable was set to $f^v_a = 0$ 
(as $b$ is after $a$ in $\mapping$ and appears below $a$ in the resulting vote $\vote'$). 
If $b \succ_{\vote'} a$, then the variables were set to $r^v_{ab} = 0$ and $f^v_a = 1$. In both cases, the equality is satisfied.

\emph{Case B: $b \succ_\mapping a$}. In this case, the second constraint $r^v_{ab} = f^v_b$ applies. Again, based on the relative order of $a$ and $b$ in~$\succ_{\vote'}$, there are two possibilities: If $a \succ_{\vote'} b$, then the variables were set to $r^v_{ab} = 1$ and $f^v_b = 1$. If $b \succ_{\vote'} a$, then the variables were set to $r^v_{ab} = 0$ and $f^v_b = 0$. Again, in both cases, the equality holds.

\emph{Backward implication.}
Assume now that we have a valid assignment to the variables that satisfies the constraints of our ILP. We show that the assignment corresponds to a mapping~$\mapping$ and a list of votes $V' = (v'_1, \ldots, v'_n)$ from the caterpillar domain. For convenience, we extend the resulting vote variables antisymmetrically to all ordered pairs by setting $r^v_{ba} := 1 - r^v_{ab}$ for every $a \triangleright b$.

First, observe that by the definition of the mapping variables $s$, the induced relation $\mapping$ is irreflexive and antisymmetric (as we have exactly one variable defined per each unordered pair $a, b \in C$, indicating their relative order). The constraints forbidding directed 3-cycles then enforce transitivity, so $\mapping$ is a strict linear order on $C$.

Next, we argue that the resulting vote variables $r$ induce a strict linear order as well. Irreflexivity and antisymmetry follow by definition, by the same argument as for mapping variables. 
Transitivity is enforced by the transitivity of the $s$ variables together with the flip variables $f$ and vote group-separability constraints, as we now argue. Assume that the relation induced by $r$ is not transitive. Therefore, there exists a directed 3-cycle on some triple of candidates $a \triangleright b \triangleright c$. Similarly as in the description of the mapping validity constraints, this implies that either $r^v_{ab} + r^v_{bc} + (1 - r^v_{ac} )= 3$ or~$(1 - r^v_{ab}) + (1 - r^v_{bc}) + r^v_{ac}= 3$. We now argue that this leads to a contradiction. First assume that $r^v_{ab} + r^v_{bc} + (1 - r^v_{ac} )= 3$. This would mean that $r^v_{ab} = r^v_{bc} = 1$ and $r^v_{ac} = 0$. We proceed by case analysis based on how $a, b, c$ are ordered by~$\succ_\mapping$. If $a \succ_\mapping b \succ_\mapping c$, then from the first group-separability constraint and $r^v_{ab} = r^v_{bc} = 1$ we know that $f^v_a = f^v_b = 0$. But $r^v_{ac} = 0$ and the first group-separability constraint imply that $f^v_a = 1$, a contradiction. Next, consider the case when  $b \succ_\mapping a \succ_\mapping c$, then $r^v_{ab} = 1$ together with the second group-separability constraint yield that $f^v_b = 1$, and $r^v_{bc} = 1$ together with the first group-separability constraint imply $f^v_b = 0$; a contradiction. For other possible orderings of $a, b, c$ by $\succ_\mapping$ one can argue analogously. Dual arguments then hold for the case when $(1 - r^v_{ab}) + (1 - r^v_{bc}) + r^v_{ac}= 3$, with the values of $f^v$ flipped from $0$ to $1$ and vice-versa.

Finally, we show that the votes induced by the $r$ variables belong to the caterpillar domain $\domain(\Tcat, \mapping)$.
Let $a$ be an arbitrary candidate. We show that in every vote $\vote'$, $a$ must be ranked either above all $b$ such that $a \succ_\mapping b$, or below. We proceed by case analysis on the $\triangleright$-order of $a$ and $b$.
First, in case $a \triangleright b$, the relative order of the candidates is determined by the constraint $r^v_{ab} = 1 - f^v_{a}$. We distinguish two subcases. If $f^v_{a} = 0$, then $r^v_{ab} = 1$, yielding that $a$ is ranked above all $b$ such that $a \succ_\mapping b$. If $f^v_{a} = 1$, then $r^v_{ab} = 0$ , yielding that $a$ is ranked below all $b$ such that $a \succ_\mapping b$.
Second, in case $b \triangleright a$, the relative order of $a$ and $b$ is determined by the second constraint $r^v_{ba} = f^v_a$. Again, we have two possibilities. If~$f^v_a = 0$, then $r^v_{ba} = 0$, yielding that $a$ is ranked above $b$. If~$f^v_{a} = 1$, then $r^v_{ba} = 1$, yielding that $a$ is ranked below~$b$.
Since $b$ was an arbitrary candidate such that $a \succ_\mapping b$, it follows that $a$ is ranked either above or below all candidates following in $\mapping$, and hence the votes $V' = (v'_1, \ldots, v'_n)$ belong to $\domain(\Tcat, \mapping)$.

Since in both directions the objective is equal to the total swap distance between the corresponding votes, the optimal ILP solutions correspond to the closest $\GScat$ elections.
}

\toappendix{
\section{ILP for \probName{Nearly GS-Balanced Election}}\label{sec:ILP:balanced}
Given an integer $k$, we write $[k]_0$ to denote $[k] \cup \{0\}$. We assume that the number of candidates $\numCandidates$ is a power of two. Throughout this section, we denote by $\Tbal$ a perfect binary \typeQ-tree with $m$ leaves. To simplify the notation, we identify each candidate $a \in C$ with the leaf $\mapping(a)$ it is mapped to. In particular, we speak of candidates lying in the subtrees of~$\Tbal$ and write $\lca(a, b)$ for the lowest common ancestor of~$a$ and~$b$.

    We use the following constants:
    \begin{enumerate}
        \item \emph{Input vote position constants}. For every voter $\vote \in \voters$ and every ordered pair of candidates $a \neq b \in C$, the value of~$p^v_{ab}\in \{0, 1\}$ encodes the position of $a$ and $b$ in the input vote. The value is $1$ if $a \succ_\vote b$ and $0$ otherwise.
        \item $L = \log_2{m} \in \mathbb{N}$ to denote the depth of the tree $\Tbal$. Levels are indexed by $\ell \in [L - 1]_0$, i.e. the root vertex is on level $\ell = 0$.
    \end{enumerate}

    \noindent
    We introduce the following variables:
    \begin{enumerate}
        \item \emph{The tree variables }$T_a^{\ell} \in \{0, 1\}$ to encode the positions of candidates in the leaf mapping $\mapping$ as a binary string. Equal to $0$ if the candidate $a$ is in the left subtree of its ancestor node at level $\ell$, and $1$ otherwise.
        
        \item \emph{The flip variables} $F_{va}^\ell \in \{0 ,1\}$. Intuitively, at each inner node $u$ of $\Tbal$, the voter $\vote$ decides whether the candidates in the left subtree of $u$ should be ranked above the candidates in the right subtree of $u$, following the order of the mapping $\mapping$, or below them (in which case a flip occurs). The variable is equal to $1$ if voter $\vote$ flips the order at candidate $a$'s unique ancestor on level $\ell$, and $0$ otherwise.

        \item \emph{The resulting vote variables}. For every voter $\vote \in \voters$ and every ordered pair of candidates $a \neq b \in C$, the value of~$r^v_{ab} \in \{0, 1\}$ indicates the relative order of $a$ and $b$ in the resulting vote $\vote'$. Equals to $1$ if $a \succ_{\vote'} b$ in the resulting vote $\vote'$, and $0$ otherwise.

        \item \emph{Auxiliary variables}. In the construction, we further use binary variables $t^\ell_{ab}, S^{<\ell}_{ab}, A^{\ell}_{ab}, \alpha^{\ell}_{va}$ and ${\beta^{\ell}_{vab} \in \{0, 1\}}$, whose meaning we specify with the constraints that use them.
    \end{enumerate}

    \noindent
    We introduce the following constraints:
    \begin{enumerate}
        \item \emph{The tree constraints.} In order to have a valid mapping~$\mapping$ of the candidates to the leaves of $\Tbal$, there need to exist $m$ unique binary encodings via the tree variables. This means that, for every pair of candidates $a \neq b \in C$, there must exist at least one level $\ell \in [L-1]_0$ such that $T^\ell_a \neq T^\ell_b$. Let $t_{ab}^\ell := |T^\ell_a - T^\ell_b|$ denote the difference of the variables on level $\ell$. Then we can express this requirement through the constraints as follows:
        \begin{align*}
            t^\ell_{ab} &= |T^\ell_a - T^\ell_b| && \forall a \neq b \in C && \forall \ell \in [L - 1]_0 \\
            \sum_{\ell = 0}^{L- 1}& t^\ell_{ab} \geq 1 && \forall a \neq b \in C &&
        \end{align*}

        The first constraint can be linearized as:
        \begin{align*}
            t^\ell_{ab} &\ge T^\ell_a - T^\ell_b \\
            t^\ell_{ab} &\ge T^\ell_b - T^\ell_a \\
            t^\ell_{ab} &\le T^\ell_a + T^\ell_b \\
            t^\ell_{ab} &\le 2 - (T^\ell_a + T^\ell_b)
        \end{align*}

        \item \emph{The flip constraints.} Let $a \neq b \in C$ be a pair of candidates. Observe that the relative position of $a$ and $b$ in the resulting vote $\vote'$ is decided at the flip decision at their lowest common ancestor $\lca(a, b)$ node in $\Tbal$; in all levels strictly above, the candidates belong to the same subtrees. Therefore, we need to enforce that in all inner nodes on the path from the root to $\lca(a, b)$ (including), the flip variables for $a$ and $b$ must have the same values.

        To do so, we use auxiliary variable $S^{<\ell}_{ab} \in \{0, 1\}$ that is equal to $1$ if $a$ and $b$ have the same ancestor in all levels $\ell' < \ell$, and $0$ otherwise.  Trivially, strictly before the root level $\ell= 0$, all candidates have the same ancestor and therefore we set
        $S^{<0}_{ab} = 1$ for all $a \neq b \in C$.
        For every other level $\ell \geq 1$, we require that $S^{<\ell}_{ab} = 1$ if and only if the binary addresses of $a$ and $b$ (encoded via the tree variables) agree on all levels strictly below~$\ell$. That is, using the auxiliary variable $t^{\ell'}_{ab} := |T^{\ell'}_a - T^{\ell'}_b|$ from the tree constraints, we require that 
        $S^{<\ell}_{ab} = 1 \iff t^{\ell'}_{ab} = 0$ for all $\ell'~<~\ell$, linearized as:
        \begin{align*}
            S^{<\ell}_{ab} &\leq 1 - t^{\ell'}_{ab} ~~~ \forall \ell' < \ell \\
            S^{<\ell}_{ab} &\geq 1 - \sum_{\ell' = 0} ^{\ell-1} t^{\ell'}_{ab}
        \end{align*}

        Finally, for any level $\ell \in [L - 1]_0$ where $a$ and $b$ have the same ancestor, we require that the flip variables for $a$ and $b$ must be equal. Formally, for every pair $a \neq b \in C$, every voter $v \in V$ and every level $\ell \in [L - 1]_0$ we require that:
        \begin{align*}
            S^{<\ell}_{ab} = 1 \Rightarrow F^\ell_{va} = F^\ell_{vb}
        \end{align*}
        Linearized as:
        \begin{align*}
            1 - S^{<\ell}_{ab} \geq F^\ell_{va} - F^\ell_{vb} \\
            1 - S^{<\ell}_{ab} \geq F^\ell_{vb} - F^\ell_{va}
        \end{align*}
        
        \item \emph{The group-separability constraints.} Finally, we tie the leaf mapping $\mapping$ and the flip variables and enforce that the resulting votes belong to $\domain(\Tbal, \mapping)$. Recall that the relative position of two candidates $a$ and $b$ is determined by the decision (flip) at $\lca(a, b)$.
        To encode the lowest common ancestor level, we use an auxiliary variable $A^{\ell}_{ab}$, which we require to equal $1$ if $a$ and $b$ agree in all levels strictly below $\ell$ and disagree in their binary addresses at $\ell$. Formally, we enforce that $A^{\ell}_{ab} = 1$ if and only if $S^{<\ell}_{ab} = 1$ and~$t^{\ell}_{ab} = |T^\ell_a - T^\ell_b| =  1$. Additionally, we require that exactly one $\ell \in [L - 1]_0$ satisfies $A^\ell_{ab} = 1$. We express these two requirements with the following four constraints:
        \begin{align*}
            A^\ell_{ab} &\leq S^{<\ell}_{ab} && \forall a \neq b \in C && \forall \ell \in [L - 1]_0 \\ 
            A^\ell_{ab} &\leq t^{\ell}_{ab} && \forall a \neq b \in C && \forall \ell \in [L - 1]_0 \\ 
            A^\ell_{ab} &\geq S^{<\ell}_{ab} + t^{\ell}_{ab} - 1 && \forall a \neq b \in C && \forall \ell \in [L - 1]_0 \\ 
            \sum_{\ell = 0}^{L - 1}&A^{\ell}_{ab} = 1 && \forall a \neq b \in C &&
        \end{align*}

        To encode the relative position of $a$ and $b$ in the resulting vote $\vote'$, we further use an auxiliary variable $\alpha^\ell_{va} \in \{0, 1\}$. Let $u$ be the ancestor of $a$ on level $\ell$. The $\alpha$ variable expresses whether on level $\ell$, the candidate $a$ should be ranked above the candidates in the subtree of $u$ not containing $a$, or below. In particular, if $T^\ell_a = 0$ and $F^\ell_{va} = 0$ (or $T^\ell_a = 1$ and $F^\ell_{va} = 1$), voter $v$ ranks $a$ above the other subtree, in which case we require $\alpha^\ell_{va} = 0$. Symmetrically, if $T^\ell_a = 0$ and $F^\ell_{va} = 1$ (or $T^\ell_{a} = 1$ and $F^\ell_{va} = 0$), $a$ should be ranked below all candidates in the other subtree, in which case we require $\alpha^\ell_{va} = 1$.
        Formally, we require that 
        \begin{align*}
            \alpha^\ell_{va} = |T^\ell_a - F^\ell_{va}| ~~~ \forall \vote \in V ~~ \forall a \in C ~~ \forall \ell \in [L - 1]_0.
        \end{align*}

        This constraint can be linearized analogously to the tree constraint $t^\ell_{ab} = |T^\ell_a - T^\ell_b|$.

        To connect the resulting vote variables, we additionally introduce auxiliary variable $\beta^\ell_{vab} \in \{0, 1\}$. For a pair of candidates $a$ and $b$ and level $\ell \in [L - 1]_0$, we want $\beta^\ell_{vab} = 1$ if and only if $\ell$ is the level containing $\lca(a, b)$ and $a$ is supposed to be ranked above $b$, i.e., $A^\ell_{ab} = 1$ and $\alpha^\ell_{va} = 0$. Formally, for all $v \in V$, $a \neq b \in C$ and $\ell \in [L - 1]_0$ we enforce that
        \begin{align*}
            \beta^\ell_{vab} = A^\ell_{ab} (1 - \alpha^\ell_{va}).
        \end{align*}
        This can be linearized for every $v \in V$,  every $a \neq b \in C$, and every $\ell \in [L - 1]_0$ as
        \begin{align*}
            \beta^\ell_{vab} &\leq A^\ell_{ab} \\
            \beta^\ell_{vab} &\leq 1 - \alpha^\ell_{va} \\
            \beta^\ell_{vab} &\geq A^\ell_{ab} - \alpha^\ell_{va} \\
            \beta^\ell_{vab} &\geq 0
        \end{align*}

        Summing across all levels $\ell \in [L - 1]_0$, we then determine the final order of $a$ and $b$ as follows:
        \begin{align*}
            r^v_{ab} = \sum_{\ell = 0}^{L - 1} \beta^\ell_{vab} ~~~ \forall v \in V ~~~ \forall a \neq b \in C
        \end{align*}
    \end{enumerate}

    \proofsubparagraph{Objective Function.}
    To compute the minimal swap distance between the input and resulting votes, we minimize the number of inversions between the original vote variables and resulting vote variables:
    \begin{align*}
        \min \sum_{v \in V}\sum_{a \triangleright b \in C} |p^v_{ab} - r^v_{ab}|
    \end{align*}

    Here, $\triangleright$ is an arbitrary fixed ordering on the candidate set~$C$, used so each pair of candidates is counted exactly once. The objective function can be linearized as:
    \begin{align*}
        \min \sum_{v \in V} \sum_{a \triangleright b \in C}  p^v_{ab}(1 - r^v_{ab}) + (1 - p^v_{ab})r^v_{ab}.
    \end{align*}
    
    In total, the ILP formulation uses $\mathcal{O}(nm^2 \log m)$ variables and $\mathcal{O}((n + \log m ) m^2 \log m)$ constraints.
    We now prove the correctness. 
    
    \proofsubparagraph{Forward implication.} Let $\mapping = (\mapping_1, \ldots, \mapping_m)$ be a mapping of the candidates to the leaves of $\Tbal$, ordered from the left-most leaf to the right-most, and let $V' = (\vote'_1, \ldots, \vote'_n)$ be a list of votes from $\domain(\Tbal, \mapping)$.
    We show that there exists an assignment of the variables that satisfies the ILP constraints.

    First, we assign the values to the tree variables $T$ according to $\mapping$. For the $i$-th candidate~$\mapping_i$, we express its position $i$ in binary, with the position of the first candidate $\mapping_1$ being encoded as all zeros. Then, we set $T^\ell_{\mapping_i}$ to be the value of the $\ell$-th position of the binary string encoding $\mapping_i$'s position.
    Next, given a vote $\vote'$, we assign the values of the flip variables~$F$ as follows. Since $\vote'  \in \domain(\Tbal, \mapping)$, the candidates mapped by $\mapping$ to the leaves of every subtree form a contiguous block in $\vote'$. For each level $\ell \in [L - 1]_0$ and each inner node~$u$ at level $\ell$, we set $F^\ell_{va} := 0$ for every candidate $a$ in the subtree of $u$ if the candidates in the left subtree of $u$ lie in the first half of $u$'s block in vote $\vote'$, and $F^\ell_{va} := 1$ otherwise.
    Further, given a vote $v'$ and pair of candidates $a \neq b \in C$, we set $r^v_{ab} = 1$ if and only if $a \succ_{\vote'} b$. Finally, we set the values of the auxiliary variables $t, S, A, \alpha$ and $\beta$ according to their defining equations.
    
    We now argue that this assignment of variables satisfies the constraints of our ILP. The tree constraints are clearly satisfied, as the values of the $T$ variables correspond to the binary encoding of a position of a candidate $a$ in the leaf ordering $\mapping$, and the encodings are unique. 
    Next, we verify the flip constraints. Let $\vote \in V$ be any voter, $a \neq b \in C$ an arbitrary pair of candidates and $\ell$ the level containing their lowest common ancestor node. We argue that $F^{\ell'}_{va} = F^{\ell'}_{vb}$ for every $\ell' \leq \ell$. By the definition of the lowest common ancestor, $a$ and $b$ share the path from $\ell$ up to the root. That is, for every $\ell' \leq \ell$, they lie in a subtree of the same inner node~$u$ at level $\ell'$. As our assignment gives the same value to all candidates that lie in the subtree of $u$, we get that $F^{\ell'}_{va} = F^{\ell'}_{vb}$. Next, observe that there is indeed exactly one $\ell \in [L-1]_0$ such that $A^{\ell}_{ab} = 1$: for any pair of candidates $a \neq b \in C$, it is the smallest $\ell$ such that $T^{\ell}_a \neq T^{\ell}_b$, i.e. the first bit, where their binary addresses differ. For every pair, this $\ell$ is unique.

    Finally, we verify the group-separability constraints.
    Let $a \neq b \in C$ be a pair of candidates, $u$ their lowest common ancestor, and $\ell$ the level containing $u$. First, observe that $a$ and $b$ lie in different subtrees of $u$. Second, since $\vote'$ is in the domain $\domain(\Tbal, \mapping)$, the left and right subtrees of $u$ form two contiguous blocks. In other words, the relative position of $a$ and $b$ is determined precisely at $u$ and the level $\ell$. To show the correctness, we need to show that the constraint $r^v_{ab} = \sum_{\ell = 0}^{L - 1} \beta^\ell_{vab}$ holds, where $\beta^\ell_{vab} = A^\ell_{ab} (1 - \alpha^\ell_{va})$. Observe that by our construction, $A^{\ell'}_{ab} = 1$ precisely (and only) at $\ell' = \ell$, and hence the constraint collapses to $r^v_{ab} = 1 - \alpha^\ell_{va}$.
    
    We proceed by case analysis based on the relative order of $a$ and $b$ in $\vote'$. First, assume that $a \succ_{\vote'} b$; then $r^v_{ab} = 1$ by our construction. Based on the relative order of $a$ and $b$ in~$\succ_\mapping$, we have two subcases. If $a \succ_\mapping b$, then $a$ lies in the left subtree and $b$ lies in the right subtree of $u$, i.e. $T^\ell_a = 0$ and~$T^\ell_b = 1$. Since $a \succ_{\vote'} b$, in the vote $\vote'$, the block of candidates of the left subtree lies in the first half of $u$'s block, and our construction sets  $F^\ell_{va} = F^\ell_{vb} = 0$, informally meaning the voter did not decide to flip the order at  node~$u$. It follows that 
    $\alpha^\ell_{va}  = |T^\ell_{a} - F^\ell_{va}| = 0 = 1 - r^v_{ab}$ as desired. Symmetrically, if $b \succ_\mapping a$, then $a$ lies in the right subtree and $b$ lies in the left subtree of $u$, i.e. $T^\ell_a = 1$ and $T^\ell_b = 0$. Then the block of candidates of the left subtree lies behind the block of the candidates of the right subtree of $u$, and our construction sets $F^\ell_{va} = 1$. In this case, we again obtain that $\alpha^\ell_{va} = |T^\ell_a - F^\ell_{va}| = 0 = 1 - r^v_{ab}$.
    The case when $b \succ_{\vote'} a$ is symmetric, giving $\alpha^\ell_{va} = 1$ and $r^v_{ab} = 0$ instead.

    \proofsubparagraph{Backward implication.} Assume we have an  assignment to the variables that satisfies the ILP constraints. We show that this assignment corresponds to a valid mapping $\mapping$ of candidates to the leaves of $\Tbal$ and to a list of votes $V'~=~(\vote'_1, \ldots, \vote'_n)$ in $\domain(\Tbal, \mapping)$.

    For a candidate $a \in C$, consider the values of $T^\ell_{a}$ for each $\ell \in [L-1]_0$, ordered from $0$ to $L-1$, and concatenate them into a binary string. This gives us a binary encoded position for every candidate. Since we have $L = \log_2 m$ levels, there are $2^{L} = m$ such strings. Moreover, these strings are guaranteed to be unique: our constraints ensure that for every pair of distinct candidates $a \neq b \in C$, there exists at least one $\ell$ for which $T^\ell_a \neq T^\ell_b$. Consequently, the values of~$T$ define a valid mapping $\mapping$ of the candidates to the leaves of the tree.

    First, we argue that the relation induced by the resulting vote variables $r$ is antisymmetric. Fix a pair of candidates $a \neq b \in C$. By our constraints, there exists precisely one $\ell \in [L-1]_0$ such that $A^\ell_{ab} = 1$. Observe that then the constraint $r^v_{ab} = \sum_{\ell = 0}^{L - 1} A^\ell_{ab} (1 - \alpha^\ell_{va})$ collapses to 
    $r^v_{ab} = A^\ell_{ab}(1 - \alpha^\ell_{va}) = 1 - \alpha^\ell_{va}$, as for every other $\ell' \neq \ell$ the value of $A^{\ell'}_{ab} = 0$.
    Without loss of generality, assume that $T^\ell_a = 0$ and $T^\ell_b = 1$. By the flip constraints, we have $F^\ell_{va} = F^\ell_{vb}$. Since $\alpha^\ell_{vc} = 0 \iff T^\ell_c = F^\ell_{vc}$ for any candidate $c$, it follows that exactly one $c \in \{a, b\}$ has $\alpha^\ell_{vc} = 0$, while the other has $\alpha^\ell_{vc} = 1$.
    Consequently, exactly one of $r^v_{ab}$ and $r^v_{ba}$ equals to $1$, while the other equals to $0$, and hence the relation is antisymmetric.

    Next, we argue that the values of $r^v_{ab}$ induce a transitive relation.
    Let $a, b, c \in C$ be any triple of candidates and $v \in V$ an arbitrary voter. We show that if $r^v_{ab} = 1$ and $r^v_{bc} = 1$, then also $r^v_{ac} = 1$.  
    Let $\ell_{ab}, \ell_{ac}$ and $\ell_{bc}$ denote the levels of the lowest common ancestors of the corresponding pairs of candidates. Let $\ell = \min \{\ell_{ab}, \ell_{ac}, \ell_{bc}\}$. Note that two of these three values are equal. Let $u$ be the lowest common ancestor of all three candidates $a, b, c$. Since the tree is binary, one subtree of $u$ needs to contain exactly one of the three candidates, while the other needs to contain the remaining two. The two pairs involving the separated candidate then need to have $u$ as their lowest common ancestor, while the lowest common ancestor of the remaining pair lies in a level strictly below $u$.
    Based on which two levels among $\ell_{ab}, \ell_{ac}$ and $\ell_{bc}$ are equal, we therefore distinguish three cases.

    First, if $\ell_{ab} = \ell_{ac} = \ell$, the candidate $a$ is separated at level~$\ell$ and $T^\ell_a \neq T^\ell_c = T^\ell_b$. Since $r^v_{ab} = 1$ by our assumption, it follows from the constraint $r^v_{ab} = A^\ell_{ab} (1 - \alpha^\ell_{va})$ that ${\alpha^\ell_{va} = 0}$. As $\ell$ is the level containing $\lca(a, c)$, it follows that $r^v_{ac} = A^\ell_{ac}(1 - \alpha^\ell_{va}) = 1$ as desired. For the second case, assume $\ell = \ell_{ab} = \ell_{bc}$, i.e, the candidate $b$ is separated at level~$\ell$ and $T^\ell_b \neq T^\ell_a = T^\ell_c$. Then, analogously to the previous case, $r^v_{ab} = 1$ yields $\alpha^\ell_{va} = 0$, and hence by the definition of~$\alpha^\ell_{va}$, we get that $T^\ell_a =  F^\ell_{va}$. By the same argument, $r^v_{bc} = 1$ implies $\alpha^\ell_{vb} = 0$ and $T^\ell_b = F^\ell_{vb}$. We know that up to the level containing the lowest common ancestor (including), the flip variables need to be equal, and in particular it must hold that $F^\ell_{va} = F^\ell_{vb} = F^\ell_{vc}$. But then, by combining the facts that $T^\ell_a =  F^\ell_{va}$ and $T^\ell_b = F^\ell_{vb}$, we obtain  $T^\ell_a = F^\ell_{va} = F^\ell_{vb} = T^\ell_{b}$,  a contradiction with our assumption that $T^\ell_{a} \neq T^\ell_{b}$. The second case is, therefore, not possible. The third case is then analogous to the first. Here, assume that $\ell = \ell_{ac} = \ell_{bc}$, i.e., $c$ is the separated candidate and $T^\ell_c \neq T^\ell_a = T^\ell_b$. Again, from $r^v_{bc} = 1$ we obtain that $\alpha^\ell_{vb} = 0$ and hence $T^\ell_b = F^\ell_{vb}$. From our assumption that $T^\ell_{a} = T^\ell_{b}$ we then get that $F^\ell_{va} = F^\ell_{vb}$, and therefore $\alpha^\ell_{va} = 0$. Finally, since $A^\ell_{ac} = 1$, we obtain that $r^v_{ac} = 1 - \alpha^\ell_{va} = 1$, as desired.

    Finally, we argue that the constructed votes belong to $\domain(\Tbal, \mapping)$. Let $\vote \in V$ be an arbitrary voter and $\vote'$ the vote induced by the $r$ variables. We show that for every inner node $u$ in the tree $\Tbal$, all candidates in the left subtree of~$u$ are either ranked above all candidates in the right subtree of~$u$, or below. 

    Let $\ell$ be the level containing $u$. By our constraints, any pair of candidates $a, b$ in the subtree of $u$ shares the same value of the flip variable at $\ell$: $a$ and $b$ agree on all levels $\ell' < \ell$, so $S^{<\ell}_{ab} = 1$ and the constraint $S^{<\ell}_{ab} = 1 \implies F^\ell_{va} = F^\ell_{vb}$ forces the flip variables to be equal. Assume now that $a$ is a candidate in the left subtree of $u$, and $b$ is a candidate in the right subtree of $u$, and that $\lca(a, b) = u$. In this case, $T^\ell_{a} = 0$, $T^\ell_{b} = 1$ and $A^{\ell}_{ab} = 1$. The constraints then yield $r^v_{ab} = 1 - \alpha^\ell_{va} = 1 - |T^\ell_{a} - F^\ell_{va}|$. Hence, if $F^\ell_{va} = 0$, i.e., the voter does not flip the order at $u$, $r^v_{ab} = 1$ and $a$ is ranked above $b$; if $F^\ell_{va} = 1$, we get that $r^v_{ab} = 0$ and $a$ is ranked below~$b$ in the vote $\vote'$. 
    Since all candidates contained in the subtree rooted at $u$ share the same flip value at $u$, say $f$, it follows that every candidate of the left subtree is ranked above every candidate in the right subtree (in case $f = 0$), or below every candidate in the right subtree (in case $f = 1$). As this holds at every inner node of $\Tbal$, it follows that the votes obtained from the $r$ variables belong to $\domain(\Tbal, \mapping)$. 

    Since in both directions the objective equals the total swap distance between the input and resulting votes, the optimal ILP solutions correspond to the closest $\GSbal$ elections.

}

\clearpage
\appendix
\crefalias{section}{appendix}
\crefalias{subsection}{appendix}
\crefalias{subsubsection}{appendix}

\appendixtext 
\fi 

\end{document}